\documentclass[%
10pt,
twocolumn,
aps,
longbibliography,
showkeys,
 amsmath,
 amssymb,
 pra,
]{revtex4-2}
\usepackage[breaklinks=true]{hyperref}
\hypersetup{
    colorlinks,
    linkcolor={red!50!black},
    citecolor={blue!50!black},
    urlcolor={blue!80!black}
}

\usepackage{graphicx}
\usepackage{dcolumn}
\usepackage{bm}
\usepackage{tikz}
\usetikzlibrary{math,calc}
\usetikzlibrary{patterns}
\usepackage{quantikz}
\usepackage[all]{xy} %Arthur added this
\usepackage{enumerate}
\usepackage{amsmath,amssymb}
\usepackage{mathrsfs}
\usepackage{stackengine}
\usepackage[normalem]{ulem}
\usepackage{graphicx}
\usepackage{amsthm,epsfig}
\usepackage{float}
\usepackage{multirow}
\usepackage{relsize}
\usepackage{forest,adjustbox}
\usepackage{dsfont}%this is for mathbb{1} but you write mathds{1} 
\usepackage{soul}
\usepackage{calligra}%for scripty r
\usepackage{boldline}

\numberwithin{equation}{section}

\theoremstyle{plain}
\newtheorem{theorem}[equation]{Theorem}
\newtheorem{lemma}[equation]{Lemma}
\newtheorem{proposition}[equation]{Proposition}
\newtheorem{corollary}[equation]{Corollary}

\theoremstyle{definition}
\newtheorem{definition}[equation]{Definition}
\newtheorem{remark}[equation]{Remark}
\newtheorem{example}[equation]{Example}

\newcommand{\matr}{\mathbb{M}}%matrix algebra notation -- change as desired
\newcommand{\id}{\operatorname{id}}
\newcommand{\Tr}{\operatorname{Tr}}%trace
\newcommand{\dis}{\operatorname{dis}}
\DeclareMathOperator{\Ad}{\mathrm{Ad}} %Adjoint action operator 

\DeclareMathAlphabet{\mathcalligra}{T1}{calligra}{m}{n}
\DeclareFontShape{T1}{calligra}{m}{n}{<->s*[2.2]callig15}{}

\newcommand{\states}{\Alg{S}}
\newcommand{\unitary}{\mathrm{U}}

\DeclareFontFamily{OT1}{pzc}{}
\DeclareFontShape{OT1}{pzc}{m}{it}{ <-> s*[1.2] pzcmi7t }{}
\DeclareMathAlphabet{\mathpzc}{OT1}{pzc}{m}{it}
\newcommand{\Alg}[1]{\mathpzc{#1}}
\def\R{{{\mathbb R}}}
\def\C{{{\mathbb C}}}

\def\CP{{{\mathbb C}{\mathbb P}}}

\def\N{{{\mathbb N}}}
\def\Z{{{\mathbb Z}}}

\newcommand\define[1]{\emph{\textbf{#1}}}%italicize and bold-face %this seems like a good alternative

\DeclareMathOperator{\myargmin}{argmin}
\newcommand{\argmin}{\mathop{\myargmin}}

\begin{document}

\title{Quantum encodings that preserve persistent homology}

\author{Arthur J.~Parzygnat}
\email{arthurjp@mit.edu}
\affiliation{Experimental Study Group, Massachusetts Institute of Technology, Cambridge, Massachusetts 02139, USA}

\author{Andrew Vlasic}
\email{avlasic@deloitte.com}
\affiliation{
Deloitte Consulting LLP,
Tampa, Florida 33602, USA
}

\date{August 18, 2026}

\begin{abstract}
Given a data set with a notion of distance, such as a point cloud in Euclidean space, topological data analysis (TDA) uses techniques from algebraic topology and metric geometry to infer the topology of a hypothetical manifold from which the data are sampled. This inference is achieved by calculating topological invariants, some of which are difficult to compute classically. Meanwhile, quantum TDA utilizes quantum processes to extract the invariants used in making such inferences in an attempt to speed up the computations. Because applying transformations to the original classical dataset could alter the associated topological invariants, we investigate which quantum encodings would best preserve the invariants of the original dataset. This line of inquiry is distinct from standard approaches in quantum TDA, whose typical starting point is not from the classical dataset directly, but rather from the associated combinatorial objects, such as simplicial complexes, which typically demand a lot of resources to construct. We take the first step at a more direct approach by focusing on which quantum encodings acting directly on the data are admissible for applying quantum algorithms to extract topological features from classical datasets. 
\end{abstract}

\keywords{Quantum encoding, topological data analysis, persistent homology, Gromov--Hausdorff distance, stability theorem, multidimensional scaling, category theory}

\maketitle

\tableofcontents

\section{\label{sec:intro}Introduction}

Given a collection of data points, known as a \emph{point cloud}, such as in Figure~\ref{fig:1dmanifoldsTDA}, are the data points sampled from a probability distribution that is concentrated near a geometric object~\cite{ChazalMichelTDA21,NiSmWe08,ChCSMe11,BCOS16,GorbanTyukin2018}? 
One class of such geometric objects is the collection of smooth submanifolds of Euclidean space~\cite{SpivakCalc65,Munkres91,lee2010introduction,lee2013smooth,MilnorTopDiff1997}. This arises in problems often enough that the assumption that that the data lie along or near a submanifold (occasionally of a dimension significantly lower than the dimension of the ambient space) of the ambient space is referred to as the \emph{manifold hypothesis}~\cite{FeMiNa2016} (the manifold hypothesis can be interpreted more generally beyond ordinary smooth manifolds, such as unions of intersecting manifolds and changes in local dimension~\cite{JonesDG2024}).
Although not all described by manifolds, the geometric objects in Figure~\ref{fig:1dmanifoldsTDA} have features that are robust to small continuous deformations. Such features are known as topological invariants~\cite{hatcherbook2002,Munkres84,MayAlgTop99}, and they help to narrow down the geometric objects that could describe the data. Topological Data Analysis (TDA) is a subject that attempts to extract such topological invariants from data in order to infer the geometric structure of the data~\cite{wasserman2018topological,CarlssonTDA,oudot2017persistence,carlsson2021topological,ghrist2008barcodes}. An example is clustering, which corresponds roughly to the connected components of such a geometric object~\cite{Carlsson2010Clustering,memoli2011metric}. 

Besides clustering, TDA also extracts out latent features in data that describe higher levels of connectivity. It has been successfully applied to machine learning~\cite{hensel2021survey}, molecular science~\cite{wee2025review}, computational biology~\cite{abousamra2023topology,levenson2024advancing,torras2025topology}, and neuroscience~\cite{gardner2022toroidal,SchonsheckGiusti2025,Yoonetal2024,Guardamagna2026toroidal}, to name a few, occasionally surpassing other standard methods. Although graphs have been used as the standard model for analyzing the topology of networks, simplicial complexes and hypergraphs have been used more recently to capture more connectivity information in a network such as complex dynamics and higher-order interactions (such as what is believed to occur in the brain)~\cite{krishnagopal2023topology,millan2024triadic,millan2025topology}. As a simple toy example, if a network in the form of a graph has as its nodes authors, and edges representing that those two authors wrote a paper together, then a hypergraph could be used to represent teams of collaborators on a paper~\cite{Bianconi2021} (a simplicial complex, however, would not be a suitable model since if there are $k$ authors on a joint paper, that does not imply that every proper subset of those authors share a joint paper~\cite{su2025topological}). 

The extraction of topological invariants from data is typically done by implementing certain classical algorithms, along with several computational tools that can do this quite effectively~\cite{gudhimanual,scikittda2019,giotto-tda}. 
However, some of the classical TDA algorithms are notoriously computationally expensive, especially when computing higher-dimensional invariants, and so they tend to not scale well. For example, persistent homology essentially connects data points not just with edges as in a graph, but also with simplices of varying dimension, and consequently scales exponentially with the number of data points (see Chapter 5 in Ref.~\cite{oudot2017persistence} for more details). However, there are continuous efforts to increase the efficiency and scalability of individual algorithms \cite{simi2025a,ChoietalTDA2025}. 

\begin{figure}
\includegraphics{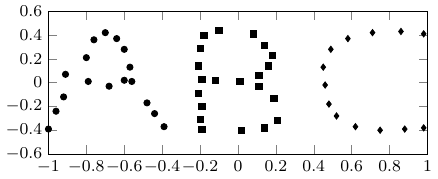}
\caption{A dataset obtained by sampling from probability distributions concentrated near geometric shapes representing the letters A, B, and C. The data point markers are chosen differently to illustrate what happens to this dataset under a quantum encoding later in Figure~\ref{fig:ABCqenc}.}
\label{fig:1dmanifoldsTDA}
\end{figure}

It is therefore natural to ask if there are quantum algorithms that can extract some of these topoligical invariants more efficiently. Indeed, such an idea was initially proposed in Ref.~\cite{LloydGarneroneZanardi16}, which focused on the invariants given by persistent homology~\cite{CarlssonTDA,Virk2022,oudot2017persistence}, one aspect of TDA. The extraction of these invariants can be calculated on a quantum computer in different stages. In Ref.~\cite{LloydGarneroneZanardi16}, the proposal is to first process the classical data by using classical algorithms to calculate the distances between the data points and to construct a simplicial complex, which is a combinatorial object associated with the data. Ref.~\cite{LloydGarneroneZanardi16} then encodes the combinatorial object as a quantum state and a Hermitian operator, 
and then proposes a quantum algorithm (essentially quantum phase estimation using Hamiltonian simulation) on that associated quantum state in order to extract the Betti numbers, which are topological invariants of homology~\cite{hatcherbook2002}. 

However, one can imagine, instead, encoding the classical data directly onto the quantum computer via a \emph{quantum encoding} (also called a \emph{quantum feature map}, \emph{quantum embedding}, \emph{data encoding}, \emph{data loading}, or \emph{state preparation}) and then processing the associated quantum states to calculate these invariants with minimal classical preprocessing. In doing so, it is extremely important to preserve, as much as possible, the structure of the classical data so as to not distort the geometric shapes from which the data are sampled. Indeed, it was observed in Ref.~\cite{vlasic2023qtda} that many commonly used quantum encodings distort the data, preventing their distances from being preserved. Figure~\ref{fig:ABCqenc} illustrates how the dataset in Figure~\ref{fig:1dmanifoldsTDA} looks like under the application of one such quantum encoding. The encoding effectively causes their associated persistent homologies to be noticeably different. In other words, if one is to directly encode the classical data onto a quantum computer by any of the standard encoding schemes, it appears that the persistent homology computed by the quantum computer could be quite different from the persistent homology of the given data. 

\begin{figure}
\includegraphics{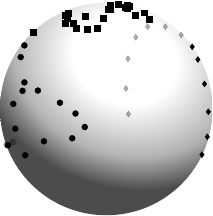}
\caption{The dataset from Figure~\ref{fig:1dmanifoldsTDA}, with data vectors in the form $x=(x_1,x_2)$, is mapped to the Bloch sphere by dense angle encoding $\ket{x}:=\cos\left(\frac{x_1}{2}\right)\ket{0}+e^{2ix_2}\sin\left(\frac{x_1}{2}\right)\ket{1}$ (the notation is discussed in more detail in Example~\ref{ex:denseangleencoding}, though we note that a slightly different scaling and centering convention is used there). 
The image of the dataset is shown under this mapping. Darker markers indicate that those points lie on the front side, while lighter shaded markers indicate that those points lie on the back side of the Bloch sphere. Upon inspection, it is not nearly as straightforward to identify that the dataset is describing the letters A, B, and C, thus highlighting that even a relatively simple smooth quantum encoding can distort a dataset.}
\label{fig:ABCqenc}
\end{figure}

In line with the perspective of structure-preserving encodings proposed in Ref.~\cite{PBVP24}, we should then search for quantum encodings that would preserve the geometric data as much as possible in order to obtain persistent homologies that are consistent with those that would be obtained from the original classical data via classical methods. More conservatively, in this paper, we propose conditions that should be satisfied by such an encoding in order to preserve the persistent homology as much as possible. Our proposal utilizes the stability theorems of persistent homology~\cite{Chazal2009Stable,CSEHstability2007,gardner2017stability,CCGGO09,Virk2022}, which relate the Gromov--Hausdorff distances between metric spaces to the bottleneck distances between the associated persistence diagrams or barcodes~\cite{ghrist2008barcodes} (these concepts will be reviewed in this paper). By analyzing various metrics on the space of quantum states, we propose some options for isolating such quantum encodings. 
One such proposal involves extending multidimensional scaling, a dimensional reduction procedure in classical distance geometry~\cite{DeLeeuwHeiser1982}, to geometries relevant for quantum systems. 

It is important to address why we aim to encode the raw classical data into a quantum state space directly, rather than first measuring the distances and encoding the combinatorial data from the associated simplicial complex. There are several reasons for this. 

\begin{enumerate}[(1)]
\item
First, it is interesting to consider the possibility of finding alternative quantum algorithms for TDA that bypass the resource-intensive task of constructing and storing the combinatorial data from the simplicial complex~\cite{oudot2017persistence,Berry2024QTDA}. Although speculative, it is conceivable that there could be ways in which one can apply quantum circuits that extract topological invariants from the quantum states more directly, rather than measuring distances on a quantum computer~\cite{Kuzmak2021,vlasic2023qtda}. Although this paper does not propose any such specific quantum algorithms, we provide evidence that any such algorithm should factor through a quantum encoding that preserves the topological features of the underlying geometric space from which the data are sampled. Therefore, in order to be more confident regarding the optimality of current proposed quantum algorithms for TDA~\cite{Berry2024QTDA}, it seems worthwhile to investigate such alternatives. 

\item Second, for applications in Quantum Machine Learning (QML), understanding how a given quantum encoding deforms the perceived topology of the data could yield insight into choosing an encoding scheme and subsequent network layers that avoid or minimize such deformations. This could therefore provide insight towards choosing quantum encodings that leverage structure in data to avoid the barren plateau problem or to avoid exponential concentration in quantum kernel methods~\cite{mcclean2018barren,larocca2024reviewbarrenplateausvariational,thanasilp2024exponential}.

\item 
Additionally, when we arrive at a time for analyzing \emph{quantum} data (rather than just classical data), it seems important to uncover an algorithm that works directly on quantum states without the need to measure the distance between them, since the latter is often a resource-expensive task, except under certain circumstances~\cite{Kuzmak2021,Rethinasamy2023}. 

\item
Finally, when presented with data in $\R^{n}$, the Euclidean distance between the data points might not always be the most appropriate notion of distance to use. Indeed, other notions of distance might more accurately describe the data~\cite{Perea2018Multiscale,ChCSMe11,GrandeSchaub2024,CoifmanLafon2006DM}. As such, constructing one simplicial complex from a single metric on the data and only using that information in a quantum algorithm for TDA may be too restrictive. 
\end{enumerate}

The structure of this paper is as follows.
Section~\ref{sec:quantumencodings} reviews some common examples of quantum encodings and definitions of distances between data points and the corresponding encoded quantum states. 
Section~\ref{sec:tda} reviews some aspects of persistent homology, which can be quickly skimmed to establish notation for those already familiar with TDA. It begins with a review of elements of category theory in Section~\ref{subsec:cats}. Section~\ref{subsec:HS-distance} reviews the Gromov--Hausdorff distance between two finite metric spaces, which mathematically represent point clouds. As a special example, we carefully explain how a quantum encoding generally alters the Gromov--Hausdorff distance. Section~\ref{subsec:persistenthomology} describes how a finite metric space gives rise to a persistent homology, which is a parametrized family of homology vector spaces that attempt to keep track of the metric space's topological features. Section~\ref{subsec:bottleneck-distance} describes the numerical invariants of persistent homology, which are called persistence diagrams, and it reviews the bottleneck distance, which measures the distance between persistence diagrams. Section~\ref{subsec:stability} then reviews some formulations of the Stability Theorems from topological data analysis, where such theorems are used to relate the Gromov--Hausdorff distances to the bottleneck distances. 

Having established the background, Section~\ref{sec:main} formulates the problem of quantum encodings for the preservation of invariants obtained from topological data analysis. Namely, in Section~\ref{sec:MAE}, we propose a quantum encoding that \emph{exactly} preserves the Euclidean distance of a finite point cloud in Euclidean space. However, because this quantum encoding involves preparing \emph{mixed} quantum states, we define an alternative approach that approximately preserves the Euclidean distance and only involves preparing pure quantum states in Section~\ref{sec:QMDS}. The latter proposed quantum encoding utilizes non-Euclidean metric multidimensional scaling to preserve distances in quantum state space. 
Section~\ref{sec:experiments} provides some experimental evidence related to the preservation of the topological invariants of the proposed quantum encodings and also illustrates the failure of other encodings to preserve the persistent homology. 
Section~\ref{sec:discussion} ends the manuscript with a discussion summarizing our findings and poses some open problems.

%%%%%%%%%%%%%%%%%%%%%%%%%%%%%%%%%%%%%%%%%%%%%%%%
\section{Quantum encodings and metric spaces}
\label{sec:quantumencodings}
%%%%%%%%%%%%%%%%%%%%%%%%%%%%%%%%%%%%%%%%%%%%%%%%

A good amount of information can be inferred from a dataset of vectors in Euclidean space, called a \emph{point cloud}. For example, one might want to identify local features associated with a specific data point that indicate a latent characteristic, as in binary classification. Additionally, one might like to infer a global feature, such as if the data points are sampled from some unknown low-dimensional geometric object, such as a manifold. Ideally, such conclusions should be robust to noise and the incorporation of additional data, and such methods are part of the subject of Topological Data Analysis (TDA)~\cite{CarlssonTDA,oudot2017persistence}. 

As for \emph{calculating} the invariants of TDA, instead of applying purely classical algorithms to analyze data, Quantum Machine Learning (QML) aims to provide methods and algorithms by leveraging properties of quantum mechanics to make certain conclusions about data~\cite{Biamonte2017QML,SchuldPetruccione21}. The first step in such a procedure is to encode the data as quantum states or quantum circuits~\cite{Aaronson15,schuld2021effect,lloyd20,KhAmSi24,PBVP24,LaRoseCoyle2020}. 

A natural question to ask is what happens to the information within a dataset as it is transferred to a quantum system. To make this question more precise, it is important to make sense of what we mean by information. Ref.~\cite{PBVP24} put forward the idea that such information is meant to be interpreted as mathematical structure itself, rather than some numerical quantity. In brief, structured spaces and encodings that preserve such structure are objects and morphisms, respectively, in a particular category~\cite{mac2013categories,riehl2016,Perrone2024}, while an encoding that loses the information of such a structure is more appropriately thought of as a morphism in a category that has forgotten that structure, with the two categories being related by a forgetful functor~\cite{PBVP24}. This is important in the context of TDA because if classical data are sampled from a particular manifold, then it should be the case that the same manifold should be inferred from the quantum encoded data. Hence, it is important to isolate the quantum encodings that preserve such structure. The present manuscript aims to address this point. 

To set the stage for making this question more precise, we first establish some notation~\cite{NiCh11,HallQTM13}.
Given a Hilbert space $\mathcal{H}$, let $\states(\mathcal{H})$ denote the set of states, i.e., trace-class positive operators with trace $1$, and let $\unitary(\mathcal{H})$ denote the set of unitary operators on $\mathcal{H}$. States are also called \emph{density matrices} in this work. The notations $\N$, $\Z$, $\R$, and $\C$ are used for natural numbers, integers, real numbers, and complex numbers, respectively. The adjoint of an operator $A$ on a Hilbert space $\mathcal{H}$ is denoted by $A^{\dag}$, and it is implicitly identified with the conjugate transposed matrix when $\mathcal{H}=\C^{n}$ for $n\in\N$.

\begin{definition}\label{def:encoding}
Let $X$ be a set, interpreted as a data domain. 
A \define{quantum state encoding} from $X$ into a Hilbert space $\mathcal{H}$ is a function $\rho:X\to\states(\mathcal{H})$. A \define{quantum unitary encoding} is a function $U:X\to\unitary(\mathcal{H})$. A \define{quantum encoding} refers to either of these. 
\end{definition}

A quantum encoding as defined above is also often called a \emph{quantum feature map}, \emph{quantum embedding}, \emph{data encoding}, \emph{data loading}, or \emph{state preparation}. These terms are sometimes used when emphasizing different aspects of the encoding. For example, \emph{state preparation} usually involves an algorithmic construction of a quantum circuit describing the encoding. Note also that a fixed state $\rho_{0}\in\states(\mathcal{H})$ together with a quantum unitary encoding $U:X\to\unitary(\mathcal{H})$ defines a quantum state encoding $\rho:X\to\states(\mathcal{H})$ via the formula $\rho(x):=U(x)\rho_{0}U(x)^{\dag}$ for all $x\in X$. The unitary $U(x)$ is then thought of as a quantum circuit that prepares $\rho(x)$. In most cases, $\rho(x)$ will correspond to a pure state, i.e., $\rho(x)=\ket{x}\bra{x}$ will be a rank-1 projector with $\ket{x}$ a unit vector.

%%%%%%%%%%%%%%%%%%%%%%%%%%%%%%%%%%%%%%%%%%%%%%%%
\subsection{Examples of quantum encodings}
\label{subsec:feature-maps}
%%%%%%%%%%%%%%%%%%%%%%%%%%%%%%%%%%%%%%%%%%%%%%%%

In this section, we review some common examples of quantum encodings. These examples include angle encoding, dense angle encoding, amplitude encoding, square-root encoding, diagonal encoding, and IQP encoding. There are also many closely related encodings, which include general qubit encoding and generalized amplitude encoding. Refs.~\cite{KhAmSi24,LaRoseCoyle2020,SchuldPetruccione21} provide many examples and contain additional references.
In the examples that follow, the notation $\ket{0}$ and $\ket{1}$ will be used to denote the vectors $(1,0)$ and $(0,1)$ in $\C^{2}$, respectively, unless otherwise stated. 
Additionally, the tensor product of $n$ copies of $\ket{0}$ will be denoted by any of $\ket{0\cdots0}:=\ket{0}\otimes\cdots\otimes\ket{0}=\ket{0}^{\otimes d}$, the first two being used when the number of qubits $d$ is clear from context. The unit sphere in $\R^{d}$ is denoted by $S^{d-1}$. Let $\Delta^{n}\subset\R^{n+1}$ denote the probability $n$-simplex in $\R^{n+1}$, i.e.,  
\begin{equation}
\label{eqn:nsimplex}
\Delta^{n}=\left\{x\in\R^{n+1}\;:\;x_{i}\ge0\;\forall i, \sum_{j=0}^{n}x_{j}=1\right\}.
\end{equation}

\begin{example}[Angle encoding]
\label{ex:angleencoding}
\emph{Angle encoding}, also called \emph{rotation encoding}, is a quantum encoding where the components of a data point $x=(x_1,\dots,x_d)\in\R^{d}$ are set to be the angles of a particular rotation operator $U(x)\in\unitary(\mathcal{H})$, where $\mathcal{H}=\C^{2^d}$~\cite{schuld21,schuld2019quantum}. 
This operator is then made to act on a fiducial state in $\mathcal{H}$, which is typically denoted $\ket{0\cdots0}$. Specifically, if $\sigma_j$ denotes the $j^{\text{th}}$ Pauli operator, with $j\in\{1,2,3\}$, for any unit vector $\vec{n}$ in $\R^3$, let 
\begin{equation}
\vec{n}\cdot\vec{\sigma}=\sum_{k=1}^{3}n_{k}\sigma_{k}
\end{equation}
denote the Pauli operator in the $\vec{n}$ direction. Then, for a collection $\{\vec{n}_{j}\}$ of $d$ such directions, set 
\begin{equation}\label{eq:angle}
     U(x) := \bigotimes_{j=1}^{d} \exp\left( -\frac{i x_j}{2} (\vec{n}_{j}\cdot\vec{\sigma}) \right),
\end{equation} 
so that the $j^{\text{th}}$ component of the data point $x\in\R^{d}$ controls the degree of rotation for the $j^{\text{th}}$ qubit around the $\vec{n}_{j}$ axis. Then, setting $\ket{x}:=U(x)\ket{0\cdots0}$, we obtain the associated quantum encoding $\rho:\R^{d}\to\states(\mathcal{H})$ given by 
\begin{equation}
\rho(x):=\ket{x}\bra{x}\equiv U(x)\ket{0\cdots0}\bra{0\cdots0}U(x)^{\dag}. 
\end{equation}
Note that already with a single qubit, the operator $\exp\left( -i \frac{x}{2} (\vec{n}\cdot\vec{\sigma}) \right)$ has a periodicity over the domain $\R$ with a fundamental domain $[-\pi,\pi)$ (an alternative option is $[0,2\pi)$, but we prefer $[-\pi,\pi)$ since it has $0$ as its center), since 
\begin{equation}
U(x+2\pi)\ket{0}\bra{0}U(x+2\pi)^{\dag}=U(x)\ket{0}\bra{0}U(x)^{\dag}
\end{equation}
(note: it is not true that $U(x+2\pi)=U(x)$, but rather the associated density matrix $\rho(x)$ satisfies this periodicity). 
If the data do not admit such a periodicity, then one may map the data first into $[-\pi,\pi)$ by some suitable function (such as centering and rescaling). As a special case, if $\vec{n}=(0,1,0)$, then this unitary becomes 
\begin{equation}
U(x)=\exp\left(-\frac{ix}{2}\sigma_{2}\right)=\begin{bmatrix}\cos\left(\frac{x}{2}\right)&-\sin\left(\frac{x}{2}\right)\\\sin\left(\frac{x}{2}\right)&\cos\left(\frac{x}{2}\right)
\end{bmatrix}
.
\end{equation}
Hence, the action of this unitary on the state $\ket{0}$ yields
\begin{equation}
\label{eqn:angleencodingstate}
U(x)\ket{0}=\cos\left(\frac{x}{2}\right)\ket{0}+\sin\left(\frac{x}{2}\right)\ket{1}
\end{equation}
with the associated density matrix given by 
\begin{equation}
\label{eqn:angleencodingdensitymatrix}
\rho(x)=\begin{bmatrix}\cos^2\left(\frac{x}{2}\right) & \cos\left(\frac{x}{2}\right)\sin\left(\frac{x}{2}\right)\\ \cos\left(\frac{x}{2}\right)\sin\left(\frac{x}{2}\right) & \sin^2\left(\frac{x}{2}\right)\end{bmatrix}.
\end{equation}
Note that one can recover $x$ from such a state by applying certain measurements. For example, measuring $\ket{1}$ would yield the probability $p:=\sin^2\left(\frac{x}{2}\right)$. From this, one has $\sqrt{p}=\pm \sin\left(\frac{x}{2}\right)$ so that $x=\pm2\arcsin\left(\sqrt{p}\right)\in[-\pi,\pi)$. In the case that $p\ne0$, to identify whether $x$ is positive or negative, we need to perform another measurement. One possibility is to measure $\ket{+}=\frac{1}{\sqrt{2}}\big(\ket{0}+\ket{1}\big)$ which has probability of occurrence given by $q:=\frac{1+\sin(x)}{2}$. Thus, combined with the previous possibilities for $x$, the combination $2q-1$ is positive if and only if $x$ is positive. Equivalently, $x$ is positive if and only if $q>\frac{1}{2}$.
A similar discussion, though possibly using different measurements, applies to the other qubits and for Pauli operators in arbitrary directions. 
\end{example}

\begin{example}[Dense angle encoding]
\label{ex:denseangleencoding}
A modification of angle encoding is \emph{dense angle encoding}, which uses fewer qubits~\cite{LaRoseCoyle2020}. 
This encoding begins with data in $\R^{d}$ and maps it to the state space of $\lceil\frac{d}{2}\rceil$ qubits, where $\lceil z\rceil$ is the least integer greater than $z$. Specifically, the dense angle encoding maps $x=(x_1,\dots,x_d)\in\R^{d}$ to 
\begin{equation}
\ket{x}=\bigotimes_{j=1}^{\lceil\frac{d}{2}\rceil}\Big(c(x_{2j-1})\ket{0}+e^{ix_{2j}}s(x_{2j-1})\ket{1}\Big),
\end{equation}
where $c(\theta):=\cos\left(\frac{\theta+\pi}{4}\right)$ and $s(\theta):=\sin\left(\frac{\theta+\pi}{4}\right)$. Note that if $d$ is odd, one sets $x_{d+1}=0$. The reason for our different convention as compared with Ref.~\cite{LaRoseCoyle2020} is because we will often work with centered data and also because the encoding is non-injective when $\theta=-\pi$ (and also when $\theta=\pi$ after removing the global phase). 
The associated quantum state encoding is then $\rho(x)=\ket{x}\bra{x}$, which is explicitly given by
\begin{equation}
\rho(x)=\bigotimes_{j=1}^{\lceil\frac{d}{2}\rceil}\begin{bmatrix}c^2(x_{2j-1})&e^{-ix_{2j}}cs(x_{2j-1})\\e^{ix_{2j}}cs(x_{2j-1})&s^2(x_{2j-1})\end{bmatrix},
\end{equation}
where $c^2(\theta):=c(\theta)^2$, $s^2(\theta):=s(\theta)^2$, and $cs(\theta):=c(\theta)s(\theta)$.
The periodicity causes classical data in different regions to be placed into the same range, so one must restrict to a suitable domain to guarantee injectivity (such injectivity does not hold in the example illustrated in Figures~\ref{fig:1dmanifoldsTDA} and~\ref{fig:ABCqenc} precisely when $x_2=0$). 
The fundamental domain over which the encoding is injective is $[-\pi,\pi)^{\lceil\frac{d}{2}\rceil}$. 
In addition, one can rescale $x_{2j-1}$ and $x_{2j}$, which would slightly alter what this domain would look like (Ref.~\cite{LaRoseCoyle2020} includes a factor of $4 \pi$, for example). 
Combining these two observations, one can use a large enough domain that covers the data by rescaling the argument and avoid encodings that are not injective on the data domain. 
\end{example}

Regarding both types of angle encoding, one must be cautious of induced edge effects due to periodicity, since data that are originally far apart could be brought closer together in quantum state space. We will address this point later in Sections~\ref{sec:main} and~\ref{sec:experiments} when defining distance-preserving encodings and when we analyze how these encodings alter TDA.

\begin{example}[Amplitude encoding]
\label{ex:ampencoding}
\emph{Amplitude encoding} begins with data on $S^{2^{d}-1}\subset\R^{2^{d}}$ and maps it by the inclusion into $\C^{2^{d}}$~\cite{schuld21}. Mathematically, $x\in S^{2^{d}-1}$ gets sent to the state $\ket{x}=\sum_{i=0}^{2^{d}-1}x_{i}\ket{i}$, where $\ket{i}$ is expressed in binary. In more detail, if $x=(x_0,x_1,\dots,x_{2^d-1})\in S^{2^{d}-1}$, rewrite the indices in binary so that they correspond to a sequence $n$ of $d$ $0$'s and $1$'s. For example, if $d=2$, then $n$ has four possibilities: $n\in\{00,01,10,11\}$. Therefore, the state encoding associated with amplitude encoding is
\begin{equation}
S^{2^{d}-1}\ni x\mapsto \rho(x):=\sum_{i,j=0}^{2^{d}-1}x_{i}\overline{x_{j}}\ket{i}\bra{j}.
\end{equation}
Circuit implementations for this form of state preparation are discussed in Refs.~\cite{ICKHC16,PleschBrukner2011}, where the number of steps grows as $\mathcal{O}(2^{d})$.
One can extend the previous definition to an encoding on all of $\R^{2^{d}}\setminus\{0\}$ by first normalizing the entries so that they lie on the unit sphere. In this case, amplitude encoding is given by 
\begin{equation}
\label{eqn:amplitudeencoding}
\R^{2^{d}}\setminus\{0\}\ni x\mapsto \frac{1}{\lVert x\rVert^2}\sum_{i,j=0}^{2^{d}}x_{i}\overline{x_{j}}\ket{i}\bra{j}.
\end{equation}
Of course, this mapping is not injective and can drastically distort data, as discussed in Ref.~\cite{PBVP24}. Alternative variations on amplitude encoding that do not distort the data as much are described in Refs.~\cite{PBVP24,SchuldPetruccione21}. 
\end{example}

A close variation on amplitude encoding involves starting with data whose entries can be interpreted as probabilities. 

\begin{example}[Square-root encoding]
\label{ex:squarerootencoding}
The \emph{square-root encoding}~\cite{GroverRudolph02,KayeMosca01,araujo2021divide,Herbert2021NoQuantumSpeedup,SoSc06,bradley2020,Iaconis2024} is a quantum encoding whose domain is $\Delta^{n}\subset\R^{n+1}$. 
If a basis of $\mathcal{H}=\C^{n+1}$ is written as $\ket{0},\ket{1},\dots,\ket{n}$, the encoding sends a point $x\in\Delta^{n}$ to the vector 
\begin{equation}
\label{eqn:sqrtencoding}
\ket{x}=\sum_{j=0}^{n}\sqrt{x_{j}}\ket{j}. 
\end{equation}
Therefore, the state encoding associated with the square-root encoding is the function $\rho_{\sqrt{}}:\Delta^{n}\to\states(\mathcal{H})$ sending $x$ to 
\begin{equation}
\rho_{\sqrt{}}(x)=\sum_{i,j=0}^{n}\sqrt{x_{i}x_{j}}\ket{i}\bra{j}. 
\end{equation}
It is worth comparing the square-root encoding with amplitude encoding. If the data begins on a simplex, both encoding schemes map the simplex $\Delta^{n}$ into a sphere $S^{n}$ but to different points. Figure~\ref{fig:ampversussqrt} illustrates how the two encodings differ in the case $n=1$. 
\begin{figure}
\begin{tikzpicture}[scale=3]
\def\px{0.3};
\def\py{0.9};
\draw[->] (-0.25,0) -- (1.25,0);
\draw[->] (0,-0.25) -- (0,1.25);
\draw[thick] (0,1) -- (1,0);
\draw[thick,domain=0:{pi/2}] plot ({cos(deg(\x))},{sin(deg(\x))});
\draw[thick,dotted] (0,0) -- ({\px},{1-\px}) -- ({\px/sqrt((\px)^2+(1-\px)^2)},{(1-\px)/sqrt((\px)^2+(1-\px)^2)});
\node at (\px,{1-\px}) {\Large $\bullet$};
\node[xshift=-4pt] at ({sqrt(\px)},{sqrt(1-\px)}) {$\sqrt{\text{\Large$\bullet$}}$};
\draw[thick,dotted] (0,0) -- ({\py},{1-\py}) -- ({\py/sqrt((\py)^2+(1-\py)^2)},{(1-\py)/sqrt((\py)^2+(1-\py)^2)});
\node at (\py,{1-\py}) {$\blacksquare$};
\node[xshift=-4pt] at ({sqrt(\py)},{sqrt(1-\py)}) {$\sqrt{\blacksquare}$};
\end{tikzpicture}
\caption{A visualization of the difference between amplitude encoding and square-root encoding as applied to the 1-simplex $\Delta^{1}\subset\R^{2}$. 
Two points, {\Large $\bullet$} and $\blacksquare$, are shown on the simplex.
The dotted lines are straight lines passing through the origin and these two points, respectively. The intersection of these dotted lines with the unit circle correspond to the points $\frac{\text{\Large$\bullet$}}{\lVert\text{\Large$\bullet$}\rVert}$ and $\frac{\blacksquare}{\lVert\blacksquare\rVert}$, which are not drawn. Instead, what are drawn are the square-root points $\sqrt{\text{\Large$\bullet$}}$ and $\sqrt{\blacksquare}$. The only points for which the amplitude and square-root encodings agree are for the points $(1,0)$, $(0,1)$, and $\frac{1}{2}(1,1)$.}
\label{fig:ampversussqrt}
\end{figure}

A circuit implementation of the square-root encoding when $n=2^{d}$ is given by the \emph{Grover--Rudolph method}, which encodes a probability vector with $2^{d}$ values into a state of a quantum system with $d$ qubits~\cite{GroverRudolph02,KayeMosca01}. The number of steps used for such state preparation goes as $\mathcal{O}(d)$ due to its iterative (``divide and conquer'') procedure. As a simple example, if $d=2$, begin with a data point $x=(x_0,x_1,x_2,x_3)\in\Delta^{3}$. The circuit used to prepare the state $\sqrt{x_{0}}\ket{00}+\sqrt{x_{1}}\ket{10}+\sqrt{x_{2}}\ket{01}+\sqrt{x_{3}}\ket{11}$ is given as follows. First, find the binary tree decomposition of the point $x$ as in Figure~\ref{fig:amp_enc_tree_general}. 
\begin{figure}
    \begin{forest}
      for tree={l+=.25cm} % increase level distance
      [$1$
        [$x_{01}$[$x_0$][ $x_1$ ]]
        [$x_{23}$[$x_2$][$x_3$]]
      ]
    \end{forest} 
\caption{\label{fig:amp_enc_tree_general} This is the binary tree decomposition of the data point $(x_0,x_1,x_2,x_3)\in\Delta^{3}$. Here, $x_{01}=x_0+x_1$ and $x_{23}=x_2+x_3$~\cite{araujo2021divide,de2024empirical}.}
\end{figure}
The quantum circuit then involves using this binary tree and some rotation gates, which are shown in Figure~\ref{fig:amp_enc_circ_general}. Note that the convention for the tensor product in our circuits is from top to bottom and the evolution is from left to right. 
\begin{figure} %[!htbp]
    \begin{quantikz}[thin lines] 
       \\ \lstick{$\ket{0}$} & \qw & \gate{ \Tilde{R}_y\Big(\sqrt{ \frac{x_0}{x_{01}}}\Big)}&  \gate{\Tilde{R}_y\Big( \sqrt{ \frac{x_2}{x_{23}} }\Big)} & \qw \\
       \lstick{$\ket{0}$} & \gate{\Tilde{R}_y\Big(\sqrt{x_{01}} \Big)}  & \octrl{-1} &\ctrl{-1}& \qw
    \end{quantikz} 
\caption{This circuit depicts the state preparation protocol for the square-root encoding in the case where the data lie on $\Delta^{3}\subset\R^{4}$~\cite{araujo2021divide,de2024empirical}. The rotation gates in this circuit are defined by $\Tilde{R}_y(\alpha) = R_y\big( 2 \arccos(\alpha) \big)$ (see Figure \ref{fig:amp_enc_tree_general} and the body for the remaining notation). The resulting state associated with this circuit is $\sqrt{x_{0}}\ket{00}+\sqrt{x_{1}}\ket{10}+\sqrt{x_{2}}\ket{01}+\sqrt{x_{3}}\ket{11}$.}
\label{fig:amp_enc_circ_general}
\end{figure}
\end{example}

We note that both the amplitude encoding and square-root encoding can be viewed as examples of \emph{generalized amplitude encodings} in the sense of Ref.~\cite{LaRoseCoyle2020} (however, we do not use this terminology to avoid confusion with our previous definitions of amplitude and square-root encodings). 

\begin{example}[Diagonal encoding]
\label{ex:diagencoding}
The \emph{diagonal encoding} is a quantum encoding whose domain is $\Delta^{n}$. Just as for the square-root encoding, if a basis of $\mathcal{H}=\C^{n+1}$ is written as $\ket{0},\ket{1},\dots,\ket{n}$, the diagonal encoding sends a point $x=(x_1,x_2,\dots,x_n)\in\Delta^{n}$ to the density matrix (mixed state) 
\begin{equation}
\rho_{\mathrm{diag}}(x)=\sum_{j=0}^{n}x_{j}\ket{j}\bra{j}.
\end{equation}
Notice that this encoding lands in the space of mixed states, specifically those that are diagonal. Since the efficient preparation of mixed states is part of ongoing research~\cite{Sagastizabal2021,BLMT2024,Rall2023thermalstate,ChKaGi2023,Chen2023QTSP,Ding2025efficient,Rouze2025Optimal,ScandiAlhambra2026}, we will mainly use the diagonal encoding as a tool for comparisons to other encodings. 
\end{example}

\begin{example}[IQP encoding]
\label{ex:IQPencoding}
\emph{Instantaneous quantum polytime} (IQP) encoding is built off of Hadamard gates acting on $\ket{0}^{\otimes d}$, where $d$ denotes the number of qubits, followed by quantum gates diagonal in the Pauli $\sigma_{z}$ basis, followed by Hadamard gates~\cite{Shepherd_2009,bremner2016average,havlivcek2019supervised}. The terminology comes from the fact that Pauli $\sigma_{z}$ gates all commute among themselves, so that the unitary evolutions can be interpreted as happening instantaneously (with no requirement of a temporal order). Despite the commuting property of such diagonal unitary matrices, it was argued in Ref.~\cite{Shepherd_2009} that such encodings could still be used to achieve quantum advantage in certain tasks. However, Ref.~\cite{Rajakumar2025IQP} argued that such an advantage is no longer available when sufficient noise occurs and once a certain depth in the quantum circuit is achieved (i.e., after a sufficient number of layers). Nevertheless, the encoding is still of interest, and so we review its basic components. 

There are many options for which unitary operators to use between the two sets of Hadamard gates, and one concrete proposal was put forth in Refs.~\cite{havlivcek2019supervised,bremner2016average}, which goes as follows. Given a data domain of the form $\R^{d}$, and given a function $\phi_{S}:\R^{d}\to\R$ for every subset $S\subseteq\{1,\dots,d\}$ (examples will be given momentarily), set 
\begin{equation}
U_{\Phi}(x) := \exp\left( i \sum_{S \subseteq \{1,2,\ldots,d\}} \phi_{S}(x) \prod_{j\in S}Z_j \right), 
\end{equation}
where $Z_j$ denotes the Pauli $\sigma_{z}$ gate acting on the $j^{th}$ qubit, and $\Phi$ is used to denote the collection $\{\phi_{S}\}$ with $S$ a subset of $\{1,\dots,d\}$. Note that $U_{\Phi}(x)$ is a unitary operator acting on $\mathcal{H}:=(\C^{2})^{\otimes d}$. The \emph{IQP encoding} associated with such a family of functions $\Phi:=\{\phi_{S}\}$ is the unitary encoding $U:\R^{d}\to\unitary(\mathcal{H})$ of the form 
\begin{equation}\label{eq:iqp-full}
U(x) :=  H^{\otimes d} U_{\Phi}(x) H^{\otimes d}, 
\end{equation}
where $H$ is the Hadamard gate. This induces a quantum state encoding $\rho(x):=\ket{x}\bra{x}$, where 
\begin{equation}
\ket{x}:=H^{\otimes d} U_{\Phi}(x) H^{\otimes d}\ket{0}^{\otimes d}.
\end{equation}

As a particular example, the one that was used in Ref.~\cite{havlivcek2019supervised} and the one we also implement here, set $\phi_{\varnothing}=0$ and $\Phi_{S}=0$ for all subsets $S\subseteq\{1,\dots,d\}$ that have cardinality $\# S\ge 3$. Then, set 
\begin{equation}
\phi_{\{j\}}(x):=x_{j}
\end{equation}
for all $j\in\{1,\dots,d\}$ and 
\begin{equation}
\phi_{\{j,k\}}(x):=(\pi-x_{j})(\pi-x_{k})
\end{equation}
for all distinct $i,j\in\{1,\dots,d\}$.  
In other words, for such a family $\Phi$ of functions, given a data point $x=(x_1,\dots,x_d)\in\R^{d}$, we have
\begin{equation}
\label{eq:iqp-second}
U_{\Phi}(x) = \exp\left( i\left( \sum_{j=1}^d x_j Z_j + \sum_{\substack{k,l=1\\k<l}}^d \phi_{\{k,l\}}(x) Z_k Z_l \right)\right).
\end{equation} 
\end{example}

Many of the above-mentioned encoding techniques are designed for specific problems or problem classes. Rather than analyzing the efficacy of such encodings via commonly-used measures such as entanglement, expressibility, expressivity, and more~\cite{huang2021power,de2024empirical,sim2019expressibility,goto2020universal,schuld2021effect,schatzki2021entangled}, we will focus on preserving distances and the topological structure of data as defined by persistent homology. Therefore, it is important to discuss different notions of distance when data are viewed classically and after they have been encoded onto a quantum system. This is the subject we turn to next.

%%%%%%%%%%%%%%%%%%%%%%%%%%%%%%%%%%%%%%%%%%%%%%%%
\subsection{Distances between encoded data points}
\label{sec:metricspaces}
%%%%%%%%%%%%%%%%%%%%%%%%%%%%%%%%%%%%%%%%%%%%%%%%

A point cloud $X$ inside Euclidean space acquires some structure from Euclidean space. For example, the distance between two points in $X$ can be defined using the Euclidean metric. This endows $X$ with the structure of a finite metric space $(X,d_{X})$. Let us recall what this means~\cite{Bryant85}. 

\begin{definition}
A \define{metric space} $(X,d_{X})$ is a set $X$ together with a function $d_{X}:X\times X\to[0,\infty)$ satisfying the following conditions:
\begin{enumerate}[i.]
\item $d_{X}(x_1,x_2)=0$ if and only if $x_1=x_2$ 
\item $d_{X}(x_1,x_2)=d_{X}(x_2,x_1)$ for all $x_1,x_2\in X$
\item $d_{X}(x_{1},x_{2})\le d_{X}(x_{1},x_{3})+d_{X}(x_2,x_3)$ for all $x_1,x_2,x_3\in X$ (the \emph{triangle inequality}). 
\end{enumerate}
Such a function $d_{X}$ is called a \define{distance function}, \define{metric}, or \define{distance metric} on $X$. 
\end{definition}

Although data may be presented to us in $\R^{d}$, it is not always optimal to use the Euclidean distance to describe the similarity between the data points. As such, several examples of metric spaces will appear throughout this work, so we briefly review some relevant examples.

\begin{example}
Let $p\in[1,\infty)$ and $n\in\N$. On $\R^{n}$, the \define{$\ell^{p}$ metric} $d_{\ell^{p}}$ (sometimes called the \define{Minkowski distance} of order $p$) is defined by the \define{$\ell^{p}$-norm} of the difference of two vectors, i.e., 
\begin{equation}
d_{\ell^{p}}(x,y):=\lVert x-y\rVert_{p}:=\left(\sum_{i=1}^{n}(x_{i}-y_{i})^{p}\right)^{\frac{1}{p}}
\end{equation}
for all $x,y\in\R^{n}$. The \define{$\ell^{\infty}$ metric} $d_{\ell^{\infty}}$ is defined by 
\begin{equation}
d_{\ell^{\infty}}(x,y):=\lVert x-y\rVert_{\infty}:=\sup_{i}\big\{|x_{i}-y_{i}|\big\}.
\end{equation}
The \define{Euclidean distance} is $d_{\ell^{2}}$. 
These are examples of metrics associated with norms on vector spaces. 
\end{example}

\begin{example}
Let $\Delta^{n}\subset\R^{n+1}$ be the $n$-simplex from~\eqref{eqn:nsimplex}. There are many notions of distance on $\Delta^{n}$, since the set $\Delta^{n}$ is used to describe probability distributions on $n+1$ elements~\cite{amari2016information}. Since $\Delta^{n}$ is a subset of $\R^{n+1}$, one can restrict any of the $\ell^{p}$ metrics to $\Delta^{n}$. This provides one family of examples. As another example, the \define{Hellinger distance} between $x=(x_0,x_1,\dots,x_n)$ and $y=(y_0,y_1,\dots,y_n)$ in $\Delta^{n}$ is 
\begin{align}
d_{H}(x,y)&:=\left(\frac{1}{2}\sum_{j=0}^{n}\big(\sqrt{x_j}-\sqrt{y_j}\big)^2\right)^{\frac{1}{2}} \nonumber \\
&=\sqrt{1-B(x,y)},
\end{align}
where 
\begin{equation}
B(x,y):=\sum_{j=0}^{n}\sqrt{x_j y_j}
\end{equation}
is the \define{Bhatacharyya coefficient} (sometimes called the \define{affinity})~\cite{Bhattacharyya1946,FuvdG99,Borovyk2025,LuoZhang2004} (note that our convention for the Hellinger distance is such that the maximum distance is $1$ rather than $\sqrt{2}$). The square of the Bhattacharyya coefficient is called the \define{fidelity}. 
\end{example}

\begin{remark}
The Hellinger distance is not equal to any of the Minkowski distances when restricted to points in $\Delta^{n}$. However, if $x,y\in\Delta^{n}$, then 
$d_{H}(x,y)=\frac{1}{\sqrt{2}}d_{\ell^{2}}(\sqrt{x},\sqrt{y})$, 
where the notation $\sqrt{x}$ means $\sqrt{x}:=(\sqrt{x}_{0},\dots,\sqrt{x}_{n})$ for $x=(x_0,\dots,x_n)$ and similarly for $y$. 
\end{remark}

\begin{example}
\label{defn:quantumdistances}
Fix a finite-dimensional Hilbert space $\mathcal{H}$ and let $\rho$ and $\sigma$ be two density matrices on $\mathcal{H}$ (more generally, $\rho$ and $\sigma$ could be two operators). There are many notions of distance between $\rho$ and $\sigma$~\cite{PetzSudar1996,FuvdG99,Borovyk2025,DaLuHa2011,LuoZhang2004}. In what follows, we only mention a few. 
The \define{nuclear/trace/$S_{1}$ distance} between $\rho$ and $\sigma$ is 
\begin{equation}\label{eq:trace-distance}
d_{\Tr}(\rho,\sigma):=
\lVert\rho-\sigma\rVert_{S_{1}}\equiv
\Tr\left[\sqrt{(\rho-\sigma)^{\dag}(\rho-\sigma)}\right].
\end{equation}
The \define{Hilbert--Schmidt/Frobenius/$S_{2}$ distance} is
\begin{equation}\label{eq:HS-distance}
d_{\mathrm{HS}}(\rho,\sigma):=\lVert\rho-\sigma\rVert_{S_{2}}:=\sqrt{\Tr\left[(\rho-\sigma)^{\dag}(\rho-\sigma)\right]}.
\end{equation}
More generally, for $p\in[1,\infty)$, the \define{Schatten $p$ metric} 
is given by 
\begin{align}
d_{S_{p}}(\rho,\sigma)&:=\lVert\rho-\sigma\rVert_{S_{p}}\nonumber \\
&:=\left(\Tr\left[\left(\sqrt{(\rho-\sigma)^{\dag}(\rho-\sigma)}\right)^{p}\right]\right)^{\frac{1}{p}}.
\end{align}
The \define{operator/spectral norm distance} (the Schatten $\infty$ distance) is given by 
\begin{equation}
d_{S_{\infty}}(\rho,\sigma):=\lVert\rho-\sigma\rVert_{S_{\infty}}:=\sigma_{1}(\rho-\sigma), 
\end{equation}
where $\sigma_{1}(\rho-\sigma)$ denotes the largest singular value of $\rho-\sigma$. 
The \define{Bures fidelity distance} (also sometimes called the \define{Helstrom distance}) is 
\begin{equation}\label{eq:bures-distance}
d_{F}(\rho,\sigma):=\sqrt{1-\sqrt{F(\rho,\sigma)}}, 
\end{equation}
where
\begin{equation}
F(\rho,\sigma):=\left(\Tr\left[\sqrt{\sqrt{\rho}\sigma\sqrt{\rho}}\right]\right)^{2}
\end{equation}
denotes the \define{fidelity} between the states $\rho,\sigma\in\states(\mathcal{H})$ (note that our convention is such that the maximum distance is $1$ rather than $\sqrt{2}$). 
Meanwhile, the (quantum) \define{Hellinger distance} is 
\begin{equation}
d_{H}(\rho,\sigma)=\frac{1}{\sqrt{2}}\sqrt{\Tr[\sqrt{\rho}-\sqrt{\sigma}]}
=\sqrt{1-B(\rho,\sigma)},
\end{equation}
where 
\begin{equation}
B(\rho,\sigma):=\Tr[\sqrt{\rho}\sqrt{\sigma}]
\end{equation}
denotes the (quantum) \define{Bhattacharyya coefficient} (also called \define{affinity}~\cite{Borovyk2025,kholevo1972quasiequivalence}). 
The \define{Bures angle distance}~\cite{BengtssonZyczkowski2017}, also called the \emph{quantum Fisher information metric geodesic distance}~\cite{PiresQSL2016}, is 
\begin{equation}
d_{B\measuredangle}(\rho,\sigma)=\frac{2}{\pi}\mathrm{arccos}\left(\sqrt{F(\rho,\sigma)}\right).
\end{equation}
The \define{Wigner--Yanase distance}~\cite{GibiliscoIsola2003} is 
\begin{equation}
d_{WY}(\rho,\sigma)=\frac{2}{\pi}\mathrm{arccos}\big(B(\rho,\sigma)\big).
\end{equation}
The $\frac{2}{\pi}$ in the previous two distance metrics is to make it so that the maximum distance between quantum states is $1$.
All of the above distance functions are true metrics on the set $\states(\mathcal{H})$ of quantum states. They all have different geometric meaning. For example, the Bures fidelity distance measures the length of the shortest geodesic connecting $\rho$ and $\sigma$ within the cone of positive matrices, while the Bures angle distance measures the length of the shortest geodesic connecting $\rho$ and $\sigma$ within the space of density matrices~\cite{BengtssonZyczkowski2017}. 
\end{example}

\begin{remark}
Some of the distances introduced in Example~\ref{defn:quantumdistances} are more closely related to the fidelity when the inputs are restricted to the subsets of pure states. More specifically, if $\rho=\ket{\psi}\bra{\psi}$ and $\sigma=\ket{\phi}\bra{\phi}$ are rank-1 projectors, then 
\begin{equation}
F(\rho,\sigma)=\big|\langle\psi|\phi\rangle\big|^2
=B(\rho,\sigma).
\end{equation}
Therefore, 
\begin{equation}
\label{eqn:Trpure}
d_{\Tr}(\rho,\sigma)=2\sqrt{1-F(\rho,\sigma)}
\end{equation}
and
\begin{equation}
\label{eqn:HSpure}
d_{\mathrm{HS}}(\rho,\sigma)=\sqrt{2(1-F(\rho,\sigma))}
\end{equation}
for all pure states $\rho,\sigma$~%
%begin footnote 
\footnote{Identity~\eqref{eqn:HSpure} follows from hermiticity of the projection operators and  $\Tr[(|\psi\rangle\langle\psi|-|\phi\rangle\langle\phi|)^2]=2-2|\langle\psi|\phi\rangle|^2$. Identity~\eqref{eqn:Trpure} can be shown with matrix methods, for example, as follows. Since $\ket{\psi}$ and $\ket{\phi}$ span at most a two-dimensional space, it suffices to represent $\ket{\psi}$ by $\ket{0}$ and $\ket{\phi}$ by $\alpha\ket{0}+\beta\ket{1}$, where $|\alpha|^2+|\beta|^2=1$, since the fidelity is invariant under a unitary change in basis. Explicitly calculating $\sqrt{(\rho-\sigma)^{\dag}(\rho-\sigma)}$ yields the $2\times2$ matrix $\left(\begin{smallmatrix}\sqrt{|1-|\alpha|^2}&0\\0&|\beta|\end{smallmatrix}\right)$ in this representation. Taking the trace yields $2|\beta|$ since $1-|\alpha|^2=|\beta|^2$. Identity~\eqref{eqn:Trpure} then follows from comparing terms.}.
%end footnote
This shows that the Hilbert--Schmidt metric is proportional to the trace metric. The Bures fidelity metric is closely related, but not proportional to either of these because of the square root of fidelity appearing in its definition. 
The restriction of the Bures angle distance to the set of pure states is called the \define{Fubini--Study distance}, which is given by  
\begin{equation}
\label{eqn:FubiniStudy}
d_{FS}\big(\ket{\psi},\ket{\phi}\big)=\frac{2}{\pi}\mathrm{arccos}\Big(\big|\langle\psi|\phi\rangle\big|\Big)
\end{equation}
between pure states $\ket{\psi}$ and $\ket{\phi}$. 
\end{remark}

\begin{remark}
\label{rmk:ellpSchattenp}
At first glance, it might seem as if the Schatten $p$ metric on $n\times n$ matrices might coincide with the $\ell^{p}$ metric when those matrices are viewed as $n^{2}$-component vectors. This is true when restricted to diagonal $n\times n$ density matrices, and it is also true for arbitrary density matrices for $p=2$. The case for $p=2$ follows from the fact that 
\begin{equation}
\lVert A\rVert_{S_{2}}
=\sqrt{\Tr[A^{\dag}A]}
=\sqrt{\sum_{i,j}|A_{ij}|^2}
=\lVert A\rVert_{2}.
\end{equation}
However, more generally, the Schattan $p$ norm of $A$ coincides with the $\ell^{p}$ norm of the \emph{singular values} of $A$, i.e., $\lVert A\rVert_{S_{p}}=\lVert \Sigma\rVert_{p}$, where $A=U\Sigma V^{\dag}$ is a singular value decomposition of $A$. We compare these norms here in a table for completeness, because we will occasionally use both $S_{p}$ and $\ell^{p}$ distances for matrices, and it is important to note the subtle differences between them. 
\begin{center}
\begin{tabular}{c|c}
$S_{p}$ norms for matrices & $\ell^{p}$ norms for matrices \\
\hline
$\lVert A\rVert_{S_{\infty}}=\sup_{i}\sigma_{i}(A)$ & $\sup_{i,j}|A_{ij}|=\lVert A\rVert_{\infty}$\\
$\lVert A\rVert_{S_{2}}=\sqrt{\sum_{i}\sigma_{i}(A)}$ & $\sqrt{\sum_{i,j}|A_{ij}|^2}=\lVert A\rVert_{2}$\\
$\lVert A\rVert_{S_{1}}=\sum_{i}\sigma_{i}(A)$ & $\sum_{i,j}|A_{ij}|=\lVert A\rVert_{1}$
\end{tabular}
\end{center}
By convention, $\sup_{i}\sigma_{i}(A)=\sigma_{1}(A)$, since singular values are ordered from largest to smallest, but the form given in the above table is meant to draw a clearer comparison to the $\ell^{\infty}$ norm. One of the important differences between the $\ell^{p}$ norms and the $S_{p}$ norms is that the $S_{p}$ norms are left and right unitarily invariant, meaning that if $W_{1}$ and $W_{2}$ are unitary matrices, then $\lVert W_{1} A W_{2}\rVert_{S_{p}}=\lVert A\rVert_{S_{p}}$. This property is not generally true for the $\ell_{p}$ norms of matrices (except for $p=2$). 
\end{remark}

When we map classical data onto a quantum system by some quantum encoding $\rho:X\to\states(\mathcal{H})$, if we endow the quantum state space with a metric $d$, then as long as no two data points get mapped to the same quantum state, this induces a metric on $X$. 

\begin{definition}
Let $(X,d_{X})$ and $(Y,d_{Y})$ be two metric spaces. An \define{embedding} from $(X,d_{X})$ to $(\mathcal{Y},d_\mathcal{Y})$ is a function $f:X\to Y$ that satisfies $d_{Y}\big(f(x_{1}),f(x_{2})\big)=d_{X}(x_{1},x_{2})$ for all $x_1,x_2\in X$. In such a case, $f$ is also said to be \define{distance-preserving}. If $f$ is also surjective, then $f$ is said to be \define{isometry}. 
\end{definition}

\begin{example}
\label{ex:diagonalencoding}
The diagonal encoding $\rho_{\mathrm{diag}}:\Delta^{n}\to\states(\C^{n+1})$, as in Example~\ref{ex:diagencoding}, defines an embedding $(\Delta^{n},d_{\ell^{p}})\to(\states(\C^{n+1}),d_{S_{p}})$ sending the $\ell^{p}$ metric to the Schatten $p$ metric. The diagonal encoding also defines embeddings $(\Delta^{n},d_{H})\to(\states(\C^{n+1}),d_{H})$ and $(\Delta^{n},d_{H})\to(\states(\C^{n+1}),d_{F})$ sending the (classical) Hellinger distance to the (quantum) Hellinger and Bures fidelity distances. The former can be verified by a short computation, while the latter follow from the relations 
\begin{equation}
B(x,y)=B\big(\rho_{\mathrm{diag}}(x),\rho_{\mathrm{diag}}(y)\big)
\end{equation}
and
\begin{equation}
B(x,y)^2=F\big(\rho_{\mathrm{diag}}(x),\rho_{\mathrm{diag}}(y)\big)
\end{equation}
for all $x,y\in\Delta^{n}$.
\end{example}

\begin{example}
Given a unitary matrix $U\in\unitary(\mathcal{H})$ acting on a Hilbert space $\mathcal{H}$, let $\Ad_{U}:\states(\mathcal{H})\to\states(\mathcal{H})$ be the induced operation on density matrices given by $\Ad_{U}(\rho)=U\rho U^{\dag}$. Then $\Ad_{U}$ defines an isometry from $\states(\mathcal{H})$ to itself with respect to all the Schatten $p$ metrics, the Bures fidelity metric, and the Hellinger metric defined in Example~\ref{defn:quantumdistances}. 
\end{example}

Note that an embedding is automatically injective (one-to-one). Conversely, injective maps from a \emph{set} into a \emph{metric space} can be used to \emph{pull back} the metric onto the set, thereby endowing the domain with a metric. 

\begin{lemma}
\label{lem:inducedmetric}
Let $X$ be a set, $(Y,d_{Y})$ a metric space,  $X\xrightarrow{f}Y$ a function, and $d_{f}:X\times X\to[0,\infty)$ the function 
\begin{equation}
\label{eq:dXpullbackmetric}
d_{f}(x_1,x_2):=d_{Y}\big(f(x_1),f(x_2)\big)
\end{equation}
defined for all $x_1,x_2\in X$. 
Then, 
$d_{f}$ is a metric on $X$ if and only if $f$ is one-to-one. In such a case, $(X,d_{f})\xrightarrow{f}(Y,d_{Y})$ is an embedding. 
\end{lemma}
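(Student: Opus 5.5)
The plan is to check the three metric axioms for $d_{f}$ one at a time, observing that symmetry and the triangle inequality transfer automatically from $d_{Y}$. Only the identity-of-indiscernibles axiom depends on $f$, and this is where injectivity enters. The embedding statement then follows directly from the definition.

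First, nonnegativity and symmetry are immediate. The function $d_{f}$ takes values in $[0,\infty)$ because $d_{Y}$ does. Also
\[
d_{f}(x_1,x_2)=d_{Y}\big(f(x_1),f(x_2)\big)=d_{Y}\big(f(x_2),f(x_1)\big)=d_{f}(x_2,x_1).
\]
For the triangle inequality, I will apply axiom iii of $d_{Y}$ to the three points $f(x_1),f(x_2),f(x_3)\in Y$. This gives $d_{f}(x_1,x_2)\le d_{f}(x_1,x_3)+d_{f}(x_2,x_3)$ for all $x_1,x_2,x_3\in X$. Thus $d_{f}$ always satisfies axioms ii and iii, whether or not $f$ is injective.

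The main step is axiom i, which I will prove in both directions.

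\emph{If $f$ is one-to-one, then $d_{f}$ is a metric.} If $x_1=x_2$, then $d_{f}(x_1,x_2)=d_{Y}\big(f(x_1),f(x_1)\big)=0$. Conversely, if $d_{f}(x_1,x_2)=0$, then $d_{Y}\big(f(x_1),f(x_2)\big)=0$. Axiom i for $d_{Y}$ gives $f(x_1)=f(x_2)$, and injectivity gives $x_1=x_2$.

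\emph{If $d_{f}$ is a metric, then $f$ is one-to-one.} Suppose $f(x_1)=f(x_2)$. Then $d_{f}(x_1,x_2)=d_{Y}\big(f(x_1),f(x_1)\big)=0$, so axiom i for $d_{f}$ forces $x_1=x_2$. Contrapositively, if $f$ fails to be injective, there exist $x_1\ne x_2$ with $d_{f}(x_1,x_2)=0$, so $d_{f}$ is only a pseudometric.

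Finally, when $f$ is one-to-one, the defining equation \eqref{eq:dXpullbackmetric} reads $d_{Y}\big(f(x_1),f(x_2)\big)=d_{f}(x_1,x_2)$ for all $x_1,x_2\in X$. This is exactly the definition of an embedding $(X,d_{f})\to(Y,d_{Y})$.

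There is no real obstacle here. The only point needing care is the logical structure of the "if and only if": axioms ii and iii hold unconditionally, so the equivalence reduces entirely to axiom i.
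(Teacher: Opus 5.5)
Your proof is correct: axioms ii and iii transfer unconditionally from $d_{Y}$, axiom i for $d_{f}$ holds exactly when $f$ is injective, and the embedding property is just the defining equation \eqref{eq:dXpullbackmetric}. The paper does not write out a proof; it leaves the lemma as an exercise and cites its reference, and your axiom-by-axiom check is the standard argument this points to.
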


The proof is left as an exercise (see Ref.~\cite{PBVP24} for details). 

\begin{definition}
\label{defn:pullbackmetric}
Given a set $X$, a metric space $(Y,d_{Y})$, and an injective function $X\xrightarrow{f}Y$, the metric $d_{f}$ on $X$ constructed in Lemma~\ref{lem:inducedmetric} is called the \define{pull-back metric} or the \define{embedding metric} induced by $f$.
\end{definition}

The importance of the embedding metric is two-fold. First, if a dataset $X$ does not have a metric on it, then the embedding metric obtained from some encoding $f:X\to Y$, where $Y$ has a metric, enables the notion of distance to be defined between the points of $X$. An example to keep in mind is a quantum encoding $X\to\states(\mathcal{H})$, where the quantum state space $\states(\mathcal{H})$ is equipped with one of the distance functions from Example~\ref{defn:quantumdistances}, but where we might not have a notion of distance on $X$ or perhaps we do not trust that distance (eg.\ vectors in Euclidean space representing images~\cite{Perea2018Multiscale}). Second, if $X$ does already have a metric on it, say $d_{X}$, then there could be a discrepancy between the distances $d_{f}(x_1,x_2)$ and $d_{X}(x_1,x_2)$. This discrepancy is measured by the \emph{distortion}, which is defined later in Definition~\ref{defn:distortionf}, and is one of the ingredients in making sense of the Gromov--Hausdorff distance between two metric spaces, which in this case are the metric spaces $(X,d_{X})$ and $(X,d_{f})$. Notice that these two metric spaces have the same underlying set $X$ but different metrics. In other words, we can measure how different a given metric $d_{X}$ is from the embedding metric $d_{f}$, and this will play an important role in determining how much information is lost under the mapping $f$.

%%%%%%%%%%%%%%%%%%%%%%%%%%%%%%%%%%%%%%%%%%%%%%%%
\section{Topological Data Analysis}
\label{sec:tda}
%%%%%%%%%%%%%%%%%%%%%%%%%%%%%%%%%%%%%%%%%%%%%%%%

Typical finite subsets of Euclidean space, i.e., \emph{point clouds}, that arise in structured problems often contain intrinsic structure. For example, the data could be sampled from a probability distribution that is localized along some geometric object, such as a submanifold~\cite{FeMiNa2016}. Inferring what that manifold could be is the subject of topological inference, and knowing such information often indicates structure in a population that could be leveraged for certain tasks, such as generating data or specifying the proximity of one datum to another. It is usually difficult to infer the exact manifold from which the data are sampled, and therefore one typically resorts to computing \emph{invariants}. This helps to rule out certain manifolds and isolate plausible manifolds in order to better visualize the topology of a point-cloud or simply to have a better understanding of the source of the data~\cite{carlsson2021topological}. For example, Betti numbers, the dimensions of homology groups, provide one example of an effective technique. Although the homology does not uniquely characterize a manifold, it does provide a great deal of information. Other invariants, such as characteristic classes or homotopy groups, will not be investigated here, though there are many works in this direction~\cite{Rubio2002Constructive,Tinarrage2022Computing,Perea2018Multiscale,MedinaZhou2025,MemoliZhou2025,Zhou2025SullivanModelTDA}. 

Since our aim is to bring these techniques to the quantum setting, we will provide a somewhat thorough review of persistent homology, the main form of topological data analysis investigated in this work. We begin with some background on category theory, metric spaces and notions of distance between them, and some methods of inferring the topology of a manifold from a sampled point cloud. The key results that we review at the end are some stability theorems of TDA, which describe how such invariants change under small perturbations of the data, which will become important when we describe quantum encodings in a later section.

%%%%%%%%%%%%%%%%%%%%%%%%%%%%%%%%%%%%%%%%%%%%%%%%
\subsection{Categories, functors, and natural transformations}
\label{subsec:cats}
%%%%%%%%%%%%%%%%%%%%%%%%%%%%%%%%%%%%%%%%%%%%%%%%

The language of categories, functors, and natural transformations dramatically facilitates the discussions of topological data analysis, specifically persistent homology, primarily due to its origins stemming from algebraic topology~\cite{CarlssonTDA,edelsbrunner2002topological,MacLaneHomology}. As such, we briefly review some basic definitions and save our main examples for the following sections. Fortunately, we only need three definitions from category theory (see Ref.~\cite{PBVP24} for an introduction emphasizing examples coming from quantum encodings). 

\begin{definition}\label{def:category}
    A \define{category} $\mathbf{C}$ consists of (1) a collection of \define{objects}, (2) a set $\mathbf{C}(X,Y)$ of \define{morphisms} from object $X$ to object $Y$, elements of which are depicted as arrows $f:X\to Y$, (3) a \define{composition rule} $\mathbf{C}(X,Y)\times\mathbf{C}(Y,Z)\to\mathbf{C}(X,Z)$ sending morphisms $f~:~X\to~Y$ and $g:Y\to Z$ to a specified morphism $g\circ f:X\to~Z$. 
    Moreover, these items must satisfy the following two axioms. First, composition is \define{associative}, i.e., $h\circ (g\circ f)=(h\circ g)\circ f$ for all composable triples of morphisms $f:X\to Y$, $g:Y\to Z$, and $h: Z\to W$,
    Second, there exist \define{identity morphisms}, i.e., for every object $X$ there exists a morphism $\id_X:X\to X$ such that $f\circ \id_X=\id_Y\circ f=f$ for all morphisms $f:X\to Y$.
\end{definition}

\begin{example}
Let $(X,d_{X})$ be a finite metric space, i.e., a finite set $X$ together with a metric $d_{X}:X\times X\to[0,\infty)$. 
A \define{distance nonincreasing map} (also called a \define{short map}) from $(X,d_{X})$ to $(Y,d_{Y})$ is a function $f:X\to Y$ such that $d_{Y}\big(f(x),f(x')\big)\le d_{X}(x,x')$ for all $x,x'\in X$. 
The collection of finite metric spaces together with the collection of distance nonincreasing maps forms a category, which will be denoted by $\mathbf{FinMetSpace}$. 
Many of the finite metric spaces we will be dealing with in this manuscript are subsets of Euclidean space with the induced metric. The subsets in this case are often interpreted as point clouds, sampled data points. In the case that the data are not naturally embedded in a given Euclidean space, one can use a Multidimensional Scaling (MDS) algorithm to embed the data points into some Euclidean space as an approximation~\cite{Mead1992MDS,BoGr2005MDS,Torgerson1952MDS}. We will return to MDS in the context of quantum encodings later in Section~\ref{sec:QMDS}. 
The restriction to finiteness of the metric spaces is not necessary, and we will find it important to allow for metric spaces whose underlying sets have infinitely many elements.  
\end{example}

\begin{definition} 
\label{def:functor}
A \define{functor} $F:\mathbf{C}\to\mathbf{D}$ between categories consists of two assignments, the first being the assignment of an object $F(X)$ in $\mathbf{D}$ to each object $X$ in $\mathbf{C}$, and the second being the assignment of a morphism $F(f): F(X)\to F(Y)$ in $\mathbf{D}$ to each morphism $f: X\to Y$ in $\mathbf{C}$.
Moreover, these assignments must satisfy the following two axioms. First, \emph{composition is preserved}, i.e., $F(g\circ f)={F(g)\circ F(f)}$ for all  morphisms $f:X\to Y$ and $g:Y\to Z$ in $\mathbf{C}$. Second, \emph{identities are preserved}, i.e., $F(\id_X)=\id_{F(X)}$ for every object $X$ in $\mathbf{C}$.
\end{definition}

\begin{definition}
\label{defn:nattrans}
Let $\mathbf{C}$ and $\mathbf{D}$ be categories, and let $F,G:\mathbf{C}\to\mathbf{D}$ be two functors. A \define{natural transformation} $\eta$ from $F$ to $G$, written $\eta:F\Rightarrow G$, associates to each object $\mathcal{X}$ in $\mathbf{C}$ a morphism $\eta_{\mathcal{X}}:F(\mathcal{X})\to G(\mathcal{X})$ in $\mathbf{D}$ such that the diagram
\begin{equation}
\label{eq:naturality}
\xy 0;/r.25pc/:
   (-12.5,7.5)*+{F(\mathcal{X})}="FX";
   (12.5,7.5)*+{F(\mathcal{Y})}="FY";
   (-12.5,-7.5)*+{G(\mathcal{X})}="GX";
   (12.5,-7.5)*+{G(\mathcal{Y})}="GY";
   {\ar"FX";"FY"^{F(f)}};
   {\ar"GX";"GY"_{G(f)}};
   {\ar"FX";"GX"_{\eta_{\mathcal{X}}}};
   {\ar"FY";"GY"^{\eta_{\mathcal{Y}}}};
\endxy
\end{equation}
in $\mathbf{D}$ commutes, i.e., $G(f)\circ\eta_{\mathcal{X}}=\eta_{\mathcal{Y}}\circ F(f)$, for every morphism $\mathcal{X}\xrightarrow{f}\mathcal{Y}$ in $\mathcal{C}$. The commutativity of diagram~\eqref{eq:naturality} is often referred to as \define{naturality}. 
\end{definition}

%%%%%%%%%%%%%%%%%%%%%%%%%%%%%%%%%%%%%%%%%%%%%%%%
\subsection{The Gromov--Hausdorff distance between metric spaces}
\label{subsec:HS-distance}
%%%%%%%%%%%%%%%%%%%%%%%%%%%%%%%%%%%%%%%%%%%%%%%%

For two metric spaces $(X,d_X)$ and $(Y,d_Y)$ the Gromov--Hausdorff distance provides a method to calculate the difference between these spaces (cf.\ \cite[Section 7.3]{BuBuIv01}). To understand the Gromov--Hausdorff distance, we first define the Hausdorff distance. We assume familiarity with the basics of topology and metric spaces~\cite{Mu00,Ru76}.  

\begin{definition}
For a subset $A \subseteq X$, with $(X,d_{X})$ a metric space, denote 
\begin{equation}
A^{(r)} = \bigcup_{x\in A} B_r(x),
\end{equation}
where $B_r(x)$ is a ball of radius $r>0$ centered at the point $x$. Then, for two closed sets $A,B \subseteq X$, the \define{Hausdorff distance} is defined as 
\begin{equation}\label{eq:hausdorff}
    d_{H}(A,B) := \inf \left\{ r>0 \,\big|\, B \subset A^{(r)} \mbox{ and } A \subset B^{(r)} \right\}.
\end{equation}
\end{definition}

The Hausdorff distance gives the farthest distance that any point of $A$ is from the set $B$, or any point of $B$ is from the set $A$ if this distance is larger, i.e., 
\begin{equation}
d_{H}(A,B)=\max\left\{\sup_{a\in A} d_{X}(a,B),\sup_{b\in B} d_{X}(b,A)\right\}, 
\end{equation}
where 
\begin{equation}
d_{X}(a,B):=\inf_{b\in B}d_{X}(a,b)
\end{equation}
and similarly for $d_{X}(b,A)$. 

To make sense of a distance between two metric spaces $(X,d_{X})$ and $(Y,d_{Y})$ that are not a-priori subsets of a fixed metric space, one approach is to embed $(X,d_{X})$ and $(Y,d_{Y})$ into an ambient metric space $(Z,d_{Z})$ and then utilize the Hausdorff distance for subsets of $(Z,d_{Z})$. It is important for us to use embeddings here, since otherwise one could send all of $X$ and $Y$ to the same point rendering their distance to be zero. Moreover, many embeddings of $(X,d_{X})$ and $(Y,d_{Y})$ into $(Z,d_{Z})$ might exist (or none might exist), and so one should additionally change $(Z,d_{Z})$ to allow for greater flexibility. Minimizing over these possibilities leads to the Gromov--Hausdorff distance (this definition also applies for semi-metric spaces~\cite{KaltonOstrovskii1999}, which will be described later). 

\begin{definition}
\label{defn:GHdistance}
The \define{Gromov--Hausdorff distance} between two metric spaces $(X,d_X)$ and $(Y,d_Y)$ is given by 
\begin{equation}\label{eq:gromov-hausdorff}
    d_{GH}(X,Y) := \inf_{(Z,d_{Z})} \inf_{ \substack{f:X\hookrightarrow Z \\ g:Y\hookrightarrow Z}} d_H \Big( f(X), g(Y) \Big),
\end{equation}
where the innermost infimum is over all embeddings $f:X\hookrightarrow Z$ and $g:Y\hookrightarrow Z$, while the outermost infimum is over all finite metric spaces $(Z,d_Z)$. 
\end{definition}

Note that although the collection of metric spaces $(Z,d_{Z})$ is a proper class, the infimum can be equivalently calculated over the set of semi-metrics $d_{Z}$ on $Z=X\sqcup Y$, which is a set (for details, see Definition~\ref{defn:semimetricspace} below and Remark 7.3.12 in Ref.~\cite{BuBuIv01}). 
Alternatively, the Gromov--Hausdorff distance can also be expressed as the infimum over distortions of correspondences (cf.\ \cite[Theorem 7.3.25]{BuBuIv01} and Refs.~\cite{KaltonOstrovskii1999,Adams2025Hausdorff}). This is important because definition~\eqref{eq:gromov-hausdorff} involves all possible external metric spaces $(Z,d_{Z})$, making the expression difficult to compute. As such, we recall the relevant definitions. 

\begin{definition}
Given sets $X$ and $Y$, a \define{correspondence} between $X$ and $Y$ is a set $\mathfrak{R}\subseteq X\times Y$ such that for every $x\in X$, there exists at least one $y\in Y$ such that $(x,y)\in\mathfrak{R}$, and for every $y\in Y$, there exists at least one $x\in X$ such that $(x,y)\in\mathfrak{R}$. 
\end{definition}

\begin{definition}
Given metric spaces $(X,d_X)$ and $(Y,d_Y)$ and a correspondence $\mathfrak{R}$ between $X$ and $Y$, the \define{distortion} of $\mathfrak{R}$ is the number
\begin{equation}
\label{eq:distortionofcorrespondence}
\mathrm{dis}(\mathfrak{R})=
\sup_{(x,y),(x',y')\in\mathfrak{R}} \Big|d_{X}(x,x')-d_{Y}(y,y')\Big|.
\end{equation}
\end{definition}

\begin{example}
\label{ex:intervalandcircle}
Equip $X=[0,2\pi)$ with the Euclidean distance $d_{X}(\theta,\phi)=|\theta-\phi|$ and equip 
\begin{equation}
Y=S^{1}=\big\{e^{i\theta}\in\C\,:\,\theta\in[0,2\pi)\big\}
\end{equation}
with the geodesic distance
\begin{equation}
d_{Y}(e^{i\theta},e^{i\phi}):=\min\big\{|\theta-\phi|,2\pi-|\theta-\phi|\big\}. 
\end{equation}
These distances are visualized as heat maps in Figure~\ref{fig:heatmapintervalcircle}. 

\begin{figure}
\includegraphics[width=4cm]{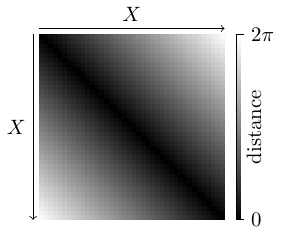}
\quad
\includegraphics[width=4cm]{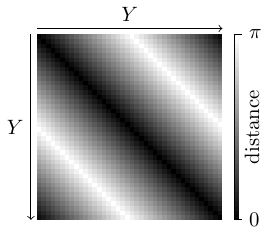}
\caption{Heat maps of the Euclidean distance on the interval $X=[0,2\pi)$ and geodesic distance on the circle $Y=S^1$, respectively. The sets have been discretized uniformly and the intensities have been rescaled so that the maximum distance between points is drawn in white and minimum distance between points is drawn in black. See Example~\ref{ex:intervalandcircle} for more details.}
\label{fig:heatmapintervalcircle}
\end{figure}

Let $\mathfrak{R}$ be the correspondence 
\begin{equation}
\mathfrak{R}=\big\{(\theta,e^{i\theta})\in X\times Y\,:\, \theta\in[0,2\pi)\big\}. 
\end{equation}
Therefore, in this case, the distortion of $\mathfrak{R}$ is $\mathrm{dis}(\mathfrak{R})=\pi$.
\end{example}

With these two definitions, the Gromov--Hausdorff distance between metric spaces $(X,d_{X})$ and $(Y,d_{Y})$ can also be expressed as 
\begin{equation}
d_{GH}(X,Y)=\frac{1}{2}\inf_{\mathfrak{R}}\big(\mathrm{dis}(\mathfrak{R})\big), 
\end{equation}
where the infimum is taken over all correspondences $\mathfrak{R}$ between $X$ and $Y$ (cf.\ Theorem 7.3.25 in Ref.~\cite{BuBuIv01}). 

Finally, we mention yet one more expression for the Gromov--Hausdorff distance defined in terms of $\epsilon$-isometries. This will involve defining distortion and co-distortion, concepts that we will use later. 

\begin{definition}
\label{defn:distortionf}
The \define{distortion of a function} $f:X\to Y$, where $(X,d_X)$ and $(Y,d_Y)$ are metric spaces ($f$ need not preserve distances) is the number given by 
\begin{equation}
\label{eqn:distortionoffunction}
\mathrm{dis}(f)=\sup_{x,x'\in X}\Big|d_{X}(x,x')-d_{Y}\big(f(x),f(x')\big)\Big|.
\end{equation}
\end{definition}

Note that the distortion of a \emph{surjective} function is a special case of~\eqref{eq:distortionofcorrespondence}, where the correspondence is given by 
\begin{equation}
\mathfrak{R}=\Big\{\big(x,f(x)\big)\;:\;x\in X\Big\}.
\end{equation}
Note, however, that $f$ need not be surjective for the definition in~\eqref{eqn:distortionoffunction}. In addition, the distortion of a correspondence need not arise from a surjective function.  

\begin{remark}
\label{rmk:distortionasdistancedifference}
Note that if $(X,d_{X})$ and $(Y,d_{Y})$ are finite metric spaces and $f$ is injective, then the distortion of $f$ coincides with the $\ell^{\infty}$ distance between the distance matrices associated with $(X,d_{X})$ and $(X,d_{f})$ (note, as discussed in Remark~\ref{rmk:ellpSchattenp}, that the $\ell^{\infty}$ metric is not the same as the Schatten $\infty$ metric). Namely, if we write $X=\{x_1,\dots,x_m\}$, set $D_{X}$ to be the $m\times m$ matrix with $ij$ entry $(D_{X})_{ij}=d_{X}(x_i,x_j)$, and set $D_{f}$ to be the $m\times m$ matrix with $ij$ entry $(D_{f})_{ij}=d_{f}(x_i,x_j)=d_{Y}(f(x_i),f(x_j))$, then 
\begin{equation}
\mathrm{dis}(f)=d_{\ell^{\infty}}(D_{X},D_{f}). 
\end{equation}
This formulation will be useful when we discuss multidimensional scaling for optimal quantum encodings in Section~\ref{sec:QMDS}. 
\end{remark}

\begin{definition}
Given metric spaces $(X,d_X)$ and $(Y,d_{Y})$ together with two functions $f:X\to Y$ and $g:Y\to X$, the \define{co-distortion of $f$ and $g$} is the number
\begin{equation}
\label{eqn:codistortion}
C(f,g)=\sup_{x\in X,y\in Y}\Big|d_{X}(x,g(y))-d_{Y}\big(f(x),y\big)\Big|.
\end{equation}
\end{definition}

For some intuition regarding the definition of co-distortion, note that if $\delta:=C(f,g)$ is small, then $d_{X}(x,g(f(x)))\le\delta$ for all $x\in X$ and $d_{Y}(f(g(y)),y)\le\delta$ for all $y\in Y$, which are obtained by setting $y=f(x)$ and $x=g(y)$ in~\eqref{eqn:codistortion}, respectively~\cite{memoli2012some}. In other words, if the co-distortion $C(f,g)$ is small, then $f$ and $g$ are close to being inverses of each other. 

\begin{definition}
Given metric spaces $(X,d_X)$ and $(Y,d_{Y})$, a positive number $\epsilon>0$, the pair of functions $f:X\to Y$ and $g:Y\to X$ form an \define{$\epsilon$-isometry} between $(X,d_X)$ and $(Y,d_{Y})$ iff 
\begin{equation}
\mathrm{dis}(f)\le\epsilon,
\quad
\mathrm{dis}(g)\le\epsilon,
\quad\text{ and }\quad
C(f,g)\le\epsilon.
\end{equation}
Such an $\epsilon$ isometry will be denoted 
\begin{equation}
X\;\xy 0;/r.25pc/:{\ar(0,1.0);(5,1.0)^{f}};{\ar(5,-1.0);(0,-1.0)^{g}};{\ar@{}(0,0);(5,0)|-{\epsilon}};\endxy \;Y\,
\quad\text{ or }\quad
X\;\xy 0;/r.25pc/:{\ar(0,0.5);(5,0.5)^{f}};{\ar(5,-0.5);(0,-0.5)^{g}};\endxy \;Y\,
\end{equation}
if we want to specify $\epsilon$ explicitly or if we want to leave $\epsilon$ unspecified, respectively.
\end{definition}

Alternatively, $f:X\to Y$ is part of an $\epsilon$-isometry if and only if $\dis(f)\le\epsilon$ and $f$ is \define{$\epsilon$-surjective}, i.e., $d_{Y}\big(y,f(X)\big)\le\epsilon$ for all $y\in Y$~\cite{BuBuIv01,BrBrKi2006GMDS}. Using the idea of an $\epsilon$-isometry, another equivalent expression for the Gromov--Hausdorff distance between compact metric spaces $(X,d_X)$ and $(Y,d_{Y})$ is given by (cf.\ \cite[Definition 5.5]{MemoliSinghal19} and Refs.~\cite{memoli2012some,KaltonOstrovskii1999})
\begin{align}
d_{GH}(X,Y)&=\frac{1}{2} \inf_{ \substack{f:X \to Y \\ g:Y \to X}} \sup\Big\{ \dis(f), \dis(g), C(f,g) \Big\} \nonumber\\
&=\frac{1}{2}\inf\Big\{\epsilon\ge0\,:\,\exists \; X\;\xy 0;/r.25pc/:{\ar(0,1.0);(5,1.0)^{f}};{\ar(5,-1.0);(0,-1.0)^{g}};{\ar@{}(0,0);(5,0)|-{\epsilon}};\endxy \;Y\,\Big\}.
\label{eq:gromov-hausdorff-final}
\end{align}

As mentioned earlier, the Gromov--Hausdorff distance formalizes a notion of distance between metric spaces. However, the Gromov--Hausdorff distance is not a metric in the usual sense for a few reasons. First, the collection of all metric spaces does not form a set. Secondly, to get a metric, we must restrict to certain isometry classes of compact metric spaces (compact here means in the sense of topology, with the topology induced by the metric~\cite{Mu00}---since we often work with finite sets representing data sets, compactness is automatic). Once we do this, we obtain a metric on the isometry classes of compact metric spaces (cf.\ Theorem 7.3.30.\ in Ref~\cite{BuBuIv01}). 

\begin{proposition}
The Gromov--Hausdorff distance defines a metric on the set of isometry classes of compact metric spaces. Namely, $d_{GH}$ is nonnegative-valued and satisfies the following properties. 
\begin{enumerate}[i.]
\item $d_{GH}(X,Y)=0$ if and only if $(X,d_{X})$ is isometric to $(Y,d_{Y})$.
\item
$d_{GH}(X,Y)=d_{GH}(Y,X)$ for all $(X,d_{X})$ and $(Y,d_{Y})$
\item
$d_{GH}(X,Y)\le d_{GH}(X,Z)+d_{GH}(Z,Y)$ for all $(X,d_{X})$, $(Y,d_{Y})$, $(Z,d_{Z})$. 
\end{enumerate}
\end{proposition}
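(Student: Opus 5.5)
I would work throughout with the correspondence formula $d_{GH}(X,Y)=\frac12\inf_{\mathfrak{R}}\dis(\mathfrak{R})$ (Theorem 7.3.25 in Ref.~\cite{BuBuIv01}), since it avoids the auxiliary spaces $(Z,d_Z)$. Nonnegativity is immediate from the formula. Finiteness is also quick for compact spaces: the full correspondence $X\times Y$ has distortion at most $\max\{\mathrm{diam}\,X,\mathrm{diam}\,Y\}<\infty$. A preliminary check is that $d_{GH}$ depends only on isometry classes. If $\phi:X\to X'$ is an isometry, then composing a correspondence with $\phi$ gives a bijection between correspondences for $(X,Y)$ and for $(X',Y)$ that preserves distortion.

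Symmetry (ii) follows by transposing a correspondence, $\mathfrak{R}^{T}=\{(y,x):(x,y)\in\mathfrak{R}\}$, which leaves $\dis$ unchanged because the absolute value is symmetric.

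For the triangle inequality (iii), given correspondences $\mathfrak{R}_1\subseteq X\times Z$ and $\mathfrak{R}_2\subseteq Z\times Y$, I form the composite
\[
\mathfrak{R}=\{(x,y):\exists z,\ (x,z)\in\mathfrak{R}_1,\ (z,y)\in\mathfrak{R}_2\}.
\]
This is a correspondence because each coordinate projection is surjective. For $(x,y),(x',y')\in\mathfrak{R}$ with witnesses $z,z'$, inserting $d_Z(z,z')$ gives
\[
|d_X(x,x')-d_Y(y,y')|\le|d_X(x,x')-d_Z(z,z')|+|d_Z(z,z')-d_Y(y,y')|\le\dis(\mathfrak{R}_1)+\dis(\mathfrak{R}_2).
\]
Taking infima yields (iii).

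For (i), the easy direction is that an isometry $\phi$ has graph correspondence of distortion $0$. The converse is the main obstacle, since $d_{GH}=0$ only gives correspondences $\mathfrak{R}_n$ with $\dis(\mathfrak{R}_n)<1/n$, and one has to extract an actual isometry. In the finite case, which is the paper's main concern, this is quick. There are only finitely many correspondences, so the infimum is attained by some $\mathfrak{R}$ with distortion $0$. Then $(x,y),(x,y')\in\mathfrak{R}$ forces $d_Y(y,y')=d_X(x,x)=0$, so $\mathfrak{R}$ is the graph of a bijective distance-preserving map.

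For general compact spaces, I would choose for each $n$ a map $f_n:X\to Y$ with $(x,f_n(x))\in\mathfrak{R}_n$, so that $\dis(f_n)<1/n$. Next I fix a countable dense set $S\subseteq X$ and use compactness of $Y$ with a diagonal argument to get a subsequence along which $f_n(s)$ converges for every $s\in S$. The limit $f$ on $S$ is distance-preserving, so it extends by uniform continuity to a distance-preserving $f:X\to Y$. Constructing $g:Y\to X$ symmetrically gives two distance-preserving maps. It then remains to show $f$ is surjective, using the standard fact that a distance-preserving self-map $g\circ f$ of a compact metric space is surjective. I would prove that fact by supposing a point $p$ lies at positive distance $\delta$ from the image and using compactness to find $m<k$ with $d\big(h^m(p),h^k(p)\big)<\delta$ for $h=g\circ f$. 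Then $d\big(p,h^{k-m}(p)\big)<\delta$, which is a contradiction. Hence $g\circ f$ is onto, so $g$ is onto, and similarly $f$ is onto. This final compactness step is where I expect the real work, and I would cite Theorem 7.3.30 in Ref.~\cite{BuBuIv01} for the remaining details.
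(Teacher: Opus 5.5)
Your proof is correct and complete. The paper gives no argument of its own and simply cites Theorem~7.3.30 of Ref.~\cite{BuBuIv01}, whose proof of the nontrivial part ($d_{GH}(X,Y)=0$ forces an isometry) runs along essentially the same lines as yours: near-isometric maps $f_n$ extracted from correspondences, a diagonal argument on a countable dense set, and surjectivity of distance-preserving self-maps of a compact space. Working with the correspondence formula has a side benefit: it sidesteps Definition~\ref{defn:GHdistance}'s restriction of the ambient spaces $(Z,d_Z)$ to finite metric spaces, which, taken literally, would not accommodate infinite compact $X$ and $Y$.
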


In this paper, the metric spaces and functions we consider will often be of the following form. A point cloud in $\R^{n}$, for some $n\in\N$, will be denoted by $(X,d_{X})$, where $d_{X}$ is some suitable metric, typically Euclidean but not necessarily. Meanwhile, a quantum state space (or some subset thereof) will take the place of $(Y,d_{Y})$, so that $Y=\states(\mathcal{H})$ is the state space of some Hilbert space $\mathcal{H}$ and $d_{Y}$ is some suitable metric, such as the Bures fidelity metric. 
Before the encoding, which is a function $f:X\to Y$, is applied, distances between data points in the point cloud are measured using $d_{X}$. After the encoding, distances are measured between the encoded states, which use the metric $d_{Y}$. Even if $f$ is the restriction of a smooth mapping from $\R^{n}$, $f$ need not preserve distances, which means that the distortion $\mathrm{dis}(f)$ could be significant.

%%%%%%%%%%%%%%%%%%%%%%%%%%%%%%%%%%%%%%%%%%%%%%%%
\subsection{Persistence vector spaces and homology}
\label{subsec:persistenthomology}
%%%%%%%%%%%%%%%%%%%%%%%%%%%%%%%%%%%%%%%%%%%%%%%%

A point cloud in Euclidean space gives rise to a persistence vector space, which is a parameterized family of vector spaces connected to each other by linear maps. More specifically, the \emph{Vietoris--Rips complex} is obtained from a point cloud by isolating the set of points that are within a distance $\epsilon$ from each other. A set of $n$ points that are all within distance $\epsilon$ between any pair of points defines an $(n-1)$-simplex. For example, if two points are within a distance of $\epsilon$ from each other, then this defines a $1$-simplex, which is a line. If three points are within a distance of $\epsilon$ from each other, then this defines a $2$-simplex, which is a filled-in triangle. The set of all of these simplices then defines a vector space, where each simplex is treated as a linearly independent vector. The unions of certain simplices might specify topological features roughly describing holes, voids, etc. The \emph{birth time} of such a topological feature is the value of $\epsilon$ that the feature appears and the \emph{death time} of this feature is the value of $\epsilon$ such that the feature is filled in. With the increasing value of $\epsilon$, the birth-death interval $(b_{i},d_{i})$ of a particular structure labelled by $i$ gives the persistence diagram.
The robustness of the persistence diagrams is dictated by stability theorems.
In what follows, we review these ideas in more detail by following Refs.~\cite{CarlssonTDA,carlsson2021topological,edelsbrunner2002topological,ZoCa05,oudot2017persistence,tinarrage2021notes,MemoliSinghal19,Munkres84,Hausmann96}.  

\begin{definition}
An \define{abstract simplicial complex} is a set $\mathcal{S}$ of finite nonempty sets so that whenever $\sigma\in\mathcal{S}$ then every subset $\tau\subseteq\sigma$ is also an element of $\mathcal{S}$. An element of $\mathcal{S}$ is called an \define{abstract simplex}, or simply \define{simplex} for short. The \define{dimension} of a simplex $\sigma$ is equal to $\#\sigma-1$, where $\#\sigma$ denotes the number of elements in the set $\sigma$. If the dimension of $\sigma$ is $n$, then $\sigma$ is referred to as an \define{$n$-simplex} in $\mathcal{S}$. The set of all $n$-simplices in $\mathcal{S}$ is denoted by $\mathcal{S}_{n}$. The \define{dimension} of an abstract simplicial complex $\mathcal{S}$ is the largest integer $n$ such that $\mathcal{S}_{n}$ is nonempty (it could be infinite). 
\end{definition}

These definitions are motivated by abstracting some essential features of geometric simplices and simplicial complexes (cf. Sections 1 and 2 of Ref.~\cite{Munkres84}). Nevertheless, it is often useful to visualize abstract simplices as geometric simplices. 

There are many ways to construct abstract simplicial complexes. Here, we will focus on one obtained from a finite metric space~\cite{Hausmann96}. 

\begin{definition}
Given a finite metric space $(X,d_{X})$ and a real number $\epsilon\ge0$, the \define{Vietoris--Rips complex} is the abstract simplicial complex $\mathcal{R}^{\epsilon}(X,d_{X})$ defined as follows. First, the $0$-simplices of $\mathcal{R}^{\epsilon}(X,d_{X})$ are the points of $X$. Second, for $k\ge1$, a $k$-simplex in $\mathcal{R}^{\epsilon}_{k}(X,d_{X})$ is a subset $\sigma\subseteq X$ with $\#\sigma=k+1$ and $\mathrm{diam}(\sigma)<\epsilon$, where $\mathrm{diam}(\sigma)$ denotes the \define{diameter} of $\sigma$ and is defined as 
\begin{equation}
\mathrm{diam}(\sigma)=\sup_{x,x'\in\sigma}d_{X}(x,x').
\end{equation}
Occasionally, the notation $\mathrm{diam}(\sigma,d_{X})$ will be used to emphasize the metric. 
\end{definition}

Equivalently, $\sigma$ is a $k$-simplex in $\mathcal{R}^{\epsilon}_{k}(X,d_{X})$ iff $d_{X}(x,x')<\epsilon$ for all $x,x'\in\sigma$. 

The examples from Ref.~\cite{Hausmann96} (see also Ref.~\cite{WeiWeiPTL2025}) are particularly illuminating, as they offer a useful visual guide for the definitions (cf.\ Figure~\ref{fig:Hausmannunity}). 

\begin{figure}
$\vcenter{\hbox{%
\text{(a)}}}$
$\vcenter{\hbox{%
\begin{tikzpicture}
\def\n{8}
\def\e{0.5}%epsilon <2pi/n gives discrete set (divide diameter by 2 when drawing balls of radius)
\foreach \k in {1,2,...,\n} {
   \coordinate (\k) at ({cos(deg(2*pi*\k/\n))},{sin(deg(2*pi*\k/\n))});
   \node at (\k) {$\bullet$};
}
\end{tikzpicture}
}}$
\qquad\qquad
$\vcenter{\hbox{%
\text{(b)}}}$
$\vcenter{\hbox{%
\begin{tikzpicture}
\def\n{8}
\def\e{1.0}%2pi/n<epsilon<4pi/n gives polygon with n sides (circle) 
\foreach \k in {1,2,...,\n} {
   \coordinate (\k) at ({cos(deg(2*pi*\k/\n))},{sin(deg(2*pi*\k/\n))});
   \node at (\k) {$\bullet$};
%   \fill[opacity=0.1] (\k) circle ({\e/2});
%   \draw [line width=2mm,line cap=round,opacity=0.1,domain={\k-\e/2}:{\k+\e/2}] plot ({cos(deg(2*pi*\x/\n))},{sin(deg(2*pi*\x/\n))});
   \draw[thick] ({cos(deg(2*pi*\k/\n))},{sin(deg(2*pi*\k/\n))}) -- ({cos(deg(2*pi*(\k+1)/\n))},{sin(deg(2*pi*(\k+1)/\n))});
}
\end{tikzpicture}
}}$
\\
%%%%%%%%sphere and Mobius band%%%%%%%%%%%
$\vcenter{\hbox{%
\text{(c)}}}$
$\vcenter{\hbox{%
\begin{tikzpicture}
\def\n{4}
\def\e{3.14}%2pi/n<epsilon<3pi/n gives a two-dimensional surface (n=6 gives 2-sphere) 
\foreach \k in {1,2,...,\n} {
   \coordinate (\k) at ({cos(deg(2*pi*\k/\n))},{sin(deg(2*pi*\k/\n))});
   \node at (\k) {$\bullet$};
%   \fill[opacity=0.1] (\k) circle (\e);
   \draw[thick] ({cos(deg(2*pi*\k/\n))},{sin(deg(2*pi*\k/\n))}) -- ({cos(deg(2*pi*(\k+1)/\n))},{sin(deg(2*pi*(\k+1)/\n))});
%   \draw ({cos(deg(2*pi*\k/\n))},{sin(deg(2*pi*\k/\n))}) -- ({cos(deg(2*pi*(\k+2)/\n))},{sin(deg(2*pi*(\k+2)/\n))});
   \fill[opacity=0.1] ({cos(deg(2*pi*\k/\n))},{sin(deg(2*pi*\k/\n))}) -- ({cos(deg(2*pi*(\k+1)/\n))},{sin(deg(2*pi*(\k+1)/\n))}) -- ({cos(deg(2*pi*(\k+2)/\n))},{sin(deg(2*pi*(\k+2)/\n))}) -- cycle;
}
\fill[opacity=0.15] (0,1) -- (-1,0) -- (0,-1) -- cycle;
\draw[thick] ({cos(deg(2*pi*1/\n))},{sin(deg(2*pi*1/\n))}) -- ({cos(deg(2*pi*(1+2)/\n))},{sin(deg(2*pi*(1+2)/\n))});
\draw[dashed] ({cos(deg(2*pi*2/\n))},{sin(deg(2*pi*2/\n))}) -- ({cos(deg(2*pi*(2+2)/\n))},{sin(deg(2*pi*(2+2)/\n))});
\end{tikzpicture}
}}$
\qquad\qquad
%%%%%%%%%%%%%%%%%%%%%%sphere%%%%%%%%%%%%%%%%%
$\vcenter{\hbox{%
\text{(d)}}}$
$\vcenter{\hbox{%
\begin{tikzpicture}
\def\n{6}
\def\e{2.4}%2pi/n<epsilon<3pi/n gives a two-dimensional surface (n=6 gives 2-sphere) 
\foreach \k in {1,2,...,\n} {
   \coordinate (\k) at ({cos(deg(2*pi*\k/\n))},{sin(deg(2*pi*\k/\n))});
   \node at (\k) {$\bullet$};
%   \fill[opacity=0.1] (\k) circle (\e);
   \draw[thick] ({cos(deg(2*pi*\k/\n))},{sin(deg(2*pi*\k/\n))}) -- ({cos(deg(2*pi*(\k+1)/\n))},{sin(deg(2*pi*(\k+1)/\n))});
   \draw[dashed] ({cos(deg(2*pi*\k/\n))},{sin(deg(2*pi*\k/\n))}) -- ({cos(deg(2*pi*(\k+2)/\n))},{sin(deg(2*pi*(\k+2)/\n))});
}
\foreach \k in {1,3,5} {
   \draw[thick] ({cos(deg(2*pi*\k/\n))},{sin(deg(2*pi*\k/\n))}) -- ({cos(deg(2*pi*(\k+2)/\n))},{sin(deg(2*pi*(\k+2)/\n))});
}
\fill[opacity=0.1] ({cos(deg(2*pi*1/\n))},{sin(deg(2*pi*1/\n))}) -- ({cos(deg(2*pi*3/\n))},{sin(deg(2*pi*3/\n))}) -- ({cos(deg(2*pi*5/\n))},{sin(deg(2*pi*5/\n))}) -- cycle;
\fill[opacity=0.25] ({cos(deg(2*pi*1/\n))},{sin(deg(2*pi*1/\n))}) -- ({cos(deg(2*pi*2/\n))},{sin(deg(2*pi*2/\n))}) -- ({cos(deg(2*pi*3/\n))},{sin(deg(2*pi*3/\n))}) -- cycle;
\fill[opacity=0.15] ({cos(deg(2*pi*1/\n))},{sin(deg(2*pi*1/\n))}) -- ({cos(deg(2*pi*5/\n))},{sin(deg(2*pi*5/\n))}) -- ({cos(deg(2*pi*0/\n))},{sin(deg(2*pi*0/\n))}) -- cycle;
\fill[opacity=0.35] ({cos(deg(2*pi*3/\n))},{sin(deg(2*pi*3/\n))}) -- ({cos(deg(2*pi*4/\n))},{sin(deg(2*pi*4/\n))}) -- ({cos(deg(2*pi*5/\n))},{sin(deg(2*pi*5/\n))}) -- cycle;
\end{tikzpicture}
}}$
\\
$\vcenter{\hbox{%
\text{(e)}}}$
$\vcenter{\hbox{%
\begin{tikzpicture}
\def\n{7}
\def\e{3.14}%2pi/n<epsilon<3pi/n gives a two-dimensional surface (n=6 gives 2-sphere) 
\foreach \k in {1,2,...,\n} {
   \coordinate (\k) at ({cos(deg(2*pi*\k/\n))},{sin(deg(2*pi*\k/\n))});
   \node at (\k) {$\bullet$};
%   \fill[opacity=0.1] (\k) circle (\e);
   \draw[thick] ({cos(deg(2*pi*\k/\n))},{sin(deg(2*pi*\k/\n))}) -- ({cos(deg(2*pi*(\k+1)/\n))},{sin(deg(2*pi*(\k+1)/\n))});
    \draw[dashed] ({cos(deg(2*pi*\k/\n))},{sin(deg(2*pi*\k/\n))}) -- ({cos(deg(2*pi*(\k+2)/\n))},{sin(deg(2*pi*(\k+2)/\n))});
   \fill[opacity=0.1] ({cos(deg(2*pi*\k/\n))},{sin(deg(2*pi*\k/\n))}) -- ({cos(deg(2*pi*(\k+1)/\n))},{sin(deg(2*pi*(\k+1)/\n))}) -- ({cos(deg(2*pi*(\k+2)/\n))},{sin(deg(2*pi*(\k+2)/\n))}) -- cycle;
}
   \draw[thick] ({cos(deg(2*pi*4/\n))},{sin(deg(2*pi*4/\n))}) -- ({cos(deg(2*pi*6/\n))},{sin(deg(2*pi*6/\n))}) -- ({cos(deg(2*pi*1/\n))},{sin(deg(2*pi*1/\n))}) -- ({cos(deg(2*pi*3/\n))},{sin(deg(2*pi*3/\n))});
\draw ({cos(deg(2*pi*1/\n))},{sin(deg(2*pi*1/\n))}) -- ({cos(deg(2*pi*(1+2)/\n))},{sin(deg(2*pi*(1+2)/\n))});
%%%%%%point 2 to point 4 solid part%%%%%%%%%%
\draw[thick] ({(0.693)*cos(deg(2*pi*(2+0.5)/\n))},{(0.693)*sin(deg(2*pi*(2+0.5)/\n))}) -- ({(0.693)*cos(deg(2*pi*(2+1.5)/\n))},{(0.693)*sin(deg(2*pi*(2+1.5)/\n))});
%%%%%%%%point 3 to point 5%%%%%%%
\draw[thick] ({cos(deg(2*pi*3/\n))},{sin(deg(2*pi*3/\n))}) -- ({(0.693)*cos(deg(2*pi*(3+2-0.5)/\n))},{(0.693)*sin(deg(2*pi*(3+2-0.5)/\n))});
%%%%%point 5 to point 0%%%%%%
\draw[thick] ({(0.693)*cos(deg(2*pi*(5+0.5)/\n))},{(0.693)*sin(deg(2*pi*(5+0.5)/\n))}) -- ({(0.693)*cos(deg(2*pi*(5+1.5)/\n))},{(0.693)*sin(deg(2*pi*(5+1.5)/\n))});
%%%%%%%%point 0 to point 2%%%%%%%%%
\draw[thick] ({(0.693)*cos(deg(2*pi*(0+0.5)/\n))},{(0.693)*sin(deg(2*pi*(0+0.5)/\n))}) -- ({(0.693)*cos(deg(2*pi*(0+1.5)/\n))},{(0.693)*sin(deg(2*pi*(0+1.5)/\n))});
\end{tikzpicture}
}}$
\qquad\qquad
$\vcenter{\hbox{%
\text{(f)}}}$
$\vcenter{\hbox{%
\begin{tikzpicture}
\def\n{8}
\def\e{2.4}%2pi/n<epsilon<3pi/n gives a two-dimensional surface (n=6 gives 2-sphere) 
\foreach \k in {1,2,...,\n} {
   \coordinate (\k) at ({cos(deg(2*pi*\k/\n))},{sin(deg(2*pi*\k/\n))});
   \node at (\k) {$\bullet$};
%   \fill[opacity=0.1] (\k) circle (\e);
   \draw[thick] ({cos(deg(2*pi*\k/\n))},{sin(deg(2*pi*\k/\n))}) -- ({cos(deg(2*pi*(\k+1)/\n))},{sin(deg(2*pi*(\k+1)/\n))});
   \fill[opacity=0.1] ({cos(deg(2*pi*\k/\n))},{sin(deg(2*pi*\k/\n))}) -- ({cos(deg(2*pi*(\k+1)/\n))},{sin(deg(2*pi*(\k+1)/\n))}) -- ({cos(deg(2*pi*(\k+2)/\n))},{sin(deg(2*pi*(\k+2)/\n))}) -- cycle;
}
\foreach \k in {1,3,5,7} {
   \draw[thick] ({cos(deg(2*pi*\k/\n))},{sin(deg(2*pi*\k/\n))}) -- ({cos(deg(2*pi*(\k+2)/\n))},{sin(deg(2*pi*(\k+2)/\n))});
}
\foreach \k in {0,2,4,6} {
   \draw[dashed] ({cos(deg(2*pi*\k/\n))},{sin(deg(2*pi*\k/\n))}) -- ({cos(deg(2*pi*(\k+2)/\n))},{sin(deg(2*pi*(\k+2)/\n))});
   \draw[thick] ({(0.768)*cos(deg(2*pi*(\k+0.5)/\n))},{(0.768)*sin(deg(2*pi*(\k+0.5)/\n))}) -- ({(0.768)*cos(deg(2*pi*(\k+1.5)/\n))},{(0.768)*sin(deg(2*pi*(\k+1.5)/\n))});
}
\end{tikzpicture}
}}$
\caption{The $n^{\text{th}}$ roots of unity form a finite metric space when equipped with the geodesic distance on the unit circle. The Vietoris--Rips complex for various values of $n$ and $\epsilon$ are shown. A subset $\sigma\subseteq X$ is a $k$-simplex in the Vietoris--Rips complex at value $\epsilon$ when $\#\sigma=k+1$ and $\mathrm{diam}(\sigma)<\epsilon$. From top to bottom: (a) $n=8$ and $0<\epsilon<\frac{2\pi}{n}$, (b) $n=8$ and $\frac{2\pi}{n}<\epsilon<\frac{4\pi}{n}$ (a circle), (c) $n=4$  and $\frac{4\pi}{n}<\epsilon<\frac{6\pi}{n}$ (a solid tetrahedron), (d) $n=6$ and $\frac{4\pi}{n}<\epsilon<\frac{6\pi}{n}$ (a hollow octahedron, i.e., a sphere), (e) $n=7$  and $\frac{4\pi}{n}<\epsilon<\frac{6\pi}{n}$ (a M\"obius strip), (f) $n=8$  and $\frac{4\pi}{n}<\epsilon<\frac{6\pi}{n}$ (a cylinder). Dashed lines and shading are occasionally drawn to better visualize the shapes as if they were embedded in $\R^{3}$.}
\label{fig:Hausmannunity}
\end{figure}

\begin{example}
\label{ex:Hausmannrootsofunity}
For each $n\in\N$, let 
\begin{equation}
X=\{x_{k}=e^{\frac{2\pi i k}{n}}\in\C\;:\;k=0,1,\dots,n-1\}
\end{equation}
be the set of $n^{\text{th}}$ roots of unity in the complex plane $\C$, so that $X$ is a subset of the unit circle $S^1$ in $\C$. Equip $X$ with the metric $d_{X}$ induced from the geodesic distance of $S^1$. 
Figure~\ref{fig:Hausmannunity} depicts the associated Vietoris--Rips complex of $(X,d_{X})$ for several values of $n$ and $\epsilon$. 
\begin{enumerate}
\item When $\epsilon\le\frac{2\pi}{n}$, then $\mathcal{R}^{\epsilon}(X,d_{X})=X$ 
consists only of the points of $X$, i.e., $\mathcal{R}^{\epsilon}(X,d_{X})$ only has $0$-simplices. 
Explicitly, 
\[
\mathcal{R}^{\epsilon}(X,d_{X})=\{x_0,x_1,\dots,x_{n-1}\}.
\]
\item When $n\ge4$ and $\frac{2\pi}{n}<\epsilon\le\frac{4\pi}{n}$, then $\mathcal{R}^{\epsilon}(X,d_{X})$ only contains $0$-simplices and $1$-simplices. 
The set of $0$-simplices equals $\{x_0,x_1,\dots,x_{n-1}\}=X$. 
Meanwhile, the set of $1$-simplices equals
\[
\mathcal{R}^{\epsilon}_{1}(X,d_{X})=\big\{ \{x_0,x_1\}, \{x_1,x_2\}, \dots, \{x_{n-1},x_0\}\big\}
\]
i.e., the set of pairs $\{x_i,x_{i+1}\}$ with $i=0,1,\dots,n-1$, where addition is defined modulo $n$. Hence, 
\[
\mathcal{R}^{\epsilon}(X,d_{X})=\big\{x_0,\dots,x_{n-1},\{x_0,x_1\},\dots,\{x_{n-1},x_0\}\big\}
\]
\item
When $n\ge 4$ and $\frac{4\pi}{n}<\epsilon\le\frac{6\pi}{n}$, then $\mathcal{R}^{\epsilon}(X,d_{X})$ takes on several different possibilities depending on $n$.
    \begin{enumerate}
    \item When $n=4$, $\mathcal{R}^{\epsilon}(X,d_{X})$ is a solid tetrahedron, and is therefore contractible. This is because $d_{X}(x_{i},x_{j})\le\pi<\epsilon$ for all $i,j$. 
    \item When $n=5$, $\mathcal{R}^{\epsilon}(X,d_{X})$ is 4-simplex, and is therefore contractible. This is also because $d_{X}(x_{i},x_{j})\le\frac{4\pi}{5}<\epsilon$ for all $i,j$. 
    \item When $n=6$, $\mathcal{R}^{\epsilon}(X,d_{X})$ is an octahedron, and is therefore homeomorphic to a sphere. This is because $\mathrm{diam}(\{x_{j},x_{j+2},x_{j+4}\})=\frac{2\pi}{3}<\epsilon$, while $d_{X}(x_{j},x_{j+3})=\pi\ge\epsilon$ for all $j$ (subscripts are calculated using mod $6$ arithmetic). In other words, opposing vertices are never contained in the same simplex for these values of $\epsilon$. 
    \item When $n=7$ (or any odd number greater than $6$), $\mathcal{R}^{\epsilon}(X,d_{X})$ is a M\"obius strip.
    \item When $n=8$ (or any even number great than $7$), $\mathcal{R}^{\epsilon}(X,d_{X})$ is a cylinder.
    \end{enumerate}
\item
When $n\ge6$ and $\epsilon>\frac{6\pi}{n}$, the simplicial complex $\mathcal{R}^{\epsilon}(X,d_{X})$ begins to admit 3-dimensional simplices, so we avoid attempting to draw these cases. However, we will revisit their topologies later. 
\end{enumerate}
\end{example}

Going back to the general case of arbitrary abstract simplicial complexes, we define morphisms between them.
\begin{definition}
Given two abstract simplicial complexes $\mathcal{S}$ and $\mathcal{T}$, a \define{simplicial map} $\phi$ from $\mathcal{T}$ to $\mathcal{S}$, denoted as $\phi:\mathcal{T}\to\mathcal{S}$, is a function for which there exists a unique function $\phi_{0}:\mathcal{T}_{0}\to\mathcal{S}_{0}$ of $0$-simplices satisfying the condition that whenever $\tau=\{x_0,x_1,\dots,x_{k}\}$ is a $k$-simplex in $\mathcal{T}$, then 
\begin{equation}
\phi(\tau):=\big\{\phi_0(x_0),\phi_0(x_1),\dots,\phi_0(x_k)\big\}
\end{equation}
is a simplex in $\mathcal{S}$ (the result need not be a $k$-simplex). 
\end{definition}

To keep notation simple, we will drop the $0$ subscript on $\phi$ when restricting to $0$-simplices. In particular, given an abstract simplicial complex $\mathcal{S}$, a subset $\mathcal{T}\subseteq\mathcal{S}$ is a \define{subcomplex} iff $\mathcal{T}$ is itself an abstract simplicial complex. Such a subcomplex may be expressed in terms of an inclusion map $\mathcal{T}\hookrightarrow\mathcal{S}$. The collection of abstract simplicial complexes and simplicial maps defines a category, which we denote by $\mathbf{SimpComp}$. We will look at simplicial maps coming from two examples. 

\begin{example}
First, if $f:(X,d_{X})\to (Y,d_{Y})$ is a distance nonincreasing function, then this induces a simplicial map $\phi:\mathcal{R}^{\epsilon}(X,d_{X})\to\mathcal{R}^{\epsilon}(Y,d_{Y})$, which is uniquely determined by the function $f$ since $X$ and $Y$ are identified with the vertices of $\mathcal{R}^{\epsilon}(X,d_{X})$ and $\mathcal{R}^{\epsilon}(Y,d_{Y})$, respectively. Thus, if $\tau=\{x_0,x_1,\dots,x_{k}\}$ is a $k$-simplex in $\mathcal{R}^{\epsilon}(X,d_{X})$, which means that $\mathrm{diam}(\tau)<\epsilon$, then $\{f(x_0),f(x_1),\dots,f(x_{k})\}$ is also a $k$-simplex in $\mathcal{R}^{\epsilon}(Y,d_{Y})$ because $\mathrm{diam}(f(\tau))\le \mathrm{diam}(\tau)<\epsilon$ due to the distance nonincreasing assumption on $f$. In this way, we arrive at a functor 
\begin{align}
\mathbf{FinMetSpace}&\xrightarrow{\mathcal{R}^{\epsilon}}\mathbf{SimpComp} \nonumber\\
(X,d_{X})&\xmapsto{\;\;\;}\mathcal{R}^{\epsilon}(X,d_{X})\nonumber\\
\left((X,d_{X})\xrightarrow{f}(Y,d_{Y})\right)&\xmapsto{\;\;\;}\left(\mathcal{R}^{\epsilon}(X,d_{X}) \xrightarrow{\phi}\mathcal{R}^{\epsilon}(Y,d_{Y})\right).
\end{align}
\end{example}

\begin{example}
\label{ex:simplicialfiltration}
As another example of a simplicial map, if $(X,d_{X})$ is a finite metric space and $0\le\epsilon\le\epsilon'$ is an ordered pair of positive numbers, $\mathcal{R}^{\epsilon}(X,d_{X})$ is a subcomplex of $\mathcal{R}^{\epsilon'}(X,d_{X})$. Such an inclusion will be written as $\mathcal{R}^{\epsilon}(X,d_{X})\hookrightarrow\mathcal{R}^{\epsilon'}(X,d_{X})$, and it defines a simplicial map. We can see such an inclusion in the examples in Figure~\ref{fig:Hausmannunity}, namely when $n=8$ and $\epsilon,\epsilon'$ satisfy $\frac{2\pi}{n}<\epsilon\le\frac{4\pi}{n}<\epsilon'\le\frac{6\pi}{n}$, specifically subfigures (b) and (f). 
\end{example}

The family of inclusions from Example~\ref{ex:simplicialfiltration} provides an example of a \define{filtered simplicial complex}, which consists of a family of simplicial complexes $\{\mathcal{S}^{\epsilon}\}$ indexed by $\epsilon\in[0,\infty)$ (or any partially ordered set) together with a family of simplicial inclusion maps $i^{(\epsilon',\epsilon)}:\mathcal{S}^{\epsilon}\hookrightarrow\mathcal{S}^{\epsilon'}$ whenever $\epsilon\le\epsilon'$ such that $i^{(\epsilon,\epsilon)}=\id$ for all $\epsilon$ and $i^{(\epsilon'',\epsilon')}\circ i^{(\epsilon',\epsilon)}=i^{(\epsilon'',\epsilon)}$ for all $\epsilon,\epsilon',\epsilon''$ satisfying $\epsilon\le\epsilon'\le\epsilon''$. It is fruitful to abstract this concept, as it will appear in many different contexts throughout~\cite{Carlsson2014}. 

\begin{definition}
\label{defn:persistenceobject}
Let $(P,\le)$ be a partially ordered set, called a \define{poset}. 
Let $\mathbf{P}$ be the partially ordered set $(P,\le)$ viewed as a category; namely, the objects of $\mathbf{P}$ are elements of $P$. The set of morphisms in $\mathbf{P}$ from an element $p\in P$ to $q\in P$ is a single element set if $p\le q$ and is empty otherwise. 
Let $\mathbf{C}$ be a category. A \define{persistence object} in $\mathbf{C}$ is a functor $F:\boldsymbol{[0,\infty)}\to \mathbf{C}$, where the category $\boldsymbol{[0,\infty)}$ is the poset $[0,\infty)$, with the usual ordering, viewed as a category. The value of $F$ on $\epsilon$ is often denoted by $F^{\epsilon}$, rather than by $F(\epsilon)$. 
\end{definition}

Thus, Example~\ref{ex:simplicialfiltration} illustrates that a filtered simplicial complex is an example of a persistence object $\mathcal{R}(X,d_{X}):\boldsymbol{[0,\infty)}\to \mathbf{C}$, where $\mathbf{C}=\mathbf{SimpComp}$. 

\begin{remark}
The collection $\{\mathcal{R}^{\epsilon}\}_{\epsilon\in[0,\infty)}$ of functors defines a single functor 
\begin{equation}
\mathbf{FinMetSpace}\times\boldsymbol{[0,\infty)}\xrightarrow{\mathcal{R}}\mathbf{SimpComp}.
\end{equation}
Thus, the collection of functors $\{\mathcal{R}^{\epsilon}\}_{\epsilon\in[0,\infty)}$ forms a persistence object in the functor category $\mathbf{SimpComp}^{\mathbf{FinMetSpace}}$~\cite{mac2013categories,riehl2016}. This technical, but unnecessary, remark highlights that the relationships between the different $\mathcal{R}^{\epsilon}$ are described as natural transformations.
\end{remark}

Abstract simplicial complexes are combinatorial objects that are often constructed from geometric objects, and simplicial homology turns these combinatorial objects into vector spaces. The latter are algebraic constructions that can be used to calculate invariants of the simplicial complexes from which they were constructed. The precedure of obtaining such algebraic invariants is achieved by taking certain quotients of vector spaces from a \emph{chain complex}, a sequence of vector spaces connected to each other by a \emph{boundary map} that lowers dimension, analogous to the boundary map appearing in Stokes' theorem from multivariable calculus~\cite{BottTu82,SpivakCalc65,Munkres91}. As such, we first review these concepts, referring to our main examples coming from topological data analysis~\cite{MemoliSinghal19,Halmos1958,hatcherbook2002,carlsson2021topological,Munkres84,MacLaneHomology}. 
In what follows, we fix a general ground field $\mathbb{F}$ over which all vector spaces are defined, unless otherwise stated.  
\begin{definition}
A \define{chain complex} $(C_{\bullet},\partial_{\bullet})$ is a collection of vector spaces $\{C_{k}\}_{k\in \Z}$ indexed by the integers together with a collection of linear transformations $\partial_{n}:C_{n}\to C_{n-1}$, called \define{boundary maps}, satisfying the condition that $\partial_{n}\circ\partial_{n+1}=0$ for all $n\in\Z$. Elements of $C_{n}$ are called \define{$n$-chains}.
\end{definition}

It is common to visualize a chain complex $(C_{\bullet},\partial_{\bullet})$ diagrammatically as
\[
\cdots\xrightarrow{\partial_{n+2}}C_{n+1}\xrightarrow{\partial_{n+1}}C_{n}\xrightarrow{\partial_{n}}C_{n-1}\xrightarrow{\partial_{n-1}}C_{n-2}\cdots.
\]

Our first general example will involve taking an abstract simplicial complex $\mathcal{S}$ and creating a chain complex from it. However, the boundary maps will involve signs, just as how Stokes' theorem involves signs coming from the different orientations that a manifold and its boundary can have. Therefore, we should first define orientations on simplices~\cite{DummitFoote,Munkres84}. 

\begin{definition}
Given an $n$-simplex $\sigma$ in $\mathcal{S}$ with $n\ge0$, an \define{ordering} of the elements of $\sigma$ is a bijective function $\upsilon:\{0,1,\dots,n\}\to\sigma$. Two such orderings $\upsilon,\vartheta$ are said to define the same \define{orientation equivalence class} iff there exists an \emph{even} permutation $\pi:\{0,1,\dots,n\}\to\{0,1,\dots,n\}$ such that $\upsilon=\vartheta\circ\pi$. An \define{orientation} on $\sigma$ is an orientation equivalence class of orderings on $\sigma$. 
\end{definition}

Each $n$-simplex has exactly two orientations when $n>0$ and one orientation when $n=0$. For example, if $\sigma=\{x_0,\dots,x_n\}$, one such ordering $\upsilon$ is given by $\upsilon(k)=x_{k}$ for all $k\in\{0,1,\dots,n\}$, and the associated pair $\big(\{x_0,\dots,x_n\},\upsilon\big)$ will be written more concisely as $[x_0,\dots,x_n]$. Meanwhile, the same simplex but with the opposite orientation will be denoted by $-[x_0,\dots,x_n]$. 
By abuse of notation, we use the notation $\sigma$ to represent an oriented simplex and $-\sigma$ to represent the same simplex with the opposite orientation. The set of oriented $n$-simplices will be denoted by $\mathcal{S}_{n}^{\circlearrowleft}$, so that $\mathcal{S}_{0}^{\circlearrowleft}\cong\mathcal{S}_{0}$ and $\mathcal{S}_{n}^{\circlearrowleft}\cong\mathcal{S}_{n}\coprod\mathcal{S}_{n}$, the disjoint union of $\mathcal{S}_{n}$ with itself, for $n>0$. Hence, given a simplicial complex $\mathcal{S}=\bigcup_{n\in\N\cup\{0\}}\mathcal{S}_{n}$, we write 
\begin{equation}
\mathcal{S}^{\circlearrowleft}=\bigcup_{n\in\N}\mathcal{S}_{n}^{\circlearrowleft}
\end{equation} 
for the set of \emph{oriented} simplices from $\mathcal{S}$. 

Now, given any abstract simplicial complex $\mathcal{S}$, for each $n\in\N\cup\{0\}$, let 
$
\mathbb{F}^{\mathcal{S}_{n}^{\circlearrowleft}}_{\mathrm{fs}}
$
denote the free vector space generated by the set of oriented $n$-simplices from $\mathcal{S}$. Explicitly, this is the vector space of finitely-supported functions $f:\mathcal{S}_{n}^{\circlearrowleft}\to\mathbb{F}$, i.e., $f(\sigma)=0$ for all but finitely many $\sigma\in\mathcal{S}_{n}^{\circlearrowleft}$. Let $C_{n}(\mathcal{S})$ be the subspace of $\mathbb{F}^{\mathcal{S}_{n}^{\circlearrowleft}}_{\mathrm{fs}}$ given by 
\begin{equation}
C_{n}(\mathcal{S})
=\Big\{f\in\mathbb{F}^{\mathcal{S}_{n}^{\circlearrowleft}}_{\mathrm{fs}}\,:\,f(-\sigma)=-f(\sigma)\;\;\forall\sigma\in\mathcal{S}_{n}^{\circlearrowleft}\Big\}.
\end{equation} 

Associated with any oriented $n$-simplex $\sigma$, there is a canonical function $f_{\sigma}:\mathcal{S}_{n}^{\circlearrowleft}\to\mathbb{F}$, called the \define{elementary chain} associated with $\sigma$, satisfying $f_{\sigma}(\sigma)=1$, $f_{\sigma}(-\sigma)=-1$, and $f_{\sigma}(\tau)=0$ for all $\tau\in\mathcal{S}_{n}^{\circlearrowleft}\setminus\{\sigma,-\sigma\}$. Notice, then, that $f_{-\sigma}=-f_{\sigma}$, which relates the two elementary chains $f_{-\sigma}$ and $f_{\sigma}$ as negatives of each other. Using Dirac bra-ket notation, we write the elementary chain $f_{\sigma}$ associated with the oriented simplex $\sigma$ as $\ket{\sigma}$, so that the elementary chain associated with $-\sigma$ satisfies $\ket{-\sigma}=-\ket{\sigma}$. Thus, every abstract $n$-simplex determines a one-dimensional subspace of $C_{n}(\mathcal{S})$, while the choice of an oriented $n$-simplex determines a basis for this subspace given by the elementary chain associated with that oriented simplex. Putting this together, $C_{n}(\mathcal{S})$ is the free vector space generated by the elements of $\mathcal{S}_{n}$, while a basis for $C_{n}(\mathcal{S})$ is obtained from $\mathcal{S}$ by choosing an orientation for each simplex. Note that $C_{0}(\mathcal{S})$ has a canonical basis, since each $0$-simplex has a canonical orientation. 

We next define the boundary maps $\partial_{n}:C_{n}(\mathcal{S})\to C_{n-1}(\mathcal{S})$ for $n\in\N$ as follows. For an oriented $n$-simplex $\sigma$, written as $\sigma=[x_0,x_1,\dots,x_{n}]$ for a distinct set of $0$-simplices $x_{0},x_{1},\dots,x_{n}\in\mathcal{S}_{0}$, the boundary of $\ket{\sigma}$ is defined as
\begin{equation}
\partial_{n}\big(\ket{\sigma}\big):=\sum_{k=0}^{n}(-1)^{k}\big|[x_0,\dots,\widehat{x_{k}},\dots,x_{n}]\big\rangle,
\end{equation}
where $\{x_0,\dots,\widehat{x_{k}},\dots,x_{n}\}$ denotes the $(n-1)$-simplex ${\sigma\setminus\{x_{k}\}}$, with associated oriented $(n-1)$-simplex represented as $[x_0,\dots,x_{k-1},x_{k+1},\dots,x_{n}]$. One then extends $\partial_{n}$ uniquely by linearity. One can show that $\big(C_{\bullet}(\mathcal{S}),\partial_{\bullet}\big)$ is a chain complex, where $C_{n}(\mathcal{S})=0$ for all integers $n<0$ and for all integers $n$ strictly greater than the dimension of the simplicial complex. 

As a special case, since $\mathcal{R}^{\epsilon}(X,d_{X})$ is an abstract simplicial complex for every $\epsilon>0$ and finite metric space $(X,d_X)$, we obtain the \define{Vietoris--Rips chain complex}, which we denote by $\big(C^{\epsilon}_{\bullet}(X,d_{X}),\partial_{\bullet}\big)$ for brevity.

\begin{definition}
Given two chain complexes $(C_{\bullet},\partial_{\bullet})$ and $(D_{\bullet},\partial_{\bullet})$ a \define{chain map} $f_{\bullet}$ from $(C_{\bullet},\partial_{\bullet})$ to $(D_{\bullet},\partial_{\bullet})$ is a collection of linear transformations $\{f_{n}:C_{n}\to D_{n}\}$ such that 
\begin{equation}
\label{eqn:chainmap}
\partial_{n}\circ f_{n+1}=f_{n}\circ\partial_{n}
\end{equation}
for all $n$. 
\end{definition}

We visualize~\eqref{eqn:chainmap} diagrammatically as the commutative diagram
\begin{equation}
\xy 0;/r.25pc/:
(-10,7.5)*+{C_{n}}="Cnp1";
(10,7.5)*+{C_{n-1}}="Cn";
(-10,-7.5)*+{D_{n}}="Dnp1";
(10,-7.5)*+{D_{n-1}}="Dn";
{\ar"Cnp1";"Cn"^{\partial_{n}}};
{\ar"Cnp1";"Dnp1"^{f_{n}}};
{\ar"Dnp1";"Dn"^{\partial_{n}}};
{\ar"Cn";"Dn"^{f_{n-1}}};
(1.5,1.5)*{}="tr";
(-1.5,-1.5)*{}="bl";
{\ar@{=} "tr";"bl"};
\endxy
.
\end{equation}
We avoided adding extra notation to the two sets of boundary maps and expect that the context will distinguish between the two. The collection of chain complexes over a field $\mathbb{F}$ and chain maps forms a category, which we denote by $\mathbf{ChainComp}$ (the field being understood from the context). 

One can show that if $\phi:\mathcal{T}\to\mathcal{S}$ is a simplicial map of abstract simplicial complexes, then $\phi$ induces a chain map $f_{\bullet}:\big(C_{\bullet}(\mathcal{T}),\partial_{\bullet}\big)\to\big(C_{\bullet}(\mathcal{S}),\partial_{\bullet}\big)$ between the associated chain complexes that is uniquely determined by linearity and the condition that for any oriented $n$-simplex $\tau=[x_0,x_1,\dots,x_n]$ from $\mathcal{T}$,  
\begin{equation}
f_{n}\big(\ket{\tau}\big)=
\big|[\phi(x_0),\phi(x_1),\dots,\phi(x_n)]\big\rangle
\end{equation}
if the $\phi(x_k)$ are all distinct, and $f_{n}\big(\ket{\tau}\big)=0$ otherwise. 
One can show that the assignment 
\begin{align}
\mathbf{SimpComp}&\xrightarrow{C_{\bullet}}\mathbf{ChainComp} \nonumber\\
\mathcal{S}&\xmapsto{\;\;\;}\big(C_{\bullet}(\mathcal{S}),\partial_{\bullet}\big) \nonumber\\
\left(\mathcal{T}\xrightarrow{\phi}\mathcal{S}\right)&\xmapsto{\;\;\;}\big(C_{\bullet}(\mathcal{T}),\partial_{\bullet}\big) \xrightarrow{f_{\bullet}}\big(C_{\bullet}(\mathcal{S}),\partial_{\bullet}\big) 
\end{align}
defines a functor. 

In the case of the Vietoris--Rips complex associated with a finite metric space $(X,d_{X})$, we saw that if $0\le\epsilon\le\epsilon'$, then we get an inclusion $\mathcal{R}^{\epsilon}(X,d_{X})\hookrightarrow\mathcal{R}^{\epsilon'}(X,d_{X})$ of abstract simplicial complexes, which is a simplicial map. As such, the above prescription leads to a chain map 
\begin{equation}
\label{eqn:persistencechainmaps}
f^{(\epsilon',\epsilon)}_{\bullet}:\big(C^{\epsilon}_{\bullet}(X,d_{X}),\partial_{\bullet}\big)\to\big(C^{\epsilon'}_{\bullet}(X,d_{X}),\partial_{\bullet}\big).
\end{equation}
The collection of such maps provides an example of a persistence chain complex~\cite{edelsbrunner2002topological,ZoCa05}. 

\begin{definition}
A \define{persistence chain complex} is a functor $\boldsymbol{[0,\infty)}\to\mathbf{ChainComp}$. In detail, a persistence chain complex is a collection of chain complexes $(C^{\epsilon}_{\bullet},\partial_{\bullet})$ for each $\epsilon\in[0,\infty)$ together with a family of chain maps $f^{(\epsilon',\epsilon)}_{\bullet}:(C^{\epsilon}_{\bullet},\partial_{\bullet})\to(C^{\epsilon'}_{\bullet},\partial_{\bullet})$ for all $\epsilon,\epsilon'\in[0,\infty)$ satisfying $\epsilon\le\epsilon'$. Moreover, these chain maps satisfy $f_{\bullet}^{(\epsilon,\epsilon)}=\id$ for all $\epsilon\in[0,\infty)$ and $f_{\bullet}^{(\epsilon'',\epsilon')}\circ f_{\bullet}^{(\epsilon',\epsilon)}=f_{\bullet}^{(\epsilon'',\epsilon)}$ for all $\epsilon,\epsilon',\epsilon''\in[0,\infty)$ satisfying $\epsilon\le\epsilon'\le\epsilon''$.
\end{definition}

\begin{example}
\label{ex:Hausmann4case}
We now revisit Example~\ref{ex:Hausmannrootsofunity} with $n=6$ in order to build the full Vietoris--Rips persistence chain complex. We first label the vertices and specify orientations on all the simplices in order to specify a basis for the vector spaces in the Vietoris--Rips chain complex. Since additional simplices appear for larger values of $\epsilon$, the ordered basis associated with these higher values of $\epsilon$ adjoin these new basis elements in the order that they appear. For $\epsilon=b$ in the range $\frac{2\pi}{n}<b\le\frac{4\pi}{n}$, we have 
\begin{center}
%%%%%%%%%%%%%%%%%epsilon=b%%%%%%%%%%%%%%
\scalebox{0.8}{
\begin{tikzpicture}
\def\n{6}
\def\e{2.4}
\foreach \k in {1,2,...,\n} {
   \coordinate (\k) at ({cos(deg(2*pi*\k/\n))},{sin(deg(2*pi*\k/\n))});
   \node at (\k) {$\bullet$};
   \draw[thick] ({cos(deg(2*pi*\k/\n))},{sin(deg(2*pi*\k/\n))}) -- ({cos(deg(2*pi*(\k+1)/\n))},{sin(deg(2*pi*(\k+1)/\n))});
}
%vertices
\node at (1.35,0) {$x_0$};
\node at (0.65,1.15) {$x_1$};
\node at (-0.65,1.15) {$x_2$};
\node at (-1.35,0) {$x_3$};
\node at (-0.65,-1.15) {$x_4$};
\node at (0.65,-1.15) {$x_5$};
%edges
\node at (0.95,0.5) {\rotatebox{-60}{$e_{01}$}};
\node at (0,1.05) {$e_{12}$};
\node at (-0.95,0.5) {\rotatebox{60}{$e_{23}$}};
\node at (-0.95,-0.5) {\rotatebox{120}{$e_{34}$}};
\node at (0,-1.05) {\rotatebox{180}{$e_{45}$}};
\node at (0.95,-0.5) {\rotatebox{240}{$e_{50}$}};
\end{tikzpicture}
}
\end{center}
With this labeling of vertices and edges, we write the ordered basis for $0$-simplices as 
\begin{equation}
\mathbf{e}^{(0)}_{1}=\ket{x_0}, \; \mathbf{e}^{(0)}_{2}=\ket{x_1}, \; 
\dots \; , \;
\mathbf{e}^{(0)}_{6}=\ket{x_5}
\end{equation}
and for $1$-simplices as
\begin{equation}
\mathbf{e}^{(1)}_{1}=\ket{e_{01}},\;
\mathbf{e}^{(1)}_{2}=\ket{e_{12}},\;
\dots\; , \;
\mathbf{e}^{(1)}_{6}=\ket{e_{50}}.\;
\end{equation}
The ordering of an edge $e_{23}$, for example, is from $x_2$ to $x_3$, where we use the shorthand notation $\ket{e_{23}}=\ket{[x_2,x_3]}$ and similarly for the other edges. In other words, if ${\partial^{b}_{1}:\mathbb{F}^{6}\to\mathbb{F}^{6}}$ denotes the boundary operator (from $2$-simplices to $1$-simplices at this value of $\epsilon$), then $\partial^{b}_{1}\ket{e_{23}}=\ket{x_3}-\ket{x_2}$. Based on the above ordering of the bases, $\partial^{b}_{1}$ is given in matrix form by 
\begin{equation}
\label{eq:Hauspb1}
\partial^{b}_{1}=\begin{bmatrix}-1&0&0&0&0&1\\1&-1&0&0&0&0\\0&1&-1&0&0&0\\0&0&1&-1&0&0\\0&0&0&1&-1&0\\0&0&0&0&1&-1\end{bmatrix}.
\end{equation}
Next, for $\epsilon=c$ in the range $\frac{4\pi}{n}<c\le\frac{6\pi}{n}$, we have 
\begin{center}
%%%%%%%%%%%%%%%%%epsilon=c%%%%%%%%%%%%%%
\scalebox{0.8}{
\begin{tikzpicture}[scale=2.0]
\def\n{6}
\def\e{2.4}
\foreach \k in {1,2,...,\n} {
   \coordinate (\k) at ({cos(deg(2*pi*\k/\n))},{sin(deg(2*pi*\k/\n))});
   \node at (\k) {$\bullet$};
   \draw[thick] ({cos(deg(2*pi*\k/\n))},{sin(deg(2*pi*\k/\n))}) -- ({cos(deg(2*pi*(\k+1)/\n))},{sin(deg(2*pi*(\k+1)/\n))});
   \draw[dashed] ({cos(deg(2*pi*\k/\n))},{sin(deg(2*pi*\k/\n))}) -- ({cos(deg(2*pi*(\k+2)/\n))},{sin(deg(2*pi*(\k+2)/\n))});
}
\foreach \k in {1,3,5} {
   \draw[thick] ({cos(deg(2*pi*\k/\n))},{sin(deg(2*pi*\k/\n))}) -- ({cos(deg(2*pi*(\k+2)/\n))},{sin(deg(2*pi*(\k+2)/\n))});
}
\fill[opacity=0.1] ({cos(deg(2*pi*1/\n))},{sin(deg(2*pi*1/\n))}) -- ({cos(deg(2*pi*3/\n))},{sin(deg(2*pi*3/\n))}) -- ({cos(deg(2*pi*5/\n))},{sin(deg(2*pi*5/\n))}) -- cycle;
\fill[opacity=0.25] ({cos(deg(2*pi*1/\n))},{sin(deg(2*pi*1/\n))}) -- ({cos(deg(2*pi*2/\n))},{sin(deg(2*pi*2/\n))}) -- ({cos(deg(2*pi*3/\n))},{sin(deg(2*pi*3/\n))}) -- cycle;
\fill[opacity=0.15] ({cos(deg(2*pi*1/\n))},{sin(deg(2*pi*1/\n))}) -- ({cos(deg(2*pi*5/\n))},{sin(deg(2*pi*5/\n))}) -- ({cos(deg(2*pi*0/\n))},{sin(deg(2*pi*0/\n))}) -- cycle;
\fill[opacity=0.35] ({cos(deg(2*pi*3/\n))},{sin(deg(2*pi*3/\n))}) -- ({cos(deg(2*pi*4/\n))},{sin(deg(2*pi*4/\n))}) -- ({cos(deg(2*pi*5/\n))},{sin(deg(2*pi*5/\n))}) -- cycle;
%nodes
\node at (1.20,0) {$x_0$};
\node at (0.60,1.025) {$x_1$};
\node at (-0.60,1.025) {$x_2$};
\node at (-1.20,0) {$x_3$};
\node at (-0.60,-1.025) {$x_4$};
\node at (0.60,-1.025) {$x_5$};
%new edges
\node at (0.21,0.33) {\rotatebox{-30}{$e_{02}$}};
\node at (-0.21,0.33) {\rotatebox{30}{$e_{13}$}};
\node at (-0.4,0) {\rotatebox{90}{$e_{24}$}};
\node at (-0.21,-0.33) {\rotatebox{150}{$e_{35}$}};
\node at (0.21,-0.33) {\rotatebox{210}{$e_{40}$}};
\node at (0.4,0) {\rotatebox{-90}{$e_{51}$}};
%faces (oriented so that outward normal)
\node at (0,0) {$\sigma_{135}$};
\node at (0.70,0) {\rotatebox{-90}{$\sigma_{015}$}};
\node at (0.34,0.59) {\rotatebox{-30}{$\sigma_{102}$}};
\node at (-0.34,0.59) {\rotatebox{30}{$\sigma_{123}$}};
\node at (-0.7,0) {\rotatebox{90}{$\sigma_{243}$}};
\node at (-0.34,-0.59) {\rotatebox{150}{$\sigma_{345}$}};
\node at (0.34,-0.59) {\rotatebox{210}{$\sigma_{054}$}};
\end{tikzpicture}
}
\end{center}
where $\sigma_{204}$ is not shown, as it corresponds to the dashed $2$-simplex. An oriented $2$-simplex $\sigma_{ijk}$ is shorthand notation for $\ket{\sigma_{ijk}}=\ket{[x_i,x_j,x_k]}$. 
In addition to the basis already provided for $1$-simplices, the vector space of $1$-simplices increases in dimension due to the addition of six additional edges. The orientation of these $1$-simplices is chosen to agree with the ordering of the subscripts as before. Hence, we adjoin vectors 
\begin{align}
\mathbf{e}^{(1)}_{7}&=\ket{e_{02}},\;
\mathbf{e}^{(1)}_{8}=\ket{e_{13}},\;
\mathbf{e}^{(1)}_{9}=\ket{e_{24}},\;\nonumber\\
\mathbf{e}^{(1)}_{10}&=\ket{e_{35}},\;
\mathbf{e}^{(1)}_{11}=\ket{e_{40}},\;
\mathbf{e}^{(1)}_{12}=\ket{e_{51}}\;
\end{align}
for $1$-simplices to get the vector space $\mathbb{F}^{12}$ for $\epsilon=c$. In this case, the boundary operator from $1$-simplices to $0$-simplices is of the form $\partial^{c}_{1}:\mathbb{F}^{12}\to\mathbb{F}^{6}$ and is given in matrix form by
\begin{equation}
\partial^{c}_{1}=
\begin{bmatrix}
\phantom{yes..}\partial^{b}_{1}\phantom{yes} &
    \begin{matrix}
    -1&0&0&0&-1&0\\
    0&-1&0&0&0&-1\\
    1&0&-1&0&0&0\\
    0&1&0&-1&0&0\\
    0&0&1&0&1&0\\
    0&0&0&1&0&1
    \end{matrix}
\end{bmatrix}.
\end{equation}
Additionally, there are now eight faces (oriented 2-simplices), written as $\sigma_{ijk}$ with the ordering also chosen based on the order of the indices. In this particular case, the ordering is chosen so as to agree with an outward pointing normal by the right-hand-rule from the above picture. Moreover, we write the ordered basis for $2$-simplices as  
\begin{align}
\mathbf{e}^{(2)}_{1}&=\ket{\sigma_{102}},\,
\mathbf{e}^{(2)}_{2}=\ket{\sigma_{123}},\,
\mathbf{e}^{(2)}_{3}=\ket{\sigma_{243}},\,
\mathbf{e}^{(2)}_{4}=\ket{\sigma_{345}}, \nonumber\\
\mathbf{e}^{(2)}_{5}&=\ket{\sigma_{054}},\,
\mathbf{e}^{(2)}_{6}=\ket{\sigma_{015}},\,
\mathbf{e}^{(2)}_{7}=\ket{\sigma_{135}},\,
\mathbf{e}^{(2)}_{8}=\ket{\sigma_{204}}.
\end{align}
The boundary operator from $2$-simplices to $1$-simplices at value $\epsilon=c$ is an operator of the form $\partial^{c}_{2}:\mathbb{F}^{8}\to\mathbb{F}^{12}$. For example, its value on $\ket{\sigma_{123}}$ is $\partial^{c}_{2}\ket{\sigma_{123}}=\ket{e_{12}}+\ket{e_{23}}-\ket{e_{13}}=\mathbf{e}^{(1)}_{2}+\mathbf{e}^{(1)}_{3}-\mathbf{e}^{(1)}_{8}$, where the minus sign comes from the fact that the orientation of the boundary is opposite on that edge to the orientation we chose for our basis element. The matrix representation for $\partial^{c}_{2}$ in our chosen bases is 
\begin{equation}
\partial^{c}_{2}=
\begin{bmatrix}
-1&0&0&0&0&1&0&0\\
-1&1&0&0&0&0&0&0\\
0&1&-1&0&0&0&0&0\\
0&0&-1&1&0&0&0&0\\
0&0&0&1&-1&0&0&0\\
0&0&0&0&1&1&0&0\\
1&0&0&0&0&0&0&-1\\
0&-1&0&0&0&0&1&0\\
0&0&1&0&0&0&0&-1\\
0&0&0&-1&0&0&1&0\\
0&0&0&0&1&0&0&-1\\
0&0&0&0&0&-1&1&0
\end{bmatrix}
.
\end{equation}
Finally, for $\epsilon=d$ in the range $\frac{6\pi}{n}<c\le\frac{8\pi}{n}$ (and technically greater than these values as well), we have 
\begin{center}
%%%%%%%%%%%%%%%%%epsilon=c%%%%%%%%%%%%%%
\scalebox{0.8}{
\begin{tikzpicture}[scale=1.75]
\def\n{6}
\def\e{2.4}
\foreach \k in {1,2,...,\n} {
   \coordinate (\k) at ({cos(deg(2*pi*\k/\n))},{sin(deg(2*pi*\k/\n))});
   \node at (\k) {$\bullet$};
       \draw[thick] ({cos(deg(2*pi*\k/\n))},{sin(deg(2*pi*\k/\n))}) -- ({cos(deg(2*pi*(\k+1)/\n))},{sin(deg(2*pi*(\k+1)/\n))});
       \draw[thick] ({cos(deg(2*pi*\k/\n))},{sin(deg(2*pi*\k/\n))}) -- ({cos(deg(2*pi*(\k+2)/\n))},{sin(deg(2*pi*(\k+2)/\n))});
       \draw[thick] ({cos(deg(2*pi*\k/\n))},{sin(deg(2*pi*\k/\n))}) -- ({cos(deg(2*pi*(\k+3)/\n))},{sin(deg(2*pi*(\k+3)/\n))});
}
\fill[opacity=0.1] ({cos(deg(2*pi*1/\n))},{sin(deg(2*pi*1/\n))}) -- ({cos(deg(2*pi*2/\n))},{sin(deg(2*pi*2/\n))}) -- ({cos(deg(2*pi*3/\n))},{sin(deg(2*pi*3/\n))}) -- ({cos(deg(2*pi*4/\n))},{sin(deg(2*pi*4/\n))}) -- ({cos(deg(2*pi*5/\n))},{sin(deg(2*pi*5/\n))}) -- (1,0) -- cycle;
%nodes
\node at (1.20,0) {$x_0$};
\node at (0.60,1.025) {$x_1$};
\node at (-0.60,1.025) {$x_2$};
\node at (-1.20,0) {$x_3$};
\node at (-0.60,-1.025) {$x_4$};
\node at (0.60,-1.025) {$x_5$};
%new edges
\node at (0.11,0.35) {\rotatebox{60}{$e_{14}$}};
\node at (-0.26,0.26) {\rotatebox{120}{$e_{13}$}};
\node at (0.31,0.08) {$e_{03}$};
%faces (oriented so that outward normal)
\end{tikzpicture}
}
\end{center}
We have no longer shaded different faces nor dashed any edges because the simplicial complex is that of a single $5$-simplex. The new edges are shown and this means we adjoin vectors 
\begin{equation}
\mathbf{e}^{(1)}_{13}=\ket{e_{03}},\;
\mathbf{e}^{(1)}_{14}=\ket{e_{14}},\;
\mathbf{e}^{(1)}_{15}=\ket{e_{25}}
\end{equation}
to the space of $1$-simplices. The number of faces added is $12$, and so we will not write them down. Moreover, we will not explicitly write the higher-dimensional simplices. This gives us enough information to later calculate persistent homology. Indeed, we now have a full picture of the persistence chain complexes from the Vietoris--Rips complex having selected parameter values $a,b,c,d$ with $0<a<\frac{2\pi}{n}<b<\frac{4\pi}{n}<c<\frac{6\pi}{n}<d<\frac{8\pi}{n}$. We can visualize this by drawing the Vietoris--Rips chain complex at each value of $\epsilon$ (including the boundary maps) in the vertical direction together with its chain maps as $\epsilon$ varies in the horizontal direction as follows. 
\begin{center}
\begin{tikzpicture}
\draw[->] (0,0) -- ({2*pi+0.5},0) node[right]{$\epsilon$}; 
% \foreach \n in {1,2,3,4}{
%     \draw ({2*pi*\n/4)},0.1) -- ({2*pi*\n/4)},-0.1) node[below]{$\frac{2\pi}{\n}$};
%     }
    \draw ({0},0.1) -- ({0},-0.1) node[below]{$0$};
    \draw ({2*pi/8},0.1) -- ({2*pi/8},-0.1) node[below]{$a$};
    \draw ({2*pi/4},0.1) -- ({2*pi/4},-0.1) node[below]{$\frac{2\pi}{6}$};
    \draw ({3*pi/4},0.1) -- ({3*pi/4},-0.1) node[below]{$b$};
    \draw ({4*pi/4},0.1) -- ({4*pi/4},-0.1) node[below]{$\frac{4\pi}{6}$};
    \draw ({5*pi/4},0.1) -- ({5*pi/4},-0.1) node[below]{$c$};
    \draw ({6*pi/4},0.1) -- ({6*pi/4},-0.1) node[below]{$\frac{6\pi}{6}$};
    \draw ({7*pi/4},0.1) -- ({7*pi/4},-0.1) node[below]{$d$};
    \draw ({8*pi/4},0.1) -- ({8*pi/4},-0.1) node[below]{$\frac{8\pi}{6}$};
\node at (-0.5,0.5) {$0$};
\node at (-0.5,1.5) {$1$};
\node at (-0.5,2.5) {$2$};
\node at (-0.5,3.5) {$3$};
\node at (-1.0,2.0) {\rotatebox{90}{degree}};
\node at ({2*pi/8},4.5) {%
    \begin{tikzpicture}[scale=0.425]
    \def\n{6}
    \foreach \k in {1,2,...,\n} {
       \coordinate (\k) at ({cos(deg(2*pi*\k/\n))},{sin(deg(2*pi*\k/\n))});
       \node at (\k) {$\bullet$};
    }
    \end{tikzpicture}
    };
\node (a3) at ({2*pi/8},3.5) {$0$};
\node (a2) at ({2*pi/8},2.5) {$0$};
\node (a1) at ({2*pi/8},1.5) {$0$};
\node (a0) at ({2*pi/8},0.5) {$\mathbb{F}^{6}$};
\draw[->] (a3) -- (a2);
\draw[->] (a2) -- (a1);
\draw[->] (a1) -- (a0);
\node at ({3*pi/4},4.5) {%
    \begin{tikzpicture}[scale=0.425]
    \def\n{6}
    \foreach \k in {1,2,...,\n} {
       \coordinate (\k) at ({cos(deg(2*pi*\k/\n))},{sin(deg(2*pi*\k/\n))});
       \node at (\k) {$\bullet$};
       \draw[thick] ({cos(deg(2*pi*\k/\n))},{sin(deg(2*pi*\k/\n))}) -- ({cos(deg(2*pi*(\k+1)/\n))},{sin(deg(2*pi*(\k+1)/\n))});
    }
    \end{tikzpicture}
    };
\node (b3) at ({3*pi/4},3.5) {$0$};
\node (b2) at ({3*pi/4},2.5) {$0$};
\node (b1) at ({3*pi/4},1.5) {$\mathbb{F}^{6}$};
\node (b0) at ({3*pi/4},0.5) {$\mathbb{F}^{6}$};
\draw[->] (b3) -- (b2);
\draw[->] (b2) -- (b1);
\draw[->] (b1) -- node[right]{\footnotesize $\partial^{b}_{1}$} (b0);
\node at ({5*pi/4},4.5) {%
    \begin{tikzpicture}[scale=0.425]
    \def\n{6}
    \def\e{2.4}%2pi/n<epsilon<3pi/n gives a two-dimensional surface (n=6 gives 2-sphere) 
    \foreach \k in {1,2,...,\n} {
       \coordinate (\k) at ({cos(deg(2*pi*\k/\n))},{sin(deg(2*pi*\k/\n))});
       \node at (\k) {$\bullet$};
    %   \fill[opacity=0.1] (\k) circle (\e);
       \draw[thick] ({cos(deg(2*pi*\k/\n))},{sin(deg(2*pi*\k/\n))}) -- ({cos(deg(2*pi*(\k+1)/\n))},{sin(deg(2*pi*(\k+1)/\n))});
       \draw[dashed] ({cos(deg(2*pi*\k/\n))},{sin(deg(2*pi*\k/\n))}) -- ({cos(deg(2*pi*(\k+2)/\n))},{sin(deg(2*pi*(\k+2)/\n))});
    }
    \foreach \k in {1,3,5} {
       \draw[thick] ({cos(deg(2*pi*\k/\n))},{sin(deg(2*pi*\k/\n))}) -- ({cos(deg(2*pi*(\k+2)/\n))},{sin(deg(2*pi*(\k+2)/\n))});
    }
    \fill[opacity=0.1] ({cos(deg(2*pi*1/\n))},{sin(deg(2*pi*1/\n))}) -- ({cos(deg(2*pi*3/\n))},{sin(deg(2*pi*3/\n))}) -- ({cos(deg(2*pi*5/\n))},{sin(deg(2*pi*5/\n))}) -- cycle;
    \fill[opacity=0.25] ({cos(deg(2*pi*1/\n))},{sin(deg(2*pi*1/\n))}) -- ({cos(deg(2*pi*2/\n))},{sin(deg(2*pi*2/\n))}) -- ({cos(deg(2*pi*3/\n))},{sin(deg(2*pi*3/\n))}) -- cycle;
    \fill[opacity=0.15] ({cos(deg(2*pi*1/\n))},{sin(deg(2*pi*1/\n))}) -- ({cos(deg(2*pi*5/\n))},{sin(deg(2*pi*5/\n))}) -- ({cos(deg(2*pi*0/\n))},{sin(deg(2*pi*0/\n))}) -- cycle;
    \fill[opacity=0.35] ({cos(deg(2*pi*3/\n))},{sin(deg(2*pi*3/\n))}) -- ({cos(deg(2*pi*4/\n))},{sin(deg(2*pi*4/\n))}) -- ({cos(deg(2*pi*5/\n))},{sin(deg(2*pi*5/\n))}) -- cycle;
    \end{tikzpicture}
    };
\node (c3) at ({5*pi/4},3.5) {$0$};
\node (c2) at ({5*pi/4},2.5) {$\mathbb{F}^{8}$};
\node (c1) at ({5*pi/4},1.5) {$\mathbb{F}^{12}$};
\node (c0) at ({5*pi/4},0.5) {$\mathbb{F}^{6}$};
\draw[->] (c3) -- (c2);
\draw[->] (c2) -- node[right]{\footnotesize $\partial^{c}_{2}$} (c1);
\draw[->] (c1) -- node[right]{\footnotesize $\partial^{c}_{1}$} (c0);
\node at ({7*pi/4},4.5) {%
    \begin{tikzpicture}[scale=0.425]
    \def\n{6}
    \def\e{3.14}%2pi/n<epsilon<3pi/n gives a two-dimensional surface (n=6 gives 2-sphere) 
    \foreach \k in {1,2,...,\n} {
       \coordinate (\k) at ({cos(deg(2*pi*\k/\n))},{sin(deg(2*pi*\k/\n))});
       \node at (\k) {$\bullet$};
    %   \fill[opacity=0.1] (\k) circle (\e);
       \draw[thick] ({cos(deg(2*pi*\k/\n))},{sin(deg(2*pi*\k/\n))}) -- ({cos(deg(2*pi*(\k+1)/\n))},{sin(deg(2*pi*(\k+1)/\n))});
       \draw[thick] ({cos(deg(2*pi*\k/\n))},{sin(deg(2*pi*\k/\n))}) -- ({cos(deg(2*pi*(\k+2)/\n))},{sin(deg(2*pi*(\k+2)/\n))});
       \draw[thick] ({cos(deg(2*pi*\k/\n))},{sin(deg(2*pi*\k/\n))}) -- ({cos(deg(2*pi*(\k+3)/\n))},{sin(deg(2*pi*(\k+3)/\n))});
    }
    \fill[opacity=0.1] ({cos(deg(2*pi*1/\n))},{sin(deg(2*pi*1/\n))}) -- ({cos(deg(2*pi*2/\n))},{sin(deg(2*pi*2/\n))}) -- ({cos(deg(2*pi*3/\n))},{sin(deg(2*pi*3/\n))}) -- ({cos(deg(2*pi*4/\n))},{sin(deg(2*pi*4/\n))}) -- ({cos(deg(2*pi*5/\n))},{sin(deg(2*pi*5/\n))}) -- (1,0) -- cycle;
    \end{tikzpicture}
    };
\node (d3) at ({7*pi/4},3.5) {$\mathbb{F}^{15}$};
\node (d2) at ({7*pi/4},2.5) {$\mathbb{F}^{20}$};
\node (d1) at ({7*pi/4},1.5) {$\mathbb{F}^{15}$};
\node (d0) at ({7*pi/4},0.5) {$\mathbb{F}^{6}$};
\draw[->] (d3) -- node[right]{\footnotesize $\partial^{d}_{3}$} (d2);
\draw[->] (d2) -- node[right]{\footnotesize $\partial^{d}_{2}$} (d1);
\draw[->] (d1) -- node[right]{\footnotesize $\partial^{d}_{1}$} (d0);
%chain maps
\foreach \n in {0,1,2,3} {
\draw[->] (a\n) -- (b\n);
\draw[->] (b\n) -- (c\n);
\draw[->] (c\n) -- (d\n);
}
\draw[->] (b1) -- node[above]{\footnotesize $f_{1}^{(c,b)}$} (c1);
\draw[->] (c1) -- node[above]{\footnotesize $f_{1}^{(d,c)}$} (d1);
\draw[->] (c2) -- node[above]{\footnotesize $f_{2}^{(d,c)}$} (d2);
\end{tikzpicture}
\end{center}
The degree of the chain complexes is shown on the vertical axis, while $\epsilon$ is shown horizontally. We have not included degrees 4 and 5 for $\epsilon>\frac{6\pi}{n}$ for simplicity.  
This is because the chain complex is that associated with a single $5$-simplex (and is therefore contractible) for those values of $\epsilon$. As such, we will not need them later when calculating persistent homology. We have labeled some of the inclusion maps as well. 
Namely, 
\begin{equation}
f_{1}^{(c,b)}=
\begin{bmatrix}
\mathds{1}_{6}\\
0
\end{bmatrix},
\quad
f_{1}^{(d,c)}=
\begin{bmatrix}
\mathds{1}_{12}\\
0
\end{bmatrix},
\quad
f_{2}^{(d,c)}=
\begin{bmatrix}
\mathds{1}_{8}\\
0
\end{bmatrix},
\end{equation}
where the bottom $0$ submatrices are $6\times 6$, $3\times12$, and $12\times 8$, respectively. 
Meanwhile, all of the unlabeled nontrivial chain maps are identity linear transformations when going between spaces of the same dimension. 
One can also obtain the Vietoris--Rips chain complexes for other values of $n$, though for brevity we leave them out. 
\end{example}

Associated with a chain complex, we will construct its associated homology vector spaces. This will require us to take quotient vector spaces. Moreover, since we have persistence chain complexes, this will induce persistence quotient vector spaces. The definition of the latter is a special case of that in Definition~\ref{defn:persistenceobject}. 

\begin{definition}
Let $\mathbf{Vect}$ be the category of vector spaces over a field $\mathbb{F}$. Namely, the objects are $\mathbb{F}$-vector spaces and the morphisms are $\mathbb{F}$-linear transformations. 
A \define{persistence vector space} is a functor $\mathfrak{V}:\boldsymbol{[0,\infty)}\to\mathbf{Vect}$.
In detail, a persistence vector space consists of a collection of vector spaces $V^{\epsilon}$ indexed by $\epsilon\in[0,\infty)$ together with a collection of linear transformations $f^{(\epsilon',\epsilon)}:V^{\epsilon}\to V^{\epsilon'}$ for all $\epsilon,\epsilon'$ satisfying $\epsilon\le\epsilon'$ such that $f^{(\epsilon,\epsilon)}=\id$ for all $\epsilon\in[0,\infty)$ and $f^{(\epsilon'',\epsilon')}\circ f^{(\epsilon',\epsilon)}=f^{(\epsilon'',\epsilon)}$ for all $\epsilon,\epsilon',\epsilon''$ satisfying $\epsilon\le\epsilon'\le\epsilon''$.
\end{definition}

\begin{example}
\label{ex:PabAtom}
As an example of a persistent vector space, which will be useful in characterizing them, is obtained from any pair of numbers $(a,b)$, with $a\in[0,\infty)$, $b\in[0,\infty]$, and $a<b$. Namely, let $\mathbb{F}(a,b)$ be the persistence vector space given by 
\begin{equation}
\mathbb{F}(a,b)^{\epsilon}:=\begin{cases}\;\mathbb{F}&\mbox{ if $\epsilon\in[a,b)$} \\ \{0\}&\mbox{ otherwise}.\end{cases}
\end{equation}
The linear transformations $f^{(\epsilon',\epsilon)}:\mathbb{F}(a,b)^{\epsilon}\to \mathbb{F}(a,b)^{\epsilon'}$ are given by  
\begin{equation}
f^{(\epsilon',\epsilon)}:=\begin{cases}\id_{\mathbb{F}}&\mbox{ if $\epsilon,\epsilon'\in[a,b)$} \\ 0&\mbox{ otherwise}.\end{cases}
\end{equation}
It is useful to visualize $\mathbb{F}(a,b)$ together with these linear transformations $f^{(\epsilon',\epsilon)}$ as a cartoon by drawing an interval $[a,b]\subset[0,\infty]$, and drawing arrows from some start point to an end point. If the arrow is fully contained in the interval, then the arrow depicts the identity transformation. Otherwise, it is the zero map, either because it maps \emph{to} the zero vector space or because it maps \emph{from} the zero vector space.
\[
\begin{tikzpicture}[scale=0.825]
\draw (-3,0) -- (7,0);
\draw[very thick] (0,0) -- (4,0); 
\draw (0,0.1) -- (0,-0.1) node[below]{$a$};
\draw (4,0.1) -- (4,-0.1) node[below]{$b$};
\draw[->] (-2.75,0.1) to[out=30,in=150] node[above]{$0$} (-1.25,0.1);
\draw[->] (-0.75,0.1) to[out=30,in=150] node[above]{$0$} (0.75,0.1);
\draw[->] (1.25,0.1) to[out=30,in=150] node[above]{$\id_{\mathbb{F}}$} (2.75,0.1);
\draw[->] (3.25,0.1) to[out=30,in=150] node[above]{$0$} (4.75,0.1);
\draw[->] (5.25,0.1) to[out=30,in=150] node[above]{$0$} (6.75,0.1);
\end{tikzpicture}
\]
\end{example}

Using this example, we can construct many examples of persistence vector spaces. We assume familiarity with the direct sum of vector spaces~\cite{Halmos1958}. 

\begin{definition}
Given two persistence vector spaces $\mathfrak{V}=(V^{\epsilon},f^{(\epsilon',\epsilon)})$ and $\mathfrak{W}=(W^{\epsilon},g^{(\epsilon',\epsilon)})$, their direct sum $\mathfrak{V}\oplus \mathfrak{W}$ is the persistence vector space defined as 
\begin{equation}
(V\oplus W)^{\epsilon}:=V^{\epsilon}\oplus W^{\epsilon}, 
\end{equation}
which is the direct sum of vector spaces, and
\begin{equation}
(f\oplus g)^{\epsilon}:=f^{\epsilon}\oplus g^{\epsilon}, 
\end{equation}
which is the direct sum of linear transformations. 
\end{definition}

Taking direct sums of persistence vector spaces $\mathbb{F}(a,b)$ will prove to be incredibly useful, since they form the objects from which many other persistence vector spaces can be described up to isomorphism. 

\begin{example}
\label{ex:PabsumPcd}
Let $(a,b),(c,d)\in[0,\infty)\times[0,\infty]$ with $a<b$ and $c<d$. The direct sum $\mathbb{F}(a,b)\oplus \mathbb{F}(c,d)$ is the persistence vector space 
given by
\begin{equation}
\big(\mathbb{F}(a,b)\oplus \mathbb{F}(c,d)\big)^{\epsilon}
=
\begin{cases}
\;\mathbb{F}^2&\mbox{ if $\epsilon\in [a,b)\cap[c,d)$} \\ 
\;\mathbb{F} &\mbox{ if $\epsilon\in [a,b)\Delta [c,d)$} \\ 
\{0\} 
&\mbox{ if $\big([a,b)\cup[c,d)\big)^{\mathrm{c}}$} , \\
\end{cases}
\end{equation}
where $A\Delta B:= (A\setminus B)\cup (B\setminus A)$ is the symmetric difference of two sets $A$ and $B$, and where $A^{\mathrm{c}}:=X\setminus A$ denotes the complement of a subset $A\subseteq X$. 
The linear transformations will be provided in some examples shortly. We first point out that it is often easier to visualize this persistence vector space in terms of horizontal bars. At each $\epsilon$, the number of intersections of $\epsilon$ with the intervals $[a,b)$ and $[c,d)$ indicates the number of copies of $\mathbb{F}$ that occur for that value of $\epsilon$. For example, if $a<c<b<d$, then we have
\[
\begin{tikzpicture}[scale=0.95]
\draw (-2.5,-0.5) -- (6.5,-0.5);
\draw (0,{0.1-0.5}) -- (0,{-0.1-0.5}) node[below]{$a$};
\draw (4,{0.1-0.5}) -- (4,{-0.1-0.5}) node[below]{$b$};
\draw (2,{0.1-0.5}) -- (2,{-0.1-0.5}) node[below]{$c$};
\draw (5,{0.1-0.5}) -- (5,{-0.1-0.5}) node[below]{$d$};
\draw[very thick] (0,0) -- (4,0); 
\draw (0,0.1) -- (0,-0.1);
\draw (4,0.1) -- (4,-0.1);
\draw[very thick] (2,0.4) -- (5,0.4); 
\draw (2,{0.1+0.4}) -- (2,{-0.1+0.4});
\draw (5,{0.1+0.4}) -- (5,{-0.1+0.4});
\node (0) at (-1.5,-1.2) {$\{0\}$};
\node (1) at (1,-1.2) {$\mathbb{F}$};
\node at (1,0.15) {\footnotesize$e_{1}$};
\node (2) at (3,-1.2) {$\mathbb{F}^2$};
\node at (3,0.15) {\footnotesize$e_{1}$};
\node at (3,0.55) {\footnotesize$e_{2}$};
\node (3) at (4.5,-1.2) {$\mathbb{F}$};
\node at (4.5,0.55) {\footnotesize$e_{1}$};
\node (4) at (6,-1.2) {$\{0\}$};
\draw[->] (0) to[out=-30,in=-150] node[below]{$\left[\begin{smallmatrix}\end{smallmatrix}\right]$} (1);
\draw[->] (1) to[out=-30,in=-150] node[below]{$\left[\begin{smallmatrix}1\\0\end{smallmatrix}\right]$} (2);
\draw[->] (2) to[out=-30,in=-150] node[below]{$\left[\begin{smallmatrix}0 &1\end{smallmatrix}\right]$} (3);
\draw[->] (3) to[out=-30,in=-150] node[below]{$\left[\begin{smallmatrix}\end{smallmatrix}\right]$} (4);
\end{tikzpicture}
\]
where the vector space for each subinterval is indicated underneath. Moreover, by choosing the ordering for the direct sum as presented, the linear transformations going from one vector space to the next have been included in the above figure as matrices. The standard basis vectors $e_{i}$ are drawn to illustrate the corresponding generator at that particular value of $\epsilon$. An empty matrix is drawn to represent the trivial mapping. 
If $a<b<c<d$, then the direct sum persistence vector space looks like 
\[
\begin{tikzpicture}[scale=0.75]
\draw (-2.5,-0.5) -- (6.5,-0.5);
\draw (0,{0.1-0.5}) -- (0,{-0.1-0.5}) node[below]{$a$};
\draw (4,{0.1-0.5}) -- (4,{-0.1-0.5}) node[below]{$c$};
\draw (2,{0.1-0.5}) -- (2,{-0.1-0.5}) node[below]{$b$};
\draw (5,{0.1-0.5}) -- (5,{-0.1-0.5}) node[below]{$d$};
\draw[very thick] (0,0) -- (2,0); 
\draw (0,0.1) -- (0,-0.1);
\draw (2,0.1) -- (2,-0.1);
\draw[very thick] (4,0.4) -- (5,0.4); 
\draw (4,{0.1+0.4}) -- (4,{-0.1+0.4});
\draw (5,{0.1+0.4}) -- (5,{-0.1+0.4});
\node (0) at (-1.5,-1.2) {$\{0\}$};
\node (1) at (1,-1.2) {$\mathbb{F}$};
\node (2) at (3,-1.2) {$\{0\}$};
\node (3) at (4.5,-1.2) {$\mathbb{F}$};
\node (4) at (6,-1.2) {$\{0\}$};
\end{tikzpicture}
\]
We will look at generalizations of this example to allow for more direct sums later. Such examples motivate the definitions of barcodes, which we will see classify certain persistence vector spaces up to isomorphism.
\end{example}

\begin{definition}
Given a vector space $V$ and a vector subspace $U$ of $V$, the \define{quotient vector space} $V/U$ is the vector space
\begin{equation}
V/U:=\{v+U\;:\;v\in V\},
\end{equation}
where $v+U:=\{v+u\;:\;u\in U\}$ denotes the set of elements of the form $v+u$. The additive structure on $V/U$ is defined by $(v+U)+(v'+U):=(v+v')+U$ for $v,v'\in V$, and scalar multiplication is defined by $c(v+U):=(cv)+U$ for $c\in\mathbb{F}$ and $v\in V$.
\end{definition}

\begin{definition}
\label{defn:cyclesboundaries}
Given a chain complex $(C_{\bullet},\partial_{\bullet})$, for each $n\in\Z$, let $\mathrm{ker}(\partial_{n})\subseteq C_{n}$ denote the kernel of $\partial_{n}$, elements of which are called \define{$n$-cycles}, and let $\mathrm{im}(\partial_{n+1})\subseteq C_{n}$ denote the image of $\partial_{n+1}$, elements of which are called \define{$n$-boundaries}. 
\end{definition}

By the condition $\partial_{n}\circ\partial_{n+1}=0$, it follows that $\mathrm{im}(\partial_{n+1})$ is a vector subspace of $\mathrm{ker}(\partial_{n})$. The associated quotient is due to Emmy Noether~\cite{MacLane86,Dieudonne84}. 

\begin{definition}
Using the notation from Definition~\ref{defn:cyclesboundaries}, the quotient space 
\begin{equation}
H_{n}:=\mathrm{ker}(\partial_{n})/\mathrm{im}(\partial_{n+1})
\end{equation}
is called the \define{$n^{\text{th}}$ homology} of $(C_{\bullet},\partial_{\bullet})$. 
We may write $H_{n}(C)$ instead of $H_{n}$ to clarify with respect to which chain complex the homology is defined.
An element of $H_{n}$ is called a \define{homology class}. 
The dimension of the $n^{\text{th}}$ homology is called the \define{$n^{\text{th}}$ Betti number} associated with the chain complex $(C_{\bullet},\partial_{\bullet})$ and is written as 
\begin{equation}
\beta_{n}:=\mathrm{dim}(H_{n}). 
\end{equation}
\end{definition}

We note that the homology is normally defined not as a vector space but as an abelian group, which can be viewed as a $\Z$-module, a generalization of a vector space~\cite{DummitFoote}. Tensoring this $\Z$-module with the field $\mathbb{F}$ gives a vector space, and there is in general some information loss due to such an operation. This information loss is captured by the universal coefficient theorem~\cite{Munkres84,hatcherbook2002,MacLaneHomology}. More generally, homology does not provide a full invariant of a topological space, though these points will not concern us here. 

\begin{example}
\label{ex:persistencehomology}
As a relevant example of homology, given a metric space $(X,d_{X})$ and a fixed $\epsilon\in[0,\infty)$, the $n^{\text{th}}$ homology of $(C^{\epsilon}_{\bullet}(X,d_{X}),\partial_{\bullet})$ is denoted by $H_{n}^{\epsilon}(X,d_{X})$. We note that the superscript denotes the filtration index, and not a cohomology index. In practice, the Vietoris--Rips chain complex constructed from a finite metric space $(X,d_{X})$ has constituent vector spaces $C^{\epsilon}_{n}(X,d_{X})$ that are often of large dimension. Among exhibiting other important features, the homology occasionally significantly reduces the dimensionality of these constructions. For example, if $(X,d_{X})$ is a metric space with $n$ elements, the dimension of $C^{\epsilon}_{k}(X,d_{X})$ is always less than or equal to $\binom{n}{k+1}:=\frac{n!}{(k+1)!(n-k-1)!}$, %the binomial coefficient $n$ choose $k+1$, 
where the $k+1$ should be read as $k+1\mod n$. In particular, when $X$ is the set consisting of the $n^\text{th}$ roots of unity as in Example~\ref{ex:Hausmannrootsofunity}, and when $k=1$, the dimension $\mathrm{dim}(n,\epsilon)$ of $C^{\epsilon}_{1}(X,d_{X})$ 
is given by 
\begin{equation}
\mathrm{dim}(n,\epsilon)=\begin{cases}kn&\mbox{if $\frac{2\pi k}{n}\le \epsilon<\frac{2\pi (k+1)}{n}$ and $k<m$}\\\binom{n}{2}&\mbox{if $\frac{2(k+1)\pi}{n}\le\epsilon$ and $k\ge m$}\end{cases}
\end{equation}
assuming that $k$ is a nonnegative integer and $m$ is the unique integer satisfying $n=2m$ or $n=2m+1$ depending on whether $n$ is even or odd. Note that when $n=2m+1$ is odd, then $\binom{n}{2}=mn$ so that the edge case is well-defined. 
For example, when $n=8=2(4)$, we can visualize the 1-simplices (edges) over different choices of $\epsilon$ in the given ranges as 
\begin{center}
\begin{tikzpicture}
\draw[->] ({2*pi/4},0) -- ({2*pi+2*pi/4+0.75},0) node[right]{$\epsilon$}; 
    \draw ({2*pi/4},0.1) -- ({2*pi/4},-0.1) node[below]{$\frac{2\pi}{8}$};
    % \draw ({3*pi/4},0.1) -- ({3*pi/4},-0.1) node[below]{$a$};
    \node (a) at ({3*pi/4},1.3) {%one connection
        \begin{tikzpicture}[scale=0.6]
        \def\n{8}
        \foreach \k in {1,2,...,\n} {
       \coordinate (\k) at ({cos(deg(2*pi*\k/\n))},{sin(deg(2*pi*\k/\n))});
       \node at (\k) {$\bullet$};
      \draw[thick] ({cos(deg(2*pi*\k/\n))},    {sin(deg(2*pi*\k/\n))}) -- ({cos(deg(2*pi*(\k+1)/\n))},{sin(deg(2*pi*(\k+1)/\n))});
        }
        \end{tikzpicture}
        };
    \node (adim) at ({3*pi/4},2.5) {$8$};
    \draw[dashed] ({3*pi/4},-0.1) -- (a);
    \draw ({4*pi/4},0.1) -- ({4*pi/4},-0.1) node[below]{$\frac{4\pi}{8}$};
    \node (b) at ({5*pi/4},1.3) {%two connections
        \begin{tikzpicture}[scale=0.6]
        \def\n{8}
        \foreach \k in {1,2,...,\n} {
           \coordinate (\k) at ({cos(deg(2*pi*\k/\n))},{sin(deg(2*pi*\k/\n))});
           \node at (\k) {$\bullet$};
           \draw[thick] ({cos(deg(2*pi*\k/\n))},{sin(deg(2*pi*\k/\n))}) -- ({cos(deg(2*pi*(\k+1)/\n))},{sin(deg(2*pi*(\k+1)/\n))});
           \draw[thick] ({cos(deg(2*pi*\k/\n))},{sin(deg(2*pi*\k/\n))}) -- ({cos(deg(2*pi*(\k+2)/\n))},{sin(deg(2*pi*(\k+2)/\n))});
        }
        \end{tikzpicture}
        };
        \node (bdim) at ({5*pi/4},2.5) {$16$};
    % \draw ({5*pi/4},0.1) -- ({5*pi/4},-0.1) node[below]{$b$};
    \draw[dashed] ({5*pi/4},-0.1) -- (b); 
    \draw ({6*pi/4},0.1) -- ({6*pi/4},-0.1) node[below]{$\frac{6\pi}{8}$};
    \node (c) at ({7*pi/4},1.3) {%three connections
            \begin{tikzpicture}[scale=0.6]
        \def\n{8}
        \foreach \k in {1,2,...,\n} {
           \coordinate (\k) at ({cos(deg(2*pi*\k/\n))},{sin(deg(2*pi*\k/\n))});
           \node at (\k) {$\bullet$};
           \draw[thick] ({cos(deg(2*pi*\k/\n))},{sin(deg(2*pi*\k/\n))}) -- ({cos(deg(2*pi*(\k+1)/\n))},{sin(deg(2*pi*(\k+1)/\n))});
           \draw[thick] ({cos(deg(2*pi*\k/\n))},{sin(deg(2*pi*\k/\n))}) -- ({cos(deg(2*pi*(\k+2)/\n))},{sin(deg(2*pi*(\k+2)/\n))});
           \draw[thick] ({cos(deg(2*pi*\k/\n))},{sin(deg(2*pi*\k/\n))}) -- ({cos(deg(2*pi*(\k+3)/\n))},{sin(deg(2*pi*(\k+3)/\n))});
        }
        \end{tikzpicture}
        };
        \node (cdim) at ({7*pi/4},2.5) {$24$};
    % \draw ({7*pi/4},0.1) -- ({7*pi/4},-0.1) node[below]{$c$};
    \draw[dashed] ({7*pi/4},-0.1) -- (c);
    \draw ({8*pi/4},0.1) -- ({8*pi/4},-0.1) node[below]{$\frac{8\pi}{8}$};
    \node (d) at ({9*pi/4},1.3) {%four connections
            \begin{tikzpicture}[scale=0.6]
        \def\n{8}
        \foreach \k in {1,2,...,\n} {
           \coordinate (\k) at ({cos(deg(2*pi*\k/\n))},{sin(deg(2*pi*\k/\n))});
           \node at (\k) {$\bullet$};
           \draw[thick] ({cos(deg(2*pi*\k/\n))},{sin(deg(2*pi*\k/\n))}) -- ({cos(deg(2*pi*(\k+1)/\n))},{sin(deg(2*pi*(\k+1)/\n))});
           \draw[thick] ({cos(deg(2*pi*\k/\n))},{sin(deg(2*pi*\k/\n))}) -- ({cos(deg(2*pi*(\k+2)/\n))},{sin(deg(2*pi*(\k+2)/\n))});
           \draw[thick] ({cos(deg(2*pi*\k/\n))},{sin(deg(2*pi*\k/\n))}) -- ({cos(deg(2*pi*(\k+3)/\n))},{sin(deg(2*pi*(\k+3)/\n))});
           \draw[thick] ({cos(deg(2*pi*\k/\n))},{sin(deg(2*pi*\k/\n))}) -- ({cos(deg(2*pi*(\k+4)/\n))},{sin(deg(2*pi*(\k+4)/\n))});
        }
        \end{tikzpicture}
        };
        \node (ddim) at ({9*pi/4},2.5) {$28$};
    % \draw ({9*pi/4},0.1) -- ({9*pi/4},-0.1) node[below]{$d$};
    \draw[dashed] ({9*pi/4},-0.1) -- (d);
    \draw ({10*pi/4},0.1) -- ({10*pi/4},-0.1) node[below]{$\frac{10\pi}{8}$};
\end{tikzpicture}
\end{center}
The number above each graph denotes the number of edges, i.e., the dimension of $C^{\epsilon}_{1}(X,d_{X})$. As another example, when $n=9=2(4)+1$ we have
\begin{center}
\begin{tikzpicture}
\draw[->] ({2*pi/4},0) -- ({2*pi+2*pi/4+0.75},0) node[right]{$\epsilon$}; 
    \draw ({2*pi/4},0.1) -- ({2*pi/4},-0.1) node[below]{$\frac{2\pi}{9}$};
    % \draw ({3*pi/4},0.1) -- ({3*pi/4},-0.1) node[below]{$a$};
    \node (a) at ({3*pi/4},1.3) {%one connection
        \begin{tikzpicture}[scale=0.6]
        \def\n{9}
        \foreach \k in {1,2,...,\n} {
       \coordinate (\k) at ({cos(deg(2*pi*\k/\n))},{sin(deg(2*pi*\k/\n))});
       \node at (\k) {$\bullet$};
      \draw[thick] ({cos(deg(2*pi*\k/\n))},    {sin(deg(2*pi*\k/\n))}) -- ({cos(deg(2*pi*(\k+1)/\n))},{sin(deg(2*pi*(\k+1)/\n))});
        }
        \end{tikzpicture}
        };
    \node (adim) at ({3*pi/4},2.5) {$9$};
    \draw[dashed] ({3*pi/4},-0.1) -- (a);
    \draw ({4*pi/4},0.1) -- ({4*pi/4},-0.1) node[below]{$\frac{4\pi}{9}$};
    \node (b) at ({5*pi/4},1.3) {%two connections
        \begin{tikzpicture}[scale=0.6]
        \def\n{9}
        \foreach \k in {1,2,...,\n} {
           \coordinate (\k) at ({cos(deg(2*pi*\k/\n))},{sin(deg(2*pi*\k/\n))});
           \node at (\k) {$\bullet$};
           \draw[thick] ({cos(deg(2*pi*\k/\n))},{sin(deg(2*pi*\k/\n))}) -- ({cos(deg(2*pi*(\k+1)/\n))},{sin(deg(2*pi*(\k+1)/\n))});
           \draw[thick] ({cos(deg(2*pi*\k/\n))},{sin(deg(2*pi*\k/\n))}) -- ({cos(deg(2*pi*(\k+2)/\n))},{sin(deg(2*pi*(\k+2)/\n))});
        }
        \end{tikzpicture}
        };
        \node (bdim) at ({5*pi/4},2.5) {$18$};
    % \draw ({5*pi/4},0.1) -- ({5*pi/4},-0.1) node[below]{$b$};
    \draw[dashed] ({5*pi/4},-0.1) -- (b); 
    \draw ({6*pi/4},0.1) -- ({6*pi/4},-0.1) node[below]{$\frac{6\pi}{9}$};
    \node (c) at ({7*pi/4},1.3) {%three connections
            \begin{tikzpicture}[scale=0.6]
        \def\n{9}
        \foreach \k in {1,2,...,\n} {
           \coordinate (\k) at ({cos(deg(2*pi*\k/\n))},{sin(deg(2*pi*\k/\n))});
           \node at (\k) {$\bullet$};
           \draw[thick] ({cos(deg(2*pi*\k/\n))},{sin(deg(2*pi*\k/\n))}) -- ({cos(deg(2*pi*(\k+1)/\n))},{sin(deg(2*pi*(\k+1)/\n))});
           \draw[thick] ({cos(deg(2*pi*\k/\n))},{sin(deg(2*pi*\k/\n))}) -- ({cos(deg(2*pi*(\k+2)/\n))},{sin(deg(2*pi*(\k+2)/\n))});
           \draw[thick] ({cos(deg(2*pi*\k/\n))},{sin(deg(2*pi*\k/\n))}) -- ({cos(deg(2*pi*(\k+3)/\n))},{sin(deg(2*pi*(\k+3)/\n))});
        }
        \end{tikzpicture}
        };
        \node (cdim) at ({7*pi/4},2.5) {$27$};
    % \draw ({7*pi/4},0.1) -- ({7*pi/4},-0.1) node[below]{$c$};
    \draw[dashed] ({7*pi/4},-0.1) -- (c);
    \draw ({8*pi/4},0.1) -- ({8*pi/4},-0.1) node[below]{$\frac{8\pi}{9}$};
    \node (d) at ({9*pi/4},1.3) {%four connections
            \begin{tikzpicture}[scale=0.6]
        \def\n{9}
        \foreach \k in {1,2,...,\n} {
           \coordinate (\k) at ({cos(deg(2*pi*\k/\n))},{sin(deg(2*pi*\k/\n))});
           \node at (\k) {$\bullet$};
           \draw[thick] ({cos(deg(2*pi*\k/\n))},{sin(deg(2*pi*\k/\n))}) -- ({cos(deg(2*pi*(\k+1)/\n))},{sin(deg(2*pi*(\k+1)/\n))});
           \draw[thick] ({cos(deg(2*pi*\k/\n))},{sin(deg(2*pi*\k/\n))}) -- ({cos(deg(2*pi*(\k+2)/\n))},{sin(deg(2*pi*(\k+2)/\n))});
           \draw[thick] ({cos(deg(2*pi*\k/\n))},{sin(deg(2*pi*\k/\n))}) -- ({cos(deg(2*pi*(\k+3)/\n))},{sin(deg(2*pi*(\k+3)/\n))});
           \draw[thick] ({cos(deg(2*pi*\k/\n))},{sin(deg(2*pi*\k/\n))}) -- ({cos(deg(2*pi*(\k+4)/\n))},{sin(deg(2*pi*(\k+4)/\n))});
        }
        \end{tikzpicture}
        };
        \node (ddim) at ({9*pi/4},2.5) {$36$};
    % \draw ({9*pi/4},0.1) -- ({9*pi/4},-0.1) node[below]{$d$};
    \draw[dashed] ({9*pi/4},-0.1) -- (d);
    \draw ({10*pi/4},0.1) -- ({10*pi/4},-0.1) node[below]{$\frac{10\pi}{9}$};
\end{tikzpicture}
\end{center}
It is an interesting exercise in combinatorics to identify the dimension of $C^{\epsilon}_{k}(X,d_{X})$ for $k>1$. Meanwhile, as we will see later in Example~\ref{ex:Hausmannrootshomology}, the homology groups drastically reduce these dimensions in this example. 
\end{example}

The family of quotient spaces in Example~\ref{ex:persistencehomology} indexed by $\epsilon$ is an example of a quotient persistence vector space, which we now formally define. This notion will also be important in the classification of persistence vector spaces.  

\begin{definition}
Let $\mathfrak{V}=\{V^{\epsilon},f^{(\epsilon',\epsilon)}\}$ be a persistence vector space and let $\mathfrak{U}=\{U^{\epsilon},f_{|\mathfrak{U}}^{(\epsilon',\epsilon)}\}$ be a \define{sub-persistence vector space}, i.e., $U^{\epsilon}\subseteq V^{\epsilon}$ for all $\epsilon\in[0,\infty)$ and $f_{|\mathfrak{U}}^{(\epsilon',\epsilon)}(u)=f^{(\epsilon',\epsilon)}(u)$ for all $u\in U^{\epsilon}$ and for all $\epsilon,\epsilon'\in[0,\infty)$ with $\epsilon\le\epsilon'$. 
The \define{quotient persistence vector space} $\mathfrak{V}/\mathfrak{U}=\{(V/U)^{\epsilon},\mathsf{f}^{(\epsilon',\epsilon)}\}$ has $(V/U)^{\epsilon}=V^{\epsilon}/U^{\epsilon}$ being the quotient vector space for each $\epsilon\in[0,\infty)$ and $\mathsf{f}^{(\epsilon',\epsilon)}(v+U^{\epsilon}):=f^{(\epsilon',\epsilon)}+U^{\epsilon'}$, which defines a linear transformation, for all $v+U^{\epsilon}\in V^{\epsilon}/U^{\epsilon}$ and for all $\epsilon,\epsilon'\in[0,\infty)$ with $\epsilon\le\epsilon'$.
\end{definition}

To make sense of the kernel and image of the boundary operator as persistence vector spaces used to define homology, we first provide a general definition for morphisms of persistence objects, and then specialize to the case of persistence vector spaces momentarily. 

\begin{definition}
Let $\mathbf{C}$ be any category. Given two persistence objects $F,G:\boldsymbol{[0,\infty)}\to \mathbf{C}$, a \define{morphism of persistence objects} from $F$ to $G$ is a natural transformation $\eta:F\Rightarrow G$. 
\end{definition}

Rather than spelling out what this means abstractly, which follows from Definitions~\ref{defn:nattrans} and~\ref{defn:persistenceobject}, we write it out explicitly for a morphism between persistence vector spaces. 

\begin{definition}
\label{defn:morphismofpersvect}
Given two persistence vector spaces $\mathfrak{V}=\{V^{\epsilon}, f^{(\epsilon',\epsilon)}\}$ and $\mathfrak{W}=\{W^{\epsilon}, g^{(\epsilon',\epsilon)}\}$, a \define{morphism of persistence vector spaces} (sometimes also called a \define{linear transformation}) $\eta:\mathfrak{V}\to\mathfrak{W}$ consists of a collection of linear transformations $\eta^{\epsilon}:V^{\epsilon}\to W^{\epsilon}$ such that 
$g^{(\epsilon',\epsilon)}\circ \eta^{\epsilon}=\eta^{\epsilon'}\circ f^{(\epsilon',\epsilon)}$ for all $\epsilon\le\epsilon'$. 
\end{definition}

\begin{example}
\label{ex:VRkernimage}
For the relevant example involving the Vietoris--Rips complex associated with a finite metric space $(X,d_{X})$, the boundary maps of the chain complexes 
\begin{equation}
\partial^{\epsilon}_{n}:C^{\epsilon}_{n}(X,d_{X})\to C^{\epsilon}_{n-1}(X,d_{X})
\end{equation}
define a morphism $\partial_{n}:C_{n}(X,d_{X})\to C_{n-1}(X,d_{X})$ of persistence vector spaces for each $n\in\N$. In fact, we have already seen this in Example~\ref{ex:Hausmann4case} for the $n^{\text{th}}$ roots of unity. Hence, $\ker(\partial_{n})$ becomes a sub-persistence vector space of $C_{n}(X,d_{X})$, while $\mathrm{im}(\partial_{n+1})$ becomes a sub-persistence vector space of $\ker(\partial_{n})$. 
\end{example}

\begin{definition}
Let $\eta:\mathfrak{V}\to\mathfrak{W}$ be a morphism of persistence vector spaces, as in Definition~\ref{defn:morphismofpersvect}. 
The \define{image persistence vector space} $\mathrm{im}(\eta)$ is the sub-persistence vector space of $\mathfrak{W}$ whose  constituent vector spaces are given by $\eta^{\epsilon}(V^{\epsilon})\subseteq W^{\epsilon}$. The \define{kernel persistence vector space} $\mathrm{ker}(\eta)$ is the sub-persistence vector space of $\mathfrak{V}$ whose consituent vector spaces are given by $\mathrm{ker}(\eta^{\epsilon})\subseteq V^{\epsilon}$. 
\end{definition}

\begin{example}
Continuing from Example~\ref{ex:VRkernimage}, the associated quotient persistence vector space $\ker(\partial_{n})/\mathrm{im}(\partial_{n+1})$ defines the $n^{\text{th}}$ degree homology persistence vector space $H_{n}(X,d_{X})$.
\end{example}

Moving on, any chain map $f_{\bullet}:(C_{\bullet},\partial_{\bullet})\to(D_{\bullet},\partial_{\bullet})$ induces a canonical linear transformation on the homology $\mathsf{f}_{n}:H_{n}(C)\to H_{n}(D)$ defined as follows. 
Given an $n$-cycle $z\in\ker(\partial_{n})$ with its associated equivalence class in $H_{n}(C)$ denoted by $[z]$, the application of $\mathsf{f}_{n}$ to $[z]$ is defined by 
$\mathsf{f}_{n}\big([z]\big):=\big[f_{n}(z)\big]$. It follows that $\mathsf{f}_{n}$ is well-defined and a linear transformation from the fact that $f_{\bullet}$ is a chain map. It also follows from this definition that $\mathsf{id}_{n}:H_{n}(C)\to H_{n}(C)$ is the identity map and $(\mathsf{g}\circ \mathsf{f})_{n}=\mathsf{g}_{n}\circ \mathsf{f}_{n}$ as linear transformations from $H_{n}(C)$ to $H_{n}(E)$ for a pair of chain maps  $f_{\bullet}:(C_{\bullet},\partial_{\bullet})\to(D_{\bullet},\partial_{\bullet})$  and $g_{\bullet}:(D_{\bullet},\partial_{\bullet})\to(E_{\bullet},\partial_{\bullet})$. 
One can show that the assignment 
\begin{align}
\mathbf{ChainComp}&\xrightarrow{H_{n}}\mathbf{Vect} \nonumber\\
(C_{\bullet},\partial_{\bullet})&\xmapsto{\;\;\;}H_{n}(C) \nonumber\\
\left((C_{\bullet},\partial_{\bullet})\xrightarrow{f_{\bullet}}(D_{\bullet},\partial_{\bullet})\right)&\xmapsto{\;\;\;} \left(H_{n}(C) \xrightarrow{\mathsf{f}_n} H_{n}(D) \right)
\end{align}
defines a functor. 

\begin{example}
\label{ex:persistenthomologyofmetricspace}
As an example, given a finite metric space $(X,d_{X})$, we have already seen how we first obtain a filtered simplicial complex and then a chain complex together with its associated chain maps $f^{(\epsilon',\epsilon)}_{\bullet}$ in~\eqref{eqn:persistencechainmaps}. Each of these maps therefore induces a linear transformation $\mathsf{f}^{(\epsilon',\epsilon)}_{n}:H_{n}^{\epsilon}(X,d_{X})\to H_{n}^{\epsilon'}(X,d_{X})$ at the level of homology. The associated persistence vector space $\mathfrak{H}_{n}=\big\{H_{n}^{\epsilon}(X,d_{X}),\mathsf{f}^{(\epsilon',\epsilon)}_{n}\big\}$ is called the \define{$n^{\text{th}}$ degree persistence homology} associated with $(X,d_{X})$. 
For each $n\in\N\cup\{0\}$, the \define{persistent homology groups} are the images of the maps $\mathsf{f}^{(\epsilon',\epsilon)}_{n}: H_{n}^{\epsilon}(X,d_{X})\to H_{n}^{\epsilon'}(X,d_{X})$ for each pair of real numbers $\epsilon,\epsilon'$ with $0\le \epsilon\le \epsilon'$. The ranks of the persistent homology groups are called \define{persistent Betti numbers}. 

Thus, the persistence vector space $\mathfrak{H}_{n}$ contains not only the information about the topology of the Vietoris--Rips complex for each value of $\epsilon$ through the vector space $H_{n}^{\epsilon}(X,d_{X})$, it also contains information about the relationships between the topologies for different values of $\epsilon$ through the maps $\mathsf{f}^{(\epsilon',\epsilon)}_{n}$~\cite{ZoCa05,edelsbrunner2002topological,Virk2022,CCGGO09}. Alternatively, the $n^{\text{th}}$ persistent homology groups between value $\epsilon$ and $\epsilon'$ may equivalently be defined as
\begin{equation}
H^{(\epsilon',\epsilon)}_{n}(X,d_{X})=\ker(\partial^{\epsilon}_{n})/(\mathrm{im}(\partial^{\epsilon'}_{n})\cap\ker(\partial^{\epsilon}_{n}))
\end{equation}
and the associated persistent Betti number is the dimension of $H^{(\epsilon',\epsilon)}_{n}(X,d_{X})$~\cite{edelsbrunner2002topological,ZoCa05}. 
\end{example}

\begin{example}
\label{ex:Hausmannrootshomology}
Here, we will calculate the homology groups and persistent homology vector spaces associated with Example~\ref{ex:Hausmann4case}. For $\epsilon=a$ satisfying $0<a\le\frac{2\pi}{6}$, all homology groups are trivial \emph{except} for the degree $0$ homology, which is a vector space of dimension $6$. The reason is because every basis element gets sent to $0$, and so each basis element represents a nontrivial homology class. 
Visually, this says that the simplicial complex has $6$ connected components. 
For $\epsilon=b$ satisfying $\frac{2\pi}{6}<b\le\frac{4\pi}{6}$, there is one nontrivial homology vector space in degrees $0$ and $1$, and each is one-dimensional. The degree $1$ space comes from the kernel of $\partial^{b}_{1}$ from~\eqref{eq:Hauspb1}. A quick verification shows that 
\begin{equation}
\partial^{b}_{1}\big(\mathbf{e}^{(1)}_{1}+\cdots+\mathbf{e}^{(1)}_{6}\big)=0
\end{equation}
so that $\mathbf{e}^{(1)}_{1}+\cdots+\mathbf{e}^{(1)}_{6}=\ket{e_{01}}+\cdots+\ket{e_{50}}$ is a vector representing the homology class in degree $1$. This vector represents the cycle obtained by concatenating all edges going around counter-clockwise. 
The generator for degree $0$ is a bit more subtle. The image of $\partial^{b}_{1}$ is a $5$-dimensional subspace of $\mathbb{F}^{6}$ spanned by the columns of the associated matrix~\eqref{eq:Hauspb1}. A vector perpendicular to these columns is given by $\mathbf{e}^{(0)}_{1}+\cdots+\mathbf{e}^{(0)}_{6}=\ket{x_0}+\cdots\ket{x_5}$, and this therefore represents the cohomology class since this vector is not in the image of $\partial^{b}_{1}$ (technically, since the homology group is a quotient vector space, any of the vectors $\mathbf{e}^{(0)}_{j}$ can be used as a representative).
Visually, this class represents the fact that the simplicial complex now has a single connected component. 
For $\epsilon=c$ satisfying $\frac{4\pi}{6}<c\le\frac{6\pi}{6}$, the degree $1$ homology vector space vanishes, but now there is a nontrivial degree $2$ homology vector space. This comes from the sum of the eight faces making up the octahedron, which, using our convention where the orientation satisfies the right-hand-rule, is represented by the vector $\mathbf{e}^{(2)}_{1}+\cdots+\mathbf{e}^{(2)}_{8}$. Finally, when $\epsilon=d$ satisfies $\frac{6\pi}{6}<d$, we have no nontrivial homology vector spaces since all cycles are boundaries. Putting this all together, we have the following picture of the associated homology vector spaces. 
\begin{center}
\begin{tikzpicture}
\draw[->] (0,0) -- ({2*pi+0.5},0) node[right]{$\epsilon$}; 
    \draw ({0},0.1) -- ({0},-0.1) node[below]{$0$};
    \draw ({2*pi/8},0.1) -- ({2*pi/8},-0.1) node[below]{$a$};
    \draw ({2*pi/4},0.1) -- ({2*pi/4},-0.1) node[below]{$\frac{2\pi}{6}$};
    \draw ({3*pi/4},0.1) -- ({3*pi/4},-0.1) node[below]{$b$};
    \draw ({4*pi/4},0.1) -- ({4*pi/4},-0.1) node[below]{$\frac{4\pi}{6}$};
    \draw ({5*pi/4},0.1) -- ({5*pi/4},-0.1) node[below]{$c$};
    \draw ({6*pi/4},0.1) -- ({6*pi/4},-0.1) node[below]{$\frac{6\pi}{6}$};
    \draw ({7*pi/4},0.1) -- ({7*pi/4},-0.1) node[below]{$d$};
    \draw ({8*pi/4},0.1) -- ({8*pi/4},-0.1) node[below]{$\frac{8\pi}{6}$};
\node at (-0.4,0.5) {$H^{\epsilon}_0$};
\node at (-0.4,1.5) {$H^{\epsilon}_1$};
\node at (-0.4,2.5) {$H^{\epsilon}_2$};
\node at (-0.4,3.5) {$H^{\epsilon}_3$};
\node at (-1.0,2.0) {\rotatebox{90}{homology degree}};
\node at ({2*pi/8},4.5) {%
    \begin{tikzpicture}[scale=0.425]
    \def\n{6}
    \foreach \k in {1,2,...,\n} {
       \coordinate (\k) at ({cos(deg(2*pi*\k/\n))},{sin(deg(2*pi*\k/\n))});
       \node at (\k) {$\bullet$};
    }
    \end{tikzpicture}
    };
\node (a3) at ({2*pi/8},3.5) {$0$};
\node (a2) at ({2*pi/8},2.5) {$0$};
\node (a1) at ({2*pi/8},1.5) {$0$};
\node (a0) at ({2*pi/8},0.5) {$\mathbb{F}^{6}$};
\node at ({3*pi/4},4.5) {%
    \begin{tikzpicture}[scale=0.425]
    \def\n{6}
    \foreach \k in {1,2,...,\n} {
       \coordinate (\k) at ({cos(deg(2*pi*\k/\n))},{sin(deg(2*pi*\k/\n))});
       \node at (\k) {$\bullet$};
       \draw[thick] ({cos(deg(2*pi*\k/\n))},{sin(deg(2*pi*\k/\n))}) -- ({cos(deg(2*pi*(\k+1)/\n))},{sin(deg(2*pi*(\k+1)/\n))});
    }
    \end{tikzpicture}
    };
\node (b3) at ({3*pi/4},3.5) {$0$};
\node (b2) at ({3*pi/4},2.5) {$0$};
\node (b1) at ({3*pi/4},1.5) {$\mathbb{F}$};
\node (b0) at ({3*pi/4},0.5) {$\mathbb{F}$};
\node at ({5*pi/4},4.5) {%
    \begin{tikzpicture}[scale=0.425]
    \def\n{6}
    \def\e{2.4}%2pi/n<epsilon<3pi/n gives a two-dimensional surface (n=6 gives 2-sphere) 
    \foreach \k in {1,2,...,\n} {
       \coordinate (\k) at ({cos(deg(2*pi*\k/\n))},{sin(deg(2*pi*\k/\n))});
       \node at (\k) {$\bullet$};
    %   \fill[opacity=0.1] (\k) circle (\e);
       \draw[thick] ({cos(deg(2*pi*\k/\n))},{sin(deg(2*pi*\k/\n))}) -- ({cos(deg(2*pi*(\k+1)/\n))},{sin(deg(2*pi*(\k+1)/\n))});
       \draw[dashed] ({cos(deg(2*pi*\k/\n))},{sin(deg(2*pi*\k/\n))}) -- ({cos(deg(2*pi*(\k+2)/\n))},{sin(deg(2*pi*(\k+2)/\n))});
    }
    \foreach \k in {1,3,5} {
       \draw[thick] ({cos(deg(2*pi*\k/\n))},{sin(deg(2*pi*\k/\n))}) -- ({cos(deg(2*pi*(\k+2)/\n))},{sin(deg(2*pi*(\k+2)/\n))});
    }
    \fill[opacity=0.1] ({cos(deg(2*pi*1/\n))},{sin(deg(2*pi*1/\n))}) -- ({cos(deg(2*pi*3/\n))},{sin(deg(2*pi*3/\n))}) -- ({cos(deg(2*pi*5/\n))},{sin(deg(2*pi*5/\n))}) -- cycle;
    \fill[opacity=0.25] ({cos(deg(2*pi*1/\n))},{sin(deg(2*pi*1/\n))}) -- ({cos(deg(2*pi*2/\n))},{sin(deg(2*pi*2/\n))}) -- ({cos(deg(2*pi*3/\n))},{sin(deg(2*pi*3/\n))}) -- cycle;
    \fill[opacity=0.15] ({cos(deg(2*pi*1/\n))},{sin(deg(2*pi*1/\n))}) -- ({cos(deg(2*pi*5/\n))},{sin(deg(2*pi*5/\n))}) -- ({cos(deg(2*pi*0/\n))},{sin(deg(2*pi*0/\n))}) -- cycle;
    \fill[opacity=0.35] ({cos(deg(2*pi*3/\n))},{sin(deg(2*pi*3/\n))}) -- ({cos(deg(2*pi*4/\n))},{sin(deg(2*pi*4/\n))}) -- ({cos(deg(2*pi*5/\n))},{sin(deg(2*pi*5/\n))}) -- cycle;
    \end{tikzpicture}
    };
\node (c3) at ({5*pi/4},3.5) {$0$};
\node (c2) at ({5*pi/4},2.5) {$\mathbb{F}$};
\node (c1) at ({5*pi/4},1.5) {$0$};
\node (c0) at ({5*pi/4},0.5) {$\mathbb{F}$};
\node at ({7*pi/4},4.5) {%
    \begin{tikzpicture}[scale=0.425]
    \def\n{6}
    \def\e{3.14}%2pi/n<epsilon<3pi/n gives a two-dimensional surface (n=6 gives 2-sphere) 
    \foreach \k in {1,2,...,\n} {
       \coordinate (\k) at ({cos(deg(2*pi*\k/\n))},{sin(deg(2*pi*\k/\n))});
       \node at (\k) {$\bullet$};
    %   \fill[opacity=0.1] (\k) circle (\e);
       \draw[thick] ({cos(deg(2*pi*\k/\n))},{sin(deg(2*pi*\k/\n))}) -- ({cos(deg(2*pi*(\k+1)/\n))},{sin(deg(2*pi*(\k+1)/\n))});
       \draw[thick] ({cos(deg(2*pi*\k/\n))},{sin(deg(2*pi*\k/\n))}) -- ({cos(deg(2*pi*(\k+2)/\n))},{sin(deg(2*pi*(\k+2)/\n))});
       \draw[thick] ({cos(deg(2*pi*\k/\n))},{sin(deg(2*pi*\k/\n))}) -- ({cos(deg(2*pi*(\k+3)/\n))},{sin(deg(2*pi*(\k+3)/\n))});
    }
    \fill[opacity=0.1] ({cos(deg(2*pi*1/\n))},{sin(deg(2*pi*1/\n))}) -- ({cos(deg(2*pi*2/\n))},{sin(deg(2*pi*2/\n))}) -- ({cos(deg(2*pi*3/\n))},{sin(deg(2*pi*3/\n))}) -- ({cos(deg(2*pi*4/\n))},{sin(deg(2*pi*4/\n))}) -- ({cos(deg(2*pi*5/\n))},{sin(deg(2*pi*5/\n))}) -- (1,0) -- cycle;
    \end{tikzpicture}
    };
\node (d3) at ({7*pi/4},3.5) {$0$};
\node (d2) at ({7*pi/4},2.5) {$0$};
\node (d1) at ({7*pi/4},1.5) {$0$};
\node (d0) at ({7*pi/4},0.5) {$\mathbb{F}$};
%chain maps
\foreach \n in {0,1,2,3} {
\draw[->] (a\n) -- (b\n);
\draw[->] (b\n) -- (c\n);
\draw[->] (c\n) -- (d\n);
}
\draw[->] (a0) -- node[above]{\small $\mathsf{f}_{0}^{(b,a)}$} (b0);
\draw[->] (b1) -- (c1);
\draw[->] (c1) -- (d1);
\draw[->] (c2) -- (d2);
\end{tikzpicture}
\end{center}
Here, $\mathsf{f}^{(b,a)}_{0}$ is given by the unique linear extension sending $\mathbf{e}^{(0)}_{j}$ to $1$, the generator of $\mathbb{F}$, for all $j\in\{1,\dots,6\}$. The other unlabelled maps are either all the zero map or the identity map if going from $\mathbb{F}$ to itself. If $\mathsf{f}^{(\epsilon',\epsilon)}_{k}:H^{\epsilon}_{k}\to H^{\epsilon'}_{k}$ denotes the associated map on the degree $k$ persistent homology spaces, then $\mathsf{f}^{(\epsilon',\epsilon)}_{k}$ is the identity whenever $\epsilon'-\epsilon\le\frac{2\pi}{6}$. Thus, for example, we say that the single nontrivial degree $1$ homology class persists on the interval $\left(\frac{2\pi}{6},\frac{4\pi}{6}\right]$. Similarly, the single nontrivial degree $2$ homology class persists on the interval $\left(\frac{4\pi}{6},\frac{6\pi}{6}\right]$.
\end{example}

Thus, combining our functors, we may summarize what we have done. First, for each $\epsilon\in[0,\infty)$, we have a sequence of functors 
\[
\xy 0;/r.25pc/:
   (-30,-7.5)*+{\mathbf{FinMetSpace}}="FMS";
   (-15,7.5)*+{\mathbf{SimpComp}}="SC";
   (15,7.5)*+{\mathbf{ChainComp}}="CC";
   (30,-7.5)*+{\mathbf{Vect}}="V";
   {\ar"FMS";"SC"^{\mathcal{R}^{\epsilon}}};
   {\ar"SC";"CC"^{C_{\bullet}}};
   {\ar"CC";"V"^{H_{n}}};
   {\ar"FMS";"V"^{H^{\epsilon}_{n}}};
\endxy
\]
where $H_{n}^{\epsilon}:=H_{n}\circ C_{\bullet}\circ \mathcal{R}^{\epsilon}$, 
which takes a finite metric space $(X,d_{X})$, produces a simplicial complex $\mathcal{R}^{\epsilon}(X,d_{X})$, which then is used to construct a chain complex $\big(C_{\bullet}^{\epsilon}(X,d_{X}),\partial_{\bullet}\big)$, which is then used to construct the $n^{\text{th}}$ homology vector space $H_{n}^{\epsilon}(X,d_{X})$. 

Moreover, given a pair of nonnegative numbers $\epsilon\le\epsilon'$, we obtain a natural transformation $H^{\epsilon}_{n}\Rightarrow H^{\epsilon'}_{n}$, which itself stems from the natural transformation $\mathcal{R}^{\epsilon}\Rightarrow\mathcal{R}^{\epsilon'}$. The meaning of the natural transformation $H^{\epsilon}_{n}\Rightarrow H^{\epsilon'}_{n}$ is that it assigns to every finite metric space $(X,d_{X})$ a linear transformation $H^{\epsilon}_{n}(X,d_{X})\to H^{\epsilon'}_{n}(X,d_{X})$, which in effect tracks the presence of homology classes at various scales. These maps detect persistent homology classes as defined in Example~\ref{ex:persistenthomologyofmetricspace}.

%%%%%%%%%%%%%%%%%%%%%%%%%%%%%%%%%%%%%%%%%%%%%%%%
\subsection{Persistence diagrams and the Bottleneck distance}
\label{subsec:bottleneck-distance}
%%%%%%%%%%%%%%%%%%%%%%%%%%%%%%%%%%%%%%%%%%%%%%%%

The bottleneck distance is a metric defined on the collection of persistence diagrams in topological data analysis, the latter of which are motivated by the examples of persistence vector spaces from Examples~\ref{ex:PabAtom} and~\ref{ex:PabsumPcd}. We first provide an intuitive idea before reviewing the rigorous definition. Firstly, a persistence diagram is a \emph{multiset} of points 
$(a_i,b_i)\in[0,\infty)\times[0,\infty]$, with $a_i<b_i$, indexed by a finite set $I$. 
Note that a multiset allows the possibility for repetitions, unlike in ordinary sets. Thus, a persistence diagram can be viewed as a function $I\to[0,\infty)\times[0,\infty]$ such that the second coordinate is always greater than the first coordinate. The purpose of a persistence diagram is that it provides a characterizing invariant of a persistence vector space up to isomorphism, much like the dimension of a vector space characterizes it up to isomorphism. Interestingly, unlike the case of vector spaces, where the difference between finite-dimensional vector spaces comes in discrete steps, the difference between two persistence vector spaces could be, in principal, any real number. 

The theorem characterizing persistence vector spaces in terms of such invariants utilizes the notion of an isomorphism between persistence vector spaces. 

\begin{definition} 
An \define{isomorphism of persistence vector spaces} is a morphism $\eta:\mathfrak{V}\to\mathfrak{W}$ of persistence vector spaces that satisfies the condition that there exists another morphism $\chi:\mathfrak{W}\to\mathfrak{V}$ of persistence vector spaces such that $\chi^{\epsilon}\circ\eta^{\epsilon}=\id_{V^{\epsilon}}$ and $\eta^{\epsilon}\circ \chi^{\epsilon}=\id_{W^{\epsilon}}$ for all $\epsilon\in[0,\infty)$. In such a case, $\mathfrak{V}$ and $\mathfrak{W}$ are said to be \define{isomorphic}. 
\end{definition}

The following notion of pointwise finite dimensionality will be used in a way analogous to finite dimensionality of vector spaces~\cite{CrawleyBoevey2015}. 

\begin{definition}
A persistence vector space $\mathfrak{V}=\{V^{\epsilon},f^{(\epsilon',\epsilon)}\}$ is \define{pointwise finite dimensional} iff $V^{\epsilon}$ is finite-dimensional for all $\epsilon\in[0,\infty)$. 
\end{definition}

\begin{example}
Let $(X,d_{X})$ be a finite metric space. Then the $n^\text{th}$ degree persistence homology  $H^{\epsilon}_{n}(X,d_{X})$ is a pointwise finite-dimensional persistence vector space. This follows immediately from the fact that the Vietoris--Rips simplicial complex $\mathcal{R}^{\epsilon}(X,d_{X})$ has a finite number of simplices for all $\epsilon\in[0,\infty)$ so that the associated chain complex $C_{n}^{\epsilon}(X,d_{X})$ is finite-dimensional for all $\epsilon\in[0,\infty)$ and for all $n\in\N\cup\{0\}$. 
\end{example}

We will now extend Example~\ref{ex:PabsumPcd} to allow for more direct sums. For this, we first recall that a \define{multiset} $\{\!\!\{z_{i}\}\!\!\}_{i\in I}$ of points in any set $Z$, indexed by a set $I$, is a function $I\to Z$ %\footnote
\footnote{A multiset of points in $Z$ is often defined as a function $Z\to\N$, where each element $z$ is associated with a natural number indicating its multiplicity. Since our index sets will always be finite, our present definition is equivalent to this one.}.
%end footnote
Such a multiset will either be written as $I\to Z$ or $\{\!\!\{z_{i}\}\!\!\}_{i\in I}$. 

\begin{example}
\label{ex:dsumPab}
Let $I\to [0,\infty)\times[0,\infty]$ be any multiset of points $(a_i,b_i)$ with $a_i<b_i$ for all $i\in I$. Then this defines a pointwise finite-dimensional persistence vector space given by 
\begin{equation}
\mathbb{F}(a_1,b_1)\oplus \mathbb{F}(a_2,b_2)\oplus\cdots\oplus \mathbb{F}(a_n,b_n),
\end{equation}
where $n=\# I$. 
As a concrete example, we can visually represent the persistence vector space 
\begin{equation}
\mathbb{F}(1,6)\oplus \mathbb{F}(2,5)\oplus \mathbb{F}(0,4) \oplus \mathbb{F}(3,9) \oplus \mathbb{F}(7,8) \oplus \mathbb{F}(2,5)
\end{equation}
as
\[
\begin{tikzpicture}[scale=0.8]
\draw (0,0) -- (10,0);
\foreach \x in {0,1,...,10} {
\draw (\x,0.1) -- (\x,-0.1) node[below]{$\x$};
}
\draw[very thick] (1,0.5) -- (6,0.5);
\draw[very thick] (2,1.0) -- (5,1.0);
\draw[very thick] (0,1.5) -- (4,1.5);
\draw[very thick] (3,2.0) -- (9,2.0);
\draw[very thick] (7,2.5) -- (8,2.5);
\draw[very thick] (2,3.0) -- (5,3.0);
\node (01) at (0.5,-0.8) {$\mathbb{F}$};
\node at (0.5,1.7) {\footnotesize $\mathbf{e}_1$};
\coordinate (01mid) at (1.0,-1.35);
\node (12) at (1.5,-0.8) {$\mathbb{F}^2$};
\node at (1.5,1.7) {\footnotesize $\mathbf{e}_1$};
\node at (1.5,0.7) {\footnotesize $\mathbf{e}_2$};
\coordinate (12mid) at (2.0,-1.35);
\node (23) at (2.5,-0.8) {$\mathbb{F}^4$};
\node at (2.5,1.7) {\footnotesize $\mathbf{e}_1$};
\node at (2.5,0.7) {\footnotesize $\mathbf{e}_2$};
\node at (2.5,1.2) {\footnotesize $\mathbf{e}_3$};
\node at (2.5,3.2) {\footnotesize $\mathbf{e}_4$};
\coordinate (23mid) at (3.0,-1.35);
\node (34) at (3.5,-0.8) {$\mathbb{F}^5$};
\node at (3.5,1.7) {\footnotesize $\mathbf{e}_1$};
\node at (3.5,0.7) {\footnotesize $\mathbf{e}_2$};
\node at (3.5,1.2) {\footnotesize $\mathbf{e}_3$};
\node at (3.5,3.2) {\footnotesize $\mathbf{e}_4$};
\node at (3.5,2.2) {\footnotesize $\mathbf{e}_5$};
\coordinate (34mid) at (4.0,-1.35);
\node (45) at (4.5,-0.8) {$\mathbb{F}^4$};
\node at (4.5,0.7) {\footnotesize $\mathbf{e}_1$};
\node at (4.5,1.2) {\footnotesize $\mathbf{e}_2$};
\node at (4.5,3.2) {\footnotesize $\mathbf{e}_3$};
\node at (4.5,2.2) {\footnotesize $\mathbf{e}_4$};
\coordinate (45mid) at (5.0,-1.35);
\node (56) at (5.5,-0.8) {$\mathbb{F}^2$};
\node at (5.5,0.7) {\footnotesize $\mathbf{e}_1$};
\node at (5.5,2.2) {\footnotesize $\mathbf{e}_2$};
\coordinate (56mid) at (6.0,-1.35);
\node (67) at (6.5,-0.8) {$\mathbb{F}$};
\node at (6.5,2.2) {\footnotesize $\mathbf{e}_1$};
\coordinate (67mid) at (7.0,-1.35);
\node (78) at (7.5,-0.8) {$\mathbb{F}^2$};
\node at (7.5,2.2) {\footnotesize $\mathbf{e}_1$};
\node at (7.5,2.7) {\footnotesize $\mathbf{e}_2$};
\coordinate (78mid) at (8.0,-1.35);
\node (89) at (8.5,-0.8) {$\mathbb{F}$};
\node at (8.5,2.2) {\footnotesize $\mathbf{e}_1$};
\coordinate (89mid) at (9.0,-1.35);
\node (910) at (9.5,-0.8) {$\{0\}$};
\draw[->] (01) to[out=-70,in=180] (01mid) node[below]{\tiny\scalebox{0.8}{$\left[\!\begin{smallmatrix}1\\0\end{smallmatrix}\!\right]$}} to[out=0,in=250]  (12);
\draw[->] (12) to[out=-70,in=180] (12mid) node[below]{\tiny\scalebox{0.8}{$\left[\!\begin{smallmatrix}1&0\\0&1\\0&0\\0&0\end{smallmatrix}\!\right]$}} to[out=0,in=250]  (23);
\draw[->] (23) to[out=-70,in=180] (23mid) node[below]{\tiny\scalebox{0.8}{$\left[\!\begin{smallmatrix}1&0&0&0\\0&1&0&0\\0&0&1&0\\0&0&0&1\\0&0&0&0\end{smallmatrix}\!\right]$}} to[out=0,in=250] (34);
\draw[->] (34) to[out=-70,in=180] (34mid) node[below]{\tiny\scalebox{0.8}{$\left[\!\begin{smallmatrix}0&1&0&0&0\\0&0&1&0&0\\0&0&0&1&0\\0&0&0&0&1\end{smallmatrix}\!\right]$}} to[out=0,in=250] (45);
\draw[->] (45) to[out=-70,in=180] (45mid) node[below]{\tiny\scalebox{0.8}{$\left[\!\begin{smallmatrix}1&0&0&0\\0&0&0&1\end{smallmatrix}\!\right]$}} to[out=0,in=250] (56);
\draw[->] (56) to[out=-70,in=180] (56mid) node[below]{\tiny\scalebox{0.8}{$\left[\!\begin{smallmatrix}0&1\end{smallmatrix}\!\right]$}} to[out=0,in=250] (67);
\draw[->] (67) to[out=-70,in=180] (67mid) node[below]{\tiny\scalebox{0.8}{$\left[\!\begin{smallmatrix}1\\0\end{smallmatrix}\!\right]$}} to[out=0,in=250] (78);
\draw[->] (78) to[out=-70,in=180] (78mid) node[below]{\tiny\scalebox{0.8}{$\left[\!\begin{smallmatrix}1&0\end{smallmatrix}\!\right]$}} to[out=0,in=250] (89);
\draw[->] (89) to[out=-70,in=180] (89mid) node[below]{\tiny\scalebox{0.8}{$\left[\!\begin{smallmatrix}\end{smallmatrix}\!\right]$}} to[out=0,in=250] (910);
\end{tikzpicture}
\]
The matrices shown here are defined with respect to a particular choice and ordering of the basis elements. That basis choice is illustrated as a unit vector $\mathbf{e}_{i}$ above each vector space on each corresponding chunk of the interval. 
\end{example}

The importance of this example is that it provides the general structure of any pointwise finite-dimensional persistence vector space (cf.\ Theorem~1.1 in Ref.~\cite{CrawleyBoevey2015}). 

\begin{proposition}
\label{prop:pfdpersdecomp}
Every pointwise finite-dimensional persistence vector space $\mathfrak{V}$ over a field $\mathbb{F}$ is isomorphic to a persistence vector space of the form 
\begin{equation}
\mathbb{F}(a_1,b_1)\oplus \cdots\oplus \mathbb{F}(a_n,b_n)
\end{equation}
for some multiset $\{\!\!\{(a_i,b_i)\}\!\!\}_{i\in I}$ with $n=\# I$ and 
$a_{i}\in[0,\infty)$, $b_{i}\in[0,\infty]$ satisfying $a_{i}<b_{i}$ for all $i\in I$ (if $n=0$, then the persistence vector space is $\{0\}$). 
Moreover, if $\mathfrak{V}$ is also isomorphic to
\begin{equation}
\mathbb{F}(c_1,d_1)\oplus \cdots\oplus \mathbb{F}(c_m,d_m)
\end{equation}
for some multiset $\{\!\!\{(c_j,d_j)\}\!\!\}_{j\in J}$ with $m=\# J$ and $c_{j}\in[0,\infty)$, $d_{j}\in[0,\infty]$ satisfying $c_{j}<d_{j}$ for all $j\in J$, then there exists a bijection $\gamma : I \to J$ such that $c_{\gamma(i)}=a_{i}$ and $d_{\gamma(i)}=b_{i}$ for all $i \in I$. 
\end{proposition}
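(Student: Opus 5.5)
The plan is to follow Crawley-Boevey's argument (Theorem~1.1 of Ref.~\cite{CrawleyBoevey2015}), splitting the proof into existence and uniqueness. Two remarks fix what is actually proved. First, for a completely arbitrary pointwise finite-dimensional $\mathfrak{V}$ the intervals produced by the argument may have open or closed endpoints, and there may be countably many of them. To match the statement as worded, with finitely many intervals of the form $[a,b)$, I will additionally use the fact that in the cases of interest (e.g.\ $\mathfrak{H}_n$ of a finite metric space) the maps $f^{(\epsilon',\epsilon)}$ are isomorphisms away from finitely many critical values. I will also use that these modules are right-continuous: by construction each $\mathcal{R}^{\epsilon}$ is constant on some interval $[\epsilon,\epsilon+\delta)$. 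This right-continuity forces every endpoint to be of the half-open type and the summand count to be finite. The core structural argument, however, is the general one.

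\emph{Existence.} For each $t$ and each cut of $[0,\infty)$ determined by a point $s$, I will define the subspaces
\begin{equation}
\mathrm{Im}^{-}_{s,t}=\bigcup_{r<s}\mathrm{im}\,f^{(t,r)},\qquad \mathrm{Ker}^{+}_{s,t}=\bigcap_{u>s}\ker f^{(u,t)},
\end{equation}
together with the analogous $\mathrm{Im}^{+}$ and $\mathrm{Ker}^{-}$. These are unions and intersections of chains of subspaces of the finite-dimensional space $V^{t}$, so they stabilize and are realized by a single $r$ or $u$. For an interval $I$ with lower cut $c$ and upper cut $d$, I will form the subquotient $V^{+}_{I,t}/V^{-}_{I,t}$ built from these (Crawley-Boevey's ``functorial filtration''). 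The key step is to show that, for fixed $t$, these subquotients over all intervals $I\ni t$ assemble into a direct sum decomposition of $V^{t}$. This uses the fact that the images form a chain and the kernels form a chain, which I will handle by a linear-algebra lemma on two chains of subspaces of a finite-dimensional space: a Zassenhaus/butterfly-type decomposition. I will then choose complements $W^{t}_{I}$ compatibly with the maps $f^{(t',t)}$, so that $W_I=\{W^t_I\}$ is a submodule isomorphic to a direct sum of copies of $\mathbb{F}(I)$, and show that $\mathfrak{V}=\bigoplus_I W_I$. The compatible choice is made by fixing the choice at one point of $I$ and transporting it by the structure maps, which are injective on these complements by construction. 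Pointwise finite-dimensionality guarantees that only finitely many intervals contain a given $t$.

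\emph{Uniqueness.} The multiplicity of $\mathbb{F}(a,b)$ in any decomposition can be read off intrinsically. By inclusion--exclusion, it is the dimension of the subquotient above at any $t\in[a,b)$, or equivalently a combination of ranks of $f^{(\epsilon',\epsilon)}$ for $\epsilon,\epsilon'$ near $a$ and $b$. These ranks are isomorphism invariants and are additive over direct sums. Evaluating them on $\mathbb{F}(c_j,d_j)$ gives $1$ exactly when $(c_j,d_j)=(a,b)$ and $0$ otherwise, so both multisets have the same multiplicities and the required bijection $\gamma$ exists. An alternative route is Azumaya's theorem, using that $\mathrm{End}(\mathbb{F}(a,b))\cong\mathbb{F}$ is local.

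The main obstacle I expect is the existence step, specifically proving that the functorial filtration splits compatibly across all $t$ simultaneously rather than just pointwise. This is the point at which I will lean on Crawley-Boevey's lemma about sections of a filtered finite-dimensional space, applied to the image chain and the kernel chain.
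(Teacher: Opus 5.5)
The paper gives no argument of its own. It simply cites Theorem~1.1 of Ref.~\cite{CrawleyBoevey2015}, whose proof is the functorial-filtration argument you outline, so your core route is the paper's. Your uniqueness argument (ranks or functorial subquotients, or Azumaya) is fine.

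The genuine gap is the reconciliation with the statement as worded. You are right that the general theorem gives intervals of arbitrary endpoint type, indexed by a possibly infinite set. That set need not be countable: $V^{t}=\mathbb{F}$ for all $t$, with $f^{(t',t)}=0$ whenever $t<t'$, is a direct sum of uncountably many modules each supported at a single point. So the proposition as worded is false for general pointwise finite-dimensional $\mathfrak{V}$, and the cited theorem does not give it either. Adding hypotheses proves a different statement, and you should say so explicitly.

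Moreover, the hypothesis you add fails for this paper's construction. Here $\mathcal{R}^{\epsilon}(X,d_X)$ is defined by $\mathrm{diam}(\sigma)<\epsilon$, so a simplex of diameter exactly $\epsilon$ is absent at $\epsilon$ but present at every $\epsilon'>\epsilon$. Thus $\mathcal{R}^{\epsilon}$ is constant on $(\epsilon-\delta,\epsilon]$ for small $\delta>0$, not on $[\epsilon,\epsilon+\delta)$. The modules $H_n(X,d_X)$ are therefore left-continuous, with bars $(a,b]$ in positive degree and $[0,b]$ or $[0,\infty)$ in degree $0$. Example~\ref{ex:Hausmannrootshomology} explicitly has the $H_1$ class of the sixth roots of unity alive on $(2\pi/6,4\pi/6]$. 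Pointwise dimension is an isomorphism invariant, and for any finite sum of modules $\mathbb{F}(a_i,b_i)$ it is a right-continuous function of $\epsilon$. Hence that module is isomorphic to no sum of the stated form. A correct version must either switch to $\mathrm{diam}(\sigma)\le\epsilon$, which gives the right-continuity and $[a,b)$ bars you want, or replace $\mathbb{F}(a,b)$ by interval modules of the matching endpoint type.

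There is also a smaller gap in your existence step. Fixing a complement at one point of $I$ and transporting it by the structure maps only defines $W^{t}_{I}$ for $t$ above that point. Below it you need compatible preimages, which injectivity on the complements does not supply. When $I$ has no least element there is no bottom point to start from. One way to fix this:
\begin{enumerate}
\item Take a decreasing sequence $t_1>t_2>\cdots$ coinitial in $I$.
\item Choose inductively a complement $W^{t_{k+1}}_{I}$ of $V^{-}_{I,t_{k+1}}$ in $V^{+}_{I,t_{k+1}}$ that $f^{(t_k,t_{k+1})}$ maps onto $W^{t_k}_{I}$. This is possible because, for $s\le t$ in $I$, $f^{(t,s)}$ maps $V^{+}_{I,s}$ onto $V^{+}_{I,t}$ and induces an isomorphism of subquotients.
\item Define $W^{t}_{I}$ as the image of $W^{t_k}_{I}$ for any $t_k\le t$.
\end{enumerate}
This is logically separate from the two-chain (sections) lemma. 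That lemma is purely pointwise and gives $V^{t}=\bigoplus_{I\ni t}W^{t}_{I}$ once complements are chosen. Your last paragraph assigns it the cross-$t$ compatibility, which it cannot provide.
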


Due to such a classification of persistence vector spaces, it is useful to introduce the persistence analogues of dimension, which are barcode diagrams~\cite{Carlsson2004barcodes} and persistence diagrams~\cite{edelsbrunner2002topological} (see also Refs.~\cite{ghrist2008barcodes,tinarrage2021notes} for a detailed review). 

\begin{definition}
A \define{birth-death diagram} is a multiset $\{\!\!\{(a_{i},b_{i})\}\!\!\}_{i\in I}$ of points in $[0,\infty)\times[0,\infty]$, indexed by a finite set $I$, with $a_{i}< b_{i}$ for all $i\in I$. 
For each $i\in I$, the elements $a_{i}$ and $b_{i}$ are referred to as the \define{birth time} and \define{death time} associated with $i$, respectively. 
Associated with a birth-death diagram is its \define{barcode diagram}, which is the subset 
\begin{equation}
\bigcup_{i\in I}\big\{(x,i)\in\R\times I\;:\;b_{i}\le x<d_{i}\big\}.
\end{equation}
Let $\mathfrak{V}$ be a pointwise finite-dimensional persistence vector space. The \define{persistence diagram} associated with $\mathfrak{V}$ (and similarly its associated \define{barcode diagram}) is the birth-death diagram (resp.\ barcode diagram) of the multiset obtained from some choice of isomorphism $\mathfrak{V}\cong\bigoplus_{i\in I}\mathbb{F}(a_i,b_i)$. 
\end{definition}

A simple example for why we allow the second coordinate in a birth-death diagram to be infinity is degree $0$ homology (the Vietoris--Rips complex of a finite metric space becomes connected for all $\epsilon$ past some large enough value). Also, note that the persistence diagram and barcode diagram of a pointwise finite-dimensional persistence vector space are well-defined by Proposition~\ref{prop:pfdpersdecomp}. If a total ordering of $I$ is given $I=\{1,\dots,n\}$, then the barcode diagram can be viewed as a subset of horizontal intervals in $\R^{2}$. 

\begin{example}
Example~\ref{ex:dsumPab} has already shown an example of a barcode diagram. The associated birth-death diagram is given by 
\[
\begin{tikzpicture}[scale=0.325]
\draw (0,0) -- (10,0);
\draw (0,0) -- (0,10);
\foreach \x in {0,1,...,10} {
\draw (\x,0.1) -- (\x,-0.1) node[below]{$\x$};
\draw (0.1,\x) -- (-0.1,\x) node[left]{$\x$};
}
\fill[gray,opacity=0.3] (0,0) -- (10,10) -- (10,0) -- cycle;
\node at (1,6) {$\bullet$};
\node at (2,5) {$\bullet$};
\node at (0,4) {$\bullet$};
\node at (3,9) {$\bullet$};
\node at (7,8) {$\bullet$};
\node at (5,-1.75) {birth};
\node at (-1.75,5) {\rotatebox{90}{death}};
\end{tikzpicture}
\]
Notice that visually it is not possible to see the multiplicity in a plot of the birth-death diagram as drawn above, though the multiplicity is encoded in the mathematical definition through the usage of multisets. Also note that the bottom half of the first quadrant has been grayed out because birth-death diagram points $(a_i,b_i)$ are not allowed to appear in this subset due to the condition $a_i<b_i$. 
\end{example}

Proposition~\ref{prop:pfdpersdecomp} also allows us to define the persistence diagram associated with finite metric spaces. 

\begin{definition}
\label{defn:pbarcode}
Let $(X,d_{X})$ be a finite metric space. The \define{$n^{\text{th}}$ degree persistence diagram} of $(X,d_{X})$ is the birth-death diagram of the $n^{\text{th}}$ degree homology persistence vector space $H_{n}^{\epsilon}(X,d_{X})$. This persistence diagram is denoted by $\mathrm{dgm}_{n}(X,d_{X})$. The \define{$n^{\text{th}}$ degree persistence barcode} of $(X,d_{X})$ is the associated barcode diagram. The \define{persistence barcode} of $(X,d_{X})$ is the disjoint union of all $n^{\text{th}}$ degree persistence barcodes of $(X,d_{X})$ for $n\in\N$. 
\end{definition}

\begin{example}
Revisiting the $n^{\text{th}}$ roots of unit from Example~\ref{ex:Hausmannrootshomology} with $n=6$, we can draw the associated birth-death and persistence barcode diagrams. Since there are multiple degrees, we label the different degrees with different symbols and shadings so that all homology degrees can be visualized on the same diagrams. 
\begin{center}
\begin{tikzpicture}[scale=0.625]
\draw[->] (0,0) -- ({10*pi/6},0);
\draw[->] (0,0) -- (0,{10*pi/6});
\foreach \x in {2,4,6,8} {
\draw ({\x*pi/6},0.1) -- ({\x*pi/6},-0.1) node[below]{$\frac{\x\pi}{n}$};
\draw (0.1,{\x*pi/6}) -- (-0.1,{\x*pi/6}) node[left]{$\frac{\x\pi}{n}$};
}
\fill[gray,opacity=0.3] (0,0) -- ({10*pi/6},{10*pi/6}) -- ({10*pi/6},0) -- cycle;
\node at (0,{2*pi/6}) {\Large $\bullet$};
\node at (0,{10*pi/6}) {\Large $\bullet$};
\node at ({2*pi/6},{4*pi/6}) {\large $\blacklozenge$};
\node at ({4*pi/6},{6*pi/6}) {$\blacksquare$};
\draw (0.1,{10*pi/6}) -- (-0.1,{10*pi/6}) node[left]{$\infty$};
\draw (0.1,0) -- (-0.1,0);
\draw (0,0.1) -- (0,-0.1) node[below]{$0$};
\node at (4.0,1.75) {\small \begin{tabular}{|ll|}\multicolumn{2}{c}{Legend} \\ 
\hline 
{\Large $\bullet$} & $H_{0}$\\
{\large $\blacklozenge$} & $H_{1}$ \\
{$\blacksquare$} & $H_{2}$\\
\hline
\end{tabular}};
\end{tikzpicture}
\quad
\begin{tikzpicture}[scale=0.625]
\def\sep{0.3}
\draw[->] (0,0) -- ({10*pi/6},0) node[right]{$\epsilon$};
\draw (0,0.1) -- (0,-0.1) node[below]{$0$};
\foreach \x in {2,4,6,8} {
\draw ({\x*pi/6},0.1) -- ({\x*pi/6},-0.1) node[below]{$\frac{\x\pi}{n}$};
}
\foreach \k in {1,3,5,7,9} {
\draw[pattern=dots] (0,{\k*\sep}) rectangle ({2*pi/6},{(\k+1)*\sep});
}
\draw[pattern=dots] (0,{11*\sep}) rectangle ({10*pi/6},{12*\sep});
\draw[pattern=north east lines] ({2*pi/6},{13*\sep}) rectangle ({4*pi/6},{14*\sep});
\draw[pattern=north west lines] ({4*pi/6},{15*\sep}) rectangle ({6*pi/6},{16*\sep});
\node at (4.0,1.75) {\small \begin{tabular}{|ll|}\multicolumn{2}{c}{Legend} \\ 
\hline 
\begin{tikzpicture}\draw[pattern=dots] (0,0) rectangle (0.75,0.2);\end{tikzpicture}& $H_{0}$\\
\begin{tikzpicture}\draw[pattern=north east lines] (0,0) rectangle (0.75,0.2);\end{tikzpicture}& $H_{1}$ \\
\begin{tikzpicture}\draw[pattern=north west lines] (0,0) rectangle (0.75,0.2);\end{tikzpicture}& $H_{2}$\\
\hline
\end{tabular}};
\end{tikzpicture}
\end{center}
Notice that the multiplicity of the degree $0$ classes is not visible in the birth-death diagram on the left. 
\end{example}

The bottleneck distance calculates the difference between two persistence diagrams. To define the bottleneck distance, we review the concept of matchings and cost of a matching~\cite{oudot2017persistence,tinarrage2021notes}.  

\begin{definition}
\label{defn:pmatchingsets}
Given two sets $I$ and $J$, a \define{partial matching} between $I$ and $J$ is a subset $M\subseteq I\times J$ such that 
\begin{itemize}
\item for every $j\in J$, there exists at most one $i\in I$ such that $(i,j)\in M$ and 
\item for every $i\in J$, there exists at most one $j\in J$ such that $(i,j)\in M$. 
\end{itemize}
An element $i\in I$ is said to be \define{matched} by $M$ when $(i,j)\in M$ for some (necessarily unique) $j\in J$. Similarly, $j\in J$ is said to be \define{matched} by $M$ when ${(i,j)\in M}$ for some (necessarily unique) $i\in I$. An element ${(i,j)\in M}$ is called a \define{matched pair}. An element $i\in I$ or $j\in J$ is \define{unmatched} iff $i\in I\setminus\pi_{I}(M)$ or $j\in J\setminus\pi_{J}(M)$, respectively, where $\pi_{I}:I\times J\to I$ and $\pi_{J}:I\times J\to J$ are the projection maps. The set of unmatched points is denoted by
\begin{equation}
M^{u}:=\big(I\setminus\pi_{I}(M)\big){\textstyle{\coprod}}\big(J\setminus\pi_{J}(M)\big).
\end{equation}
A partial matching $M$ as above is denoted by ${I\xleftrightarrow{M}J}$.
\end{definition}

The above definition was for partial matchings between \emph{sets}. The following definition is for \emph{multisets}. 

\begin{definition}
\label{defn:pmatchingmultisets}
Given two multisets $P:I\to A$ and $Q:J\to B$, a \define{partial matching} between $P$ and $Q$ is a multiset $M\to A\times B$ with $M$ a partial matching between $I$ and $J$. A \define{matched pair} in $M$ is an element $(a_i,b_j)\in A\times B$ with $(i,j)\in M$. An element $a_i\in A$ in the image of $I\to A$ is \define{unmatched} iff $i$ is unmatched, and similarly for $b_{j}\in B$.
A matched pair is written as $(a_i,b_j)\in M$, while unmatched points $p$ are written as $p\in M^{u}$, by an abuse of notation. A partial matching as above is denoted by ${P\xleftrightarrow{M}Q}$. 
\end{definition}

\begin{definition}
\label{defn:costofpmatching}
Let $P:I\to[0,\infty)\times[0,\infty]$ and $Q:J\to[0,\infty)\times[0,\infty]$ be two birth-death diagrams, and let $P\xleftrightarrow{M}Q$ be a partial matching. The \define{cost} of a matched pair $(p,q):=\big((a_{i},b_{i}),(a_{j},b_{j})\big)\in M$ is the $\ell^{\infty}$ distance between $p$ and $q$, i.e., 
\begin{equation}
\lVert p-q\rVert_{\infty}=\max\big\{|a_i-a_j|,|b_i-b_j|\big\}.
\end{equation}
The \define{cost} of an unmatched point $s=(a,b)\in M^{u}$ is $\frac{|b-a|}{2}$, which is the $\ell^{\infty}$ distance between $s$ and its closest point to the diagonal in $\R^{2}$. The \define{bottleneck cost} of the partial matching $M$ is 
\begin{equation}
C(M)=\max\left\{\sup_{(p,q)\in M}\lVert p-q\rVert_{\infty},\sup_{(a,b)\in M^{u}}\frac{|a-b|}{2}\right\}.
\end{equation}
The \define{bottleneck distance} between the birth-death diagrams $P$ and $Q$ is the smallest bottleneck cost over all partial matchings, i.e., 
\begin{equation}
d_{B}(P,Q):=\inf\left\{C(M)\;:\;P\xleftrightarrow{M}Q\right\}.
\end{equation}
\end{definition}

The concept of a bottleneck distance can be transferred to a distance between pointwise finite-dimensional persistence vector spaces due to Proposition~\ref{prop:pfdpersdecomp}. 

\begin{definition}
If $\mathfrak{V}$ and $\mathfrak{W}$ are pointwise finite-dimensional persistence vector spaces with 
\begin{equation}
\mathfrak{V}\cong\bigoplus_{i\in I}\mathbb{F}(a_i,b_i)
\quad\text{and}\quad
\mathfrak{W}\cong\bigoplus_{j\in J}\mathbb{F}(c_j,d_j),
\end{equation}
then the \define{bottleneck distance} between $\mathfrak{V}$ and $\mathfrak{W}$ is 
\begin{equation}
d_{B}(\mathfrak{V},\mathfrak{W}):=d_{B}(P,Q),
\end{equation}
where $P:I\to[0,\infty)\times[0,\infty]$ and $Q:J\to[0,\infty)\times[0,\infty]$ are the birth-death diagrams 
\begin{equation}
P(i):=(a_i,b_i)
\quad\text{ and }\quad
Q(j):=(c_j,d_j)
\end{equation}
for all $i\in I$ and $j\in J$. 
\end{definition}

Although the bottleneck distance does not define a distance in the sense of a metric, it does define a semimetric (see Section 3.1.1.\ in Ref.~\cite{oudot2017persistence}), which is defined as follows (it is called an \emph{uber metric} in Ref.~\cite{Spivak09}). 

\begin{definition}
\label{defn:semimetricspace}
A \define{semi-metric space} is a pair $(X,d_{X})$, where $X$ is a set and $d_{X}:X\times X\to[0,\infty]$ is a function that satisfies 
\begin{enumerate}[i.]
\item $d_{X}(x,x)=0$ for all $x\in X$
\item $d_{X}(x_1,x_2)=d_{X}(x_2,x_1)$ for all $x_1,x_2\in X$
\item $d_{X}(x_{1},x_{2})\le d_{X}(x_{1},x_{3})+d_{X}(x_2,x_3)$ for all $x_1,x_2,x_3\in X$ (the \emph{triangle inequality}).
\end{enumerate}
Such a function $d_{X}$ is called a \define{semimetric} (or an \define{extended pseudometric}). 
\end{definition}

There are two differences between a metric and a semimetric. A metric additionally requires that $d_{X}(x,y)=0$ implies $x=y$. Moreover, a metric does not allow the distance to be infinite. Thus, a semimetric is slightly more general than a metric. 

\begin{proposition}
The bottleneck distance defines a semimetric on the set of partial matchings $P:\{1,\dots,n\}\to[0,\infty)\times[0,\infty]$, with $n\in\N$ arbitrary. 
\end{proposition}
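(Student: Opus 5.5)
The statement says ``partial matchings'', but this should be read as birth-death diagrams $P:\{1,\dots,n\}\to[0,\infty)\times[0,\infty]$, i.e., the objects on which $d_{B}$ was defined in Definition~\ref{defn:costofpmatching}. I will verify the three axioms of Definition~\ref{defn:semimetricspace} directly from the definitions of partial matchings and bottleneck cost. Throughout, arithmetic on $[0,\infty]$ follows the conventions $|\infty-\infty|=0$ and $|\infty-b|=\infty$ for finite $b$. With these conventions a bottleneck cost can equal $\infty$, for instance when a point with infinite death is left unmatched. This is exactly why we only expect a semimetric with values in $[0,\infty]$.

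\emph{Reflexivity and symmetry.} For $d_{B}(P,P)=0$, take the diagonal matching $M=\{(i,i):i\in I\}$. Every matched pair then has cost $\lVert p-p\rVert_{\infty}=0$, and $M^{u}=\varnothing$, so $C(M)=0$. For symmetry, send a partial matching $M\subseteq I\times J$ to its transpose $M^{T}=\{(j,i):(i,j)\in M\}\subseteq J\times I$. This gives a bijection between partial matchings $P\leftrightarrow Q$ and $Q\leftrightarrow P$. It preserves the unmatched set (the two summands of $M^{u}$ are swapped) and matched-pair costs, since $\lVert\cdot\rVert_{\infty}$ is symmetric. Hence $C(M)=C(M^{T})$, and taking infima gives $d_{B}(P,Q)=d_{B}(Q,P)$.

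\emph{Triangle inequality.} This is the main step. Let $P:I\to\cdot$, $Q:J\to\cdot$, $R:K\to\cdot$ be diagrams, and let $M_{1}$ be a matching $P\leftrightarrow R$ and $M_{2}$ a matching $R\leftrightarrow Q$. Define the composite
\[
M=\{(i,j)\in I\times J:\exists k\in K,\ (i,k)\in M_{1},\ (k,j)\in M_{2}\}.
\]
This is a partial matching because each of $M_{1}$ and $M_{2}$ is injective in both directions. I claim $C(M)\le C(M_{1})+C(M_{2})$; the triangle inequality then follows by taking infima over $M_{1}$ and $M_{2}$.

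For a matched pair $(i,j)$ of $M$, passing through the intermediate $k$ and using the triangle inequality for $\ell^{\infty}$ on extended reals bounds its cost by $C(M_{1})+C(M_{2})$. Unmatched points need a case split.
\begin{enumerate}[(a)]
\item Suppose $i\in I$ is already unmatched in $M_{1}$. Its cost $|b_i-a_i|/2$ is at most $C(M_{1})$.
\item Suppose instead $(i,k)\in M_{1}$ but $k$ is unmatched in $M_{2}$. Write $p=(a,b)=P(i)$ and $r=(a',b')=R(k)$. Since $x\mapsto|x_2-x_1|/2$ is $1$-Lipschitz for $\lVert\cdot\rVert_{\infty}$,
\[
\frac{|b-a|}{2}\le\lVert p-r\rVert_{\infty}+\frac{|b'-a'|}{2}\le C(M_{1})+C(M_{2}).
\]
\item Unmatched $j\in J$ is handled symmetrically.
\end{enumerate}

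The only real obstacle is the bookkeeping with infinite coordinates. When $b=\infty$ in case (b), the inequality is still fine. If $b'<\infty$, then $\lVert p-r\rVert_{\infty}=\infty$, so the right-hand side is $\infty$. If $b'=\infty$, then $k$ being unmatched forces $C(M_{2})=\infty$. In either case the inequality holds trivially.

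Finally, the infimum defining $d_{B}$ is over a finite, nonempty set of matchings, since the empty matching always exists. So $d_{B}$ takes values in $[0,\infty]$, which is all that Definition~\ref{defn:semimetricspace} requires.
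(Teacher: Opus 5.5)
Your proof is correct. Reading ``partial matchings'' in the statement as birth--death diagrams is the right fix, and your argument is the standard one: the identity matching gives $d_B(P,P)=0$, transposition gives symmetry, and composing matchings gives the triangle inequality, using the $1$-Lipschitz bound on the distance to the diagonal and the $\infty$ conventions. The paper gives no proof of its own here; it defers to Section~3.1.1 of Ref.~\cite{oudot2017persistence}, where the triangle inequality is obtained by this same composition of partial matchings.
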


We can apply the notion of the bottleneck distance to the barcode diagram associated with persistence homologies of two metric spaces $(X,d_{X})$ and $(Y,d_{Y})$. We will do this momentarily, but first we provide some basic examples computing the bottleneck distance. 

\begin{example}
As a first example, we will follow Ref.~\cite{tinarrage2021notes} and compute the bottleneck distance between the barcodes associated with the persistence vector spaces $\mathbb{F}(a,b)$ and $\mathbb{F}(c,d)$. Here, there are only two partial matchings, which are $\varnothing$ (the empty set) and $\big\{\big((a,b),(c,d)\big)\big\}$. If $M_{0}:=\varnothing$, then the bottleneck cost is 
\begin{equation}
C(M_{0})=\max\left\{\frac{|b-a|}{2},\frac{|d-c|}{2}\right\}.
\end{equation}
If $M_{1}:=\big\{\big((a,b),(c,d)\big)\big\}$, then the bottleneck cost is 
\begin{equation}
C(M_{1})=\max\big\{|a-c|,|b-d|\big\}.
\end{equation}
Therefore, the bottleneck distance between the barcodes associated with $\mathbb{F}(a,b)$ and $\mathbb{F}(c,d)$ is 
\begin{equation}
d_{B}\Big(\big\{(a,b)\big\},\big\{(c,d)\big\}\Big)=\min\big\{C(M_{0}),C(M_{1})\big\}.
\end{equation}
\end{example}

\begin{example}
Revisiting the $n^\text{th}$ roots of unity from Example~\ref{ex:Hausmannrootsofunity}, we can look at the persistence barcodes when $n=20$ and $n=50$. We can then compare these results to the ordinary homology of the standard unit circle. For $n=20$, the following barcode diagram is obtained. 
\begin{center}
\begin{tikzpicture}[scale=0.625]
\def\sep{0.1}
\def\wid{0.3}
\draw[->] (0,0) -- ({24*pi/6},0) node[right]{$\epsilon$};
\draw (0,0.1) -- (0,-0.1) node[below]{$0$};
\foreach \x in {2,4,6,8,10,12,14,16,18,20,22} {
\draw ({\x*pi/6},0.1) -- ({\x*pi/6},-0.1) node[below]{$\frac{\x\pi}{n}$};
}
\foreach \k in {1,2,3} {
    \draw[pattern=dots] (0,{\k*(\wid+\sep))}) rectangle ({2*pi/6},{(\k+1)*\wid+\k*\sep});
    }
\foreach \k in {6,7,8} {
    \draw[pattern=dots] (0,{\k*(\wid+\sep))}) rectangle ({2*pi/6},{(\k+1)*\wid+\k*\sep});
    }
    \node at ({pi/6},{5.25*(\wid+\sep)}) {$\vdots$};
    \node at ({9*pi/12},{5.5*\wid+3*\sep}) {$\left.\begin{matrix}\phantom{x}\\\phantom{x}\\\phantom{x}\\\phantom{x}\\\phantom{x}\end{matrix}\right\}$ 19 bars};
\draw[pattern=dots] (0,{9*(\wid+\sep)}) rectangle ({24*pi/6},{10*\wid+9*\sep});
\draw[pattern=north east lines] ({2*pi/6},{10*(\wid+\sep)}) rectangle ({14*pi/6},{11*\wid+10*\sep});
\draw[pattern=vertical lines] ({14*pi/6},{11*(\wid+\sep)}) rectangle ({16*pi/6},{12*\wid+11*\sep});
\foreach \k in {12,13,14} {
    \draw[pattern=checkerboard] ({16*pi/6},{\k*(\wid+\sep))}) rectangle ({18*pi/6},{(\k+1)*\wid+\k*\sep});
    }
\node at (11.0,1.875) {\small \begin{tabular}{|ll|}\multicolumn{2}{c}{Legend} \\ 
\hline 
\begin{tikzpicture}\draw[pattern=dots] (0,0) rectangle (0.75,0.2);\end{tikzpicture}& $H_{0}$\\
\begin{tikzpicture}\draw[pattern=north east lines] (0,0) rectangle (0.75,0.2);\end{tikzpicture}& $H_{1}$ \\
\begin{tikzpicture}\draw[pattern=vertical lines] (0,0) rectangle (0.75,0.2);\end{tikzpicture}& $H_{3}$\\
\begin{tikzpicture}\draw[pattern=checkerboard] (0,0) rectangle (0.75,0.2);\end{tikzpicture}& $H_{4}$\\
\hline
\end{tabular}};
\end{tikzpicture}
\end{center}
As for $n=50$, we get the following barcode diagram 
\begin{center}
\begin{tikzpicture}[scale=0.5625]
\def\sep{0.1}
\def\wid{0.3}
\draw[->] (0,0) -- ({54*pi/12},0) node[right]{$\epsilon$};
\draw (0,0.1) -- (0,-0.1) node[below]{$0$};
\foreach \x in {4,8,12,...,52} {
\draw ({\x*pi/12},0.1) -- ({\x*pi/12},-0.1) node[below]{$\frac{\x\pi}{n}$};
}
\foreach \k in {1,2,3,7,8,9} {
    \draw[pattern=dots] (0,{\k*(\wid+\sep))}) rectangle ({2*pi/12},{(\k+1)*\wid+\k*\sep});
    }
    \node at ({pi/12},{6*\wid+5*\sep}) {$\vdots$};
    \node at ({7*pi/12},{5.5*\wid+5*\sep}) {$\left.\begin{matrix}\phantom{x}\\\phantom{x}\\\phantom{x}\\\phantom{x}\\\phantom{x}\end{matrix}\right\}$ 49 bars};
\draw[pattern=dots] (0,{10*(\wid+\sep)}) rectangle ({54*pi/12},{11*\wid+10*\sep});
\draw[pattern=north east lines] ({2*pi/12},{11*(\wid+\sep)}) rectangle ({34*pi/12},{12*\wid+11*\sep});
\draw[pattern=vertical lines] ({34*pi/12},{12*(\wid+\sep)}) rectangle ({40*pi/12},{13*\wid+12*\sep});
\foreach \k in {13,14,...,21} {
    \draw[pattern=checkerboard] ({40*pi/12},{\k*(\wid+\sep))}) rectangle ({42*pi/12},{(\k+1)*\wid+\k*\sep});
    }
\draw[pattern=horizontal lines gray] ({42*pi/12},{22*(\wid+\sep)}) rectangle ({44*pi/12},{23*\wid+22*\sep});
\node at (13.0,6.725) {\small \begin{tabular}{|ll|}\multicolumn{2}{c}{Legend} \\ 
\hline 
\begin{tikzpicture}\draw[pattern=dots] (0,0) rectangle (0.75,0.2);\end{tikzpicture}& $H_{0}$\\
\begin{tikzpicture}\draw[pattern=north east lines] (0,0) rectangle (0.75,0.2);\end{tikzpicture}& $H_{1}$ \\
\begin{tikzpicture}\draw[pattern=vertical lines] (0,0) rectangle (0.75,0.2);\end{tikzpicture}& $H_{3}$\\
\begin{tikzpicture}\draw[pattern=checkerboard] (0,0) rectangle (0.75,0.2);\end{tikzpicture}& $H_{4}$\\
\begin{tikzpicture}\draw[pattern=horizontal lines gray] (0,0) rectangle (0.75,0.2);\end{tikzpicture}& $H_{5}$\\
\hline
\end{tabular}};
\end{tikzpicture}
\end{center}
As we can see, for a wide range of low values of $\epsilon$, the homology is exactly the same as the homology of the standard circle. 
The precise statements about convergence of such approximations is given by Hausmann's theorem, Latschev's theorem, and their extensions~\cite{Hausmann96,Latschev2001,Majhi2025}. 
\end{example}

%%%%%%%%%%%%%%%%%%%%%%%%%%%%%%%%%%%%%%%%%%%%%%%%
\subsection{The stability theorem in persistent homology}
\label{subsec:stability}
%%%%%%%%%%%%%%%%%%%%%%%%%%%%%%%%%%%%%%%%%%%%%%%%

Because persistent homology is usually constructed from a sampled dataset, it would be convenient if these topological invariants did not vary drastically for different samples or if there was some slight noise. The notion of \emph{stability} aims to address questions of robustness of persistent homology to such perturbations. A version of the stability theorem suitable for our purposes is the following~\cite{Chazal2009Stable} (see also Refs.~\cite{ChoietalTDA2025,memoli2011metric,MemoliSinghal19} for reviews and Refs.~\cite{CCGGO09,BauerLesnick2015Stability,gardner2017stability} for alternative notions of stability in the context of TDA). 

\begin{proposition}
\label{prop:stabilityV1}
Let $(X,d_X)$ and $(Y,d_Y)$ be finite metric spaces. Then 
\begin{equation}
\frac{1}{2}d_{B}\big(\mathrm{dgm}_{n}(X),\mathrm{dgm}_{n}(Y))\le d_{GH}\big((X,d_X),(Y,d_Y)\big)
\end{equation}
for all $n\ge0$. 
\end{proposition}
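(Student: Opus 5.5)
The plan is to follow the standard interleaving route of Chazal--de Silva--Oudot. Set $\epsilon:=2d_{GH}(X,Y)$. By the $\epsilon$-isometry characterization~\eqref{eq:gromov-hausdorff-final}, for every $\delta>0$ there are maps $f:X\to Y$ and $g:Y\to X$ with $\dis(f),\dis(g),C(f,g)\le\epsilon+\delta=:\eta$. Since $X$ and $Y$ are finite, the infimum is attained, so we may even take $\delta=0$; otherwise we let $\delta\to0$ at the end. The target is the inequality $d_{B}(\mathrm{dgm}_n(X),\mathrm{dgm}_n(Y))\le\eta$, which gives the claim $\frac12 d_B\le d_{GH}$ up to the factor conventions. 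Note that $d_B\le 2d_{GH}$ is exactly the stated inequality.

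First, I will show that $f$ and $g$ induce simplicial maps $\mathcal{R}^{t}(X,d_X)\to\mathcal{R}^{t+\eta}(Y,d_Y)$ and $\mathcal{R}^{t}(Y,d_Y)\to\mathcal{R}^{t+\eta}(X,d_X)$ for every $t\ge0$. If $\mathrm{diam}(\sigma)<t$, then $\dis(f)\le\eta$ gives $d_Y(f(x),f(x'))\le d_X(x,x')+\eta$, hence $\mathrm{diam}(f(\sigma))<t+\eta$. Applying the functors $C_\bullet$ and $H_n$ then yields linear maps $\phi^t:H^t_n(X)\to H^{t+\eta}_n(Y)$ and $\psi^t:H^t_n(Y)\to H^{t+\eta}_n(X)$. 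These commute with the structure maps $\mathsf f^{(t',t)}$, because the induced simplicial maps strictly commute with the inclusions. Thus $\phi$ and $\psi$ are morphisms of persistence vector spaces shifted by $\eta$, in the sense of Definition~\ref{defn:morphismofpersvect} adapted to shifts.

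Next, I will verify the interleaving identities $\psi^{t+\eta}\circ\phi^t=\mathsf f^{(t+2\eta,t)}_X$ and the symmetric identity for $Y$. The key point is that $g\circ f$ and the inclusion $\mathcal{R}^t(X)\hookrightarrow\mathcal{R}^{t+2\eta}(X)$ are \emph{contiguous}. For $\sigma$ with $\mathrm{diam}(\sigma)<t$, the set $\sigma\cup g f(\sigma)$ must be a simplex of $\mathcal{R}^{t+2\eta}(X)$. Pairs inside $\sigma$ and pairs inside $gf(\sigma)$ are fine by the previous step applied twice. For mixed pairs, the codistortion gives
\[
d_X(x,g(f(x')))\le d_Y(f(x),f(x'))+C(f,g)\le d_X(x,x')+\dis(f)+C(f,g)<t+2\eta .
\]
Contiguous simplicial maps induce chain-homotopic chain maps, via the standard prism/acyclic-carrier argument, and hence the same map on homology. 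This yields the interleaving, so the two persistent homologies are $\eta$-interleaved.

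Finally, I will invoke the algebraic stability (isometry) theorem: for pointwise finite-dimensional persistence vector spaces, decomposed as in Proposition~\ref{prop:pfdpersdecomp}, an $\eta$-interleaving implies $d_B\le\eta$. This is the main obstacle if done from scratch. The plan would be to cite Chazal et al.\ or Bauer--Lesnick's induced-matching construction rather than reprove it. With barcodes of finite metric spaces one could argue directly: match each bar $[a,b)$ of length greater than $2\eta$ through the images of the interleaving maps, and leave short bars unmatched at cost at most $\eta$. One further subtlety to check is the strict inequality $\mathrm{diam}<\epsilon$ in the Rips convention, which only affects half-open versus closed endpoints and not the bottleneck distance.
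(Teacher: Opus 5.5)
Your proof is correct. The paper does not prove Proposition~\ref{prop:stabilityV1}; it only quotes it from \cite{Chazal2009Stable}, so the comparison is with the literature rather than with an in-text argument.

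What you give is the interleaving proof later used by Chazal, de Silva and Oudot. The characterization \eqref{eq:gromov-hausdorff-final} supplies vertex maps $f,g$, and the bounds $\dis(f),\dis(g)\le\eta$ make them simplicial maps $\mathcal{R}^{t}\to\mathcal{R}^{t+\eta}$. The co-distortion bound is exactly what makes $g\circ f$ contiguous to the inclusion $\mathcal{R}^{t}(X)\hookrightarrow\mathcal{R}^{t+2\eta}(X)$, and symmetrically for $f\circ g$. Algebraic stability then turns the resulting $\eta$-interleaving into $d_B\le\eta$.

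The original finite-space argument is organized differently. It fixes a correspondence $\mathfrak{R}$ with $\dis(\mathfrak{R})\le\eta$ and filters the full simplex on the vertex set $\mathfrak{R}$ in two ways: by $d_X$-diameters of the $X$-projections and by $d_Y$-diameters of the $Y$-projections. The two filtration values of each simplex differ by at most $\eta$, so stability for two filtrations of one fixed finite complex applies. It remains to check that pulling back along the surjections $\mathfrak{R}\to X$ and $\mathfrak{R}\to Y$ does not change the Rips diagrams, which is again a contiguity argument.

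The comparison between the two routes is as follows.
\begin{itemize}
\item The correspondence route never has to compare different complexes, but it genuinely needs finiteness.
\item Your route uses finiteness only to attain the infimum and to get pointwise finite dimensionality. It therefore extends, with $q$-tameness in its place, to compact spaces.
\item The price of your route is that all the combinatorial work sits in the algebraic stability theorem, which you cite rather than prove. Your sketched direct matching of long bars is not a substitute for it.
\end{itemize}

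Your remark on the strict inequality $\mathrm{diam}(\sigma)<\epsilon$ is also right. It produces bars of the form $(a,b]$ rather than the $[a,b)$ of Proposition~\ref{prop:pfdpersdecomp}, so one should use the general interval decomposition. This does not affect the bottleneck distance.
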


In the statement of Proposition~\ref{prop:stabilityV1}, $\mathrm{dgm}_{n}(X):=\mathrm{dgm}_{n}(X,d_{X})$ is the $n^{\text{th}}$ degree persistence diagram from Definition~\ref{defn:pbarcode} (and similarly for $\mathrm{dgm}_{n}(Y)$), $d_{B}$ is the bottleneck distance from Definition~\ref{defn:costofpmatching}, and $d_{GH}$ is the Gromov--Hausdorff distance from Definition~\ref{defn:GHdistance}. The factor of $\frac{1}{2}$ comes from the fact that we have been using the Vietoris--Rips complex to construct the persistent homology, and there would be no factor of $\frac{1}{2}$ if we used the {\v C}ech complex~\cite{Virk2022}. In more practical terms, Proposition~\ref{prop:stabilityV1} says that if the bottleneck distances between the barcodes of two point clouds $X$ and $Y$ are large, then the underlying spaces from which $X$ and $Y$ are sampled from are probably different. 

For our purposes, we will apply Proposition~\ref{prop:stabilityV1} to the following metric spaces. Given a point cloud $X\subset\R^{m}$, which represents our classical dataset, let $d_{X}$ denote the induced Euclidean metric on $X$ from $\R^{m}$. Let $\rho:X\to\states(\mathcal{H})$ be some injective quantum encoding for some Hilbert space $\mathcal{H}$, and equip $\states(\mathcal{H})$ with some metric, such as the quantum Bures fidelity distance $d_{F}$ from Definition~\ref{defn:quantumdistances}. Now let $Y=\rho(X)$ be the image of $X$ under the encoding $\rho$ and equip $Y$ with the metric $d_{Y}$ given by the induced metric from $d_{F}$ (equivalently, we can set $Y$ to be $X$ itself and equip $Y$ with the pull-back metric $d_{\rho}$ from Definition~\ref{defn:pullbackmetric}). This gives us two finite metric spaces $(X,d_{X})$ and $(Y,d_{Y})$. Because the quantum encoding might distort distances, we can use Proposition~\ref{prop:stabilityV1} as a test to see whether or not a given encoding distorts the distances in a way that changes the associated persistent homologies thereby altering the homological inference on the underlying manifold from which the point cloud $X$ was sampled from. 

Explicitly, we can deduce two things. First, if the Gromov--Hausdorff distance between the classical dataset $(X,d_{X})$ and the quantum encoded dataset $(Y,d_{Y})$ is small, then we can conclude that the quantum encoding $\rho$ will preserve the persistent homology to good accuracy. Conversely, if the persistent homologies of $(X,d_{X})$ and $(Y,d_{Y})$ are substantially different from each other, then the quantum encoding $\rho$ is not adequately preserving the pairwise distances in the original data set.

%%%%%%%%%%%%%%%%%%%%%%%%%%%%%%%%%%%%%%%%%%%%%%%%
\section{Quantum encodings for persistent homology}
\label{sec:main}
%%%%%%%%%%%%%%%%%%%%%%%%%%%%%%%%%%%%%%%%%%%%%%%%

Proposing quantum encodings that preserve persistent homology is not really a well-defined question if one is simply given a dataset. The question depends highly on the metric used to describe the distances between the data points. Once a metric is specified, one arrives at the associated persistent homology, and only then can one ask for a quantum encoding that accurately preserves that persistent homology. In other words, preservation of the persistent homology implicitly assumes a metric has been chosen, whether or not that metric is the most accurate one that describes the data. With this in mind, this section aims to propose some quantum encodings that preserve the persistent homology that will depend on the choice of such a metric. 

Before describing these encodings, we define what it means to scale a metric. The purpose of doing this is because if the persistent homologies are related by an overall scaling factor, then they should be deemed equivalent. This observation has been made elsewhere in the literature~\cite{MayKrishnamoorthyGambill24} (Ref.~\cite{indyk20178} used similar concepts under the name \emph{distortion}). Therefore, we will define a \emph{scale-free distortion} that takes this into account. 

\begin{definition}
Given a metric space $(X,d_{X})$ and a positive number $\lambda$, the metric $\lambda\, d_{X}$ on $X$ is defined by $(\lambda\, d_{X})(x_1,x_2):=\lambda\, d_{X}(x_1,x_2)$ for all $x_1,x_2\in X$. The new metric space $(X,\lambda \, d_{X})$ is said to be \define{rescaled}. 
\end{definition}

\begin{definition}
Let $(X,d_{X})$ and $(Y,d_{Y})$ be finite metric spaces, and let $f:X\to Y$ be a function (not necessarily distance-preserving). The \define{scale-free distortion} of $f$ is the number given by 
\begin{equation}
\mathrm{sfdis}(f):=\inf_{\lambda>0} \sup_{x,x'\in X}\left| d_{X}(x,x')-\frac{1}{\lambda}d_{Y}\big(f(x),f(x')\big)\right|.
\end{equation}
\end{definition}
As a simple example, if $(X,d_{X})$ is a metric space with just two elements, and if $f$ is an injective function, then the scale-free distortion is zero. The reason for the $\frac{1}{\lambda}$ factor in front of $d_{Y}$ (instead of a $\lambda$ factor in front of $d_{X}$) is because $f$ need not be injective. As an extreme case, if $X$ consists of two points $x$ and $x'$ and $f(x)=f(x')$, then $\mathrm{sfdis}(f)=d_{X}(x,x')$, which is the same as its distortion. 
The idea of a scale-free distortion will be important for quantum encodings because quantum encodings take finite datasets from ambient spaces that may have unbounded distance (such as Euclidean space) to a quantum state spaces, which have finite distance with respect to the metrics we introduced on quantum states. However, the inferred topology from persistence homology only changes by an overall scaling factor when the metric is rescaled by a number. Because we will need to often rescale classical data before encoding it into a quantum system, we should therefore infer the topology of a dataset in a way that is agnostic to these overall scalings. The scale-free distortion achieves this, as the following proposition shows. %Burago defines homothety in proposition 1.7.8. 

\begin{proposition}
Let $(X,d_{X})$ and $(Y,d_{Y})$ be finite metric spaces, and let $f:X\to Y$ be a function (not necessarily distance-preserving). Suppose that there exists a $\lambda\ge0$ such that $\lambda \,d_{X}(x,x')= d_{Y}(f(x),f(x'))$ for all $x,x'\in X$ (such an $f$ is called a \define{$\lambda$-homothety}). Then $\mathrm{sfdis}(f)=0$ and the persistence barcode associated with $(X,\lambda\, d_X)$ is the same as the persistence barcode associated with $(X,d_{f})$. 
\end{proposition}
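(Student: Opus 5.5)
The plan has two independent parts, one for each conclusion, and both come almost directly from the definitions.

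\emph{Scale-free distortion.} First I would dispose of the degenerate value $\lambda=0$. Then $d_{Y}(f(x),f(x'))=0$ for all $x,x'$. The claim $\mathrm{sfdis}(f)=0$ can then only hold when $X$ has at most one point, or when $d_X$ vanishes identically, which forces $\#X\le1$. So I would either assume $\lambda>0$, which is clearly what is intended, or note this edge case separately. For $\lambda>0$, I take the value $\lambda$ itself in the infimum defining $\mathrm{sfdis}(f)$. For every pair $x,x'$,
\[
d_{X}(x,x')-\tfrac{1}{\lambda}d_{Y}\big(f(x),f(x')\big)=0 .
\]
Hence the supremum at this $\lambda$ is $0$. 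Since every quantity in the definition is nonnegative, the infimum is also $0$.

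\emph{Barcodes.} For $\lambda>0$, the homothety condition forces $f$ to be injective: $f(x)=f(x')$ gives $\lambda d_X(x,x')=0$, hence $x=x'$. Lemma~\ref{lem:inducedmetric} then tells us that $d_f$ is a metric on $X$. By its definition \eqref{eq:dXpullbackmetric},
\[
d_f(x,x')=d_Y\big(f(x),f(x')\big)=\lambda\, d_X(x,x') .
\]
So $d_f$ and $\lambda d_X$ are literally the same function on $X\times X$. The identity map $(X,\lambda d_X)\to(X,d_f)$ is therefore an isometry; it is distance-preserving and surjective. In particular, the Vietoris--Rips complexes $\mathcal{R}^{\epsilon}(X,\lambda d_X)$ and $\mathcal{R}^{\epsilon}(X,d_f)$ coincide for every $\epsilon$, including the inclusion maps between different values of $\epsilon$. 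Applying the functors $C_\bullet$ and $H_n$ then produces identical persistence vector spaces $\mathfrak{H}_n$ in every degree. The uniqueness part of Proposition~\ref{prop:pfdpersdecomp} shows that their persistence diagrams, and so their barcodes, agree.

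As a supplementary remark, I could also record how the barcode rescales relative to $(X,d_X)$. A simplex $\sigma$ satisfies $\mathrm{diam}(\sigma,\lambda d_X)<\epsilon$ exactly when $\mathrm{diam}(\sigma,d_X)<\epsilon/\lambda$. So $\mathcal{R}^{\epsilon}(X,\lambda d_X)=\mathcal{R}^{\epsilon/\lambda}(X,d_X)$, and every bar $(a,b)$ of $(X,d_X)$ becomes the bar $(\lambda a,\lambda b)$.

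There is no real obstacle here. The only point needing care is the $\lambda=0$ case, which the statement allows by writing $\lambda\ge0$. There injectivity fails, $d_f$ is not a metric, and the barcode of $(X,\lambda d_X)$ is not defined in the paper's framework, so I would restrict to $\lambda>0$.
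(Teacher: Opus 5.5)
Your proof is correct. The paper states this proposition without proof, treating it as immediate from the definitions, and your argument is exactly that verification: take the given $\lambda$ in the infimum defining $\mathrm{sfdis}(f)$, and observe that $d_f=\lambda d_X$ as functions, so the Vietoris--Rips filtrations, persistence modules and barcodes literally coincide. Your caveat about $\lambda=0$ is also right and corrects the statement as written: a $0$-homothety is constant, so $\mathrm{sfdis}(f)=\mathrm{diam}(X,d_X)>0$ whenever $\#X\ge 2$, which matches the paper's own two-point remark just before the proposition (the barcode half would still hold formally if pseudometrics were admitted, since then $d_f\equiv 0\equiv \lambda d_X$).
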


Let us now briefly revisit some of the quantum encodings we have already seen in Section~\ref{subsec:feature-maps}. We first note that it is rarely possible to preserve persistent homology with amplitude encoding as in Example~\ref{ex:ampencoding}, which can drastically distort the distances between the original data set, even if we use variations on amplitude encoding as described in Ref.~\cite{PBVP24}. One can make a precise statement in this direction. In what follows, given a metric space $(X,d_{X})$ the number 
\begin{equation}
\mathrm{diam}(X,d_{X}):=\sup_{x,x'\in X}d_{X}(x,x')
\end{equation}
denotes the \define{diameter} of $(X,d_{X})$. When $d_{X}$ is understood, it is sometimes denoted by $\mathrm{diam}(X)$. 

\begin{theorem}
\label{thm:amplitudeencodingsucksforTDA}
Let $(X,d_{X})$ be a point cloud in $\R^{2^{d}}\setminus\{0\}$, with $d\in\N$. Let $\rho:\R^{2^{d}}\setminus\{0\}\to\states(\C^{2^d})$ be the amplitude encoding map from~\eqref{eqn:amplitudeencoding}, where $\states(\C^{2^{d}})$ is equipped with 
the Bures fidelity metric from Example~\ref{defn:quantumdistances}, and let $\rho|_{X}$ be the restriction of $\rho$ to the point cloud $X$. Then the following facts hold. 
\begin{enumerate}[i.]
\item 
The distortion of $\rho|_{X}$ satisfies
\begin{equation}
\label{eqn:ampencinequality}
0\le\mathrm{dis}(\rho|_{X})\le
\max\big\{\mathrm{diam}(X,d_{X}),\mathrm{diam}(Y,d_{Y})\big\},
\end{equation}
where $Y=\rho(X)$ and $d_{Y}$ is the induced metric from $\states(\C^{2^d})$.
Moreover, there is an example of a dataset $X$ satisfying $\mathrm{dis}(\rho|_{X})=0$, another dataset $X$ satisfying $\mathrm{dis}(\rho|_{X})=\mathrm{diam}(X,d_{X})$, and finally for every $\epsilon$ with $0<\epsilon<1$, there exists a dataset $X$ such that $\#X$ is constant (independent of $\epsilon$) and $\mathrm{dis}(\rho|_{X})=\mathrm{diam}(Y,d_{Y})-\epsilon$, respectively. 
\item
The scale-free distortion of $\rho|_{X}$ satisfies 
\begin{equation}
\label{eqn:ampencinequalitysfdis}
0\le\mathrm{sfdis}(\rho|_{X})\le
\max\big\{\mathrm{diam}(X,d_{X}),\mathrm{diam}(Y,d_{Y})\big\}.
\end{equation}
Moreover, there is an example of a dataset $X$ satisfying $\mathrm{sfdis}(\rho|_{X})=0$ and another dataset $X$ satisfying $\mathrm{sfdis}(\rho|_{X})=\mathrm{diam}(X,d_{X})$.  
\item
There exists a sequence of finite metric spaces $(X_{n},d_{X_n})$ in $\R^{2^{d}}\setminus\{0\}$ such that $\#X_{n}$ is constant (independent of $n$) and $\lim_{n\to\infty}\mathrm{sfdis}(\rho|_{X_n})=\infty$.
\end{enumerate}
\end{theorem}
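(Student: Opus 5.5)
The upper bounds in (i) and (ii) follow from one elementary observation. Every term in the supremum defining $\mathrm{dis}(\rho|_X)$ is $|a-b|$ with $a=d_X(x,x')\in[0,\mathrm{diam}(X)]$ and $b=d_Y(\rho(x),\rho(x'))\in[0,\mathrm{diam}(Y)]$. Two nonnegative numbers satisfy $|a-b|\le\max\{a,b\}$, which gives \eqref{eqn:ampencinequality}. For \eqref{eqn:ampencinequalitysfdis}, I will use the trivial bound $\mathrm{sfdis}(f)\le\mathrm{dis}(f)$, obtained by taking $\lambda=1$ in the infimum. Nonnegativity is immediate in both cases.

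The examples come next. For zero distortion I take a two-point set $X=\{x,x'\}$ with unit vectors $x\perp x'$ in the nonnegative orthant, rescaled so that $\|x-x'\|$ equals the Bures distance. Orthogonal pure states have $F=0$, so $d_F=1$. Thus $x=e_0/\sqrt2$ and $x'=e_1/\sqrt2$ give $d_X=1=d_Y$, hence $\mathrm{dis}=\mathrm{sfdis}=0$. This already covers both zero cases.

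For $\mathrm{dis}=\mathrm{diam}(X)$, and likewise for sfdis, I take collinear points $X=\{x,2x\}$. These map to the same state, so $d_Y=0$ and the distortion equals $d_X(x,2x)=\mathrm{diam}(X)$. Moreover every $\lambda$ gives the same value, so $\mathrm{sfdis}=\mathrm{diam}(X)$ as well.

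For $\mathrm{dis}=\mathrm{diam}(Y)-\epsilon$, I will use a fixed number of points, say three or four. Two of them, $x$ and $x'$, are orthogonal, so $\mathrm{diam}(Y)=1$, and they have huge norm so that the contribution $|d_X-1|$ is large. To get exactly $1-\epsilon$, I instead choose the points so that the pair realizing the supremum has $d_X=\epsilon$ and $d_Y=1$. Concretely, take $x=t e_0$ and $x'=t e_1$ with $t\sqrt2=\epsilon$, so $d_Y(x,x')=1$ and $|d_X-d_Y|=1-\epsilon$. All other pairs must have discrepancy at most $1-\epsilon$. With $X=\{x,x'\}$ only, $\mathrm{diam}(Y)=1$ and $\mathrm{dis}=1-\epsilon$, which is exactly the claim with $\#X=2$.

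For (iii), rescaling by $\lambda$ can only fix an overall factor, so I need three points whose distance ratios are wildly incompatible. I take $x_n=e_0$, $x_n'=n e_0$, and $z_n=e_1$. Then $x_n$ and $x_n'$ map to the same state, so $d_Y=0$, while $d_X=n-1$. The pair $(x_n,z_n)$ has $d_X=\sqrt2$ and $d_Y=1$.

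For any $\lambda$, the supremum is at least $\max\{\,n-1,\ |\sqrt2-1/\lambda|\,\}\ge n-1$. The first entry comes from the collinear pair, whose term is $|n-1-0|$ regardless of $\lambda$. So $\mathrm{sfdis}\ge n-1\to\infty$ with $\#X_n=3$, and in fact two points $\{e_0,ne_0\}$ already suffice.

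The main obstacle I expect is the precise attainment statements, not the bounds. In particular, I need to check that the examples with "$\mathrm{diam}(Y)-\epsilon$" are genuinely independent of $\epsilon$ in cardinality. I also need to confirm the Bures normalization $d_F=\sqrt{1-|\langle\psi|\phi\rangle|}$ for pure states when computing $d_Y$ exactly.
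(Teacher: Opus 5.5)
Your proposal is correct and follows essentially the same route as the paper. The bounds come from $|a-b|\le\max\{a,b\}$ for $a,b\ge 0$, which is the correct form of the elementary inequality the paper invokes, and your examples are the paper's: nonnegative orthogonal points on the sphere of radius $1/\sqrt{2}$, a collapsed collinear pair, and an orthogonal pair at Euclidean distance $\epsilon$. For (iii), your two-point family $\{e_0,ne_0\}$ is a stripped-down version of the paper's four-point set $\{\mathbf{e}_1,n\mathbf{e}_1,\frac{1}{n}(\mathbf{e}_1\pm\mathbf{e}_2)\}$, whose divergence likewise comes from the collapsed collinear pair, and your $\lambda$-independent bound $\mathrm{sfdis}\ge n-1$ fully justifies it.
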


We note that the fact that $\mathrm{dis}(\rho)=\infty$ when $\rho$ is defined on all of Euclidean space minus the origin follows from the fact that the Euclidean distance is unbounded, whereas every metric from Example~\ref{defn:quantumdistances} is bounded. The last claim about the scale-free distortion is the key point of this theorem, because it states that preserving the topology of all datasets using amplitude encoding is generally not possible (even up to an overall scale). 

\begin{proof}
[Proof of Theorem~\ref{thm:amplitudeencodingsucksforTDA}]
Equation~\eqref{eqn:ampencinequality} follows from the simple inequalities
\begin{equation}
|u-v|\le|u|
\quad\text{ and }\quad
|u-v|\le|v|, 
\end{equation}
which hold for all $u,v\in[0,\infty)$. As for the examples, we mention the three cases. 
If $X\subset S^{2^{d}-1}\cap\R^{2^{d}}_{\ge0}$, so that each element of $X$ is a vector with non-negative entries that also lies on the unit sphere in Euclidean space, then the distortion of $\rho|_{X}$ is $0$ if the metric $d_{Y}$ is $\sqrt{2}\,d_{F}$, where $d_{F}$ is the Bures fidelity metric. This is because the Euclidean distance between such points satisfies $d_{X}(\mathbf{x}_j,\mathbf{x}_k)=\sqrt{2}\sqrt{1-\mathbf{x}_j^{T}\mathbf{x}_k}$, which is equal to the Bures fidelity distance up to an overall scalar factor. Thus, amplitude encoding is a homothety in this case, which implies that the scale-free distortion is zero. The distortion is exactly zero when the data have all non-negative entries and lie on the sphere of radius $\frac{1}{\sqrt{2}}$. 
An example where $\dis(\rho|_{X})=\mathrm{diam}(X,d_{X})$ is $X=\{ \mathbf{e}_{1}, c\mathbf{e}_{1} \}$, where $\mathbf{e}_{1}$ is the first standard unit vector and $c>1$. In this case, $\mathrm{diam}(Y,d_{Y})=0$, while $\mathrm{diam}(X,d_{X})=c$, the latter of which could be arbitrarily large. This example also satisfies $\mathrm{sfdis}(\rho|_{X})=\mathrm{diam}(X,d_{X})$. 
Now, given an $0<\epsilon<1$, an example where $\mathrm{dis}(\rho|_{X})=\mathrm{diam}(Y,d_{Y})-\epsilon$ is $X=\big\{ \frac{\epsilon}{2}(\mathbf{e}_{1}+\mathbf{e}_{2}),\frac{\epsilon}{2}(\mathbf{e}_{1}-\mathbf{e}_{2})\big\}$. In this case, $\mathrm{diam}(X,d_{X})=\epsilon$, while $\mathrm{diam}(Y,d_Y)=1$. However, the scale-free distortion in this case is $0$ (because the scale-free distortion is zero for a metric space of two points and an injective encoding). 
As for the final claim, for each $n\in\N$, let 
\begin{equation}
X_{n}:=\left\{\mathbf{e}_{1},n\mathbf{e}_{1},\frac{1}{n}(\mathbf{e}_{1}+\mathbf{e}_{2}),\frac{1}{n}(\mathbf{e}_{1}-\mathbf{e}_{2})\right\}
\end{equation}
(see Figure~\ref{fig:ampencodscale}).
\begin{figure}
\begin{tabular}{cc}
\begin{tikzpicture}
\draw[opacity=0.3] (0,0) circle (1cm);
\node at (1,0) {$\bullet$};
\node[opacity=0.4] at (0,0) {\scalebox{0.5}{$\bullet$}};
\node at (1.2,-0.3) {$x_1$};
\node at (0.25,0.25) {$\bullet$};
\node at (0.25,0.5) {$x_3$};
\node at (0.25,-0.25) {$\bullet$};
\node at (0.25,-0.5) {$x_4$};
\node at (4,0) {$\bullet$};
\node at (4,-0.3) {$x_2$};
\end{tikzpicture} 
\\
\begin{tikzpicture}
\draw[opacity=0.3] (0,0) circle (1cm);
\node at (1,0) {$\bullet$};
\node[opacity=0.4] at (0,0) {\scalebox{0.5}{$\bullet$}};
\node at (1.6,-0.3) {$y_1=y_2$};
\node at ({1/sqrt(2)},{1/sqrt(2)}) {$\bullet$};
\node at ({0.25+1/sqrt(2)},{0.25+1/sqrt(2)}) {$y_3$};
\node at ({1/sqrt(2)},{-1/sqrt(2)}) {$\bullet$};
\node at ({0.25+1/sqrt(2)},{-0.25-1/sqrt(2)}) {$y_4$};
\end{tikzpicture} 
\end{tabular}
\caption{The top image shows a finite metric space $X$ consisting of four points $x_1,x_2,x_3,x_4$ in Euclidean space, together with the standard unit circle drawn for reference. When $n$ is large, the distance between $x_3$ and $x_4$ is small, but their distance to $x_2$ is large. The bottom image shows amplitude encoding applied to $X$ so that the image lies in the real subspace of the Hilbert space $\C^{2}$ consisting of vectors with real entries. The vectors $y_3$ and $y_4$ are now perpendicular, so their distance is the maximum possible value, which is $1$. Meanwhile $y_1=y_2$ due to the lack of injectivity of amplitude encoding so that their distance is now $0$. This difference shows that the scale-free distortion of amplitude encoding when applied to such a dataset can be arbitrarily large.}
\label{fig:ampencodscale}
\end{figure}
Using the Bures fidelity distance, the scale-free distortion for amplitude encoding is $n-1$ for $n\ge2$. Hence, the scale-free distortion tends to infinity as $n$ increases. Although the actual value of the scale-free distortion is slightly different for the Hilbert--Schmidt or trace distances, the same conclusion holds as $n$ increases. 
\end{proof}

We have also already seen in Figure~\ref{fig:ABCqenc} that dense angle encoding also distorts distances, bringing distant points closer to each other in quantum state space due to the periodicity of the encoding. A similar statement can be made for ordinary angle encoding. IQP encoding also distorts distances, and as of now the square-root encoding is only valid for data that already lie on the standard simplices. We will actually illustrate how several of these encodings distort the persistent homology in Section~\ref{sec:experiments} in explicit examples. We are thus left wondering if there are \emph{any} quantum encodings that could preserve persistent homology. 

Indeed, note that it is generally \emph{impossible} to preserve the Euclidean distance in quantum state space, simply due to the fact that $\states(\mathcal{H})$ is compact and all of the metrics introduced for it in Section~\ref{sec:metricspaces} are bounded, while the Euclidean distance is always unbounded. A second reason is that even if $X$ is a compact subset of Euclidean space with diameter bounded by $1$, it is not generally possible to isometrically embed this set into the space of pure states equipped with one of the metrics introduced in Section~\ref{sec:metricspaces} due to its intrinsic curvature and non-Euclidean geometry since the set of pure states is a complex projective space. 

We provide a few resolutions to these problems. One approach, which we describe in detail in Section~\ref{sec:MAE}, involves first \emph{globally} (i.e., uniformly) rescaling the metric proportional to the size of the data set, which is assumed finite, and then applying a sequence of isometries followed by diagonal encoding. The scaling and isometry component of such an approach has the benefit that the persistent homology and associated barcodes will all scale uniformly by the same constant factor, so that the same topological inference can be made before and after encoding. We note that nonuniform scaling, such as what occurs when standardizing in multivariate statistics (i.e., rescaling a random variable by its standard deviation)~\cite{Strang2019Data}, may distort the persistent homology and barcodes, as shown in Figure~\ref{fig:figure8}. 

\begin{figure}
\includegraphics{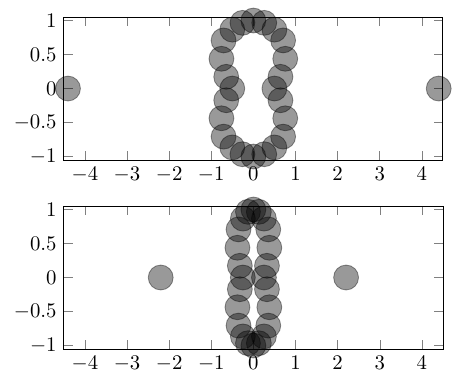}
\caption{The top image is one of a centered data set with disks of a fixed radius and each data point at the center of one disk. The bottom image is the same data set after being rescaled along the horizontal direction so that the standard deviations in each direction are identical. The sizes of the discs are the same in each figure. The inferred topology of the data set is different due to the rescaling. More convincing examples can be constructed.}
\label{fig:figure8}
\end{figure}

One downside to uniformly scaling is that it depends on the given data. However, it seems that there are not many alternative options because rescaling methods that do not depend on the data, such as a sigmoid function $x\mapsto \frac{1}{1+e^{-x}}$ that can be used to smoothly map $\R$ onto $(0,1)$, drastically distort distances. As such, the rescaling methods that we will utilize will depend on the data to avoid such distortions. Intuitively, if we take the manifold hypothesis seriously~\cite{FeMiNa2016}, so that our data are sampled from a compact submanifold of Euclidean space (or compact subspace in general), then a uniform rescaling depending on the data set would allow for as much preservation of persistent homology as possible.  

Another approach, which we discuss in Section~\ref{sec:QMDS}, is to theoretically minimize the distortion of a mapping. Namely, given a finite metric space $(X,d_{X})$ and a fixed Hilbert space $\mathcal{H}$, with some metric $d_{Y}$ chosen on the quantum state space $Y=\states(\mathcal{H})$, which mapping $\rho: X\to Y$ minimizes the distortion of $f$, as defined in Definition~\ref{defn:distortionf}? A related problem appears in the context of classical multidimensional scaling~\cite{DeLeeuwHeiser1982}, where $Y$ is replaced with $\R^{k}$ for some $k\in\N$ and $d_{Y}$ the Euclidean metric. Therefore, since the space of pure states in a Hilbert space is identified with complex projective space $\CP^{n}$ for some $n\in\N$, the question we are asking is to find a mapping $\rho:X\to\CP^{n}$ that minimizes the distortion, which is a generalization of multidimensional scaling to complex projective spaces. We will develop such an approach as a possible answer to what quantum encodings best preserve persistent homology. Such a distortion-minimizing encoding will provide a theoretical lower bound on the achievable distortion for \emph{any} quantum encoding. In other words, every other quantum encoding will have distortion greater than this optimal one. Such a result will then aid in the search for quantum circuit design for encodings that preserve persistent homology. Namely, encodings that come close to this theoretical limit can be considered as close-to-optimal encodings for purposes of persistent homology. 

%%%%%%%%%%%%%%%%%%%%%%%%%%%%%%%%%%%%%%%%%%%%%%%
\subsection{Uniformly transformed diagonal encoding}
\label{sec:MAE}
%%%%%%%%%%%%%%%%%%%%%%%%%%%%%%%%%%%%%%%%%%%%%%%

Motivated by the distance-preserving properties of diagonal encoding from Example~\ref{ex:diagonalencoding}, this section will extend the applicability of diagonal encoding for datasets in $\R^{m}$, rather than just data already coming from a simplex. Our main result in this section is an \emph{exact} preservation of persistent homology for Euclidean datasets (point clouds).

Let $X\subset\R^{m}$ be a finite set, interpreted as a dataset or point cloud, inside $m$-dimensional Euclidean space equipped with the Euclidean metric. In what follows, we will apply a sequence of metric embeddings and isometries in order to embed $X$ into the $m$-dimensional plane $\Pi_{m}$ described by the equation $x_1+\cdots+x_m+x_{m+1}=1$ inside $\R^{m+1}$, i.e.,
\[
\Pi_{m}=\big\{(x_1,\dots,x_{m+1})\in\R^{m+1}\,:\,x_1+\cdots+x_{m+1}=1\big\}.
\]
Thus, the distances between elements in the dataset will be preserved and therefore any calculations of persistent homology will be invariant under such a transformation. This sequence of steps are illustrated in Figure~\ref{fig:shiftrescaledata}. 

\begin{figure}
\begin{tikzpicture}
\draw[->] (-0.25,0) -- (3.75,0) node[right]{$\R^{m}$};
\draw (0,0.1) -- (0,-0.1) node[below]{$0$};
\node at (0.75,0) {$\bullet$};
\node at (1.0,0) {$\bullet$};
\node at (1.25,0) {$\bullet$};
\node at (1.5,0) {$\bullet$};
\node at (2.25,0) {$\bullet$};
\node at (3,0) {$\bullet$};
\node at (3.5,0) {$\bullet$};
\node at (2.75,0) {$\bullet$};
%sum=16
%average=2
%largest radius=3.5-2=1.5
\end{tikzpicture}
\begin{tikzpicture}
\node at (-2,0) {};
\node at (2,0) {};
\draw[thick,->] (0,0.25) -- node[left]{center} node[right]{data} (0,-0.5);
\end{tikzpicture}
\begin{tikzpicture}
\def\c{2};
\draw[->] ({-0.25-\c},0) -- ({3.75-\c},0) node[right]{$\R^{m}$};
\draw (0,0.1) -- (0,-0.1) node[below]{$0$};
\node at ({0.75-\c},0) {$\bullet$};
\node at ({1.0-\c},0) {$\bullet$};
\node at ({1.25-\c},0) {$\bullet$};
\node at ({1.5-\c},0) {$\bullet$};
\node at ({2.25-\c},0) {$\bullet$};
\node at ({3-\c},0) {$\bullet$};
\node at ({3.5-\c},0) {$\bullet$};
\node at ({2.75-\c},0) {$\bullet$};
\end{tikzpicture}
\begin{tikzpicture}
\node at (-2,0) {};
\node at (2,0) {};
\draw[thick,->] (0,0.25) -- node[left]{rotate} node[right]{data} (0,-0.5);
\end{tikzpicture}
\begin{tikzpicture}
\def\c{2};
\draw[->] ({-0.25-\c},0) -- ({3.75-\c},0) node[right]{$\R^{m}$};
\draw[->] (0,-1.75) -- (0,1.75) node[above]{$\R$};
\draw (0,0.1) -- (0,-0.1) node[below,xshift=-5pt]{$0$};
\draw (-1.75,1.75) -- (1.75,-1.75);
\node at ({(0.75-\c)/sqrt(2)},{-(0.75-\c)/sqrt(2)}) {$\bullet$};
\node at ({(1.0-\c)/sqrt(2)},{-(1.0-\c)/sqrt(2)}) {$\bullet$};
\node at ({(1.25-\c)/sqrt(2)},{-(1.25-\c)/sqrt(2)}) {$\bullet$};
\node at ({(1.5-\c)/sqrt(2)},{-(1.5-\c)/sqrt(2)}) {$\bullet$};
\node at ({(2.25-\c)/sqrt(2)},{-(2.25-\c)/sqrt(2)}) {$\bullet$};
\node at ({(3-\c)/sqrt(2)},{-(3-\c)/sqrt(2)}) {$\bullet$};
\node at ({(3.5-\c)/sqrt(2)},{-(3.5-\c)/sqrt(2)}) {$\bullet$};
\node at ({(2.75-\c)/sqrt(2)},{-(2.75-\c)/sqrt(2)}) {$\bullet$};
\end{tikzpicture}
\begin{tikzpicture}
\node at (-2,0) {};
\node at (2,0) {};
\draw[thick,->] (0,0.25) -- node[left]{translate} node[right]{data} (0,-0.5);
\end{tikzpicture}
\begin{tikzpicture}
\def\c{2};
\draw[->] ({-0.25-\c},0) -- ({3.75-\c},0) node[right]{$\R^{m}$};
\draw[->] (0,-1.75) -- (0,1.75) node[above]{$\R$};
\draw (0,0.1) -- (0,-0.1) node[below,xshift=-5pt]{$0$};
\draw ({-1.75+1},1.75) -- (1.75,{-1.75+1});
\node at ({(0.75-\c)/sqrt(2)+1/2},{-(0.75-\c)/sqrt(2)+1/2}) {$\bullet$};
\node at ({(1.0-\c)/sqrt(2)+1/2},{-(1.0-\c)/sqrt(2)+1/2}) {$\bullet$};
\node at ({(1.25-\c)/sqrt(2)+1/2},{-(1.25-\c)/sqrt(2)+1/2}) {$\bullet$};
\node at ({(1.5-\c)/sqrt(2)+1/2},{-(1.5-\c)/sqrt(2)+1/2}) {$\bullet$};
\node at ({(2.25-\c)/sqrt(2)+1/2},{-(2.25-\c)/sqrt(2)+1/2}) {$\bullet$};
\node at ({(3-\c)/sqrt(2)+1/2},{-(3-\c)/sqrt(2)+1/2}) {$\bullet$};
\node at ({(3.5-\c)/sqrt(2)+1/2},{-(3.5-\c)/sqrt(2)+1/2}) {$\bullet$};
\node at ({(2.75-\c)/sqrt(2)+1/2},{-(2.75-\c)/sqrt(2)+1/2}) {$\bullet$};
\end{tikzpicture}
\begin{tikzpicture}
\node at (-2,0) {};
\node at (2,0) {};
\draw[thick,->] (0,0.25) -- node[left]{rescale data} node[right]{uniformly} (0,-0.5);
\end{tikzpicture}
\begin{tikzpicture}[scale=1.5]
\def\c{2};
\def\d{1.5*sqrt(2)}
\node at ({-0.25-\c+0.2},0) {};
\draw[->] ({-0.25+1.5-\c},0) -- ({3.75-\c},0) node[right]{$\R^{m}$};
\draw[->] (0,{-1.75+1}) -- (0,1.75) node[above]{$\R$};
\draw (0,0.1) -- (0,-0.1) node[below,xshift=-5pt]{$0$};
\draw ({-1.75+1},1.75) -- (1.75,{-1.75+1});
\node at ({(0.75-\c)/(sqrt(2)*\d)+1/2},{-(0.75-\c)/(sqrt(2)*\d)+1/2}) {$\bullet$};
\node at ({(1.0-\c)/(sqrt(2)*\d)+1/2},{-(1.0-\c)/(sqrt(2)*\d)+1/2}) {$\bullet$};
\node at ({(1.25-\c)/(sqrt(2)*\d)+1/2},{-(1.25-\c)/(sqrt(2)*\d)+1/2}) {$\bullet$};
\node at ({(1.5-\c)/(sqrt(2)*\d)+1/2},{-(1.5-\c)/(sqrt(2)*\d)+1/2}) {$\bullet$};
\node at ({(2.25-\c)/(sqrt(2)*\d)+1/2},{-(2.25-\c)/(sqrt(2)*\d)+1/2}) {$\bullet$};
\node at ({(3-\c)/(sqrt(2)*\d)+1/2},{-(3-\c)/(sqrt(2)*\d)+1/2}) {$\bullet$};
\node at ({(3.5-\c)/(sqrt(2)*\d)+1/2},{-(3.5-\c)/(sqrt(2)*\d)+1/2}) {$\bullet$};
\node at ({(2.75-\c)/(sqrt(2)*\d)+1/2},{-(2.75-\c)/(sqrt(2)*\d)+1/2}) {$\bullet$};
\end{tikzpicture}
\caption{A sequence of three affine transformations, each of which leaves the Euclidean distances between all points invariant, followed by a uniform rescaling. The first step centers the data. The second step applies a rotation inside $\R^{m+1}$. The third step translates this subspace to the hyperplane $\Pi_{m}$. In the last step, a uniform rescaling is applied so that the data fit onto the simplex, so that the data vectors can be interpreted as probability vectors and where the average of the data is at the uniform probability. Although the last step does not define an isometry, the distances are all uniformly scaled proportional to the diameter of the data set. Note that the first and last steps depend on the given data set. Additionally, it is convenient to implement the rescaling immediately after centering the data, which we do when summarizing this procedure in~\eqref{eqn:TXmapping}.}
\label{fig:shiftrescaledata}
\end{figure}

By choosing an ordering of the $n$ elements of $X$, the dataset $X$ can be represented as an $m\times n$ matrix of column vectors 
\begin{equation}
A_{0}=\begin{bmatrix}|&&|\\\mathbf{a}_{1}&\cdots&\mathbf{a}_{n}\\|&&|\end{bmatrix}.
\end{equation}
Let $\mathbf{1}_{n}\in\R^{n}$ be the vector of all $1$'s, which is not to be confused with the $n\times n$ identity matrix denoted by $\mathds{1}_{n}$. Centering the data matrix $A_{0}$ by subtracting the averages of each row from each entry in that row can be expressed by the matrix equation $A_{0}C$, where
\begin{equation}
\label{eq:centeringmatrix}
C:=\mathds{1}_{n}-\frac{1}{n}\mathbf{1}_{n}\mathbf{1}_{n}^{T}
\end{equation}
is the \define{centering matrix}. 
Next, let $A'$ be the $(m+1)\times n$ matrix obtained by including an additional row of $0$'s at the bottom of $A_{0}C$, which can also be described by the matrix operation 
\begin{equation}
A'=
\begin{bmatrix}\mathds{1}_{m}\\0\end{bmatrix}A_{0}C.
\end{equation}

Next, we will construct a particular $(m+1)\times(m+1)$ orthogonal (rotation) matrix $R_{m}$ that sends the plane 
\begin{equation}
P_{m}=\big\{(x_1,\dots,x_m,0)\in\R^{m+1}\big\}
\end{equation}
bijectively and isometrically to the plane 
\begin{equation}
\Pi_{m}^{(0)}=\left\{(x_1,\dots,x_{m+1})\in\R^{m+1}\,:\,\sum_{j=1}^{m+1}x_{j}=0\right\}.
\end{equation}
To set up the construction, let 
\begin{equation}
\mathbf{s}_{m}=\frac{1}{\sqrt{m+1}}(1,\dots,1)
\end{equation}
be the positive unit vector in $\R^{m+1}$ that is normal to the plane $\Pi_{m}$ (and hence also $\Pi_{m}^{(0)}$). Let $\mathbf{e}_{m+1}=(0,\dots,0,1)$ be the last standard unit vector in $\R^{m+1}$. Let 
\begin{align}
\mathbf{u}_{m}&=\frac{\mathbf{e}_{m+1}-\mathbf{e}_{m+1}^{T}\mathbf{s}_{m}}{\lVert \mathbf{e}_{m+1}-\mathbf{e}_{m+1}^{T}\mathbf{s}_{m}\rVert} \nonumber \\
&=\frac{1}{\sqrt{m(m+1)}}(-1,\dots,-1,m)
\end{align}
be the vector perpendicular to $\mathbf{s}_{m}$ and in $\mathrm{span}(\mathbf{s}_{m},\mathbf{e}_{m+1})$. The vector $\mathbf{u}_{m}$ is obtained by the standard Gram--Schmidt orthogonalization procedure~\cite{Strang2022} applied to the ordered pair of vectors $(\mathbf{s}_{m},\mathbf{e}_{m+1})$, as illustrated 
in the following image ($\theta_{m}$ is the angle from $\mathbf{s}_{m}$ to $\mathbf{e}_{m+1}$):

\begin{center}
\begin{tikzpicture}[scale=0.45]
\draw[ultra thick,->](0,0)--(0,5.0) node[above]{$\mathbf{e}_{m+1}$};
\draw[ultra thick,->] (0,0) -- (4.0,3.0) node[right]{$\mathbf{s}_{m}$};
\draw[ultra thick,->] (0,0) -- (-3.0,4.0) node[left]{$\mathbf{u}_{m}$};
\draw[->] (0,0) +(36.87:1cm) arc (36.87:90:1cm) node[above,xshift=10pt]{$\theta_{m}$} ;
\end{tikzpicture}
\end{center}

The rotation matrix $R_{m}$ that we want is a rotation by angle $-\theta_{m}$ that sends the vector $\mathbf{e}_{m+1}$ to $\mathbf{s}_{m}$ and fixes the subspace perpendicular to $\mathrm{span}(\mathbf{s}_{m},\mathbf{e}_{m+1})$. Viewing the vectors $\mathbf{s}_{m}$ and $\mathbf{u}_{m}$ as column vectors, such a rotation matrix is given by 
\begin{align}
R_{m}&=\cos(\theta_{m})\mathbf{s}_{m}\mathbf{s}_{m}^{T}+\sin(\theta_{m}) \mathbf{s}_{m}\mathbf{u}_{m}^{T} \nonumber \\
&-\sin(\theta_{m})\mathbf{u}_{m}\mathbf{s}_{m}^{T}+\cos(\theta_{m})\mathbf{u}_{m}\mathbf{u}_{m}^{T} \nonumber \\
&+\mathds{1}_{m+1}-\mathbf{s}_{m}\mathbf{s}_{m}^{T}-\mathbf{u}_{m}\mathbf{u}_{m}^{T}. \label{eqn:Rmmatrixa}
\end{align}
The first four terms describe a rotation in the plane $\mathrm{span}(\mathbf{s}_{m},\mathbf{e}_{m+1})$ by angle $-\theta_{m}$ (compare with the standard $2\times 2$ matrix for a rotation). The last three terms say that the orthogonal complement $\mathrm{span}(\mathbf{s}_{m},\mathbf{e}_{m+1})^{\perp}$ remains fixed. The matrix $R_{m}$ is indeed an orthogonal matrix, as one can check that $R_{m}^{T}R_{m}=\mathds{1}_{m+1}$. Before checking that this rotation matrix sends the plane $P_{m}\subset\R^{m+1}$ (given by $x_{m+1}=0$) to $\Pi_{m}^{(0)}$, we will first simplify expression~\eqref{eqn:Rmmatrixa} as follows. First, notice that 
\begin{equation}
\cos(\theta_{m})=\mathbf{e}_{m+1}^{T}\mathbf{s}_{m}=\frac{1}{\sqrt{m+1}},
\end{equation}
which then implies
\begin{equation}
\sin(\theta_{m})=\frac{\sqrt{m}}{\sqrt{m+1}}.
\end{equation}
Therefore, 
\begin{align}
R_{m}&=\mathds{1}_{m+1}+\frac{1-\sqrt{m+1}}{\sqrt{m+1}}\left(\mathbf{s}_{m}\mathbf{s}_{m}^{T}+\mathbf{u}_{m}\mathbf{u}_{m}^{T}\right) \nonumber \\
&+\sqrt{\frac{m}{m+1}}\left(\mathbf{s}_{m}\mathbf{u}_{m}^{T}-\mathbf{u}_{m}\mathbf{s}_{m}^{T}\right).
\end{align}
With this expression, we next check that $R_{m}$ sends the plane $P_{m}$ to the plane $\Pi_{m}^{(0)}$ by showing that $R_{m}^{T}$, the inverse of $R_{m}$, takes the plane $\Pi_{m}^{(0)}$ to $P_{m}$. If for each $i\in\{1,\dots,m+1\}$, we set 
\begin{equation}
\mathbf{v}_{i}:=\mathbf{e}_{i}-\mathbf{c}_{m}
\end{equation}
to be the vertices of the standard $m$-simplex $\Delta_{m}\subset\R^{m+1}$ after it has been shifted by the negative of the vector 
\begin{equation}
\mathbf{c}_{m}=\frac{1}{m+1}(1,\dots,1)=\frac{1}{\sqrt{m+1}}\mathbf{s}_{m}
\end{equation}
in $\R^{m+1}$, then the inverse of $R_{m}$, which is $R_{m}^{T}$, should send each of the vectors $\mathbf{v}_{i}$ to vectors in the plane $P_{m}$. This can indeed be verified by showing that $\mathbf{e}_{m+1}^{T}R_{m}^{T}\mathbf{v}_{i}=0$ for all $i\in\{1,\dots,m+1\}$.

For example, in dimensions $m=1$ and $m=2$, the rotation matrix $R_{m}$ is given by 
\begin{equation}
R_{1}=\frac{1}{\sqrt{2}}\begin{bmatrix}1&1\\-1&1\end{bmatrix},
\end{equation}
which is illustrated in Figure~\ref{fig:shiftrescaledata}, 
and
\begin{equation}
\label{eq:rotation2to3}
R_{2}=\frac{1}{\sqrt{3}}\begin{bmatrix}\frac{1+\sqrt{3}}{2}&\frac{1-\sqrt{3}}{2}&1\\\frac{1-\sqrt{3}}{2}&\frac{1+\sqrt{3}}{2}&1\\-1&-1&1\end{bmatrix},
\end{equation}
which is illustrated in Figure~\ref{fig:2drescaling}, 
respectively. 
For general $m$, it is given by 
\begin{equation}
R_{m}=
\frac{1}{\sqrt{m+1}}
\begin{bmatrix}
\frac{1+(m-1)\sqrt{m+1}}{m}&\cdots&\frac{1-\sqrt{m+1}}{m}&1\\
\vdots&&\vdots&\vdots\\
\frac{1-\sqrt{m+1}}{m}&\cdots&\frac{1+(m-1)\sqrt{m+1}}{m}&1\\
-1&\cdots&-1&1
\end{bmatrix},
\end{equation}
where the top-left $m\times m$ submatrix has all diagonal entries given as shown, and all the other entries take the same value as the off-diagonal entries shown. More precisely, the $ij$ entry of $R_{m}$ is given by 
\begin{equation}
\mathbf{e}_{i}^{T}R_{m}\mathbf{e}_{j}=%\frac{1}{\sqrt{m+1}}
\begin{cases}
\frac{1+(m-1)\sqrt{m+1}}{m\sqrt{m+1}} & \mbox{if $i,j\in[m]$, $i=j$}\\
\frac{1-\sqrt{m+1}}{m\sqrt{m+1}} & \mbox{if $i,j\in[m]$, $i\ne j$}\\
-\frac{1}{\sqrt{m+1}}&\mbox{if $i=m+1$, $j\in[m]$}\\
\frac{1}{\sqrt{m+1}}&\mbox{if $i\in[m+1]$, $j=m+1$},
\end{cases}
\end{equation}
where we've used the shorthand notation $[k]:=\{1,\dots,k\}$. 

Next, consider the affine transformation that translates all vectors in $\R^{m+1}$ by the vector $\mathbf{c}_{m}$, which, in particular, translates the origin $0$ to the center of the $m$-simplex $\Delta_{m}$. If we have an $(m+1)\times n$ matrix $A''=R_{m}A'$, then such an affine transformation can be described using matrix operations as $A''+S$, where 
\begin{equation}
S:=\frac{1}{m+1}\mathbf{1}_{m+1}\mathbf{1}_{n}^{T}.
\end{equation}

Next, in order to rescale the data so that it lies in the $m$-simplex $\Delta_{m}\subset\R^{m+1}$, we need to calculate the scalar factor. To obtain this, one can uniformly scale the centered dataset so that it lies within the largest $(m-1)$-dimensional sphere inside of $\R^{m}$ that fits inside a regular $m$-dimensional tetrahedron whose side lengths are $\sqrt{2}$ and whose center is at the origin. Such a tetrahedron will be isometric to the $m$-simplex $\Delta^{m}$ inside $\R^{m+1}$, and the radius of the largest inscribed $(m-1)$-dimensional sphere that fits inside this tetrahedron is $\frac{1}{\sqrt{m(m+1)}}$. Hence, if the largest radial distance of the centered dataset is denoted by 
\begin{equation}
r(X):=\frac{\mathrm{diam}(X)}{2},
\end{equation}
then the data should be scaled by dividing the data vectors by $r(X)\sqrt{m(m+1)}$ in order to guarantee that the transformed data lie inside the $m$-simplex $\Delta^{m}$ inside $\R^{m+1}$ after the above sequence of affine transformations (see Figure~\ref{fig:2drescaling} for the visual proof of this when $m=2$).

\begin{figure}
    \begin{tikzpicture}[scale=2.5]
    \node at (1,1) {$\R^{2}$};
    \draw[->] (-0.75,0) -- (1,0) node[right]{$x_1$};
    \draw[->] (0,-0.75) -- (0,1) node[above]{$x_2$};
    \draw[thick] ({(3+sqrt(3))/6},{(-3+sqrt(3))/6}) node[right]{$\left(\frac{\sqrt{3}+3}{6},\frac{\sqrt{3}-3}{6}\right)$} -- node[right]{$\sqrt{2}$} ({(-3+sqrt(3))/6},{(3+sqrt(3))/6}) node[left]{$\left(\frac{\sqrt{3}-3}{6},\frac{\sqrt{3}+3}{6}\right)$} -- node[left]{$\sqrt{2}$} ({-1/sqrt(3)},{-1/sqrt(3)}) node[below]{$\left(-\frac{1}{\sqrt{3}},-\frac{1}{\sqrt{3}}\right)$} -- node[below]{$\sqrt{2}$} cycle;
    \fill[opacity=0.1] ({(3+sqrt(3))/6},{(-3+sqrt(3))/6}) -- ({(-3+sqrt(3))/6},{(3+sqrt(3))/6}) -- ({-1/sqrt(3)},{-1/sqrt(3)}) -- cycle;
    \fill[opacity=0.2] (0,0) circle ({1/sqrt(6)});
    \draw[dashed,thick] ({-1/sqrt(3)},{-1/sqrt(3)}) -- (0,0);
    \draw[dashed] (0,0) -- ({sqrt(3)/6},{sqrt(3)/6});
    \draw[dashed] ({(-3+sqrt(3))/6},{(3+sqrt(3))/6}) -- (0,0); 
    \draw[dashed,thick] (0,0) -- node[right]{$r$} ({(1/4)-sqrt(3)/12},{(-1/4)-sqrt(3)/12});
    \node at (0.7,0.7) {$\tan\left(\frac{\pi}{6}\right)=\sqrt{2}r$};
    \node at (0.7,0.5) {$\implies r=\frac{1}{\sqrt{6}}$};
    \end{tikzpicture}
    %%%%%%
    \begin{tikzpicture}
    \node at (-2,0) {};
    \node at (2,0) {};
    \draw[thick,->] (0,0.25) -- node[left]{embed, rotate,} node[right]{and translate data} (0,-0.5);
    \end{tikzpicture}
    %%%%%%
	\begin{tikzpicture}
        \node at (3.0,2.25) {$\R^{3}$};
        \node at (-1.6,1.5) {$x_1+x_2+x_3=1$};
	\draw[dotted] (-2,1) -- (2,-1.0);
	\draw (-2,1) -- ({-2+5*(0.26)},{1-2.5*(0.26)});
	\draw[->] (2,-1) -- ({2+5*(0.22)},{-1-2.5*(0.22)}) node[right]{$x_2$};
	\draw[dotted] (1.5,1.5) -- (-1.5,-1.5);
	\draw (1.5,1.5) -- ({1.5-3*(0.23)},{1.5-3*(0.23)});
	\draw[->] (-1.5,-1.5) -- ({-1.5-3*(0.2)},{-1.5-3*(0.2)}) node[left]{$x_1$};
	\draw[dotted] (0,2) -- (0,-2);
	\draw[->] (0,2) -- (0,2.5) node[above]{$x_3$};
	\draw (0,-2) -- (0,{-2+0.75});
	\draw[thick] (0,2) node[right,xshift=1pt]{$(0,0,1)$} -- node[right]{$\sqrt{2}$} (2,-1) node[right,yshift=5pt]{$(0,1,0)$}  -- node[below]{$\sqrt{2}$} (-1.5,-1.5) node[right,yshift=-6pt,xshift=-4pt]{$(1,0,0)$} -- node[left]{$\sqrt{2}$} cycle;
        \fill[fill opacity=0.1] (0,2) -- (2,-1) -- (-1.5,-1.5) -- cycle;
        \draw[thick,->] (0,0) -- (0.20,-0.21);
        \fill[opacity=0.2] (0.20,-0.21) circle (1.05cm);
	\end{tikzpicture}
\caption{A centered data set in $\R^{2}$ must be rescaled to a small enough disc around the origin and inscribed within an equilateral triangle with side lengths equal to $\sqrt{2}$. Trigonometry shows that the radius of this circle must be $r=\frac{1}{\sqrt{6}}$. After embedding $\R^2$ into $\R^3$, the data are then rotated by the transformation $R_{2}$ in~\eqref{eq:rotation2to3}. After the rotation, a translation by the vector $\frac{1}{3}(1,1,1)$ shifts all the data points so that they lie inside the standard 2-simplex inside of $\R^{3}$. The vertices of the particular equilateral triangle drawn in $\R^{2}$ are those that correspond to the vertices of the standard simplex in $\R^{3}$ before applying the rotation $R_{2}$ followed by the translation.}
\label{fig:2drescaling}
\end{figure}

It is easier to apply the rescaling operation \emph{before} the affine transformation shifting the plane. Putting these transformations all together, the transformation from the data $X\subset\R^{m}$, written as the columns of a matrix $A_{0}$, so that it lies inside the $m$-simplex $\Delta^{m}\subset\R^{m+1}$ is given by 
\begin{equation}
\label{eqn:TXmapping}
A_{0}\xmapsto{\mathcal{T}_{X}}\frac{1}{r(X)\sqrt{m(m+1)}}
R_{m}
\begin{bmatrix}\mathds{1}_{m}\\0\end{bmatrix}A_{0}C+S.
\end{equation}
This assignment defines, for a given dataset $X\subset\R^{m}$, a transformation $\mathcal{T}_{X}:\R^{m}\to\R^{m+1}$ that satisfies the conditions
\begin{equation}
\label{eqn:TXscalingmetric}
\frac{d(x,y)}{r(X)\sqrt{m(m+1)}}=d\big(\mathcal{T}_{X}(x),\mathcal{T}_{X}(y)\big)
\end{equation}
for all $x,y\in\R^{m}$ and 
\begin{equation}
\mathcal{T}_{X}(X)\subset \Delta^{m}.
\end{equation}

We call the map $\mathcal{T}_{X}$ the \define{uniform transformation} associated with the point cloud $X$ in $\R^{m}$.
After this classical preprocessing of the data, we can then apply a suitable quantum encoding that is defined on the simplex. If we apply diagonal encoding, then we call the whole encoding 
\begin{equation}
\label{eqn:rhoUTD}
\rho_{\mathrm{UTD}}:X\xrightarrow{\mathcal{T}_{X}}\Delta^{m}\xrightarrow{\rho_{\mathrm{diag}}}\states(\C^{m+1})
\end{equation}
the \define{uniformly transformed diagonal (UTD) encoding} associated with the dataset $X\subset\R^{m}$.  

\begin{theorem}
\label{thm:UTDpreservesmetric}
Given a point cloud $(X,d_{X})$ in $\R^{m}$, so that $d_{X}$ is the induced Euclidean metric, the UTD encoding~\eqref{eqn:rhoUTD} satisfies 
\begin{equation}
\label{eqn:UTDencodingpreservesEuclidean}
d_{\mathrm{HS}}\big(\rho_{\mathrm{UTD}}(x),\rho_{\mathrm{UTD}}(y)\big)=\frac{d_{X}(x,y)}{\sqrt{m(m+1)r(X)}}
\end{equation}
for all $x,y\in X$. 
\end{theorem}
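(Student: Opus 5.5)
The plan is to factor $\rho_{\mathrm{UTD}}=\rho_{\mathrm{diag}}\circ\mathcal{T}_{X}$ as in~\eqref{eqn:rhoUTD} and handle the two pieces separately: first show that $\mathcal{T}_{X}$ rescales Euclidean distances uniformly, which is~\eqref{eqn:TXscalingmetric}, and then show that $\rho_{\mathrm{diag}}$ is an embedding from $(\Delta^{m},d_{\ell^{2}})$ into $(\states(\C^{m+1}),d_{\mathrm{HS}})$. Composing the two identities gives~\eqref{eqn:UTDencodingpreservesEuclidean}. For the normalization, I would read the constant as in~\eqref{eqn:TXmapping} and~\eqref{eqn:TXscalingmetric}, namely $r(X)\sqrt{m(m+1)}$ with $r(X)$ outside the square root. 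The displayed $\sqrt{m(m+1)r(X)}$ appears to be a typographical slip, and I would say so explicitly.

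For the first piece, I would write $\mathcal{T}_{X}$ columnwise. A single data vector $x$ is sent to
\[
\mathcal{T}_{X}(x)=\lambda R_{m}\begin{bmatrix}x-\bar{x}\\0\end{bmatrix}+\mathbf{c}_{m},
\qquad \lambda:=\frac{1}{r(X)\sqrt{m(m+1)}},
\]
where $\bar{x}$ is the centroid of $X$; this is what right multiplication by $C$ and adding $S$ do columnwise. For $x,y\in X$ the translations $\bar{x}$ and $\mathbf{c}_{m}$ cancel in the difference, so
\[
\mathcal{T}_{X}(x)-\mathcal{T}_{X}(y)=\lambda R_{m}\begin{bmatrix}x-y\\0\end{bmatrix}.
\]
Zero padding is an isometric inclusion $\R^{m}\hookrightarrow\R^{m+1}$, and $R_{m}^{T}R_{m}=\mathds{1}_{m+1}$, as already noted after~\eqref{eqn:Rmmatrixa}. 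Together these give $\lVert\mathcal{T}_{X}(x)-\mathcal{T}_{X}(y)\rVert_{2}=\lambda\lVert x-y\rVert_{2}$.

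For the second piece, $\rho_{\mathrm{diag}}(p)-\rho_{\mathrm{diag}}(q)=\sum_{j}(p_{j}-q_{j})\ket{j}\bra{j}$ is a real diagonal matrix. By Remark~\ref{rmk:ellpSchattenp}, or directly from $\Tr[A^{\dag}A]=\sum_{j}|p_{j}-q_{j}|^{2}$, its $S_{2}$ norm equals the $\ell^{2}$ norm of $p-q$. This is exactly the $p=2$ case of Example~\ref{ex:diagonalencoding}. Note that the identity itself does not require $p,q\in\Delta^{m}$; it holds for any vectors.

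The remaining step, which I expect to be the main obstacle, is to confirm that $\mathcal{T}_{X}(X)\subset\Delta^{m}$, so that $\rho_{\mathrm{diag}}$ actually outputs density matrices and $\rho_{\mathrm{UTD}}$ is a genuine quantum state encoding. The trace condition is straightforward. Since $R_{m}$ maps $P_{m}$ onto $\Pi^{(0)}_{m}$, the coordinates of $\lambda R_{m}[x-\bar{x};0]$ sum to zero, and adding $\mathbf{c}_{m}$ makes them sum to $1$. Nonnegativity is the delicate part. It requires $\lambda\lVert x-\bar{x}\rVert\le\frac{1}{\sqrt{m(m+1)}}$, the inradius of $\Delta^{m}$, which amounts to $\lVert x-\bar{x}\rVert\le r(X)$. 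Here I would first try to use the paper's interpretation of $r(X)$ as the largest radial distance of the centered data. However, $\mathrm{diam}(X)/2$ does not bound the distance to the centroid in general; for a general finite set it can be as large as roughly $\mathrm{diam}(X)$. So if the inclusion fails for some datasets, I would replace $r(X)$ by $\max_{x\in X}\lVert x-\bar{x}\rVert$ in the scaling. Positivity then holds by the inscribed-ball argument, and the distance identity goes through unchanged with the new constant.
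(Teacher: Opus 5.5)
Your proof is correct and is essentially the paper's argument, which likewise just combines the uniform rescaling property~\eqref{eqn:TXscalingmetric} of $\mathcal{T}_{X}$ with the fact (Example~\ref{ex:diagonalencoding}) that $\rho_{\mathrm{diag}}$ carries $d_{\ell^{2}}$ to $d_{S_{2}}=d_{\mathrm{HS}}$. Your two side remarks are also right. The constant should be $r(X)\sqrt{m(m+1)}$, consistent with~\eqref{eqn:TXmapping} and with the weight used in Section~\ref{sec:experiments}. And since $\mathrm{diam}(X)/2$ need not bound $\max_{x\in X}\lVert x-\bar{x}\rVert$ (for $X=\{0,0.1,1\}\subset\R$ one gets $\mathcal{T}_{X}(1)\approx(1.13,-0.13)\notin\Delta^{1}$), the paper's inclusion $\mathcal{T}_{X}(X)\subset\Delta^{m}$ really does need your replacement of $r(X)$ by the maximal distance to the centroid, whereas the distance identity itself, read for diagonal matrices, holds regardless.
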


\begin{proof}
This follows from the fact that $\mathcal{T}_{X}$ as defined by~\eqref{eqn:TXmapping} satisfies~\eqref{eqn:TXscalingmetric} and because the diagonal encoding $\rho_{\mathrm{diag}}$ is an embedding as shown in Example~\ref{ex:diagonalencoding}. 
\end{proof}

\begin{corollary}
\label{cor:UTDpreservespersistentH}
Given a point cloud $(X,d_{X})$ in $\R^{m}$, so that $d_{X}$ is the induced Euclidean metric. Then, the persistent homology of $(X,d_{X})$ equals the persistent homology of $(Y,\lambda\, d_{Y})$, where $Y=\rho_{\mathrm{UTD}}(X)$, $d_{Y}$ is the Hilbert--Schmidt metric, and $\lambda=\sqrt{m(m+1)r(X)}$. 
\end{corollary}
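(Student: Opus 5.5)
The plan is to reduce the corollary to Theorem~\ref{thm:UTDpreservesmetric}. The key observation is that persistent homology built from the Vietoris--Rips complex depends only on the finite metric space up to isometry. It is therefore enough to show that $\rho_{\mathrm{UTD}}$, viewed as a map $(X,d_{X})\to(Y,\lambda\,d_{Y})$, is a bijective isometry.

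The first step is injectivity and surjectivity onto $Y=\rho_{\mathrm{UTD}}(X)$. Surjectivity holds by definition of $Y$. For injectivity, take $x\ne y$. Then $d_X(x,y)>0$, and Theorem~\ref{thm:UTDpreservesmetric} gives $d_{\mathrm{HS}}(\rho_{\mathrm{UTD}}(x),\rho_{\mathrm{UTD}}(y))>0$, so the images are distinct.

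The second step is to multiply equation~\eqref{eqn:UTDencodingpreservesEuclidean} by $\lambda$. This gives $\lambda\,d_Y(\rho_{\mathrm{UTD}}(x),\rho_{\mathrm{UTD}}(y))=d_X(x,y)$ for all $x,y\in X$, so $\rho_{\mathrm{UTD}}$ is an isometry $(X,d_X)\to(Y,\lambda d_Y)$. Here $\lambda d_Y$ is a metric because rescaling by a positive constant preserves the metric axioms. Equivalently, by Lemma~\ref{lem:inducedmetric}, the pull-back metric $\lambda d_{\rho_{\mathrm{UTD}}}$ equals $d_X$.

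The third step is the functoriality argument. An isometry $f$ and its inverse are both distance nonincreasing, so they are mutually inverse morphisms in $\mathbf{FinMetSpace}$. Applying the functor $H^{\epsilon}_{n}=H_{n}\circ C_{\bullet}\circ\mathcal{R}^{\epsilon}$ yields isomorphisms $H^{\epsilon}_n(X,d_X)\cong H^{\epsilon}_n(Y,\lambda d_Y)$ for every $\epsilon$ and $n$.

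These isomorphisms commute with the structure maps $\mathsf{f}^{(\epsilon',\epsilon)}_n$. This follows because $f$ strictly preserves diameters, so $f$ maps $\mathcal{R}^{\epsilon}(X)$ onto $\mathcal{R}^{\epsilon}(Y)$ compatibly with the inclusions $\mathcal{R}^{\epsilon}\hookrightarrow\mathcal{R}^{\epsilon'}$. Concretely, the induced simplicial maps commute with inclusions, which is the naturality of $\mathcal{R}^{\epsilon}\Rightarrow\mathcal{R}^{\epsilon'}$ noted earlier. Hence the persistence vector spaces are isomorphic, and by Proposition~\ref{prop:pfdpersdecomp} the persistence diagrams and barcodes coincide.

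The only genuine subtlety, and the step I expect to need care, is this compatibility of the levelwise isomorphisms with the persistence maps, which upgrades them to an isomorphism of persistence vector spaces. I would handle it directly at the simplicial level: a subset $\sigma\subseteq X$ has $\mathrm{diam}(\sigma,d_X)<\epsilon$ if and only if $\mathrm{diam}(f(\sigma),\lambda d_Y)<\epsilon$. So $f$ induces an isomorphism of filtered simplicial complexes, and every later construction (chains, boundaries, homology, structure maps) is transported verbatim.

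A remark on notation: as written, the constant $\lambda=\sqrt{m(m+1)r(X)}$ is simply whatever appears in the denominator of~\eqref{eqn:UTDencodingpreservesEuclidean}. The argument uses only that the same constant appears in both statements, although~\eqref{eqn:TXscalingmetric} suggests the intended factor is $r(X)\sqrt{m(m+1)}$.
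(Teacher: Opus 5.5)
Your proposal is correct and matches the paper's implicit argument. The paper gives no separate proof of the corollary; it is meant to follow directly from Theorem~\ref{thm:UTDpreservesmetric}, which makes $\rho_{\mathrm{UTD}}$ a homothety, together with the invariance of Vietoris--Rips persistent homology under rescaled isometries (the paper's proposition on $\lambda$-homotheties). Your injectivity and naturality steps simply supply the details the paper leaves unstated.

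Your remark on the constant is also accurate. Equation~\eqref{eqn:TXscalingmetric} and the weight $w=r(X)\sqrt{m(m+1)}$ used in Section~\ref{subsec:noisycircle} show that the intended factor is $r(X)\sqrt{m(m+1)}$, and your argument goes through unchanged with that constant.
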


Despite this positive result showing how there exists a quantum encoding that perfectly preserves the persistent homology of a point cloud in Euclidean space, as we mentioned in Example~\ref{ex:diagencoding}, diagonal encoding is neither efficiently implementable nor is it typically ideal for quantum algorithms since it only utilizes classical randomness. 
Moreover, due to the $\sqrt{m(m+1)}$ factor in the denominator in~\eqref{eqn:UTDencodingpreservesEuclidean}, the data become more difficult to distinguish as $m$ increases. As such, one might want to instead use a more efficient encoding, such as square-root encoding $\rho_{\sqrt{}}$, in place of $\rho_{\mathrm{diag}}$, in the composition~\eqref{eqn:rhoUTD}. This of course comes at the price of distorting the distances slightly, which may alter the persistent homology. Nevertheless, we illustrate how it performs in some example datasets in Section~\ref{sec:experiments}. 

%%%%%%%%%%%%%%%%%%%%%%%%%%%%%%%%%%%%%%%%%%%%%%%
\subsection{Quantum Multidimensional Scaling}
\label{sec:QMDS}
%%%%%%%%%%%%%%%%%%%%%%%%%%%%%%%%%%%%%%%%%%%%%%%

A potentially more promising approach than UTD encoding for preserving distances would be to obtain provably optimal encodings, at least theoretically. This is what we will achieve in the present section. 
Namely, we attempt to utilize multidimensional scaling (MDS) to propose quantum encodings of data into state spaces of quantum systems. The idea of classical MDS is reviewed first, followed by a review of metric MDS. Afterwards, we propose a procedure to apply the ideas of MDS in order to construct approximate distance-preserving embeddings of classical data into quantum systems. 

For classical MDS, we refer to Refs.~\cite{BoGr2005MDS,Sorzano2014surveydimred,MaKeBi1979Multivariate} for more details (though our matrix conventions follow Ref.~\cite{Strang2022}). 
Let $A_{0}=\begin{bmatrix}\mathbf{a}_{1}&\cdots&\mathbf{a}_{n}\end{bmatrix}$ be an $m\times n$ matrix of column vectors representing $n$ distinct samples of data with $m$ features. We therefore have a finite metric space $(X, d_{X})$, where $X=\{\mathbf{a}_{1},\dots,\mathbf{a}_{n}\}$ and $d_{X}$ is the induced Euclidean metric from $\R^{m}$. Let 
\begin{equation}
G_{0}:=A_{0}^{T}A_{0}
\end{equation}
be the \define{Gram matrix} associated with the data matrix $A_{0}$. Note that the $ij$ entry of the Gram matrix is the inner product $\mathbf{a}_{i}^{T}\mathbf{a}_{j}=\mathbf{a}_{j}^{T}\mathbf{a}_{i}$. Meanwhile, let $D^{(2)}_{0}$ be the squared distance matrix of the data set, i.e., the $ij$ entry of $D^{(2)}_{0}$ is 
\begin{align}
\left(D^{(2)}_{0}\right)_{ij}&=\lVert\mathbf{a}_{i}-\mathbf{a}_{j}\rVert^{2}= (\mathbf{a}_{i}-\mathbf{a}_{j})^{T}(\mathbf{a}_{i}-\mathbf{a}_{j}) \nonumber \\
&=\lVert\mathbf{a}_{i}\rVert^{2}+\lVert\mathbf{a}_{j}\rVert^{2}-2\mathbf{a}_{i}^{T}\mathbf{a}_{j}.
\end{align}
This calculation shows that $D^{(2)}_{0}$ may also be written as 
\begin{equation}
\label{eq:D0squared}
D^{(2)}_{0}=\boldsymbol{1}_{n}\boldsymbol{\ell}^{T}+\boldsymbol{\ell}\boldsymbol{1}_{n}^{T}-2G_{0},
\end{equation}
where 
\begin{equation}
\label{eqn:ellvector}
\boldsymbol{\ell}:=\begin{bmatrix}\lVert \mathbf{a}_{1}\rVert^{2} & \cdots & \lVert \mathbf{a}_{n}\rVert^{2}\end{bmatrix}^{T}
\end{equation}
is the vector of the squared lengths of the data vectors. 
It will also occasionally be useful to have an expression for $D_{0}^{(2)}$ purely in terms of the Gram matrix as 
\begin{equation}
\label{eqn:D2intermsofGonly}
D_{0}^{(2)}=\big\{G_{0}\odot\mathds{1}_{n},\boldsymbol{1}_{n}\boldsymbol{1}_{n}^{T}\big\}-2G_{0},
\end{equation}
where $\odot$ denotes the Hadamard--Schur product of matrices of the same size $(B\odot B')_{ij}:=B_{ij}B'_{ij}$ for the $ij$ entry, and where $\{\;\cdot\;,\;\cdot\;\}$ denotes the anticommutator, i.e., $\{B,B'\}:=BB'+B'B$. 
Thus, Equation~\eqref{eq:D0squared} shows that the Gram matrix can be solved for in terms of the distances between the data vectors and the lengths of the data vectors through
\begin{equation}
\label{eqn:G0}
G_{0}=\frac{1}{2}\left(\boldsymbol{1}_{n}\boldsymbol{\ell}^{T}+\boldsymbol{\ell}\boldsymbol{1}_{n}^{T}-D^{(2)}_{0}\right).
\end{equation}
This formula can be simplified if we center our data. Namely, the centered data matrix is $A:=A_{0}C$, where $C:=\mathds{1}_{n}-\frac{1}{n}\mathbf{1}_{n}\mathbf{1}_{n}^{T}$ is the centering matrix. Centering the data changes the Gram matrix to 
\begin{equation}
G:=A^{T}A=C A_{0}^{T} A_{0} C=C G_{0} C=-\frac{1}{2}CD^{(2)}_{0} C,
\end{equation}
where the second equality follows from the fact that $C^T=C$,  while the last equality follows from this and also~\eqref{eqn:G0} due to the fact that $C\boldsymbol{1}_{n}=\boldsymbol{0}$. However, centering the data does not change the squared distance matrix, i.e., if $D^{(2)}$ denotes the squared distance matrix associated with the centered data matrix $A$, then $D^{(2)}=D^{(2)}_{0}$ so that 
\begin{equation}
\label{eq:GramfromDistance}
G=-\frac{1}{2} C D^{(2)} C.
\end{equation}
This shows that if we know the distances between the data points, but not necessarily their positions in Euclidean space, we can obtain their associated (centered) Gram matrix using~\eqref{eq:GramfromDistance}. Next, obtain a reduced singular value decomposition (RSVD) of the Gram matrix (which is symmetric and positive semidefinite), i.e.,  
\begin{equation}
G=Q\Lambda Q^{T},
\end{equation}
where $Q$ is an $n\times r$ matrix consisting of $r$ orthonormal vectors, $\Lambda$ is a diagonal $r\times r$ matrix of strictly positive numbers aligned in decreasing order, and $r$ is the rank of $G$~\cite{Strang2022}. The matrix 
\begin{equation}
\widetilde{A}:=\sqrt{\Lambda}\,Q^{T}
\end{equation}
is an $r\times n$ matrix, whose $n$ column vectors $\widetilde{\mathbf{a}}_{1},\dots,\widetilde{\mathbf{a}}_{n}$ are $r$-component vectors. By construction, the associated pairwise distances among the $\widetilde{\mathbf{a}}_{1},\dots,\widetilde{\mathbf{a}}_{n}$ are the same as those of $\mathbf{a}_{1},\dots,\mathbf{a}_{n}$. 
To see this, let $\widetilde{G}:=\widetilde{A}^{T}\widetilde{A}$ denote the Gram matrix of $\widetilde{A}$ and let $\tilde{D}^{(2)}$ denote the squared-distance matrix associated with the columns of $\widetilde{A}$. Then, by the definition of $\widetilde{A}$, we immediately have $\widetilde{G}=G$. Combining this with~\eqref{eqn:D2intermsofGonly} for $D^{(2)}$ and $\widetilde{D}^{(2)}$ yields $\widetilde{D}^{(2)}=D^{(2)}$. 
Using our notation of finite metric spaces, let $Y=\{\widetilde{\mathbf{a}}_{1},\dots,\widetilde{\mathbf{a}}_{n}\}$ and $d_{Y}$ be the induced Euclidean metric from $\R^{r}$. Then the distances satisfy 
\begin{equation}
\label{eq:MDSdistancePreserved}
d_{Y}\left(\widetilde{\mathbf{a}}_{i},\widetilde{\mathbf{a}}_{j}\right)=d_{X}\left(\mathbf{a}_{i},\mathbf{a}_{j}\right)
\end{equation}
for all $i,j$. 
It is guaranteed that $r\le m$, so that this provides a dimensionality reduction technique. Moreover, the construction of $\widetilde{A}$ defines an isometry of metric spaces $\Phi:(X,d_{X})\to (Y,d_{Y})$, where $\Phi$ is defined by $\Phi(\mathbf{a}_{i})=\widetilde{\mathbf{a}}_{i}$ for all $i$. 

The above situation corresponds to the ideal case when a Euclidean dataset $(X,d_{X})$ is embedded into a Euclidean space of dimension $r\le m$, which is typically much smaller than $m$. However, if $r$ is close to $m$, and one wants to embed the data points into $\R^{k}$ with $k$ smaller than $r$, one can truncate SVD and use the largest $k$ singular values (multiplicity included) of the centered data matrix $A=A_{0}C$ in order to transfer the data into $\R^{k}$ in a way that best preserves the inner products between the centered data from $A$ and the data from $\widetilde{A}$ (this will be made precise momentarily, though the central idea follows from the Eckart--Young theorem~\cite{Strang2019Data,Dax2013}). This means that if we demand to encode our $n$ samples as vectors in $\R^{k}$ with $k<r$, then the distances will not all be preserved. This is fine for some purposes, especially if the reduction in dimension from $d$ to $k$ is significant and the distances between the encoded data points remains close. The notion of closeness is flexible, and we will outline one such notion soon. 

In a more general setting, imagine that one is instead given a finite metric space $(X,d_{X})$, that is not necessarily a finite subset of a Euclidean space equipped with the Euclidean metric (an example is a dataset that is obtained from sampling points on a Riemannian submanifold of Euclidean space equipped with the geodesic distance). Suppose $X$ has $n$ elements, which we denote by $X=\{x_1,\dots,x_n\}$. Let $D^{(2)}_{X}$ be its associated distance-squared matrix, so that its $ij^{\text{th}}$ entry is $d_{X}(x_{i},x_{j})^{2}$. In such a general case, it is not necessarily the case that the points can be embedded into a Euclidean space of some finite dimension. The necessary and sufficient conditions for the existence of such a Euclidean embedding are formalized by Schoenberg's theorem~\cite{LimMemoli2024,Schoenberg1935,YoungHouseholder1938}, which says the following. 

\begin{proposition}
\label{thm:Schoenberg}
Let $(X,d_{X})$ be a finite metric space, and let $D^{(2)}_{X}$ be its associated distance-squared matrix, as in the previous paragraph. Then $(X,d_{X})$ is isometric to a set of $n$ points $\{\mathbf{y}_{1},\dots,\mathbf{y}_{n}\}\subset\R^{k}$ for some nonnegative integer $k$ if and only if $G_{X}:=-\frac{1}{2}C D^{(2)}_{X} C$ is positive semidefinite, where $C=\mathds{1}_{n}-\frac{1}{n}\boldsymbol{1}_{n}\boldsymbol{1}_{n}^{T}$. Moreover, when $G_{X}$ is positive semidefinite, the smallest integer $k$ for which such an embedding exists is the rank of $G_{X}$.  
\end{proposition}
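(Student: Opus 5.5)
The plan is to prove both directions of the equivalence by reducing everything to the Gram-matrix identities already derived in the discussion of classical MDS. The key algebraic fact is this: for any points $\mathbf{y}_1,\dots,\mathbf{y}_n\in\R^{k}$ with data matrix $B$, the squared-distance matrix satisfies~\eqref{eq:D0squared}. Since $C\boldsymbol{1}_n=\boldsymbol{0}$, conjugating by $C$ kills the rank-one terms $\boldsymbol{1}_n\boldsymbol{\ell}^T+\boldsymbol{\ell}\boldsymbol{1}_n^T$, and this yields~\eqref{eq:GramfromDistance}:
\[
-\tfrac{1}{2}CD^{(2)}C=(BC)^T(BC).
\]

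For the forward direction, suppose $(X,d_X)$ is isometric to $\{\mathbf{y}_1,\dots,\mathbf{y}_n\}\subset\R^{k}$. Then $D^{(2)}_X$ equals the squared-distance matrix of the $\mathbf{y}_i$. The identity above gives $G_X=(BC)^T(BC)$, which is positive semidefinite. Its rank equals $\mathrm{rank}(BC)\le k$. So any embedding dimension $k$ satisfies $k\ge \mathrm{rank}(G_X)$, which is the lower-bound half of the minimality claim.

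For the converse, assume $G_X$ is positive semidefinite with rank $r$. Take the reduced spectral decomposition $G_X=Q\Lambda Q^T$ and set $\widetilde{A}=\sqrt{\Lambda}Q^T\in\R^{r\times n}$, exactly as in the MDS construction. Then $\widetilde{A}^T\widetilde{A}=G_X$. By~\eqref{eqn:D2intermsofGonly}, the squared-distance matrix of the columns of $\widetilde{A}$ is
\[
\{G_X\odot\mathds{1}_n,\boldsymbol{1}_n\boldsymbol{1}_n^T\}-2G_X.
\]

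The main step, and the one I expect to be the real obstacle, is showing this expression equals $D^{(2)}_X$ using only the definition $G_X=-\tfrac12 CD^{(2)}_XC$ together with the facts that $D^{(2)}_X$ is symmetric with zero diagonal. In the Euclidean case this was automatic, but here it must be checked by hand. I will expand
\[
(G_X)_{ij}=-\tfrac12\Big(D_{ij}-\bar D_{i\cdot}-\bar D_{\cdot j}+\bar D_{\cdot\cdot}\Big),
\]
where the bars denote row, column, and grand means. Using $D_{ii}=0$ and symmetry, compute $(G_X)_{ii}+(G_X)_{jj}-2(G_X)_{ij}$. The mean terms cancel, leaving $D_{ij}$. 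This shows the columns $\widetilde{\mathbf{a}}_i$ realize the distances $d_X(x_i,x_j)$, giving an isometry into $\R^{r}$.

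The two parts together give existence exactly when $G_X\succeq0$, with the minimal dimension equal to $\mathrm{rank}(G_X)$: the forward argument shows no smaller $k$ works, and the construction attains $k=r$. Finally, the isometry property yields distinct points because $d_X$ is a metric. One edge case needs attention: if $r=0$, then $n=1$ (or all distances vanish), and the embedding is a point in $\R^{0}$.
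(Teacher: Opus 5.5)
Your proof is correct and complete. The paper gives no proof of its own and defers to Schoenberg and Young--Householder. Your argument is the classical MDS construction the paper develops just before the statement, and you correctly isolate the one step that the paper's Euclidean discussion gets for free. That step is showing that $(G_X)_{ii}+(G_X)_{jj}-2(G_X)_{ij}=d_X(x_i,x_j)^2$ follows from $G_X=-\tfrac12 CD^{(2)}_XC$ using only symmetry and the zero diagonal of $D^{(2)}_X$, which your row/column/grand-mean computation verifies.

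The classical statements of Schoenberg and Young--Householder instead fix one point $x_1$ as the origin. They test positive semidefiniteness of the $(n-1)\times(n-1)$ matrix with entries $\tfrac12\bigl(d_X(x_1,x_i)^2+d_X(x_1,x_j)^2-d_X(x_i,x_j)^2\bigr)$. Each of that matrix and $G_X$ is a congruence of the other, so the two criteria, and the ranks, agree. Your double-centered version has the advantage of treating all points symmetrically and matching the paper's notation.
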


\begin{remark}
\label{rmk:Schoenberg}
We will not go over the proof, which can be found in the references. Instead, we make some comments. In Ref.~\cite{YoungHouseholder1938}, Young and Householder remark that when $n=3$, the condition that $G_{X}$ be positive semidefinite is equivalent to the triangle inequality. It is interesting to note that in Ref.~\cite{Schoenberg1935}, Schoenberg also gave theorems for embedding points into spheres (see also Ref.~\cite{BaiQiXiu2015}) as well as embedding points into Minkowski space. In Ref.~\cite{BoutinKemper2003}, Boutin and Kemper showed that there exist configurations in Euclidean space that give rise to the same distance matrix yet cannot be mapped to each other under an isometry of Euclidean space. This latter result emphasizes that MDS just provides one realization as an embedding in Euclidean space, when it exists, and any two embeddings need not be related by a global orthogonal matrix or an affine transformation. 
We also mention that there are some examples of familiar metric spaces that cannot be embedded in Euclidean space for \emph{any} finite dimension.
For example, in Ref.~\cite{AdBlKa2020MDS}, the authors showed that the unit circle equipped with the geodesic distance metric is not embeddable into any finite-dimensional Euclidean space, and that an infinite-dimensional embedding is possible, but only as a fractal. As another example, let $X=\{x_1,x_2,x_3,x_4\}$ be the set of four points on the two-dimensional sphere of radius $\frac{2}{\pi}$, with one point on the north pole and the other three points on the equator positioned at azimuthal angles $0, \frac{\pi}{2}$, and $\pi$. Equipping $X$ with the geodesic distance from the sphere, Linial showed that there does not exist an embedding of this metric space into Euclidean space of \emph{any} dimension~\cite{BrBrKi2008}. We will revisit this example in more detail in Example~\ref{ex:linial}.
\end{remark}

Regarding Proposition~\ref{thm:Schoenberg}, if the matrix $G_{X}$ is not positive semidefinite, then one can nevertheless take its positive part $G_{X}^{(+)}$, which comes from a spectral decomposition $G_{X}=G_{X}^{(+)}-G_{X}^{(-)}$, and apply RSVD to $G_{X}^{(+)}$ as before. Namely, the matrix $G^{(+)}$ is the closest positive semidefinite matrix to $G_{X}$ in the Frobenius norm~\cite{Mardia1978,Strang2019Data,Dax2013}, i.e., 
\begin{equation}
G_{X}^{(+)}=\argmin_{\substack{M\in\matr_{n}\\M\ge0}}\Tr\big[(M-G_{X})^2\big].
\end{equation}
One would then apply RSVD to $G^{(+)}_{X}$ to obtain $G_{X}^{(+)}=Q\Lambda Q^{T}$ and set the elements of $Y=\{\mathbf{y}_1,\dots,\mathbf{y}_n\}$ to be the column vectors of the $k\times n$ matrix $\sqrt{\Lambda}Q^{T}$. However, the mapping $\Phi:X\to Y$ sending $x_{i}\in X$ to $\mathbf{y}_{i}\in\R^{k}$ for all $i$ in this case would only be an approximate embedding of $(X,d_{X})$ into $(Y,d_{Y})$. We make this statement precise by introducing the definitions of stress and strain, as well as formalizing the preceding construction, which is due to Refs.~\cite{Torgerson1952MDS,Gower1966,Mardia1978}. Textbook references include Refs.~\cite{BoGr2005MDS,MaKeBi1979Multivariate}. 

\begin{definition}
Given two finite metric spaces $(X,d_{X})$ and $(Y,d_{Y})$ and an injective function $f:X\to Y$, write $X=\{x_1,\dots,x_n\}$ and let $(X,d_{f})$ be the pullback metric space. 
The \define{stress} of $(X,d_{X})$ and $f$ is the nonnegative real number given by 
\begin{equation}
\label{eqn:stressf}
\mathrm{stress}(f):=\lVert D_{X}-D_{f}\rVert_{2},
\end{equation}
where $D_{X}$ is the $n\times n$ distance matrix for $(X,d_{X})$ and $D_{f}$ is the $n\times n$ distance matrix for $(X,d_{f})$. 
Meanwhile, the \define{strain} of $(X,d_{X})$ and $f$ is 
\begin{equation}
\label{eqn:strainf}
\mathrm{strain}(f):=\lVert G_{X}-G_{f}\rVert_{2}
\end{equation}
where 
\begin{equation}
G_{f}:=-\frac{1}{2}C D_{f}^{(2)} C, 
\end{equation}
$D_{f}^{(2)}$ is the distance-squared matrix associated with $(X,d_{f})$, and $C$ is the centering matrix.
\end{definition}

\begin{remark}
Just like the distortion $\mathrm{dis}(f)$ can be expressed as the $\ell^{\infty}$ distance between the distance matrices $D_{X}$ and $D_{f}$ as in Remark~\ref{rmk:distortionasdistancedifference}, the stress $\mathrm{stress}(f)$ can be expressed as the $\ell^{2}=S_{2}$ distance between $D_{X}$ and $D_{f}$ (see Remark~\ref{rmk:ellpSchattenp} to see why the $\ell^{2}$ and $S_{2}$ metrics coincide). As such, we sometimes refer to the stress as the \define{$\ell^{2}$/Hilbert--Schmidt/Frobenius distortion of $f$}, and we might denote it by $\mathrm{dis}_{2}(f)$ when emphasizing this connection between stress and distortion (Ref.~\cite{BrBrKi2008} calls it the \emph{$L_{2}$-stress} and denotes it by $\sigma_{2}$, while others call it the \emph{Kruskal stress}). 
The distortion $\mathrm{dis}(f)=\lVert D_{X}-D_{f}\rVert_{\infty}$ has also been used in Ref.~\cite{BrBrKi2006GMDS} to define a non-symmetric semi-metric on the space of compact metric spaces, called the \emph{partial embedding distance}, and it has been used for generalized multidimensional scaling. 
Note that some authors divide the right-hand-side of~\eqref{eqn:stressf} by $\lVert D_{X}\rVert_{2}$ and call that the \emph{normalized stress}~\cite{BoGr2005MDS}, while others divide by the number of samples~\cite{BrBrKi2006GMDS}. Additionally, some authors square the right-hand-sides of~\eqref{eqn:stressf} and~\eqref{eqn:strainf} when defining these quantities.
\end{remark}

\begin{definition}
\label{defn:EuclideanMDS}
Let $(X,d_{X})$ be a finite metric space, write $X=\{x_1,\dots,x_n\}$ so that the elements of $X$ are ordered, and set $G_{X}=-\frac{1}{2}C D_{X}^{(2)} C$, where $D_{X}^{(2)}$ is the distance-squared matrix for $(X,d_{X})$ and $C$ is the $n\times n$ centering matrix. Write $G_{X}=G_{X}^{(+)}-G_{X}^{(-)}$ thereby splitting $G_{X}$ into the difference of two positive semi-definite matrices. Let $r$ be the rank of $G_{X}^{(+)}$, so that $r\le n$, and let $G_{X}^{(+)}=Q\Lambda Q^{T}$ be an RSVD of $G_{X}^{(+)}$. Set $Y=\{\mathbf{y}_1,\dots,\mathbf{y}_n\}$ to be the column vectors of the $r\times n$ matrix $\sqrt{\Lambda}Q^{T}$. Set $\Phi_{\mathrm{MDS}}:X\to \R^{r}$ to be the function sending $x_{i}\in X$ to $\mathbf{y}_{i}$ for all $i$. The function $\Phi_{\mathrm{MDS}}$ is called a \define{Euclidean MDS mapping} of maximal rank. 
\end{definition}

\begin{remark}
Since the RSVD of a positive semidefinite matrix need not be unique, one should note that a Euclidean MDS mapping is not canonically defined for an arbitrary finite metric space $(X,d_{X})$. As such, it would be more accurate to say that the Euclidean MDS mapping is associated with $(X,d_{X})$ together with an RSVD of $G_{X}^{(+)}$. 
We will ignore this subtle point. 
\end{remark}

A Euclidean MDS mapping is sometimes called a \emph{representation function}~\cite{BoGr2005MDS}. The procedure for starting with a distance matrix, which is equivalent to a finite metric space $(X,d_{X})$ and the choice of an ordering of the elements of $X$, and constructing the Euclidean MDS mapping has many names, including \emph{classical MDS} (cMDS), \emph{principal coordinate analysis} (PCoA), \emph{Torgerson scaling}, or \emph{Torgerson--Gower scaling}~\cite{AdBlKa2020MDS}. Because the Euclidean MDS mapping involves obtaining eigenvectors of some matrix (in this case, $G_{X}^{(+)}$), it is an example of a spectral embedding or eigenmap~\cite{BrBrKi2008}. 

\begin{proposition}
\label{prop:cMDSexistenceanduniqueness}
Let $(X,d_{X})$ be a finite metric space and $\Phi_{\mathrm{MDS}}:X\to\R^{r}$ its Euclidean MDS mapping as in Definition~\ref{defn:EuclideanMDS}. Then, 
\begin{equation}
\label{eqn:strainMDS}
\mathrm{strain}(\Phi)\ge\mathrm{strain}(\Phi_{\mathrm{MDS}})
\end{equation}
for all functions $\Phi:X\to\R^{k}$. In other words, $\Phi_{\mathrm{MDS}}$ is a global minimum of strain for $(X,d_{X})$. 
\end{proposition}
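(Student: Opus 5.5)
The plan is to reduce the strain minimization to a best positive semidefinite approximation problem in the Frobenius norm. Fix any $\Phi:X\to\R^{k}$ and write $\mathbf{z}_{i}:=\Phi(x_{i})$. Let $Z$ be the $k\times n$ matrix with these columns, and let $D_{\Phi}^{(2)}$ be the squared Euclidean distance matrix of the $\mathbf{z}_{i}$. If $\Phi$ is not injective, $d_{\Phi}$ is only a pseudometric, but the formula $G_{\Phi}=-\frac{1}{2}CD_{\Phi}^{(2)}C$ still makes sense. I will read the strain this way for arbitrary $\Phi$.

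The first step is to identify $G_{\Phi}$. Apply \eqref{eq:D0squared} to the data matrix $Z$, and use the computation leading to \eqref{eq:GramfromDistance}, in particular $C\boldsymbol{1}_{n}=\boldsymbol{0}$. This gives
\[
G_{\Phi}=-\tfrac{1}{2}CD_{\Phi}^{(2)}C=(ZC)^{T}(ZC).
\]
Hence $G_{\Phi}$ is always positive semidefinite, with rank at most $k$. So
\[
\mathrm{strain}(\Phi)=\lVert G_{X}-G_{\Phi}\rVert_{2}\ge\inf_{M\ge0}\lVert G_{X}-M\rVert_{2},
\]
where the infimum runs over real symmetric positive semidefinite $n\times n$ matrices.

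The second step, which I expect to be the only substantive one, is to show that this infimum is attained at $M=G_{X}^{(+)}$. I will prove this directly rather than merely cite it. $G_{X}$ is real symmetric, so diagonalize it as $G_{X}=V\Sigma V^{T}$ with $V$ orthogonal, and write $\Sigma=\Sigma_{+}-\Sigma_{-}$. By orthogonal invariance of the Frobenius norm (Remark~\ref{rmk:ellpSchattenp}), it suffices to minimize $\lVert\Sigma-N\rVert_{2}$ over $N=V^{T}MV\ge0$.

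Dropping the off-diagonal entries of $\Sigma-N$ only lowers the norm, so
\[
\lVert\Sigma-N\rVert_{2}^{2}\ge\sum_{i}(\sigma_{i}-N_{ii})^{2}.
\]
Since $N\ge0$, each $N_{ii}\ge0$. Therefore $(\sigma_{i}-N_{ii})^{2}\ge(\sigma_{i}^{-})^{2}$, where $\sigma_{i}^{-}=\max\{-\sigma_{i},0\}$. Equality holds for $N=\Sigma_{+}$. This yields
\[
\lVert G_{X}-M\rVert_{2}\ge\lVert G_{X}^{(-)}\rVert_{2}=\lVert G_{X}-G_{X}^{(+)}\rVert_{2}.
\]

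The final step is to check that $\Phi_{\mathrm{MDS}}$ realizes this bound, i.e.\ that $G_{\Phi_{\mathrm{MDS}}}=G_{X}^{(+)}$. Let $Y=\sqrt{\Lambda}Q^{T}$. First, $Y^{T}Y=Q\Lambda Q^{T}=G_{X}^{(+)}$. Second, $Y$ is already centered. Indeed, $G_{X}$ satisfies $G_{X}\boldsymbol{1}_{n}=0$, so $\boldsymbol{1}_{n}$ lies in the kernel of $G_{X}$. Since $G_{X}^{(+)}$ and $G_{X}$ share an eigenbasis, $\boldsymbol{1}_{n}$ is also in the kernel of $G_{X}^{(+)}$, hence $Q^{T}\boldsymbol{1}_{n}=0$ and so $YC=Y$. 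Applying the first step with $Z=Y$ then gives $G_{\Phi_{\mathrm{MDS}}}=(YC)^{T}(YC)=Y^{T}Y=G_{X}^{(+)}$.

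Consequently $\mathrm{strain}(\Phi_{\mathrm{MDS}})=\lVert G_{X}^{(-)}\rVert_{2}\le\mathrm{strain}(\Phi)$ for every $\Phi$, which is \eqref{eqn:strainMDS}. The comparison holds for every $k$, including $k<r$. The only delicate points are the centering argument and the convention that the strain is defined via $G_{\Phi}$ for non-injective $\Phi$.
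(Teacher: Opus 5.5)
Your proof is correct, and it is more self-contained than the paper's. The paper gives no argument of its own and cites Theorem~14.4.2 of Ref.~\cite{MaKeBi1979Multivariate}. That theorem is the fixed-dimension statement: the $k$-dimensional classical solution minimizes $\lVert G_X-\hat G\rVert_2$ over centered Gram matrices $\hat G$ of configurations in $\R^k$, and the proposition is then obtained by comparing across $k$. The fixed-$k$ problem carries the constraint $\operatorname{rank}\hat G\le k$, so it is an Eckart--Young type approximation problem that needs a genuine spectral inequality.

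Your route sidesteps this because the proposition imposes no rank constraint. It therefore suffices to find the Frobenius-nearest positive semidefinite matrix. Your estimate $\lVert\Sigma-N\rVert_2^2\ge\sum_i(\sigma_i-N_{ii})^2\ge\sum_i(\sigma_i^-)^2$ does this using only $N_{ii}\ge0$. It incidentally proves the paper's earlier cited claim that $G_X^{(+)}$ is the nearest positive semidefinite matrix to $G_X$. The price is that this trick is blind to rank, so it would not give optimality of truncated MDS for $k<r$.

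You also make explicit two points the paper leaves implicit. First, $G_\Phi=(ZC)^T(ZC)$ is positive semidefinite for every configuration. Second, $Q^T\boldsymbol{1}_n=0$, so the MDS configuration is already centered and realizes $G_X^{(+)}$ exactly.

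Finally, your convention for non-injective maps only matters for competitors, because $\Phi_{\mathrm{MDS}}$ itself is always injective. With standard basis vectors $\mathbf{e}_i\in\R^n$ and $i\ne j$, one has $\lVert\mathbf{y}_i-\mathbf{y}_j\rVert^2=(\mathbf{e}_i-\mathbf{e}_j)^TG_X^{(+)}(\mathbf{e}_i-\mathbf{e}_j)\ge(\mathbf{e}_i-\mathbf{e}_j)^TG_X(\mathbf{e}_i-\mathbf{e}_j)=d_X(x_i,x_j)^2>0$. Hence $\mathrm{strain}(\Phi_{\mathrm{MDS}})$ is defined in the paper's original sense.
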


\begin{proof}
See Theorem 14.4.2 in Ref.~\cite{MaKeBi1979Multivariate} (see also Ref.~\cite{Mardia1978}). 
\end{proof}

Euclidean MDS utilizes the structure of Euclidean space to compare a certain matrix to a Gram matrix, which itself is defined in terms of inner products. Minimizing strain led to an explicit and analytic algebraic expression for the optimal approximate embedding. For more general metric spaces, one often uses some form of stress or distortion as the function to minimize~\cite{BrBrKi2008}. Minima of these functions tend to provide different approximate embedding schemes. 

\begin{example}
\label{ex:linial}
As an example, let us revisit Linial's example of a 4 point metric space obtained from the geodesic distances on a sphere of radius $\frac{2}{\pi}$ as in Remark~\ref{rmk:Schoenberg} (see Example 7.1 in Ref.~\cite{BrBrKi2008}). The four points $x_1,x_2,x_3,x_4$ can be visualized in $\R^{3}$, though the distances between points are \emph{geodesic} distances (and not the Euclidean distances). 
\begin{center}
\includegraphics{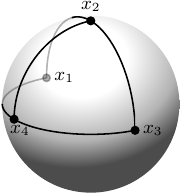}
\end{center}
In this case, the distance matrix is 
\begin{equation}
\label{eqn:LinialDistance}
D_{X}=\begin{bmatrix}0&1&2&1\\1&0&1&1\\2&1&0&1\\1&1&1&0\end{bmatrix}.
\end{equation} 
Hence,  
\begin{equation}
G_{X}=-\frac{1}{2}C D_{X}^{(2)} C=\frac{1}{16}\begin{bmatrix}15&1&-17&1\\1&3&1&-5\\-17&1&15&1\\1&-5&1&3\end{bmatrix}.
\end{equation}
Diagonalizing $G_{X}$ gives $G_{X}=Q_{X}\Lambda_{X}Q_{X}^{T}$, where 
\begin{equation}
\Lambda_{X}=\begin{bmatrix}2&0&0&0\\0&\frac{1}{2}&0&0\\0&0&-\frac{1}{4}&0\\0&0&0&0\end{bmatrix}
\quad\text{and}\quad
Q_{X}=\begin{bmatrix}
\frac{1}{\sqrt{2}}&0&\frac{-1}{2}&\frac{1}{2}\\
0&\frac{1}{\sqrt{2}}&\frac{1}{2}&\frac{1}{2}\\
\frac{-1}{\sqrt{2}}&0&\frac{-1}{2}&\frac{1}{2}\\
0&\frac{-1}{\sqrt{2}}&\frac{1}{2}&\frac{1}{2}
\end{bmatrix}
\end{equation}
One then finds that $G_{X}^{(+)}=Q\Lambda Q^{T}$, where 
\begin{equation}
\Lambda=\begin{bmatrix}2&0\\0&\frac{1}{2}\end{bmatrix} \quad\text{ and }\quad Q^{T}=\frac{1}{\sqrt{2}}\begin{bmatrix}1&0&-1&0\\0&1&0&-1\end{bmatrix}.
\end{equation}
Hence, 
\begin{equation}
\label{eqn:MDSgeodesicsphere4points}
Y_{\mathrm{strain}}:=
\sqrt{\Lambda}Q^{T}=\begin{bmatrix}1&0&-1&0\\0&\frac{1}{2}&0&-\frac{1}{2}\end{bmatrix}.
\end{equation}
As a form of comparison, the new distance matrix associated with the approximately embedded data is 
\begin{equation}
D_{Y_{\mathrm{strain}}}
=\begin{bmatrix}
0&\frac{\sqrt{5}}{{2}}&2&\frac{\sqrt{5}}{{2}}\\ 
\frac{\sqrt{5}}{{2}}&0&\frac{\sqrt{5}}{{2}}&1\\
2&\frac{\sqrt{5}}{{2}}&0&\frac{\sqrt{5}}{{2}}\\
\frac{\sqrt{5}}{{2}}&1&\frac{\sqrt{5}}{{2}}&0
\end{bmatrix}
\approxeq
\begin{bmatrix}0&1.12&2&1.12\\1.12&0&1.12&1\\2&1.12&0&1.12\\1.12&1&1.12&0\end{bmatrix}
,
\end{equation}
whic looks quite close to~\eqref{eqn:LinialDistance}. 
Meanwhile, finding a numerical local minimum of stress near the points represented by the columns of~\eqref{eqn:MDSgeodesicsphere4points} while fixing the first vector to be the first column leads to 
\begin{equation}
Y_{\mathrm{stress}}:=
\begin{bmatrix}1&0.0528&-0.8944&0.0528\\0&0.4737&0.0001&-0.4735\end{bmatrix}.
\end{equation}
As a form of comparison, the new distance matrix associated with the approximately embedded data is 
\begin{equation}
D_{Y_{\mathrm{stress}}}
\approxeq
\begin{bmatrix}0&1.06&1.89&1.06\\1.06&0&1.06&0.95\\1.89&1.06&0&1.06\\1.06&0.95&1.06&0\end{bmatrix}
.
\end{equation}
Finally, finding a numerical local minimum of distortion near the points represented by the columns of~\eqref{eqn:MDSgeodesicsphere4points} while fixing the first vector to be the first column leads to 
\begin{equation}
Y_{\mathrm{dis}}:=\begin{bmatrix}1&0&-1&0\\0&\frac{1}{2}&0&-\frac{1}{2}\end{bmatrix},
\end{equation}
which happens to coincide with $Y_{\mathrm{strain}}$. We visualize these different embeddings in $\R^2$ as 
\begin{center}
\begin{tikzpicture}[scale=2]
\draw[thin,opacity=0.5,step=0.25] (-1.375,-0.875) grid (1.375, 0.875);
\draw[->] (-1.375,0) -- (1.375,0);
\draw[->] (0,-0.875) -- (0,0.875);
\node at (1,0) {$\boldsymbol{\times}$};
\node at (0,0.5) {$\boldsymbol{\times}$};
\node at (-1,0) {$\boldsymbol{\times}$};
\node at (0,-0.5) {$\boldsymbol{\times}$};
\node at (1,0) {\LARGE${\circ}$};
\node at (0.0528,0.4737) {\LARGE${\circ}$};
\node at (-0.8944,0.0001) {\LARGE${\circ}$};
\node at (0.0528,-0.4735) {\LARGE${\circ}$};
\end{tikzpicture}
\end{center}
where $\boldsymbol{\times}$ represents a point in the Euclidean MDS embedding (minimizing strain) and {\LARGE$\circ$} represents a point in the metric MDS embedding (minimizing stress). It turns out that in this example, the Euclidean MDS embedding coincides with the minimum distortion embedding. One can see from this example that the three approximate embeddings are all close to each other, though they need not all be the same embedding, even up to isometry.  
\end{example}

Example~\ref{ex:linial} illustrates that optimizing over stress, strain, or distortion may lead to different solutions for approximate embeddings of finite metric spaces (datasets equipped with a notion of distance). Among all of these, minimizing strain has a simple algebraic solution given by finding a closest positive matrix of low rank and then applying Singular Value Decomposition to that closest positive matrix in order to extract out the new MDS coordinates. On the one hand, minimizing stress or distortion does not seem to admit as simple of an algebraic solution as it does for strain, suggesting that minimizing strain is desirable. On the other hand, the notion of strain has not, to the best of our knowledge, been extended to the general context of Riemannian manifolds beyond Euclidean space due to its reliance on centering data and the Gram matrix. Some few exceptions include spherical and hyperbolic geometries of constant curvature~\cite{Schoenberg1935,KeNa2020hydra}, where the double centering operation on a distance matrix is replaced by the trigonometric cosine function in the case of spheres and the hyperbolic cosine function in the case of hyperbolic space. Moreover, if the ultimate goal is to minimize the error in distance, then stress and distortion seem more appropriate as they involve the distance matrix directly, rather than a centered version of its squared entries. As a result, when extending Multidimensional Scaling to quantum systems, we will primarily use stress and distortion.  

We therefore finally discuss optimal quantum encodings. As just mentioned in the previous paragraph, although it remains an open question whether there exists a strain-minimizing projective space embedding for data, we can nevertheless explore a variety of optimal quantum encodings based on minimizing stress or distortion. Moreoever, such optimality also depends on which distance measure is used on the space of quantum states. For concreteness and simplicity, we stick to one definition. 
\begin{definition}
\label{defn:quantumMDS}
Let $(X,d_{X})$ be a finite metric space such that 
$\mathrm{diam}(X,d_{X})=1$ (if $\mathrm{diam}(X,d_{X})\ne1$, let $\lambda=\frac{1}{\mathrm{diam}(X,d_{X})}$ and use the rescaled metric space $(X,\lambda\, d_{X})$ in what follows). Equip the family of quantum state spaces $\states(\C^{m})$ for varying $m\in\N$ with the Bures fidelity metric $d_{F}$ from~\eqref{eq:bures-distance}. 
If $r\in\N$, an $r$-dimensional \define{quantum MDS encoding} is a quantum state encoding of the form $\rho_{\mathrm{QMDS}}:X\to\states(\C^{r})$ such that 
\begin{equation}
\mathrm{stress}(\rho)\ge\mathrm{stress}(\rho_{\mathrm{QMDS}})
\end{equation}
for all other quantum state encodings $\rho:X\to\states(\C^{r})$. 
\end{definition}

The reason for specifying the diameter of $(X,d_{X})$ to be $1$ is simply because the diameter of $\big(\states(\C^{m}),d_{F}\big)$ is $1$ and we would like to approximately embed our data set $X$ into a quantum state space in a way that is spread out over the quantum state space as much as possible. However, in practice we will often not rescale by the diameter and instead will rescale by some parameter in order to minimize stress. The reason for doing this is because there may be some other scalar factor besides the diameter that will lead to a lower scale-free distortion (or stress). Indeed, an outlier data point may cause the diameter to be unnecessarily large. 

Definition~\ref{defn:quantumMDS} can be viewed as a limitation to how well a quantum encoding can preserve distances in a classical dataset. In other words, it provides a theoretical lower bound for the distortion of any quantum encoding. What this implies, practically, is that if we want to create a quantum encoding for a data set using actual circuits, then we can test how good our quantum encoding is at preserving the structure for persistent homology by calculating its stress and then comparing it to the lower bound as obtained through QMDS. Conversely, if we find a numerically optimal quantum encoding satisfying this definition, there is no reason that it will be practically implementable. Therefore, we may restrict attention to a subclass of quantum encodings when searching for one that minimizes stress. Although we will not solve this problem for any specific such subclass of quantum encodings, we formulate a precise version of this problem in the case that the quantum encodings can be viewed as Hamiltonian time evolution, or equivalently as one-parameter subgroups. We do so in the next section. 

%%%%%%%%%%%%%%%%%%%%%%%%%%%%%%%%%%%%%%%%%%%%%%%
\subsection{One-parameter subgroup Quantum MDS}
\label{sec:QMDSOPG}
%%%%%%%%%%%%%%%%%%%%%%%%%%%%%%%%%%%%%%%%%%%%%%%

Some quantum machine learning algorithms assume a quantum state encoding of the form 
\begin{equation}
\R^{m}\ni x \mapsto\rho(x)=U(x)\ket{\psi_0}\bra{\psi_0}U^{\dag}(x),
\end{equation}
where $\ket{\psi_{0}}\in\C^{s}$ is some fiducial vector and $U(x)$ is a special unitary matrix of the form 
\begin{equation}
U(x)=e^{i\mathcal{L}(x)},
\end{equation}
where $\mathcal{L}:\R^{m}\to\mathcal{B}(\C^{s})$ is a linear transformation and $\mathcal{L}(x)$ is self-adjoint and traceless. This guarantees that $U$ defines what is called a \emph{one parameter subgroup}, occasionally called a \emph{Hamiltonian evolution}, since each direction $\hat{\mathbf{x}}$ in $\R^{m}$ specifies a line $\R\ni t\mapsto t\hat{\mathbf{x}}$ and the family of unitary matrices $\R\ni t\mapsto U(t\hat{\mathbf{x}})$ is a one-parameter group of unitary operators~\cite{PBVP24,SchuldPetruccione21}. Therefore, the 
Bures fidelity distance
between quantum states $\rho(x)$ and $\rho(y)$  
under such a transformation becomes 
\begin{equation}
d_{F}\big(\rho(x),\rho(y)\big)=\sqrt{1-\big|\langle\psi_{0}|e^{-i\mathcal{L}(x)}e^{i\mathcal{L}(y)}|\psi_{0}\rangle\big|}.
\end{equation}
Thus, given a centered finite subset $X=\{x_1,\dots,x_n\}$ of Euclidean space (a point cloud) satisfying $\mathrm{diam}(X)=1$, a quantum encoding $\rho$ as above minimizing the stress between the original Euclidean distance and the associated 
Bures fidelity distance is equivalent to solving the minimization problem 
\begin{equation}
\argmin_{\substack{\mathcal{L}:\R^m\to\mathcal{B}(\C^s)\\\mathcal{L}(x)^{\dag}=\mathcal{L}(x)\\\Tr[\mathcal{L}(x)]=0}}
\sum_{1\le j<k\le n}\Big(\lVert x_j-x_k\rVert-
d_{F}\big(\rho(x_j),\rho(x_k)\big)
\Big)^2.
\end{equation}
By setting $L_{\alpha}:=\mathcal{L}(e_{\alpha})$ to be image of the unit vector $e_{\alpha}$ under the linear transformation $\mathcal{L}$, we can reexpress this minimization problem as
\begin{equation}
\argmin_{\substack{L_{1},\dots,L_{m}\\L_{\alpha}^{\dag}=L_{\alpha}\\\Tr[L_{\alpha}]=0}}
\sum_{1\le j<k\le n}\Big(\lVert x_j-x_k\rVert-
d_{F}\big(\rho(x_j),\rho(x_k)\big)
\Big)^2.
\end{equation}
Therefore, the problem of finding a quantum encoding that minimizes the stress becomes a minimization problem over a finite-dimensional space of self-adjoint traceless matrices of a fixed dimension. We do not find the solution of this optimization problem in this work, but we leave this as an open problem for the future. Similarly, one can formulate analogous versions of such a minimization problem by restricting to an appropriate finite-dimensional class of quantum encodings.

%%%%%%%%%%%%%%%%%%%%%%%%%%%%%%%%%%%%%%%%%%%%%%%
\section{Experiments}
\label{sec:experiments}
%%%%%%%%%%%%%%%%%%%%%%%%%%%%%%%%%%%%%%%%%%%%%%%

In this section, we will look at simple datasets in $\R^2$ and apply several different quantum encoding techniques in order to test how much these techniques distort the metric structure of the dataset. The datasets will first consist of an ideal circle of the $n^{\text{th}}$ roots of unity at some radius and a noisy version of this dataset. These datasets therefore should exhibit the topology of a circle, and the goal is to see which quantum encodings best preserve this. By analyzing the distance matrices of the encodings, we will visually see how the encoding distorts the original geometry by comparing the distance matrices.

%%%%%%%%%%%%%%%%%%%%%%%%%%%%%%%%%%%%%%%%%%%%%%%%
\subsection{Ideal circle}
\label{subsec:idealcircle}
%%%%%%%%%%%%%%%%%%%%%%%%%%%%%%%%%%%%%%%%%%%%%%%%

Before going to experiments with noisy data, we begin by testing the different encoding techniques for an ideal circle of varying radius, described by the $n^{\text{th}}$ roots of unity viewed as a dataset in $\R^{2}$. For concreteness, we will take $n=200$. This will allow us to precisely identify how to rescale the domain of the encoding before applying some of the commonly used quantum encoding techniques such as angle, dense angle, and IQP encodings. By focusing on such an ideal dataset first, we can avoid several issues of topology change merely due to a dataset being outside (or close to the boundary of) the fundamental domain of the quantum encoding, rather than issues arising due to noise from sampling. 

To set the notation, we begin with a point cloud in $\R^{2}$ of the form 
\begin{equation}
X=\left\{x_{k}:=\Big(r c_{k},rs_{k}\right)\;:\;k\in\{0,1,\dots,n-1\}\Big\},
\end{equation}
where 
\begin{equation}
c_{k}:=\cos\left(\frac{2\pi k}{n}\right) \, ,
\quad
s_{k}:=\sin\left(\frac{2\pi k}{n}\right)\,,
\end{equation}
and $r$ is some positive number, as in Figure~\ref{fig:200rootsofunity}.

\begin{figure}
\begin{tikzpicture}[scale=1.25]
\draw[->] (-1.25,0) -- (1.25,0) node[right]{};
\draw[->] (0,-1.25) -- (0,1.25) node[above]{};
\draw[gray,thin,opacity=0.5,step=0.25cm] (-1.25,-1.25) grid (1.25,1.25); 
\foreach \k in {0,1,2,...,199} {
\node at ({cos(deg(2*pi*\k/200))},{sin(deg(2*pi*\k/200))}) {\scalebox{0.3}{$\bullet$}};
}
\end{tikzpicture}
\qquad
\raisebox{-6pt}{
\includegraphics[width=4.25cm]{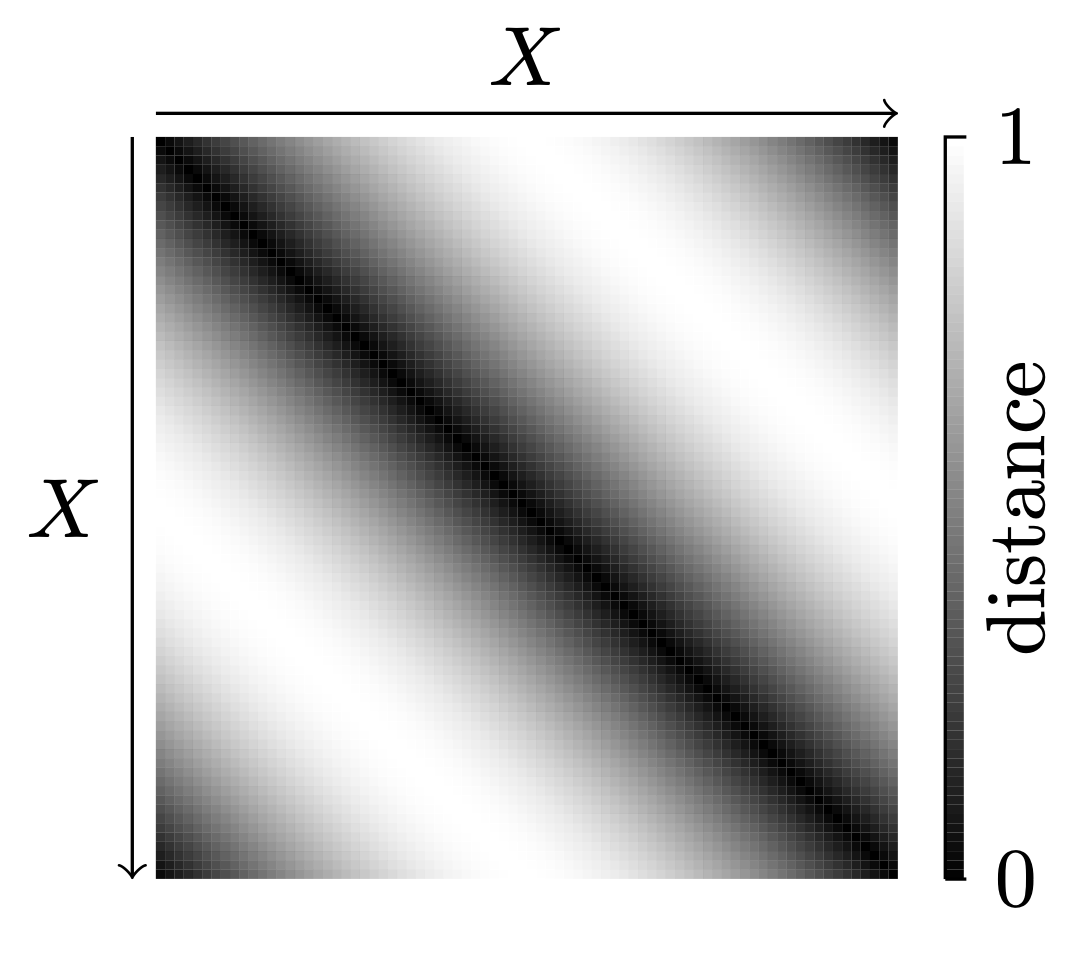}}
\caption{The set $X$ consisting of the $n^{\text{th}}$ roots of unity, where $n=200$. The heat map associated with the (Euclidean) distance matrix is depicted on the right. The distance has been rescaled so that the diameter of $X$ is $1$.}
\label{fig:200rootsofunity}
\end{figure}

For simplicity, we take $n$ to be an even positive integer (this assumption avoids the small technical point that the diameter of $X$ is not a constant when $n$ is odd). We first rescale the Euclidean distance so that 
\begin{equation}
d_{X}(x_j,x_k)=\frac{1}{2r}d_{\R^2}(x_j,x_k)
=\sqrt{\frac{1-c_{k-j}}{2}},
\end{equation}
so that the diameter of $X$ with respect to $d_{X}$ is $1$. This normalizes the point cloud distance, which will be useful when we compare with the distance in quantum state space, which has a maximum value of $1$ as well. 

In the subsequent examples, we will implement the following general procedure for each quantum encoding. 
\begin{enumerate}
\item Identify the fundamental domain of the quantum encoding centered at $0$.  
\item For several values of $r$, calculate the associated Bures fidelity distance matrix $d_{F}$ and visualize it as a heat map while simultaneously comparing it to the original distance matrix associated with $d_{X}$. Then calculate the strain, distortion, and stress associated with the quantum encoding. 
\item If the codomain of the quantum encoding is the space of qubits, visualize the quantum encoded data on the Bloch sphere. If the dimension of the quantum state space exceeds three, apply cMDS to the distance matrix of the quantum encoded data to visualize the relative positions of the data points back in Euclidean space $\R^2$. This provides an alternative visualization (besides the heat map) for us to identify how much the distances have been distorted in the quantum encoding. Such a distortion may significantly alter the inferred topology of the data manifold. 
\end{enumerate}

\subsubsection{Square-root encoding for ideal circle}

For square-root encoding as defined in Example~\ref{ex:squarerootencoding}, we first must apply the uniform transformation~\eqref{eqn:TXmapping} to map the $n^{\text{th}}$ roots of unity onto the standard 2-simplex $\Delta^2$ in $\R^{3}$. After applying the uniform transformation, each such root of unity has a vector representation in $\Delta^2\subset\R^3$ as 
\begin{equation}
\frac{1}{3}\begin{bmatrix}1+\frac{1+\sqrt{3}}{2\sqrt{2}}c_k+\frac{1-\sqrt{3}}{2\sqrt{2}}s_k\\
1+\frac{1-\sqrt{3}}{2\sqrt{2}}c_k+\frac{1+\sqrt{3}}{2\sqrt{2}}s_k\\
1-\frac{1}{\sqrt{2}}c_k-\frac{1}{\sqrt{2}}s_k\\
\end{bmatrix}
\end{equation}
for $k\in\{0,1,2,\dots,n-1\}$ and $n\in\N$ with $n\ge2$. The diagonal encoding therefore takes the point $x_k$ to the qutrit state 
\begin{equation}
\label{eqn:sqrtencodingcircle}
\ket{x_k}=\frac{1}{\sqrt{3}}
\begin{bmatrix}\sqrt{1+\frac{1+\sqrt{3}}{2\sqrt{2}}c_k+\frac{1-\sqrt{3}}{2\sqrt{2}}s_k}\\
\sqrt{1+\frac{1-\sqrt{3}}{2\sqrt{2}}c_k+\frac{1+\sqrt{3}}{2\sqrt{2}}s_k}\\
\sqrt{1-\frac{1}{\sqrt{2}}c_k-\frac{1}{\sqrt{2}}s_k}\\
\end{bmatrix}
\end{equation}
(alternatively, one could send it to a 2-qubit state by setting the $\ket{11}$ component to be zero). The Bures fidelity distance between $\ket{x_j}$ and $\ket{x_k}$ is 
$d_{F}\big(\ket{x_j},\ket{x_k}\big)=\sqrt{1-|\langle x_{j}|x_{k}\rangle|}$
and is calculated 
using~\eqref{eqn:sqrtencodingcircle}. A heatmap of the associated distance matrix is shown in 
Figure~\ref{fig:ideal-circle-sqrt-heatmapsandCMDS}. 
Note that because of the rescaling in the definition of the uniform transformation, this distance matrix does not depend on $r$. 

\begin{figure}
\begin{tabular}{cc}
\begin{tikzpicture}
\node at (1.3375,-1.3375) {\includegraphics[width=2.675cm]{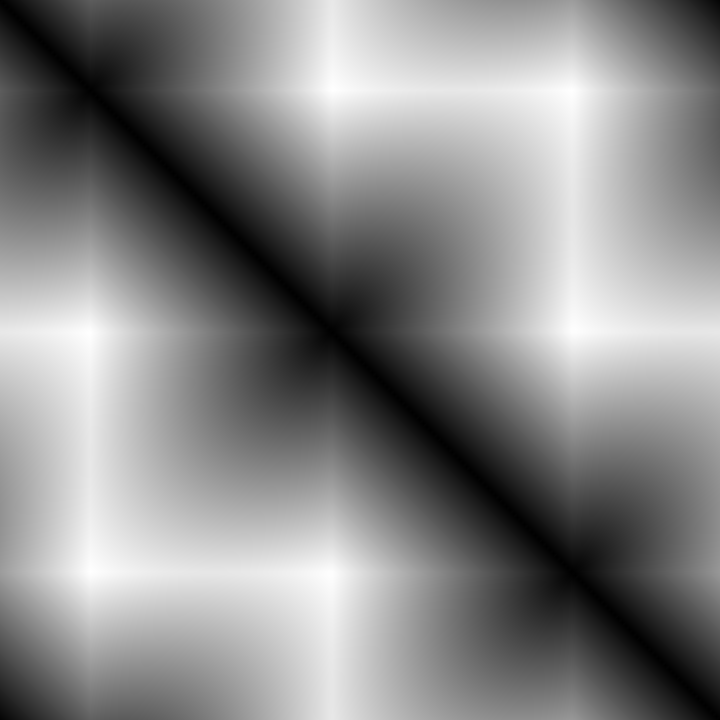}};
\draw[->] (-0.1,0) -- node[left]{$Y$} (-0.1,-2.675);
\draw[->] (0,0.1) -- node[above]{$Y$} (2.675,0.1);
  \foreach \x in {0,0.025,0.05,...,1.00} {
      \fill[black,opacity={\x}] ({2.675+0.2},{2.675*(-\x)}) rectangle ({2.675*(1+0.025)+0.2},{2.675*(-\x-0.025)});
  }
  \draw ({2.675+0.2},{-2.675}) -- ({2.675+0.2},{0}) -- ({2.675*(1+0.025)+0.2+0.1},{0}) node[right]{$1$};
\draw ({2.675+0.2},{-2.675}) -- ({2.675*(1+0.025)+0.2+0.1},{-2.675}) node[right]{$0$};
\node at ({2.675+0.5},{-2.675/2}) {\rotatebox{90}{distance}};
\end{tikzpicture}
\;\;&\;\;
\includegraphics[width=3cm]{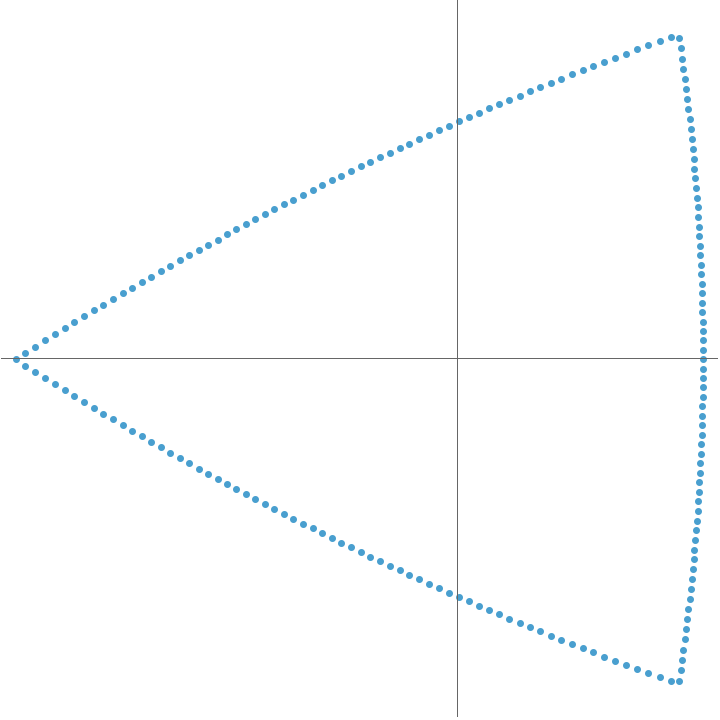}\\
\includegraphics[width=3.25cm]{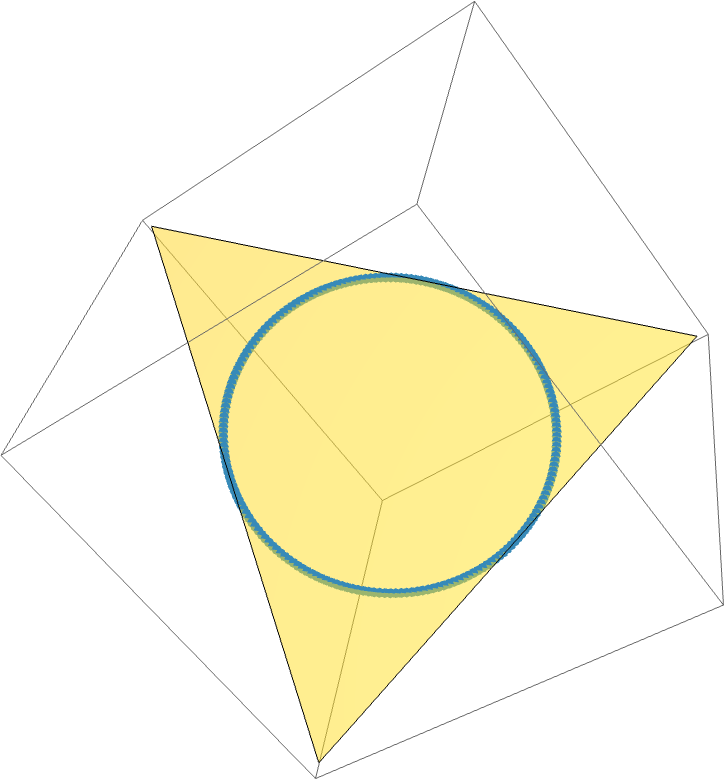}&
\includegraphics[width=3.25cm]{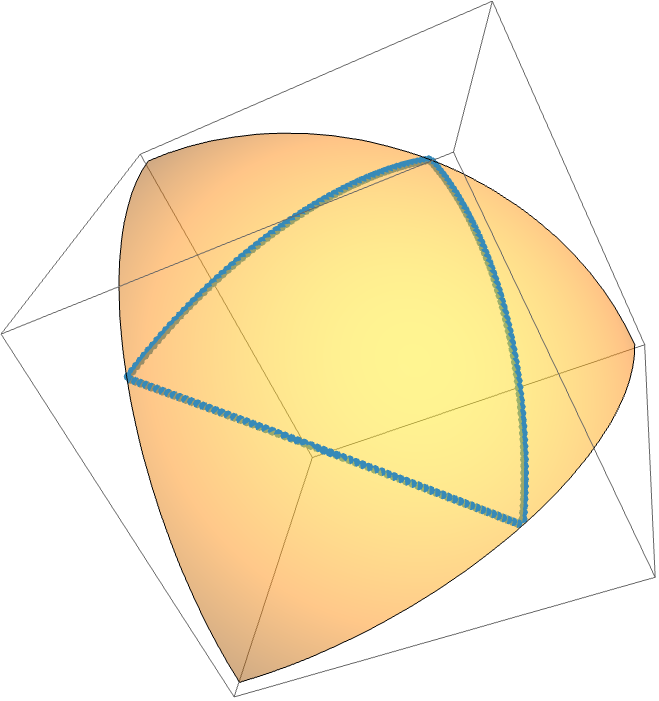}
\end{tabular}
\caption{A heatmap of the Bures fidelity distance matrix associated with the quantum encoding given by first applying the uniform transformation to the set of $200^{\text{th}}$ roots of unity and then applying the square-root encoding (top left). A visualization of the uniformly transformed data is shown on the bottom left, while taking the square-root of these points is shown on the unit sphere on the bottom right. The top right picture applies cMDS to the distance matrix to the quantum encoded data. The sharp corners are obtained due to the proximity of the data near the faces of the simplex.}
\label{fig:ideal-circle-sqrt-heatmapsandCMDS}
\end{figure}

As can be seen from Figure~\ref{fig:ideal-circle-sqrt-heatmapsandCMDS}, this transformation approximately preserves the topology, but heavily distorts the geometry. The reason for this is because some data points end up on or near the faces of the simplex after applying the uniform transformation~\eqref{eqn:TXmapping}. When the square-root encoding is applied, points near these faces can concentrate. As a result, in order to better preserve the dataset, one could rescale the dataset by an additional factor to guarantee that the data points are farther away from the faces of the simplex. Figure~\ref{fig:ideal-circle-smaller-sqrt-heatmapsandCMDS} illustrates how this works for the $n^{\text{th}}$ roots of unity when rescaled by an additional factor of $\frac{3}{4}$. 

\begin{figure}
\begin{tabular}{cc}
\begin{tikzpicture}
\node at (1.3375,-1.3375) {\includegraphics[width=2.675cm]{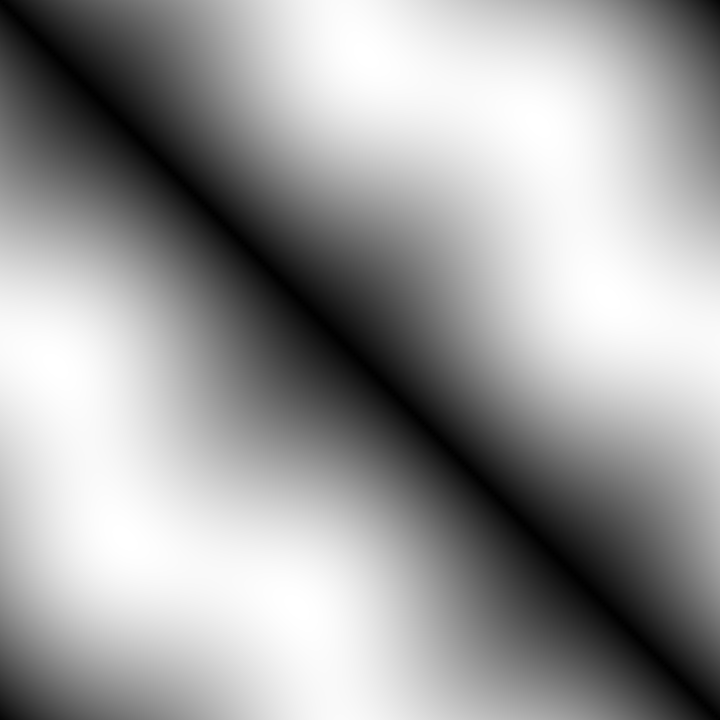}};
\draw[->] (-0.1,0) -- node[left]{$Y$} (-0.1,-2.675);
\draw[->] (0,0.1) -- node[above]{$Y$} (2.675,0.1);
  \foreach \x in {0,0.025,0.05,...,1.00} {
      \fill[black,opacity={\x}] ({2.675+0.2},{2.675*(-\x)}) rectangle ({2.675*(1+0.025)+0.2},{2.675*(-\x-0.025)});
  }
  \draw ({2.675+0.2},{-2.675}) -- ({2.675+0.2},{0}) -- ({2.675*(1+0.025)+0.2+0.1},{0}) node[right]{$1$};
\draw ({2.675+0.2},{-2.675}) -- ({2.675*(1+0.025)+0.2+0.1},{-2.675}) node[right]{$0$};
\node at ({2.675+0.5},{-2.675/2}) {\rotatebox{90}{distance}};
\end{tikzpicture}
\;\;&\;\;
\includegraphics[width=3cm]{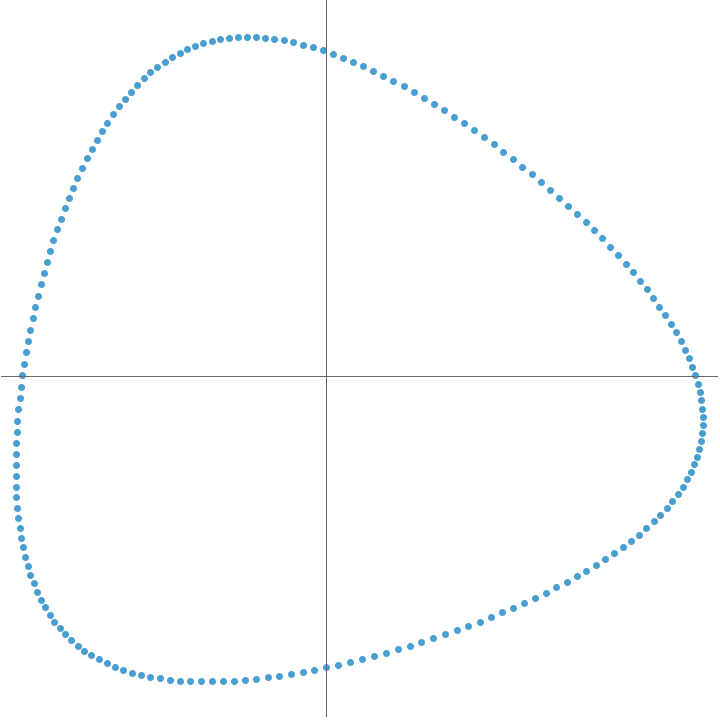}\\
\includegraphics[width=3.25cm]{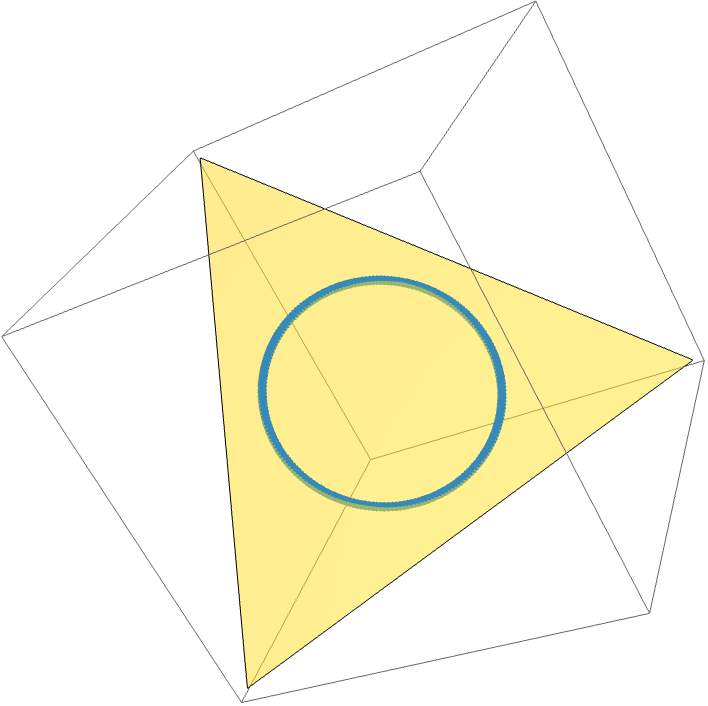}&
\includegraphics[width=3.25cm]{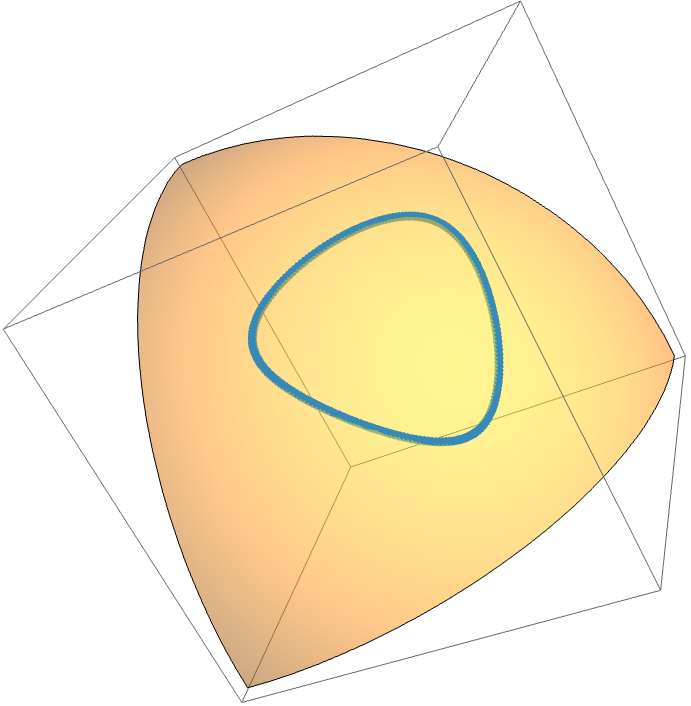}
\end{tabular}
\caption{This is the same as Figure~\ref{fig:ideal-circle-sqrt-heatmapsandCMDS} with the main difference being that the original dataset is rescaled by an additional factor of $\frac{3}{4}$ in the definition of the uniform transformation in order to guarantee that the data are sufficiently far away from the faces of the simplex after square-root encoding is applied.}
\label{fig:ideal-circle-smaller-sqrt-heatmapsandCMDS}
\end{figure}

\subsubsection{Angle encoding for ideal circle}
\label{sec:angleencodingidealcircle}

For angle encoding, we refer back to Example~\ref{ex:angleencoding}. In this example, we specifically use the pure state from Eqn.~\ref{eqn:angleencodingstate} or equivalently the density matrix from Eqn.~\ref{eqn:angleencodingdensitymatrix}. The element $x_{k}$ of the set $X$ gets sent to the 2-qubit pure quantum state 
\begin{align}
\ket{x_{k}}&=\left(\cos\left(\frac{r c_{k}}{2}\right)\ket{0}+\sin\left(\frac{rc_{k}}{2}\right)\ket{1}\right) \nonumber\\
&\,\otimes \left(\cos\left(\frac{r s_{k}}{2}\right)\ket{0}+\sin\left(\frac{rs_{k}}{2}\right)\ket{1}\right).
\end{align}
The inner product between $\ket{x_{j}}$ and $\ket{x_{k}}$ is therefore 
\begin{equation}
\label{eqn:angleencodinginnerproduct}
\langle x_{j}|x_{k}\rangle=\cos\left(\frac{r(c_{k}-c_{j})}{2}\right)\cos\left(\frac{r(s_{k}-s_{j})}{2}\right).
\end{equation}
Thus, the Bures fidelity distance between $\ket{x_j}$ and $\ket{x_k}$ is 
$d_{F}\big(\ket{x_j},\ket{x_k}\big)=\sqrt{1-|\langle x_{j}|x_{k}\rangle|}$
and is calculated using~\eqref{eqn:angleencodinginnerproduct}.
A heatmap of this distance is shown in 
Figure~\ref{fig:ideal-circle-angle-heatmapsandCMDS}
for various values of $r$ (when $r<\frac{\pi}{2}$, the maximum distance between the quantum encoded data is less than $1$, so we rescale the distance to make the maximum distance $1$). 

\begin{figure*}[!t]
\begin{tabular}{cccc}
\includegraphics[width=3.75cm]{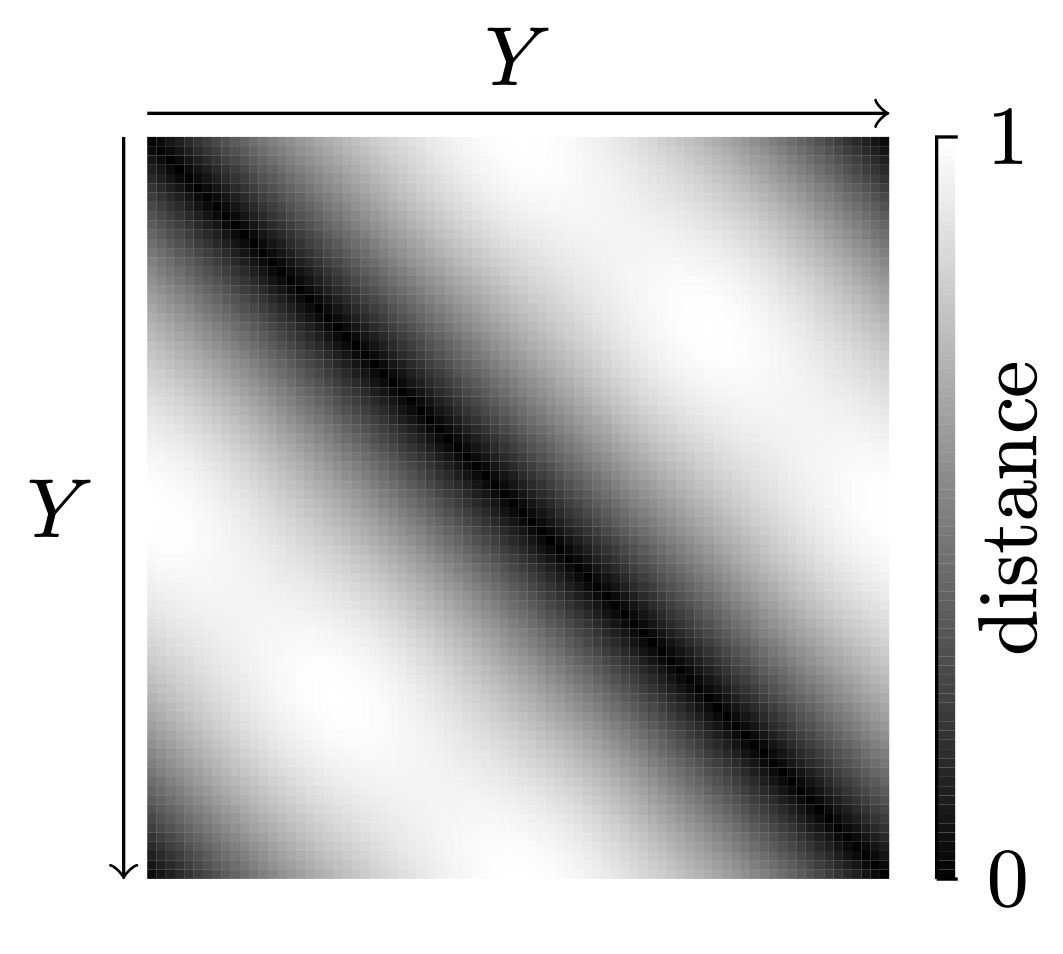}
& 
\includegraphics[width=3.75cm]{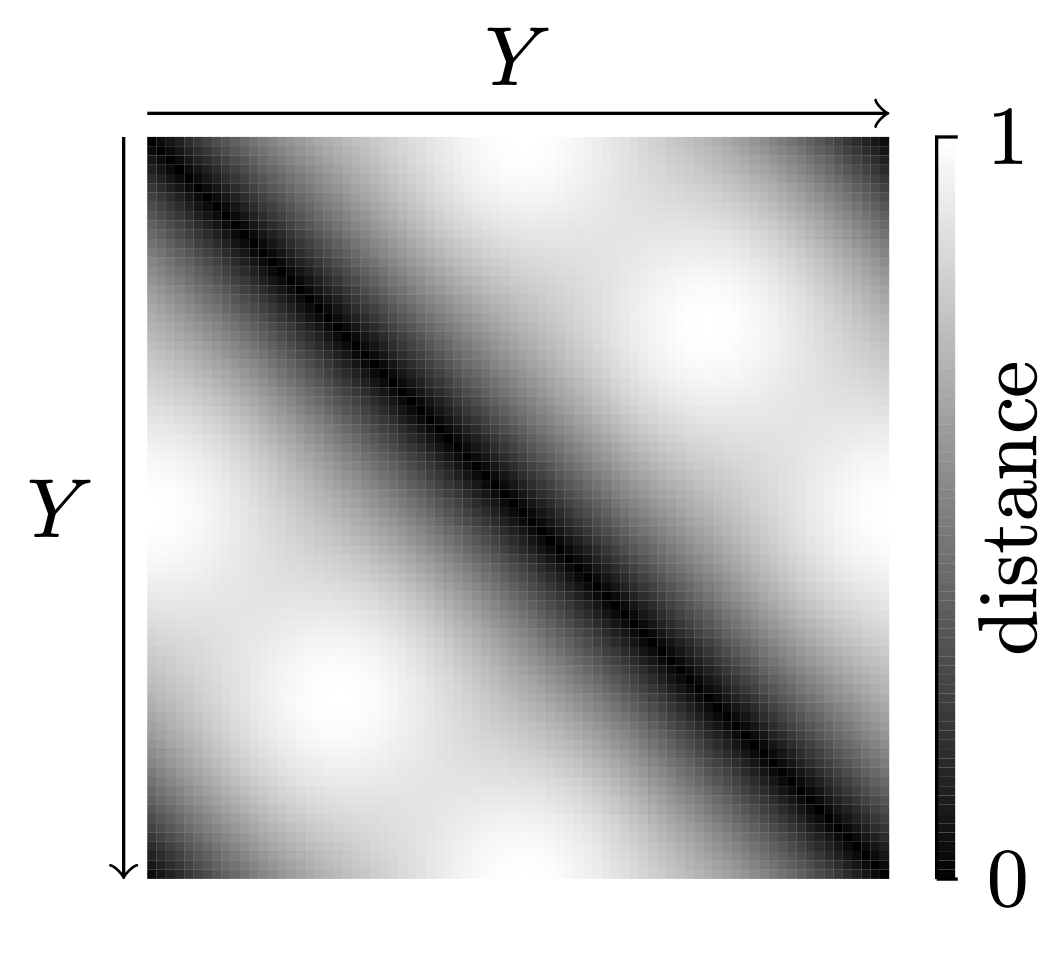}
&
\includegraphics[width=3.75cm]{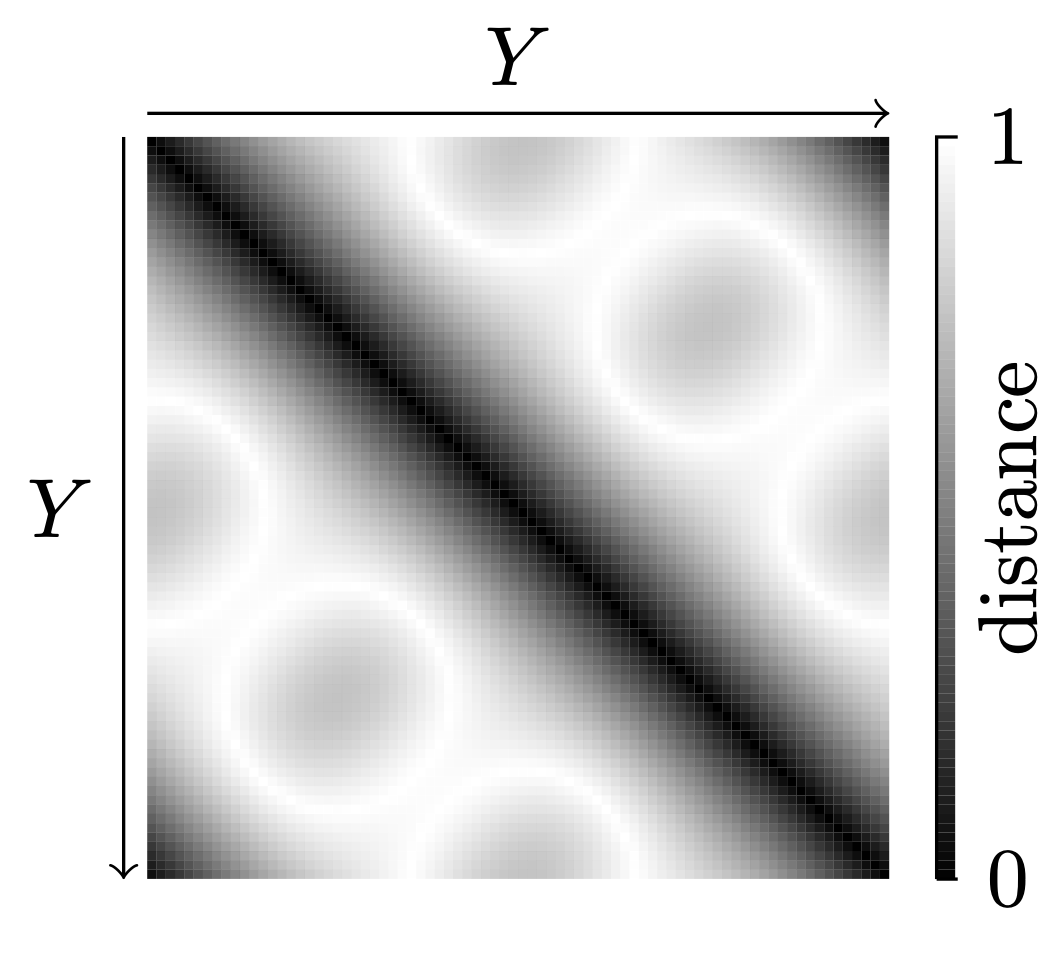}
&
\includegraphics[width=3.75cm]{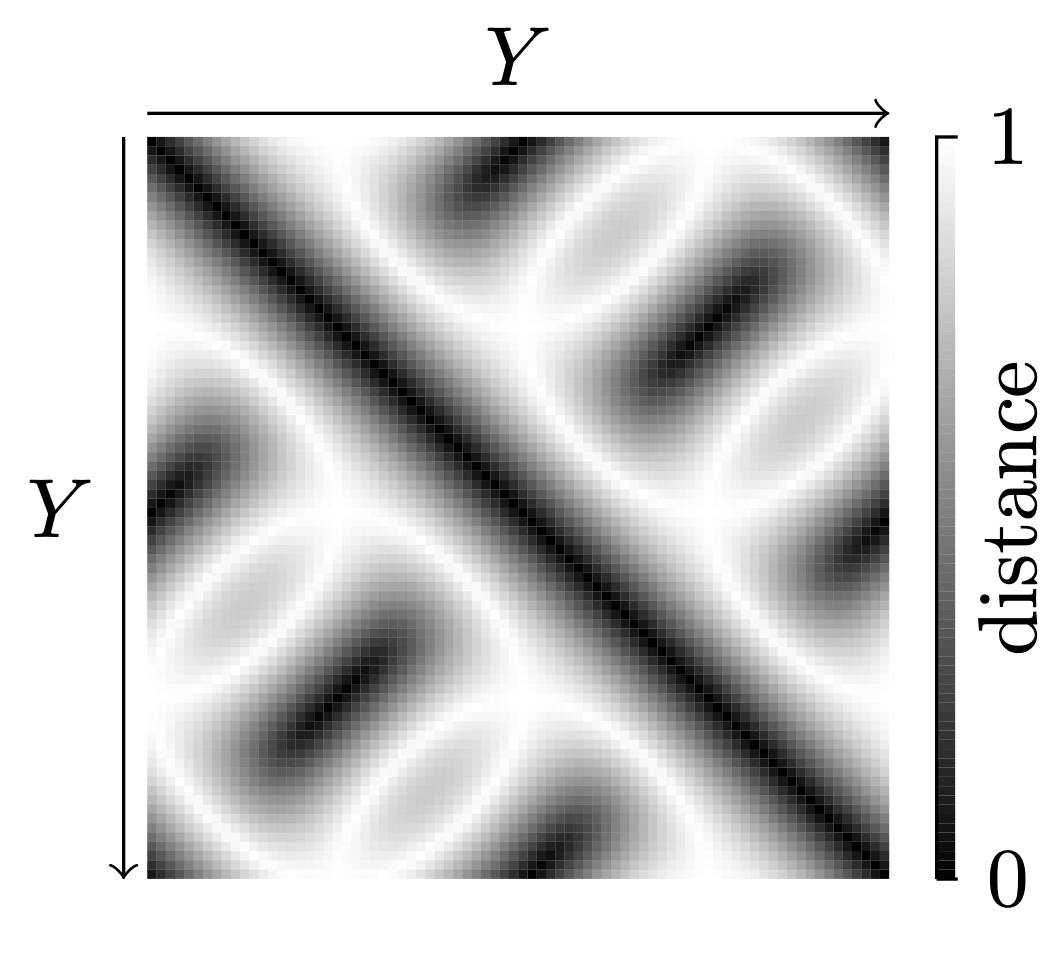} 
\\
\includegraphics[width=3.75cm]{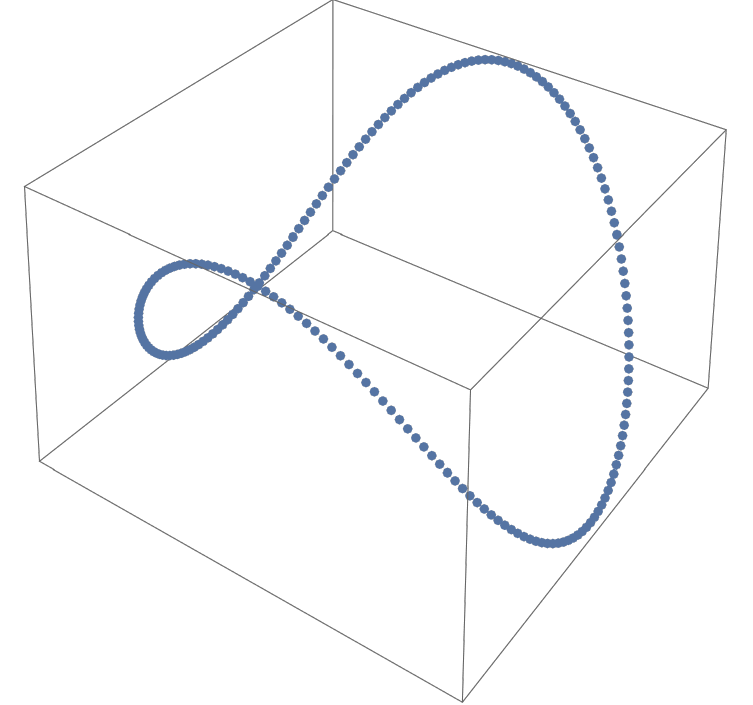}
&
\includegraphics[width=3.75cm]{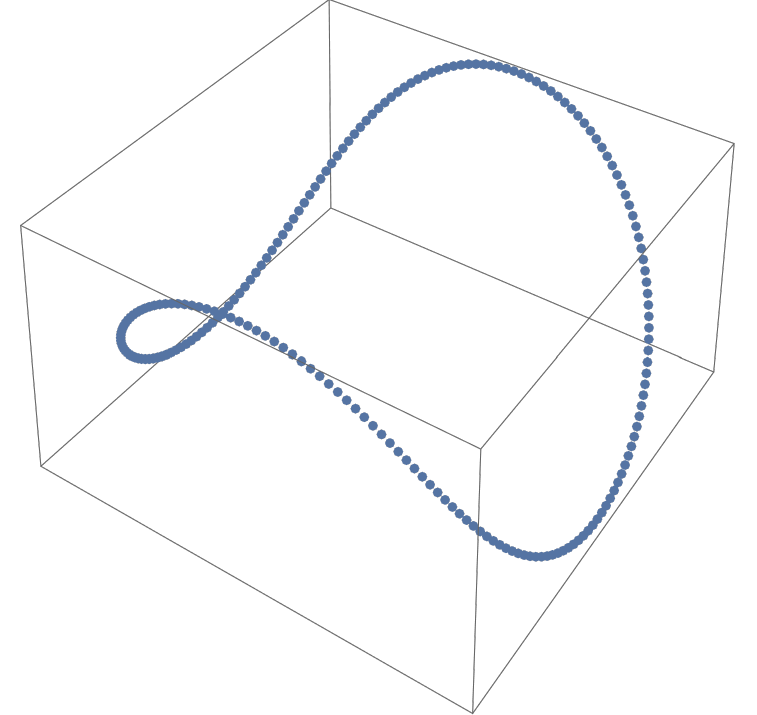}
&
\includegraphics[width=3.75cm]{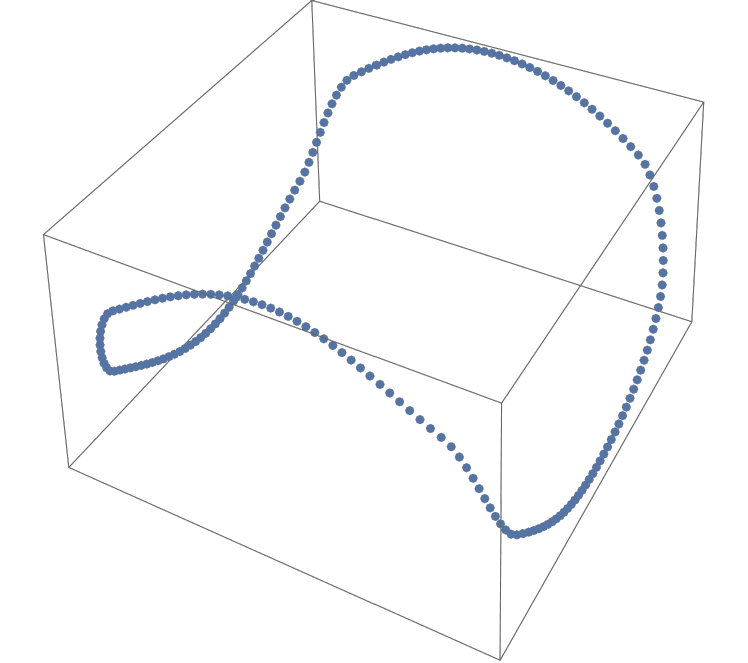}
&
\includegraphics[width=3.75cm]{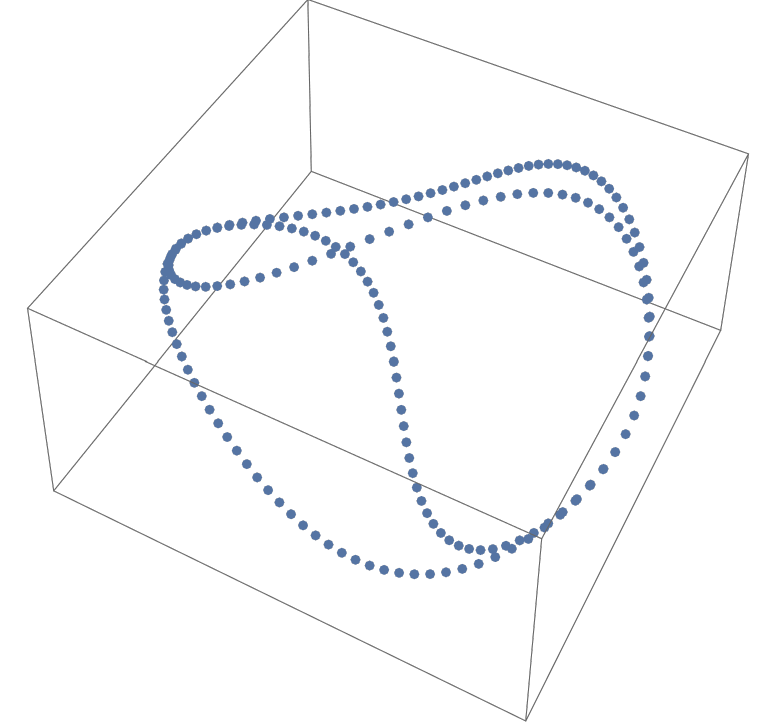}
\end{tabular}
\caption{The top row depicts the heatmaps of the Bures fidelity distance matrix after applying angle encoding to the dataset $X$ from Figure~\ref{fig:200rootsofunity} for values of $r$ given by (from left to right) $r=1$, $r=\frac{\pi}{2}$, $r=2$, and $r=\pi$. When $r>\frac{\pi}{2}$, points that should be far away begin to come closer together. When $r=\pi$, antipodal points along the horizontal and vertical axes get mapped to the same quantum state. The bottom row applies classical Multidimensional Scaling (cMDS) to the four distance matrices. Rather than using the first two principal coordinates associated with cMDS, we use the first \emph{three} principal coordinates because the circle is wrapping around the quantum state space in a highly nontrivial way that cannot be captured adequately using cMDS with the first two principal components. Indeed, even when $r=1$, we can already see that cMDS depicts a circle where one set of antipodal points are raised up along a third axis so that they are closer to each other, while the opposite antipodal points are lowered down along that third axis so that they are also closer to each other. These visualizations also show how the topology of the original point cloud is distorted as the radius increases. When $r=\pi$, the topological shape is that of a graph with two vertices and four edges connecting those two vertices.}
\label{fig:ideal-circle-angle-heatmapsandCMDS}
\end{figure*}

We can apply cMDS to these Bures fidelity distance matrices in order to visualize how much angle encoding distorts distances for these various values of $r$. This is depicted in %Figure~\ref{fig:idea-circle-angle-cMDS} 
Figure~\ref{fig:ideal-circle-angle-heatmapsandCMDS} 
underneath the heatmaps. 
A visualization of the original dataset $X$ for the various radii is shown in Figure~\ref{fig:circle-angle-encoding-torus} as if the square $[-\pi,\pi]\times[-\pi,\pi]$ had its boundary identified as if it was a torus. This further supports how the inferred topology of the quantum encoded data could change if the dataset becomes too large and approaches the boundary of the fundamental domain of the encoding. 

\begin{figure}
\begin{tabular}{ccccccc}
    \begin{tikzpicture}[scale=0.30] 
    \draw[gray,thin,opacity=0.75,step=2.0cm,xshift=-1cm,yshift=-1cm] (-2.01,-2.01) grid (4.01,4.01); 
    \foreach \cx in {-2,0,2} {
    \foreach \cy in {-2,0,2} {
    \draw[samples=100,smooth,domain=0:360,very thick] plot ({\cx+(1/pi)*cos(\x)},{\cy+(1/pi)*sin(\x)});
    }
    }
    \end{tikzpicture}
    & & 
    \begin{tikzpicture}[scale=0.30] 
    \draw[gray,thin,opacity=0.75,step=2.0cm,xshift=-1cm,yshift=-1cm] (-2.01,-2.01) grid (4.01,4.01); 
    \foreach \cx in {-2,0,2} {
    \foreach \cy in {-2,0,2} {
    \draw[samples=100,smooth,domain=0:360,very thick] plot ({\cx+(1/2)*cos(\x)},{\cy+(1/2)*sin(\x)});
    }
    }
    \end{tikzpicture}
    & \phantom{x} & 
    \begin{tikzpicture}[scale=0.30] 
    \draw[gray,thin,opacity=0.75,step=2.0cm,xshift=-1cm,yshift=-1cm] (-2.01,-2.01) grid (4.01,4.01); 
    \foreach \cx in {-2,0,2} {
    \foreach \cy in {-2,0,2} {
    \draw[samples=100,smooth,domain=0:360,very thick] plot ({\cx+(2/pi)*cos(\x)},{\cy+(2/pi)*sin(\x)});
    }
    }
    \end{tikzpicture}
    & \phantom{x} &
    \begin{tikzpicture}[scale=0.30]
    \draw[gray,thin,opacity=0.75,step=2.0cm,xshift=-1cm,yshift=-1cm] (-2.01,-2.01) grid (4.01,4.01); 
    \foreach \cx in {-2,0,2} {
    \foreach \cy in {-2,0,2} {
    \draw[samples=100,smooth,domain=0:360,very thick] plot ({\cx+cos(\x)},{\cy+sin(\x)});
    }
    }
    \end{tikzpicture}
\end{tabular}
\caption{The dataset $X$ for values of $r$ given by $r=1$, $r=\frac{\pi}{2}$, $r=2$, and $r=\pi$ drawn on the domain $[-3\pi,3\pi)\times[-3\pi,3\pi)$, with the fundamental domain being $[-\pi,\pi)\times[-\pi,\pi)$ and all other domains obtained from the fundamental one by imposing periodic boundary conditions. In the case that $r=\pi$, one sees that the point $(r,0)$ becomes identified with the point $(-r,0)$ under angle encoding, while the point $(0,r)$ becomes identified with the point $(0,-r)$. This gives a shape (topologically a graph with 2 vertices and 4 edges connecting those two vertices) matching the one shown in Figure~\ref{fig:ideal-circle-angle-heatmapsandCMDS}.}
\label{fig:circle-angle-encoding-torus}
\end{figure}

What this tells us is that in order to use angle encoding for mapping classical data from $\R^{d}$ to quantum states in $(\C^{2})^{\otimes d}$ in a way that somewhat appropriately preserves the topology of the original dataset, it is crucial to center the data and to rescale the data so that they fit not only inside of $[-\pi,\pi)^{d}$, but a suitable distance away from the boundary of this set. Namely, the data should be uniformly rescaled so that they fit inside of $\left[-\frac{\pi}{2},\frac{\pi}{2}\right)^{d}$. Although this point may have already been clear to some, we found that emphasizing this point with simple examples illustrates its importance. Therefore, when we compare different quantum encodings, we will center and rescale all the data so that they lie within a proper subset of the fundamental domain that would not distort the topology purely due to boundary and periodicity effects. 

\subsubsection{Dense angle encoding for ideal circle}

The benefit of dense angle encoding for the $n^{\text{th}}$ roots of unity data set $X$ is that we can actually visualize the quantum encoded data on the Bloch sphere. From that Bloch sphere, we can see the topology change directly from the data without the need for cMDS as in the case for ordinary angle encoding. Moreover, the fundamental domain for dense angle encoding is $[-\pi,\pi)\times[-\pi,\pi)$ so we will be able to make a direct comparison with the ordinary angle encoding case. 

\begin{figure*}[t!]
\begin{tabular}{cccc}
\includegraphics[width=3.75cm]{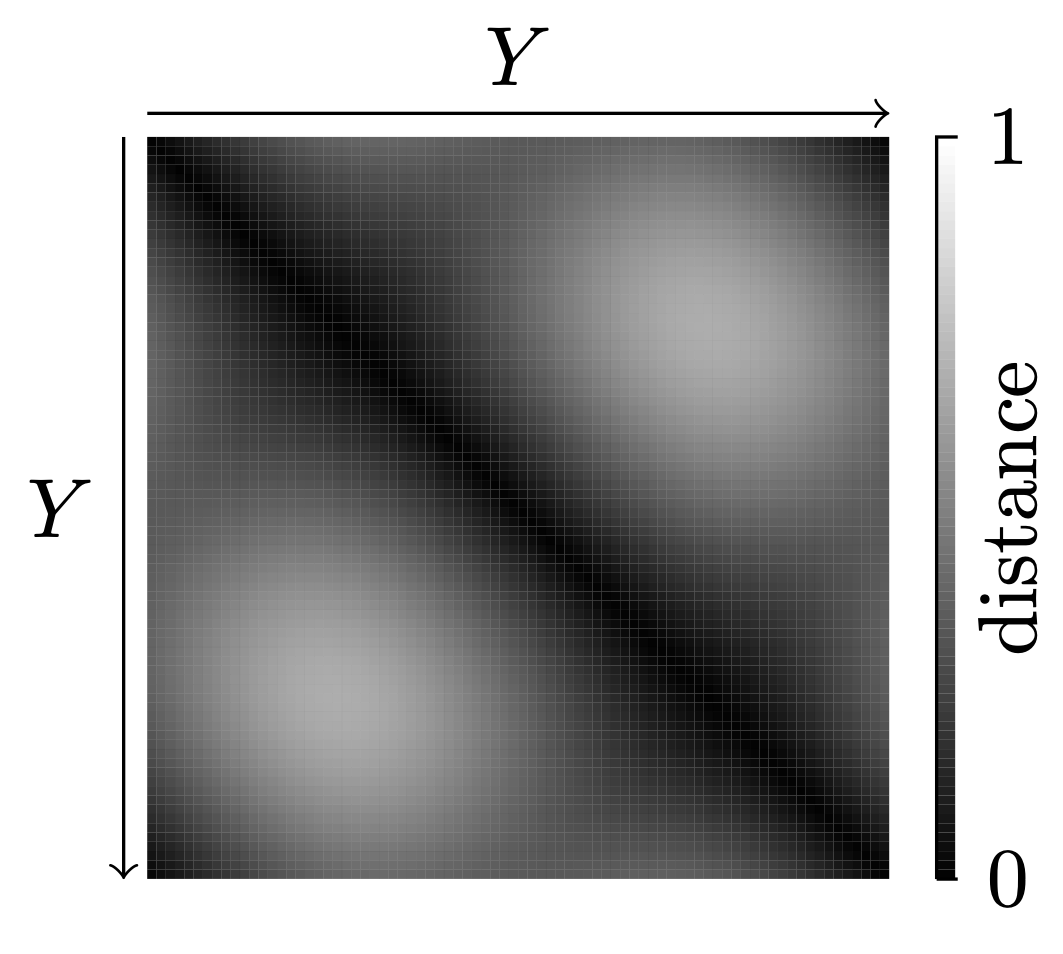}
&
\includegraphics[width=3.75cm]{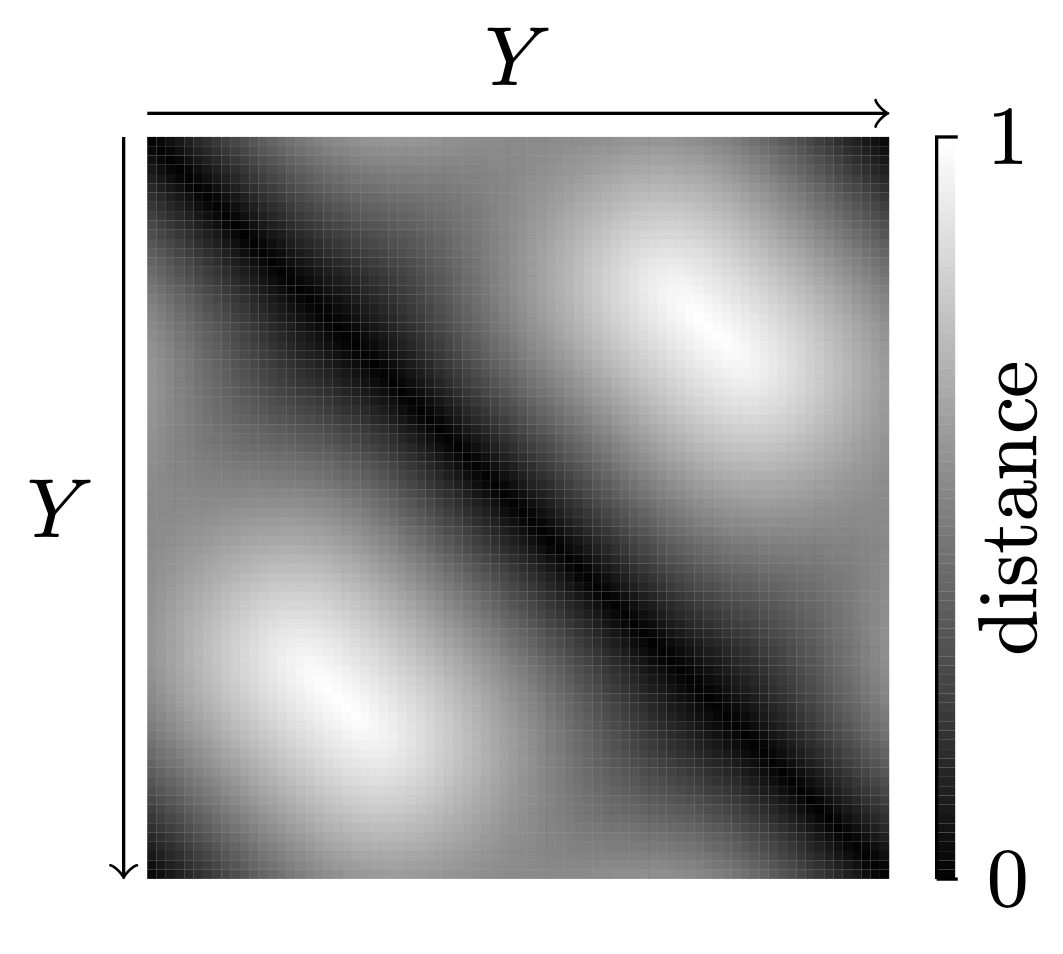}
&
\includegraphics[width=3.75cm]{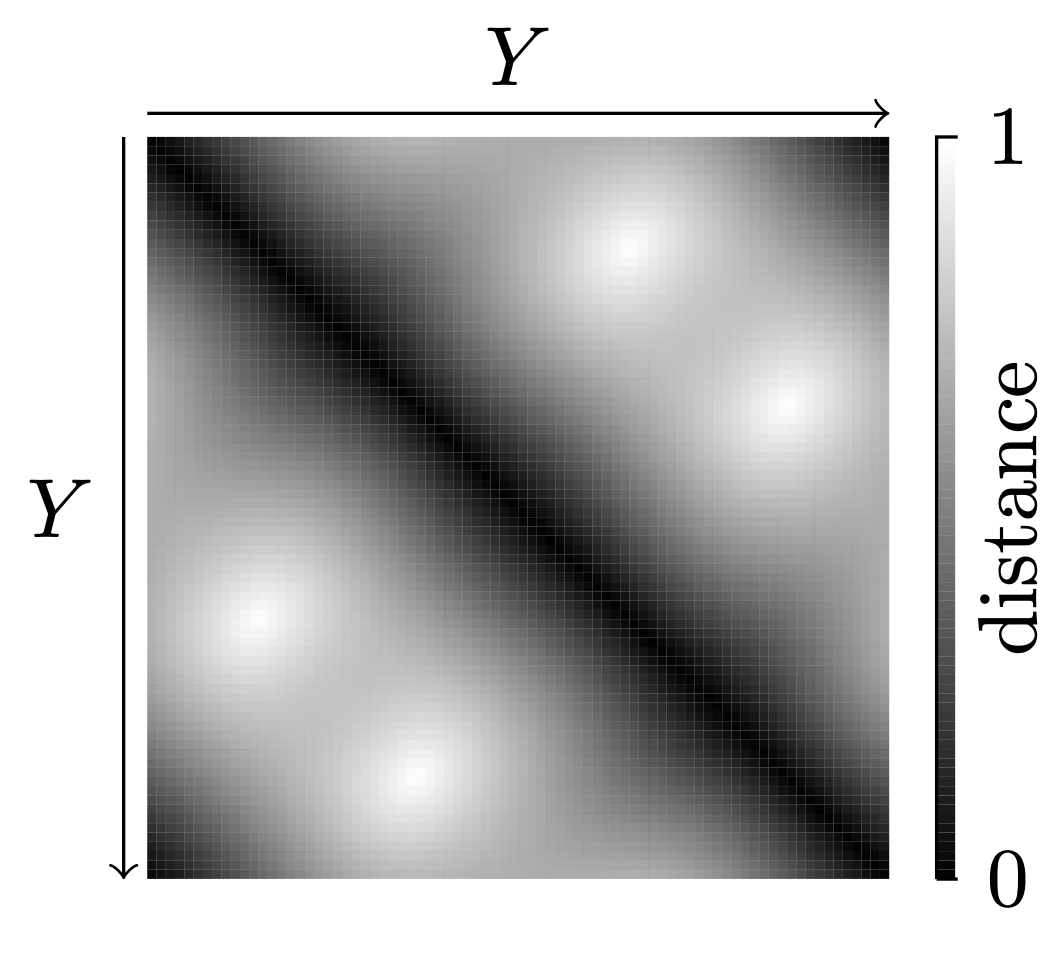}
&
\includegraphics[width=3.75cm]{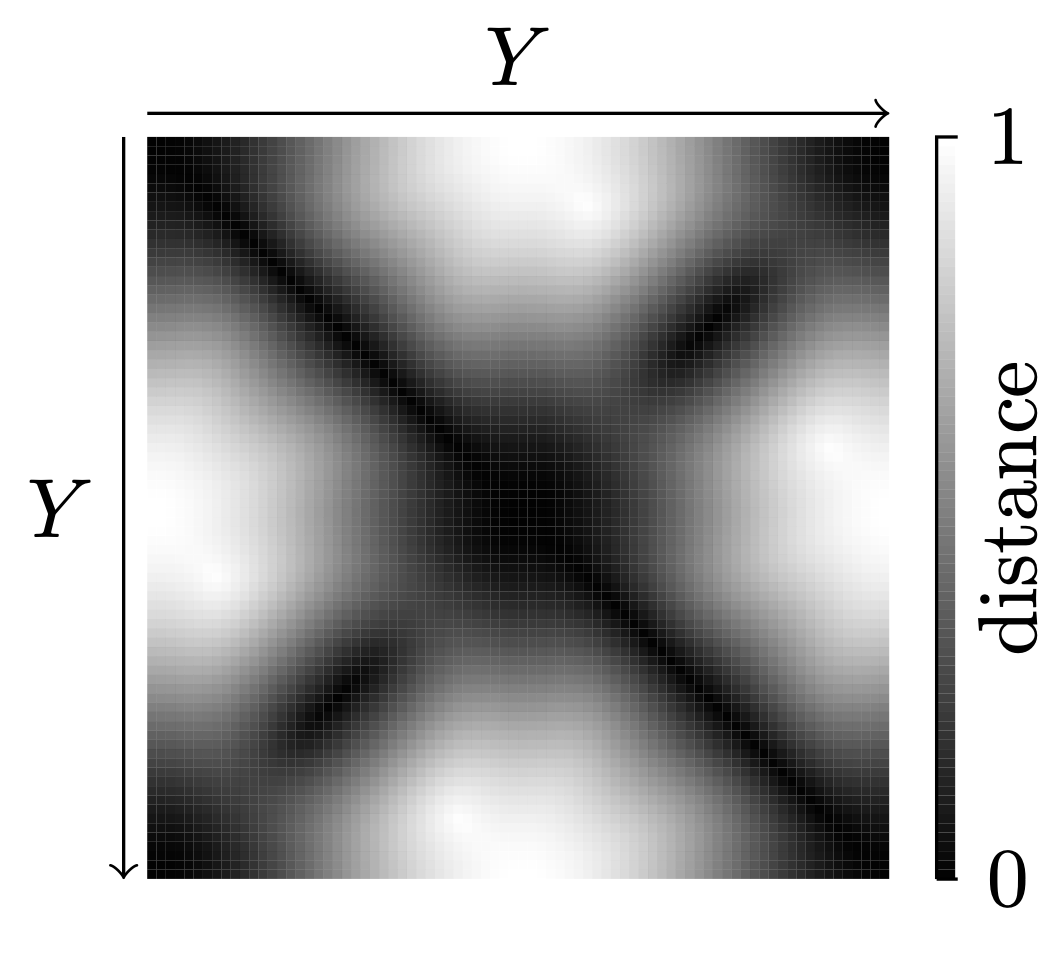}
\\
\includegraphics[width=3.25cm]{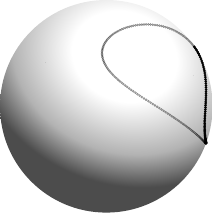}
&
\includegraphics[width=3.25cm]{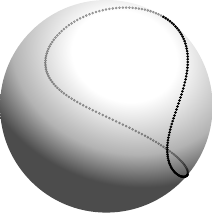}
&
\includegraphics[width=3.25cm]{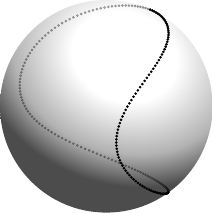}
&
\includegraphics[width=3.25cm]{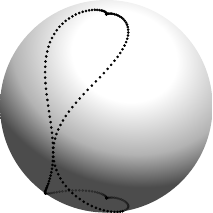}
\end{tabular}
\caption{The top row depicts heatmaps of the Bures fidelity distance matrix after applying dense angle encoding to the dataset $X$ for values of $r$ given by (from left to right) $r=1$, $r=\frac{\pi}{2}$, $r=2$, and $r=\pi$. When $r>\frac{\pi}{2}$, points that should be far away begin to come closer together. When $r=\pi$, antipodal points along the horizontal and vertical axes get mapped to the same quantum state. The bottom row depicts the images of the ideal circle data after applying dense angle encoding for the values of $r$ corresponding to the distance matrices above. When $r=\pi$, the encoding has the (persistent) topology of a figure $8$ (a graph with a single vertex and two edges from the vertex to itself), which has persistent Betti dimension $2$ for the degree $1$ homology.}
\label{fig:ideal-circle-denseangle-heatmapsBlochsphere}
\end{figure*}

Using similar notation as in the case of angle encoding, dense angle encoding sends a data point $x_k=(rc_k,rs_k)$ to 
\begin{equation}
\ket{x_k}=\cos\left(\frac{rc_k+\pi}{4}\right)\ket{0}+e^{irs_k}\sin\left(\frac{rc_k+\pi}{4}\right)\ket{1},
\end{equation}
which leads to the inner product
\begin{align}
\langle x_{j}&|x_{k}\rangle=\cos\left(\frac{rc_{j}+\pi}{4}\right)\cos\left(\frac{rc_{k}+\pi}{4}\right) \nonumber \\
&\quad+e^{ir(s_{k}-s_{j})}\sin\left(\frac{rc_j+\pi}{4}\right)\sin\left(\frac{rc_k+\pi}{4}\right).
\label{eqn:denseangleinnerproduct}
\end{align}
Hence, the Bures fidelity distance between $\ket{x_j}$ and $\ket{x_k}$ is 
$d_{F}\big(\ket{x_j},\ket{x_k}\big)=\sqrt{1-|\langle x_{j}|x_{k}\rangle|}$
and is calculated using~\eqref{eqn:denseangleinnerproduct}. Heatmaps of this distance matrix for various values of $r$ are shown in Figure~\ref{fig:ideal-circle-denseangle-heatmapsBlochsphere} together with the corresponding images of the datasets on the Bloch sphere underneath.

The images of the data set on the Bloch sphere are generated by calculating the $(x,y,z)$ coordinates associated with the states $\ket{x_k}$, which one can show are 
\begingroup
\allowdisplaybreaks
\begin{align}
x&=\langle x_k|\sigma_1|x_k\rangle=\sin\left(\frac{rc_{k}+\pi}{2}\right)\cos(rs_{k}) \nonumber \\
y&=\langle x_k|\sigma_2|x_k\rangle=\sin\left(\frac{rc_{k}+\pi}{2}\right)\sin(rs_{k}) \nonumber \\
z&=\langle x_k|\sigma_3|x_k\rangle=\cos\left(\frac{rc_{k}+\pi}{2}\right).
\end{align}
\endgroup

\subsubsection{IQP encoding for ideal circle}

For IQP encoding, we refer back to Example~\ref{ex:IQPencoding}. Under this encoding, the element $x_{k}=(rc_k,rs_k)$ of the set $X$ gets sent to the 2-qubit pure quantum state 
\begin{equation}
\ket{x_k}=e^{i \big( rc_{k} \sigma_{1}\otimes\mathds{1}_{2}+rs_{k}\mathds{1}_{2}\otimes\sigma_{1}+(\pi-rc_{k})(\pi-rs_{k})\sigma_{1}\otimes\sigma_{1} \big)} \ket{00}. 
\end{equation}
In this case, the explicit calculation of the inner product $\langle x_j| x_k\rangle$ is not particularly illuminating and is quite cumbersome to write. As a result, we omit the formula but mention that it is significantly more sensitive to perturbations than the previous encodings. The associated heatmaps associated with the pointwise distances between the IQP quantum encoded data are shown in Figure~\ref{fig:heatmapIQPencoding} for the same values of $r$ chosen before. 
We have skipped applying cMDS to the IQP encoded distance matrix due to the result being heavily distorted. 

\begin{figure}
\begin{tabular}{ccc}
\begin{tikzpicture}
\node at (1.3375,-1.3375) {\includegraphics[width=2.675cm]{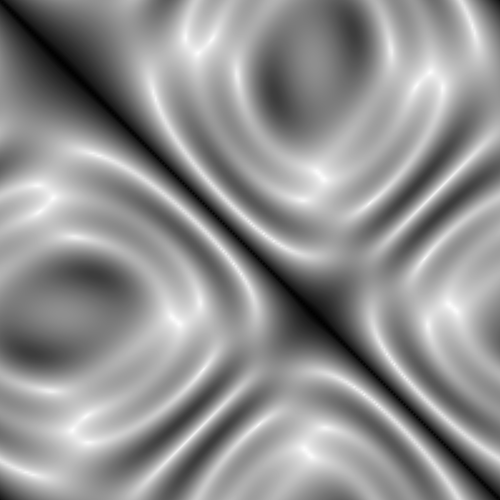}};
\draw[->] (-0.1,0) -- node[left]{$Y$} (-0.1,-2.675);
\draw[->] (0,0.1) -- node[above]{$Y$} (2.675,0.1);
  \foreach \x in {0,0.025,0.05,...,1.00} {
      \fill[black,opacity={\x}] ({2.675+0.2},{2.675*(-\x)}) rectangle ({2.675*(1+0.025)+0.2},{2.675*(-\x-0.025)});
  }
  \draw ({2.675+0.2},{-2.675}) -- ({2.675+0.2},{0}) -- ({2.675*(1+0.025)+0.2+0.1},{0}) node[right]{$1$};
\draw ({2.675+0.2},{-2.675}) -- ({2.675*(1+0.025)+0.2+0.1},{-2.675}) node[right]{$0$};
\node at ({2.675+0.5},{-2.675/2}) {\rotatebox{90}{distance}};
\end{tikzpicture}
&\phantom{x}&
\begin{tikzpicture}
\node at (1.3375,-1.3375) {\includegraphics[width=2.675cm]{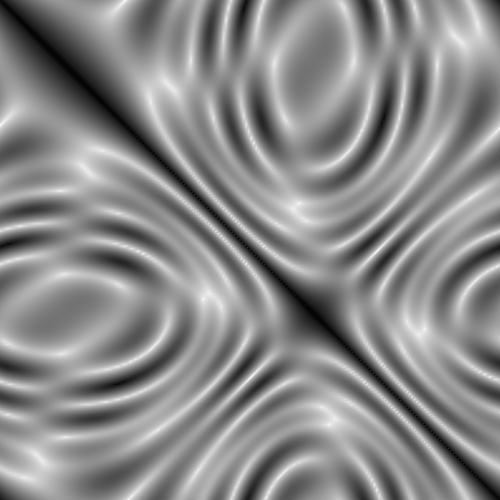}};
\draw[->] (-0.1,0) -- node[left]{$Y$} (-0.1,-2.675);
\draw[->] (0,0.1) -- node[above]{$Y$} (2.675,0.1);
  \foreach \x in {0,0.025,0.05,...,1.00} {
      \fill[black,opacity={\x}] ({2.675+0.2},{2.675*(-\x)}) rectangle ({2.675*(1+0.025)+0.2},{2.675*(-\x-0.025)});
  }
  \draw ({2.675+0.2},{-2.675}) -- ({2.675+0.2},{0}) -- ({2.675*(1+0.025)+0.2+0.1},{0}) node[right]{$1$};
\draw ({2.675+0.2},{-2.675}) -- ({2.675*(1+0.025)+0.2+0.1},{-2.675}) node[right]{$0$};
\node at ({2.675+0.5},{-2.675/2}) {\rotatebox{90}{distance}};
\end{tikzpicture}
\\
\begin{tikzpicture}
\node at (1.3375,-1.3375) {\includegraphics[width=2.675cm]{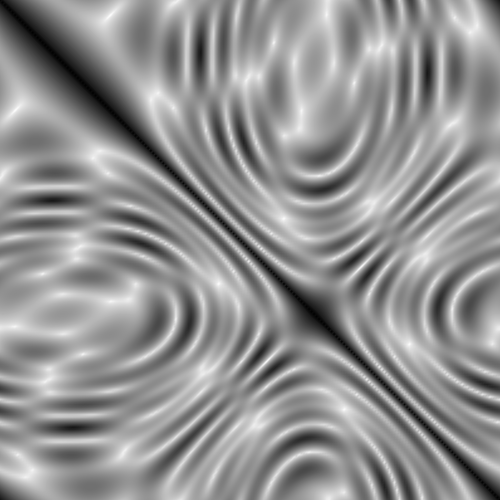}};
\draw[->] (-0.1,0) -- node[left]{$Y$} (-0.1,-2.675);
\draw[->] (0,0.1) -- node[above]{$Y$} (2.675,0.1);
  \foreach \x in {0,0.025,0.05,...,1.00} {
      \fill[black,opacity={\x}] ({2.675+0.2},{2.675*(-\x)}) rectangle ({2.675*(1+0.025)+0.2},{2.675*(-\x-0.025)});
  }
  \draw ({2.675+0.2},{-2.675}) -- ({2.675+0.2},{0}) -- ({2.675*(1+0.025)+0.2+0.1},{0}) node[right]{$1$};
\draw ({2.675+0.2},{-2.675}) -- ({2.675*(1+0.025)+0.2+0.1},{-2.675}) node[right]{$0$};
\node at ({2.675+0.5},{-2.675/2}) {\rotatebox{90}{distance}};
\end{tikzpicture}
&\phantom{x}&
\begin{tikzpicture}
\node at (1.3375,-1.3375) {\includegraphics[width=2.675cm]{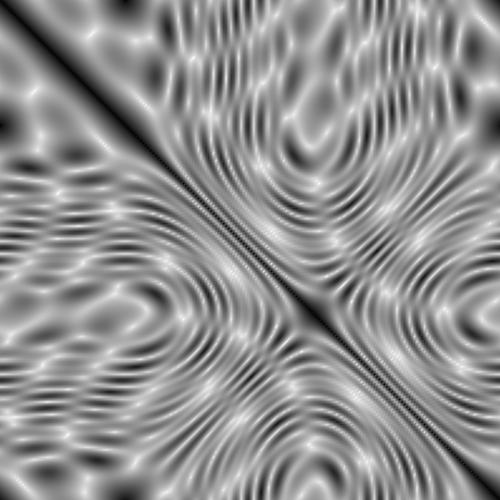}};
\draw[->] (-0.1,0) -- node[left]{$Y$} (-0.1,-2.675);
\draw[->] (0,0.1) -- node[above]{$Y$} (2.675,0.1);
  \foreach \x in {0,0.025,0.05,...,1.00} {
      \fill[black,opacity={\x}] ({2.675+0.2},{2.675*(-\x)}) rectangle ({2.675*(1+0.025)+0.2},{2.675*(-\x-0.025)});
  }
  \draw ({2.675+0.2},{-2.675}) -- ({2.675+0.2},{0}) -- ({2.675*(1+0.025)+0.2+0.1},{0}) node[right]{$1$};
\draw ({2.675+0.2},{-2.675}) -- ({2.675*(1+0.025)+0.2+0.1},{-2.675}) node[right]{$0$};
\node at ({2.675+0.5},{-2.675/2}) {\rotatebox{90}{distance}};
\end{tikzpicture}
\end{tabular}
\caption{Heatmaps of the Bures fidelity distance matrix after applying IQP encoding to the dataset $X$ for values of $r$ given by $r=1$ (top left), $r=\frac{\pi}{2}$ (top right), $r=2$ (bottom left), and $r=\pi$ (bottom right). These images illustrate that IQP encoding highly distorts the distances from the original dataset, rendering it essentially unsuitable for preserving the topology of the original dataset. Small perturbations of the dataset could result in drastically different distances. This is further supported later in Figure~\ref{fig:barcodes} in the case of noisy data.}
\label{fig:heatmapIQPencoding}
\end{figure}

%%%%%%%%%%%%%%%%%%%%%%%%%%%%%%%%%%%%%%%%%%%%%%%%
\subsubsection{MDS encoding}
\label{subsubsec:MDScircle}
%%%%%%%%%%%%%%%%%%%%%%%%%%%%%%%%%%%%%%%%%%%%%%%%

We next briefly analyze MDS encoding by finding a function $f:X\to \CP^{1}$ that minimizes stress~\eqref{eqn:stressf}. Here, $\CP^{1}$ is complex projective space of complex dimension $1$ (i.e., the Bloch sphere). Since the classical dataset already lies on a circle, we will try to find an MDS encoding whose stress minimizer lies near the points on the equator of the Bloch sphere given by the points 
\begin{equation}
\rho(x_k):=\frac{1}{2}(\mathds{1}_{2}+c_k \sigma_1 + s_k \sigma_2), 
\end{equation}
which has corresponding pure state given by 
\begin{equation}
\label{eqn:MDSstartingpoint}
\ket{x_k}=\frac{1}{\sqrt{2}}\Big(\ket{0}+(c_k+i s_k)\ket{1}\Big).
\end{equation}
The corresponding inner product is 
\begin{equation}
\langle x_j|x_k\rangle
=\frac{1}{2}\big(1+c_{k-j} + i s_{k-j}\big),
\end{equation}
whose magnitude is 
\begin{equation}
\big|\langle x_j|x_k\rangle\big|=\sqrt{\frac{1+c_{k-j}}{2}}.
\end{equation}
Therefore, 
\begin{align}
\frac{1}{2}d_{\mathrm{Tr}}\big(\ket{x_j},\ket{x_k}\big)&=\sqrt{1-\big|\langle x_j|x_k\rangle\big|^2} \nonumber \\
&=\sqrt{\frac{1-c_{k-j}}{2}} \nonumber \\
&=d_{X}(x_j,x_k),
\end{align}
which shows that this encoding would be an isometry (and hence have zero distortion) if the quantum state space was equipped with (half) the trace distance~\eqref{eqn:Trpure}. This justifies that the encoding~\eqref{eqn:MDSstartingpoint} is a reasonable starting point to minimize the stress assuming that $\CP^{1}$ is equipped with the Bures fidelity distance. More generally, if 
\begin{equation}
\mathbf{y}=\big(\sin(\phi)\cos(\theta),\sin(\phi)\sin(\theta),\cos(\phi)\big)
\end{equation}
is a unit vector in Euclidean space in terms of spherical angular coordinates $\theta\in[0,2\pi)$ and $\phi\in[0,\pi]$, then the density matrix $\rho=\frac{1}{2}(\mathds{1}+\mathbf{y}\cdot\boldsymbol{\sigma})$ has corresponding vector representation as 
\begin{equation}
\label{eqn:genericBlochstate}
\ket{\mathbf{y}}=\cos\left(\frac{\phi}{2}\right)\ket{0}+e^{i\theta}\sin\left(\frac{\phi}{2}\right)\ket{1}.
\end{equation}
Therefore, the inner product between $\ket{\mathbf{y}}$ and $\ket{\mathbf{y}'}$ is 
\begin{equation}
\langle\mathbf{y}'|\mathbf{y}\rangle=\cos\left(\frac{\phi}{2}\right)\cos\left(\frac{\phi'}{2}\right)+e^{i(\theta'-\theta)}\sin\left(\frac{\phi}{2}\right)\sin\left(\frac{\phi'}{2}\right),
\end{equation}
and the magnitude-squared can be expressed as
\begin{equation}
\big|\langle\mathbf{y}'|\mathbf{y}\rangle\big|^2=\cos^2\left(\frac{\phi+\phi'}{2}\right)+\frac{1+\cos(\theta-\theta')}{2}\sin(\phi)\sin(\phi').
\end{equation}
After minimizing stress using Mathematica's function FindMinimum~\cite{Mathematica2024}, the local minimum of stress with respect to Bures fidelity distance was achieved at exactly the pure states given by~\eqref{eqn:MDSstartingpoint}. In other words, the $n^{\text{th}}$ roots of unity get mapped under the MDS encoding to the equator of the Bloch sphere at the locations given by~\eqref{eqn:MDSstartingpoint}. 

%%%%%%%%%%%%%%%%%%%%%%%%%%%%%%%%%%%%%%%%%%%%%%%%
\subsection{Noisy circle}
\label{subsec:noisycircle}
%%%%%%%%%%%%%%%%%%%%%%%%%%%%%%%%%%%%%%%%%%%%%%%%

We next look at a point cloud $X\subset\R^2$ consisting of $200$ points sampled from a probability distribution concentrated \emph{near} the unit circle centered at the origin, as shown in Figure~\ref{fig:circle}. In other words, we analyze how noise in the ideal circle data explored in Section~\ref{subsec:idealcircle} affects the quantum encodings and their associated persistent homologies. There are some differences with how the data are handled in this section as compared to the section on the ideal $n^{\text{th}}$ roots of unity. We outline these differences in the following first few paragraphs.

\begin{figure}%[ht!]
     \centering
    \includegraphics[width=.4\textwidth]{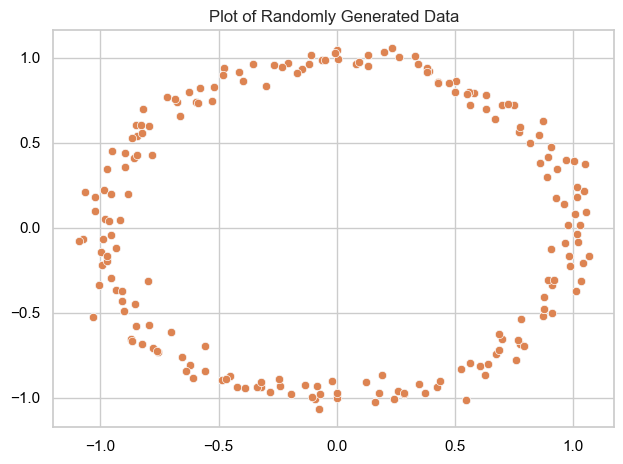}
\caption{Randomly perturbed circle with 200 synthetic data points, generated for the toy example used in Section~\ref{subsec:noisycircle}.}
\label{fig:circle}
\end{figure}

For the numerical experiments, we analyze how the distances and associated persistent homologies are affect by the following four transformations (1) uniform transformation from~\eqref{eqn:TXmapping} (which we know should leave the persistent homology invariant under a rescaling), (2) uniform transformation followed by square-root encoding (Example~\ref{ex:squarerootencoding}), (3) uniform transformation followed by a rescaling by $\pi$ and then followed by angle encoding~\eqref{eq:angle} (with $\vec{n}_{j}=(0,1,0)$ and $d=2$), and (4) uniform transformation followed by a rescaling by $\frac{\pi}{2}$ and then followed by IQP encoding~\eqref{eq:iqp-second} (with $d=2$). We denote the uniform transformed data as \emph{normed data} for brevity. The reason we precede each of the encodings by uniform transformation is to compare all encodings on as equal of a footing as possible (as if they were defined on the same space). The additional rescaling for angle encoding by $\pi$ after the uniform transformation guarantees that the input data into angle encoding all take values between $0$ and $\pi$, which is based on working with as large of a domain as possible without encountering the boundary effects discussed in Section~\ref{sec:angleencodingidealcircle}. 
For square-root, angle, and IQP encodings, pairwise distances between the quantum states are calculated using Qiskit's AerSimulator~\cite{aleksandrowicz2019qiskit} by calculating the fidelity between pairwise states using the SWAP test~\cite{BuhrmanQuantumFingerprinting2001}. 
The Bures angle distance (as opposed to the Bures fidelity distance used in Section~\ref{subsec:idealcircle}) is then calculated using these values for fidelity. 

\begin{figure*}[!t]
    \centering
    \renewcommand{\arraystretch}{1.3}%
    \begin{tabular}{cccc}
    (a) Normed Data
    & (b) Square Root
    & (c) Angle
    & (d) IQP
    \\
    \includegraphics[width=4.3 cm]{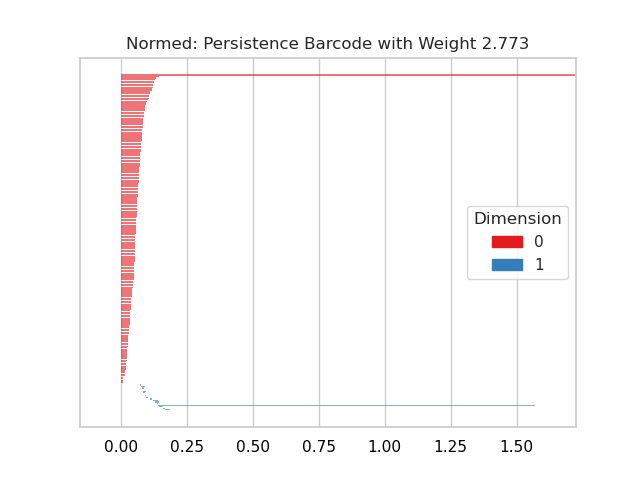} 
    & \includegraphics[width=4.3 cm]{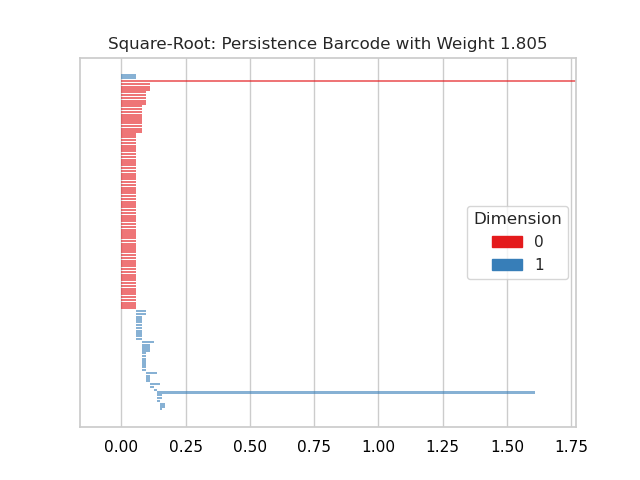} 
    & \includegraphics[width=4.3 cm]{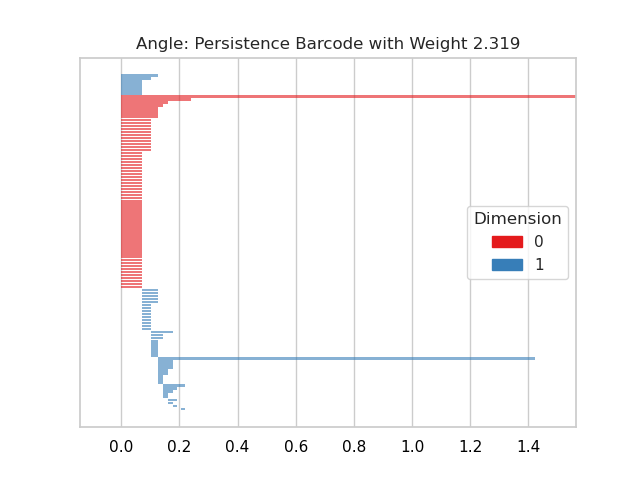} 
    & \includegraphics[width=4.3 cm]{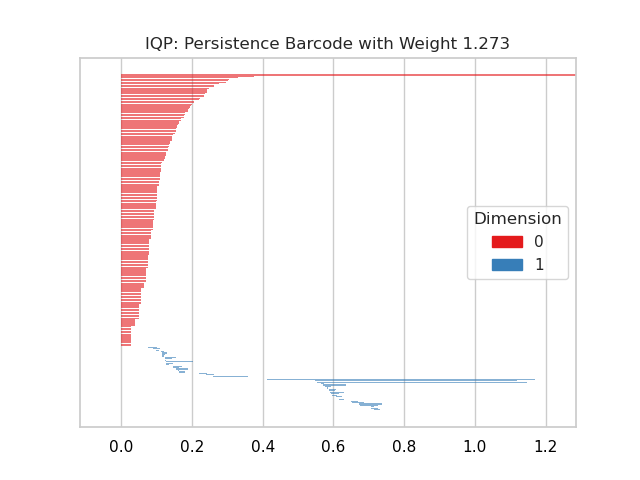} 
    \\  
    \includegraphics[width=4.3 cm]{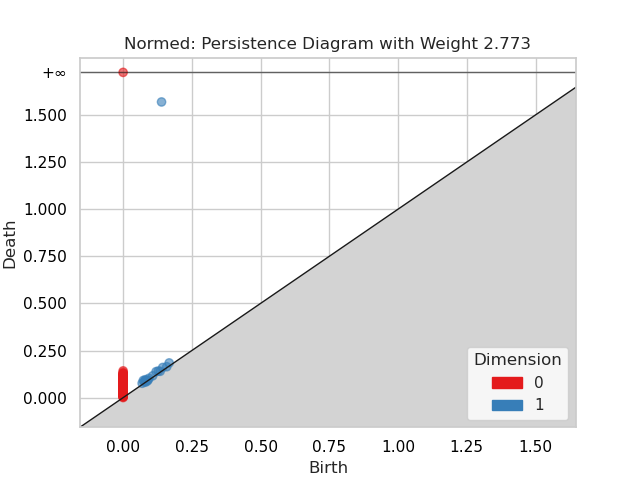}
    &  \includegraphics[width=4.3 cm]{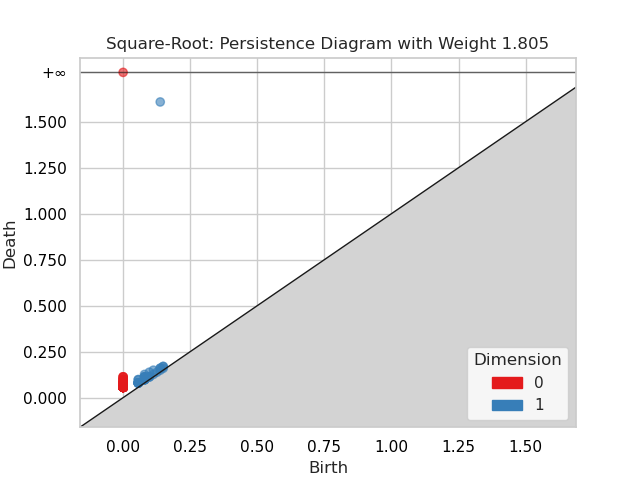}
    &  \includegraphics[width=4.3 cm]{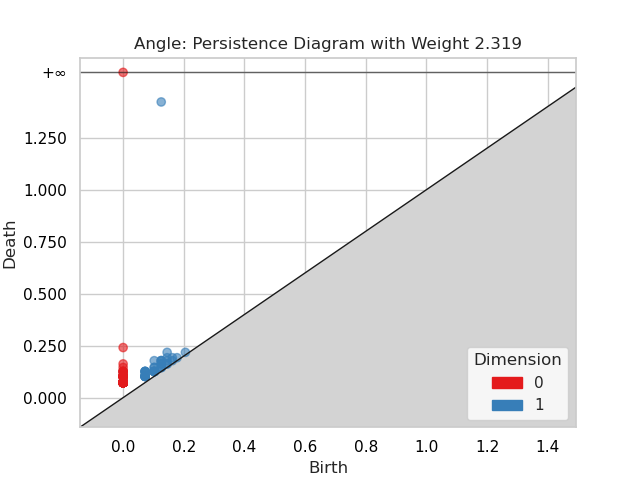} 
    & \includegraphics[width=4.3 cm]{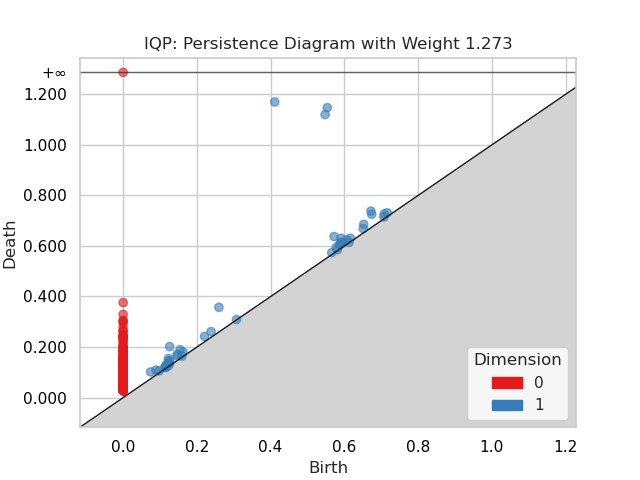} 
    \\ 
    \includegraphics[width=4.3 cm]{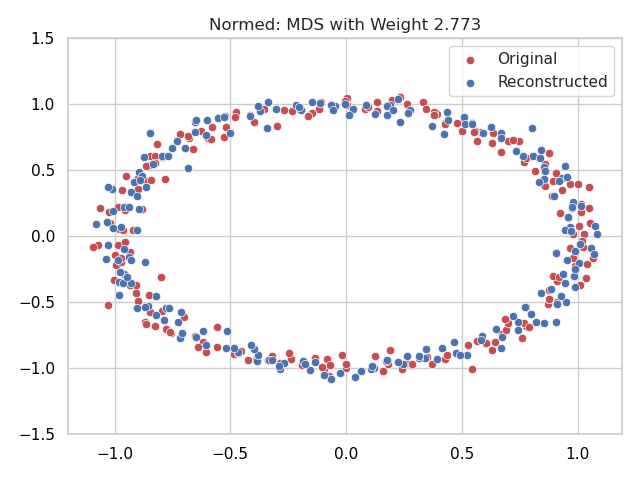}
    & \includegraphics[width=4.3 cm]{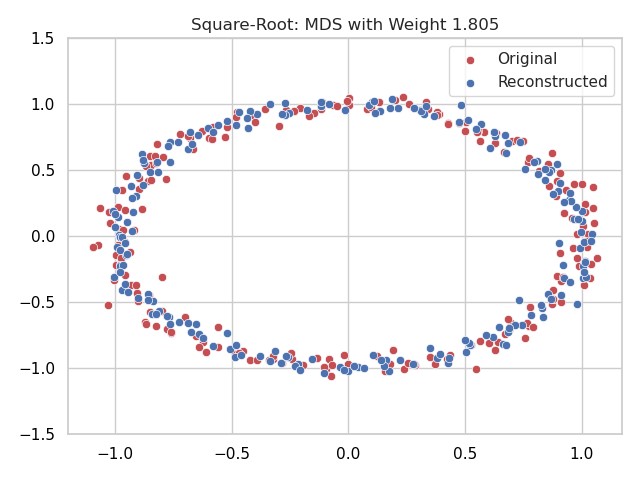}
    & \includegraphics[width=4.3 cm]{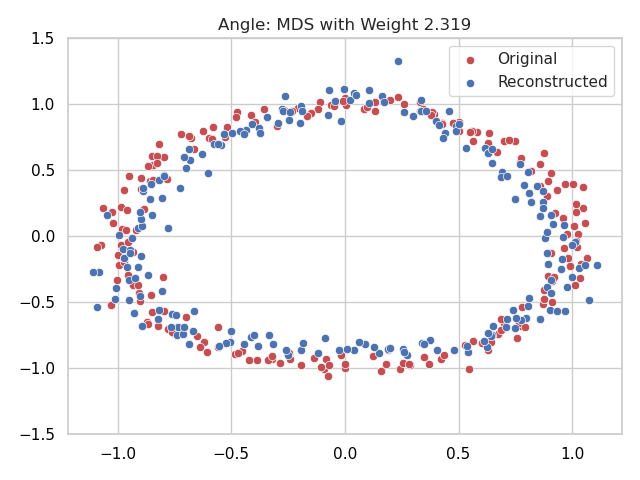}
    & \includegraphics[width=4.3 cm]{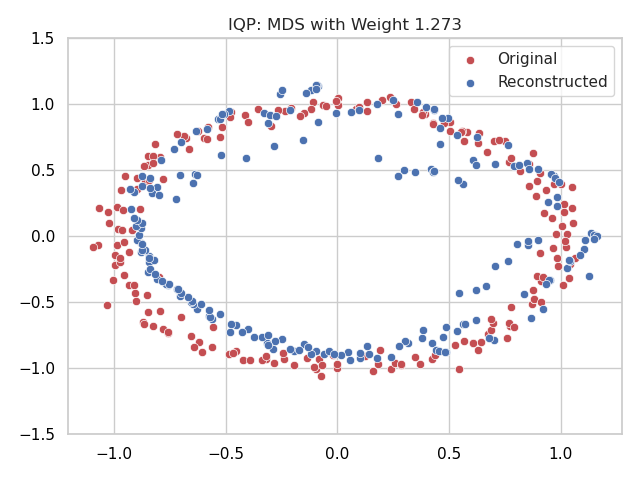}
    \end{tabular}
    \caption{Barcodes, persistence diagrams, and MDS reconstructions for the four maps applied to the circle data: uniform transformation (normed data), square-root encoding, angle encoding, and IQP encoding, respectively. In the latter three cases, distances for the encoded data points were calculated with the Fubini--Study metric, where the fidelity was obtained by the swap test. The barcodes, persistence diagrams, and MDS reconstructions were obtained directly from the distances between the encoded data points.}  
    \label{fig:barcodes}
\end{figure*}

The resulting distance matrices associated with the quantum encoded data are then used in Figure~\ref{fig:barcodes} to obtain the barcodes and persistence diagrams using the Gudhi package~\cite{gudhirips}, closely following what was done in Ref.~\cite{vlasic2023qtda}. The distance matrices are also used to obtain Euclidean MDS reconstructions in an attempt to visually compare the new distance matrix with the original dataset in $\R^{2}$. We first note that MDS here is obtained by minimizing (the square of) \emph{stress} as defined in~\eqref{eqn:stressf}, as opposed to \emph{strain} as defined in~\eqref{eqn:strainf}.  Secondly, the specific distance matrix used for computing both the barcodes and MDS was rescaled by an additional factor in order to properly compare the initial dataset with the quantum encoded dataset. Namely, to appropriately compare the distances between the encoded states with the distances between the original data, because the original data were rescaled in the process of the encoding, and because the encoding itself might change the diameter of the dataset, we additionally rescale the distance matrix obtained from the quantum encoded data. More precisely, for each quantum encoded dataset $Y$, let $D_{Y}$ denote the associated distance matrix from the Fubini--Study distance from~\eqref{eqn:FubiniStudy}, except that the $\frac{2}{\pi}$ factor has been removed, i.e., $D_{Y}\big(\ket{x_j},\ket{x_k}\big)=\arccos\big(|\langle x_{j}|x_{k}\rangle\big)$. Multiply $D_{Y}$ by a parameter $w$, called the \define{weight}. Let $D_{X}$ denote the distance matrix of the data set $X$. To find the optimal weight, we would like to minimize the distortion $\lVert D_{X}-wD_{Y}\rVert_{\infty}$ from Definition~\ref{defn:distortionf}, which we view as the error associated with the encoding (we could also minimize the stress $\lVert D_{X}-wD_{Y}\rVert_{2}$ or another suitable cost function). Note that the optimization problem
\begin{equation}
\argmin_{w\ge0}\left(\sup_{j,k}\Big|D_{X}(x_j,x_k)-w\arccos\big(\langle x_j|x_k\rangle\big)\Big|\right).
\end{equation}
is convex. To find the minimum, we utilized the minimize method optimization package within SciPy~\cite{2020SciPy-NMeth} and the Nelder--Mead algorithm~\cite{NelderMead1965}, which is rigorously known to converge in 1d~\cite{Lagarias1998NelderMead}.

For example, although the distortion is large between the original dataset and the uniformly-transformed dataset, multiplying the latter distance matrix by $w=r(X)\sqrt{m(m+1)}\approx2.7726$ yields a zero distortion. In other words, the \emph{scale-free distortion} of the uniformly transformed dataset as compared to the original is zero, as expected from its construction in Section~\ref{sec:MAE}. The same value of this weight was also obtained using the Nelder--Mead algorithm proposed in the previous paragraph, and therefore in some sense provides a control for this numerical experiment. Meanwhile, square-root encoding had a weight of $1.805$, angle encoding had a weight of $2.319$, and IQP encoding had a weight of $1.273$ (cf. Figure~\ref{fig:barcodes}).

For IQP encoding, two other preprocessing techniques were tested, though their results are not shown here. In one case, the original data vectors were multiplied by $\pi$ and values beyond $(-\pi,\pi)$ were clipped. In another case, the normed data vectors were multiplied by $2\pi$ so that their values lied between $0$ and $2 \pi$. Afterwards, the IQP encoding map was applied to the resulting preprocessed data. 
However, both approaches resulted in a higher distortion than the one used in Figure~\ref{fig:barcodes}. The amount of distortion can be visualized by comparing the original dataset and MDS applied to the IQP encoded data, which is also shown in Figure~\ref{fig:barcodes}. The reason we visualize things by applying MDS back after the encoding is because MDS attempts to preserve distance geometry (and hence some topology as described by persistent homology), so the two sets should appear close.

Table~\ref{tab:metric-distances} provides some numerical differences between the metric spaces of the rescaled data versus what happens after the data are quantum encoded by calculating the distances between the corresponding metric spaces. 
The distances between metric spaces that we calculated were the Gromov--Hausdorff and bottleneck distances. The Gudhi package~\cite{gudhirips} was used to compute the bottleneck distances. 
The Gromov--Hausdorff distance~\eqref{eq:gromov-hausdorff-final}, on the other hand, is not simple to compute and is computationally intractable~\cite{oles2023computing}, in the sense that it is NP-hard to compute or approximate~\cite{Adams2025Hausdorff,Schmiedl2017GHdistance,Agarwal2018ComputingGHdistance}. 
This motivated Ref.~\cite{oles2023computing} to derive a computable upper bound for the Gromov--Hausdorff distance along with an implementation in Python (some lower bounds were obtained in Refs~\cite{memoli2012some,Adams2025Hausdorff}). 
As such, the subsequent calculations involving the Gromov--Hausdorff distance between metric spaces will actually use the upper bound obtained in Ref.~\cite{oles2023computing}. In particular, when the Gromov--Hausdorff distance is referenced in Table \ref{tab:metric-distances}, what is actually being calculated is the upper bound for it derived in Ref.~\cite{oles2023computing}.

\begin{table} %[ht]
\centering
\resizebox{\columnwidth}{!}{
\begin{tabular}{|c|c|c|c|c|c|c|}
\hline
\multicolumn{7}{|c|}{Distances Between Spaces} \\ \hlineB{2.7}
\multicolumn{2}{|c|}{Basis}  & \multicolumn{5}{|c|}{Feature Maps}  \\ \hlineB{2.7}
Metric & Feature Maps        & Original & Normed   & Square-Root  & Angle  & IQP  \\ \hlineB{2.7}
\multirow{ 5 }{*}{ \stackanchor{Gromov--}{Hausdorff} }
& Original & 0.0 & 0.1201 & 0.116 & 0.2063 & 0.6549 \\ \cline{2-7}
& Normed & 0.1201 & 0.0 & 0.116 & 0.2063 & 0.5093 \\ \cline{2-7}
& Square-Root & 0.116 & 0.116 & 0.0 & 0.2129 & 0.6861 \\ \cline{2-7}
& Angle & 0.2063 & 0.2063 & 0.2129 & 0.0 & 0.5406 \\ \cline{2-7}
& IQP & 0.6549 & 0.5093 & 0.6861 & 0.5406 & 0.0 \\ \hlineB{2.7}
\multirow{ 5 }{*}{Bottleneck} 
& Original & 0.0 & 0.0 & 0.0387 & 0.1468 & 0.3994 \\ \cline{2-7} 
& Normed & 0.0 & 0.0 & 0.0387 & 0.1468 & 0.3994 \\ \cline{2-7} 
& Square-Root & 0.0387 & 0.0387 & 0.0 & 0.1855 & 0.438 \\ \cline{2-7} 
& Angle & 0.1468 & 0.1468 & 0.1855 & 0.0 & 0.2968 \\ \cline{2-7} 
& IQP & 0.3994 & 0.3994 & 0.438 & 0.2968 & 0.0 \\ \hlineB{2.7} 
\end{tabular}
}
\caption{This tables gives upper bounds for the Gromov--Hausdorff distance and the exact bottleneck distances of the persistence diagrams when comparing the datasets using the respective metrics shown. The distance matrices have been multiplied by their respective weights based on the discussion associated with Figure~\ref{fig:barcodes}. Note that the Gromov--Hausdorff distances between any of the encodings with the original and any encodings with the normed cells should be exactly the same. However, they have different numerical values because only an upper bound for the Gromov--Hausdorff distance was obtained using Ref.~\cite{oles2023computing}. We also note that these are the absolute distances, and they are not relativized. Analogous to the relative error in numerical methods~\cite{TrefethenBau2022}, the Gromov--Hausdorff distance is considered to be small when it is small compared to the diameters of the two metric spaces.}
\label{tab:metric-distances}
\end{table}

All the quantum encodings in Figure~\ref{fig:barcodes} were based on standard quantum encodings used in the literature with some slight modifications. Alternatively, we may attempt to find an MDS encoding based on minimizing the stress between the original point cloud and its image in the space of pure quantum states, which is a complex projective space that can be equipped with the Fubini--Study metric. 
Even for our small dataset of 200 data points, with a desired encoding into the standard qubit Bloch sphere, where each data point is specified by two real parameters as in~\eqref{eqn:genericBlochstate}, this minimization procedure involves minimizing a function consisting of 400 variables. 

To simplify the problem, we will change two things in the procedure to find an MDS encoding that minimizes the stress. First, rather than using the angular coordinates of~\eqref{eqn:genericBlochstate}, we will instead write a qubit state as an arbitrary linear combination $\alpha\ket{0}+\beta\ket{1}$, with $\alpha,\beta\in\C$, which therefore amounts to four real, but arbitrary, parameters, and then we will normalize to guarantee that the vector describes a quantum state. Although this amounts to minimizing a function consisting of 800 variables, the benefit is that this allows us to use ordinary gradient descent in Euclidean space, where the coefficients appear linearly in the description of the vectors. Second, motivated by the projective space extension of principal component analysis (PCA) of Ref.~\cite{Perea2018Multiscale}, we approximate the Fubini--Study metric~\eqref{eqn:FubiniStudy} 
by 
\begin{equation}
\label{eqn:FSquadapprox}
d_{\mathrm{FS}}(\ket{\psi},\ket{\phi}) = \frac{2}{\pi}\mathrm{arccos}\Big( \big| \langle \psi| \phi \rangle \big|\Big) \approx 1 - \big| \langle \psi| \phi \rangle \big|^2. 
\end{equation} 
This is an approximation in the sense that the unique quadratic polynomial $f(x)=a_0+a_1x+a_2x^2$ on the unit interval $[0,1]$ that satisfies the boundary conditions $f(0)=1$ and $f(1)=0$ and which also minimizes the $L_{\infty}$-distance $\lVert \frac{2}{\pi}\mathrm{arccos}-f\rVert_{\infty}$ for functions on the interval $[0,1]$ is precisely $f(x)=1-x^2$. These two functions are shown in Figure~\ref{fig:approxarccos} for comparison. 
\begin{figure}
\includegraphics[width=6cm]{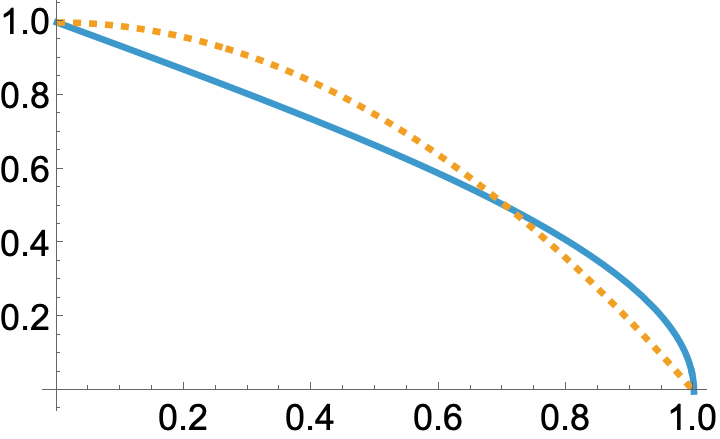}
\caption{The function $\frac{2}{\pi}\mathrm{arccos}(x)$ (solid blue) is plotted together with the function $1-x^2$ (dashed orange) on the interval $[0,1]$. Among all quadratic polynomials, $1-x^2$ is the unique one that has the same boundary values and also minimizes the $L_{\infty}$-distance to $\frac{2}{\pi}\mathrm{arccos}(x)$. It is therefore used as an approximation to illustrate MDS encoding, which is discussed more in the text.}
\label{fig:approxarccos}
\end{figure}
Note that this approximation allows for the easier computation of the MDS encoding, as it avoids the divergence of the derivative of $\frac{2}{\pi}\mathrm{arccos}(x)$ as $x$ approaches $1$ (which happens when the quantum states are close to each other) and because the derivatives are simple.

As such, we instead minimize the approximated stress function 
\begin{equation}
\label{eqn:quadapproxFS}
\sum_{j,k}\Big(D_{X}(x_j,x_k)-w \big(1-|\langle x_j|x_k\rangle|^2\big)\Big)^2
\end{equation}
using the approximation~\eqref{eqn:FSquadapprox} of the Fubini--Study distance function in order to obtain an approximate quantum MDS encoding. For our example, the weight $w$ is taken to be $w=\mathrm{diam}(X)$, i.e., the largest entry of the distance matrix $D_{X}$; equivalently,  the distance matrix $D_{X}$ is divided by the largest entry $\mathrm{diam}(X)$. This guarantees that the two distance functions have the same maximum value and so that there is a fair comparison between the two metric spaces. 
As for the algorithm used for minimizing stress, we use the SymPy package \cite{meurer2017sympy} combined with ordinary gradient descent to find a minimum~\cite{Cauchy1847GD,Strang2022} (it would be interesting to see how this changes if one utilizes natural/geometric gradient descent as in Refs.~\cite{stokes2020quantum,Amari1998NaturalGradient}). More specifically, we start with Haar random initial unit states, and we use learning rates of $.001$ and $.0001$, where each learning rate is applied after the stress function has stopped decreasing. After the parameters have been updated, they are normalized. The result is displayed in Figure~\ref{fig:quanutm_mds}. We note one caveat, which is that convergence of the algorithm is not guaranteed, especially when there are many entries with small fidelity (approximately orthogonal states), i.e., large distance. Consequently, it is important to identify a transformation acting on the distance matrices that would allow for a cleaner separation of the values enough for the algorithm to converge while approximately preserving the associated topologies.

With the specific noisy circle dataset of Figure~\ref{fig:circle}, this MDS procedure of mapping the data into the Bloch sphere yielded an average error of $.0049$ for each summand in the stress function and it resulted in a distortion value of $.2486$.
The resulting quantum states are depicted on the Bloch sphere in Figure \ref{fig:quanutm_mds}. The quantum states almost lie on a great circle. 
As with ordinary/classical MDS, encoding the data into a projective space of larger dimension might provide an encoding with smaller distortion and stress. Also, as in the case of cMDS, the initialization may affect the evolution and final result of the algorithm.

\begin{figure}%[ht!]
     \centering
    \includegraphics[width=.45\textwidth]{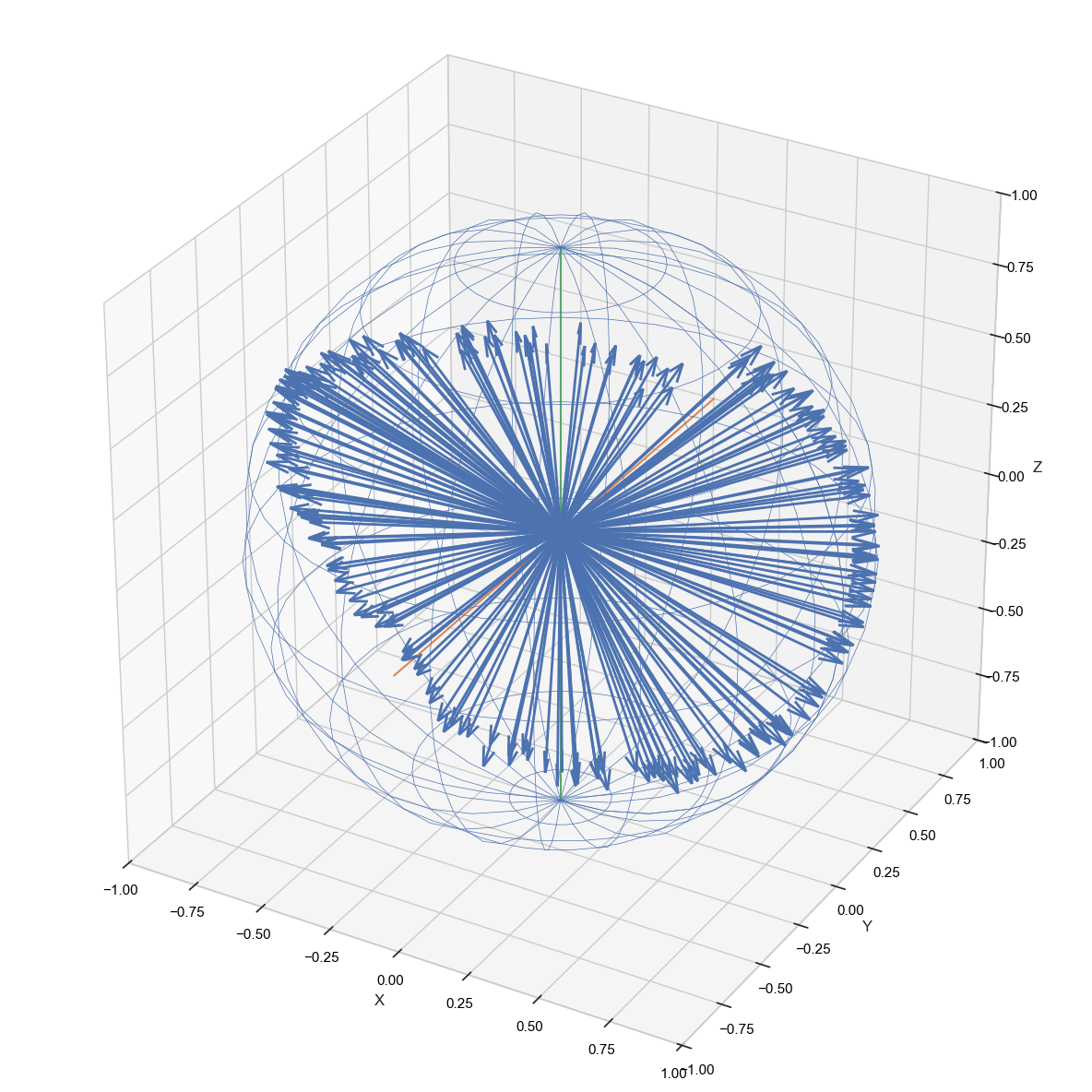}
    \caption{The Bloch sphere together with an MDS encoding of the noisy circle dataset from Figure~\ref{fig:circle}. The stress~\eqref{eqn:quadapproxFS} was minimized between the original Euclidean distance and a rescaled quadratic approximation to the Fubini--Study distance. Since the Fubini--Study metric is invariant under unitary transformations, the MDS encoding is not in general unique. Moreover, what is shown in this figure is only one realization for a particular random initialization.}
\label{fig:quanutm_mds}
\end{figure}

One major open question regarding quantum MDS that is important to address is to find an appropriate analogue of strain that could be used to obtain an analytic solution to strain-minimized MDS for complex projective space equipped with the Fubini--Study metric. This would extend the results achieved for real Euclidean space~\cite{Schoenberg1935,Torgerson1952MDS}, spheres~\cite{Schoenberg1935}, and hyperbolic space~\cite{KeNa2020hydra}.

%%%%%%%%%%%%%%%%%%%%%%%%%%%%%%%%%%%%%%%%%%%%%%%
\section{Discussion}
\label{sec:discussion}
%%%%%%%%%%%%%%%%%%%%%%%%%%%%%%%%%%%%%%%%%%%%%%%

In this paper, our main achievements can be summarized as follows. 
\begin{enumerate}
\item Using ideas from Ref.~\cite{PBVP24}, we isolated the relevant structure that a quantum encoding should preserve in order for the overall topology of a point cloud to be unchanged. These are isometries, or more generally homotheties (isometries up to an overall scalar), of metric spaces. Since not all commonly used quantum encodings preserve distances exactly or up to a scalar, in order to formulate a way in which distances are minimally altered, we reviewed the notions of distortion, codistorion, and Gromov--Hausdorff distances. 

\item We reviewed Topological Data Analysis (TDA), focusing on persistent homology and the stability theorem, the latter of which provides bounds relating the bottleneck distance to the Gromov--Hausdorff distance. We included several simple examples to illustrate the main ideas as well as how the associated quantities are computed, such as birth-death diagrams and bottleneck distances. 

\item Our first main result is Corollary~\ref{cor:UTDpreservespersistentH} (a consequence of Theorem~\ref{thm:UTDpreservesmetric}), which provides a quantum encoding that \emph{exactly} preserves (up to an overall constant) all of the distances from a point cloud in Euclidean space, and therefore preserves all of the topological invariants that could be computed from TDA including persistent Betti numbers. This explicitly shows that there is indeed a quantum encoding that can (in principle) perfectly preserve the topology of a dataset. 

\item
Since the quantum encoding that perfectly preserves the topology involves the preparation of mixed quantum states, we supplied an alternative quantum encoding that only requires the preparation of pure quantum states. We then used metric Multidimensional Scaling to formalize the idea that \emph{every} quantum encoding \emph{must} distort the classical data by at least a certain amount. In other words, we defined the theoretically optimal way to encode classical data preserving distances into quantum state space. One can then use the associated lowest distortion bounds to gauge how much of the topological structure of a dataset might be preserved by any quantum encoding. 

\item
We posed the next major problem for this line of research, which is to find an \emph{efficient} quantum encoding that minimizes distortion/Gromov--Hausdorff distance. The theoretically optimal solution is not necessarily an implementable and efficient quantum encoding. Ideally, one would hope that there exists an efficient implementation that is at least close to this optimal encoding. We sketched out a clear formulation of this problem in the case that the class of quantum encodings is given by Hamiltonian evolution and one-parameter subgroups. 
\end{enumerate}

As we mentioned in the introduction, one could ask why we would want to encode classical data directly onto the quantum computer if the originally proposed quantum TDA algorithm in Ref.~\cite{LloydGarneroneZanardi16} instead first constructed the associated Vietoris--Rips complex and encoded the latter onto a quantum computer via bit encoding in order to achieve a quantum advantage over classical algorithms. We have several reasons for this, which lead to some interesting open problems. Our main purpose in doing so was to analyze how one should encode classical data if one were to work directly with the data on a quantum computer. 

The important message of our manuscript is that the distances between the data points must be approximately preserved (at least up to a scalar) in order for the topological inference to be similar to what one would get from the classical data. Furthermore, the simplicial complex associated with the classical data grows exponentially with the number of data points (in the worst case), which makes it somewhat difficult to encode onto a quantum computer. Moreover, it has recently been shown in Refs.~\cite{KingKohler2024,CrichignoKohler2024,SchmidhuberLloyd2023} that determining homology from the graph (the $0$- and $1$-simplices from the Vietoris--Rips complex) is hard, even for a quantum computer (Refs.~\cite{AdamaszekStacho2016,SchmidhuberLloyd2023} show that the homology problem is NP-hard, while Refs.~\cite{KingKohler2024,CrichignoKohler2024} show that the clique homology problem is $\mathrm{QMA}_{1}$-hard or $\mathrm{BQP}$-hard depending on the formulation of the problem). This suggests that if one were to utilize the power of quantum computers for analyzing the topological features of data and performing topological inference, perhaps there might be a better and more efficient route at doing so by directly encoding the data onto a quantum computer rather than first constructing the associated simplicial complex. Although this seems to require Quantum Random Access Memory (QRAM) to load all the data onto a quantum computer~\cite{GLM2008QRAM}, exploring whether this is indeed possible is an important question to address, as it might provide an alternative route towards quantum advantage in quantum topological data analysis that bypasses some of its current bottlenecks.

\bigskip
\noindent
\textbf{Acknowledgements}

\noindent
The authors thank Jonathan Bloom, Chian Yeong Chuah, Cheyne Glass, Jean-Claude Hausmann, Seth Lloyd, Thomas Needham, Anh Pham, Alexander Schmidhuber, and Rapha{\"e}l Tinarrage for discussions. AJP thanks Kosha Upadhyay for discussions on multidimensional scaling. Some parts of Figures~\ref{fig:ideal-circle-sqrt-heatmapsandCMDS}, \ref{fig:ideal-circle-smaller-sqrt-heatmapsandCMDS}, \ref{fig:ideal-circle-angle-heatmapsandCMDS}, \ref{fig:heatmapIQPencoding}, and~\ref{fig:approxarccos} were created using Mathemetica~\cite{Mathematica2024}. 

\bigskip
\noindent
\textbf{Conflict of Interest Statement}

\noindent
AJP has received financial support from Deloitte in his involvement with this project.
He carried out this project as a consultant to Deloitte and not as part of his MIT responsibilities.

\bibliography{refs}
\end{document}